\documentclass[a4paper,11pt]{article}

\usepackage{jheppub}

\usepackage[numbers,sort&compress]{natbib} 
\usepackage{url}
\usepackage{amsmath}
\usepackage{rotating}

\usepackage{multirow}
\usepackage{amsmath}

\usepackage[T1]{fontenc} 

\usepackage{float, array, xspace, amscd, amsthm}
\usepackage{fancyhdr}
\usepackage{longtable}

\theoremstyle{theorem}
\newtheorem{theorem}{Theorem}
\newtheorem{lemma}{Lemma}
\newtheorem{proposition}{Proposition}

\theoremstyle{definition}

\newtheorem{definition}{Definition}

\theoremstyle{remark}
\newtheorem{remark}{Remark}[section]

\allowdisplaybreaks

\usepackage[usenames,dvipsnames,svgnames,table,x11names]{xcolor}
\definecolor{MagentaXD}{RGB}{204, 48, 152}
\definecolor{MagentaXDdetail}{RGB}{150, 79, 126}
\definecolor{GreenMAF}{RGB}{28, 112, 46}
\definecolor{GreenMAFdetail}{RGB}{80, 117, 88}
\definecolor{detail}{RGB}{110,110,110}
\definecolor{quantumviolet}{HTML}{53257F} 
\definecolor{quantumgray}{HTML}{555555} 
\definecolor{quantumgreen}{HTML}{007474} 
\definecolor{quantumblue}{HTML}{002366} 
\definecolor{quantumpurple}{HTML}{66023C} 
\definecolor{quantumdarkviolet}{HTML}{5D3954} 

\newsavebox{\ns}
\newsavebox{\dbrane}
\newsavebox{\dbshort}

\def\be{\begin{equation}}
\def\ee{\end{equation}}
\def\bea{\begin{eqnarray}}
\def\eea{\end{eqnarray}}

\newcommand{\Vcal}{\mathcal{V}}

\newlength{\sswidth}

\numberwithin{equation}{section}



\usepackage{tikz}
\usetikzlibrary{positioning,intersections}
\usetikzlibrary{calc}
\usetikzlibrary{shapes.geometric}
\usepackage{braids}
\usepackage{tikz-cd}

\newif\ifcomments
\commentstrue

\newif\ifdetails
\detailstrue





\usepackage{upgreek}
\usepackage[normalem]{ulem}
\usepackage{mathtools}
\usepackage{datetime}
\usepackage{cancel}

\usepackage[all,cmtip,knot]{xy}
\usepackage{xypic}

\newcommand{\orcid}[1]{\href{https://orcid.org/#1}{\includegraphics[width=8pt]{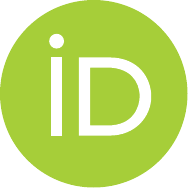}}}

\usepackage{amsmath,amsfonts,amsthm}

\usepackage{bm,latexsym,galois,euscript,dsfont}
\newcommand\EA{\EuScript{A}}
\newcommand\EB{\EuScript{B}}
\newcommand\EC{\EuScript{C}}
\newcommand\ED{\EuScript{D}}

\newcommand\EM{\EuScript{M}}
\newcommand\EN{\EuScript{N}}
\newcommand\EP{\EuScript{P}}

\newcommand\Fun{\mathsf{Fun}}

\newcommand\Rep {\mathsf{Rep}}
\newcommand{\Irr}{\operatorname{Irr}}

\newcommand\id {\mathrm{id}}
\newcommand\Hom {\mathrm{Hom}}
\newcommand{\one}{\mathds{1}}
\newcommand\Mod{\textsf{Mod}}

\newcommand{\FPdim}{\operatorname{FPdim}}

\newcommand{\VV}{\mathcal{V}}
\newcommand{\AMod}{{_{\mathfrak{A}}}\mathsf{Mod}}
\newcommand{\BMod}{{_{\mathfrak{B}}}\mathsf{Mod}}

\newcommand{\What}{\hat{W}}
\newcommand{\Vhat}{\hat{V}}

\newcommand{\FA}{\mathfrak{A}}
\newcommand{\FB}{\mathfrak{B}}
\newcommand{\FM}{\mathfrak{M}}
\newcommand{\FN}{\mathfrak{N}}

\usepackage{mathtools,amssymb,scalerel}

\newcommand\biopencrossl{%
	\mathrel{\scalerel*{>\kern-.4\LMpt\joinrel\blacktriangleleft}{x}}}
\newcommand\biopencrossr{%
	\mathrel{\scalerel*{\blacktriangleright\joinrel\kern-.4\LMpt<}{x}}}

\settimeformat{ampmtime}

\title{\color{black} On weak Hopf symmetry and weak Hopf quantum double model
}

\author[a,b]{Zhian Jia\orcid{0000-0001-8588-173X},}
\author[c]{Sheng Tan,}
\author[a,b]{Dagomir Kaszlikowski,}
\author[d]{and Liang Chang}
\affiliation[a]{Centre for Quantum Technologies, National University of Singapore, SG 117543, Singapore}
\affiliation[b]{Department of Physics, National University of Singapore, SG 117543, Singapore}
\affiliation[c]{Department of Mathematics, Purdue University, West Lafayette, IN 47907, USA}
\affiliation[d]{School of Mathematical Sciences and LPMC, Nankai University, Tianjin 300071, China}

\emailAdd{giannjia@foxmail.com}
\emailAdd{tan296@purdue.edu}
\emailAdd{phykd@nus.edu.sg}
\emailAdd{changliang996@nankai.edu.cn}

\abstract{
Symmetry is a central concept for classical and quantum field theory, usually, symmetry is described by a finite group or Lie group.
In this work, we introduce the weak Hopf algebra extension of symmetry, which arises naturally in anyonic quantum systems; and we establish weak Hopf symmetry breaking theory based on the fusion closed set of anyons.
As a concrete example, we implement a thorough investigation of the quantum double model based on a given weak Hopf algebra and show that the vacuum sector of the model has weak Hopf symmetry.
The topological excitations and ribbon operators are discussed in detail.
The gapped boundary and domain wall theories are also established. We show that the gapped boundary is algebraically determined by a comodule algebra, or equivalently, a module algebra; and the gapped domain wall is determined by the bicomodule algebra, or equivalently, a bimodule algebra.
The microscopic lattice constructions of the gapped boundary and domain wall are discussed in detail.
We also introduce the weak Hopf tensor network states, via which we solve the weak Hopf quantum double lattice models on closed and open surfaces.
The duality of the quantum double phases is discussed in the last part.
}

\begin{document}
	
\maketitle
\flushbottom

\section{Introduction}
\label{sec:intro}

The study of topologically ordered phases of matter has attracted a considerable amount of attention during the past several decades \cite{Wen2004,Levin200photons,Nayak2008,Chiu2016classification,Witten2016fermion,Zhou2017quantum,Wen2017zoo}.
These topological quantum phases of matter usually have symmetries that are beyond the Landau-Ginzburg symmetry-breaking paradigm.  To understand these exotic quantum phases of matter, several kinds of generalized concepts of symmetry have been proposed: (i) higher form symmetry, for which the charged particles have a dimension $p>0$, such as string, membrane, volume, etc. \cite{gaiotto2015generalized};
(ii) non-invertible symmetry, for which the symmetry operators are in general non-invertible and form a fusion algebra \cite{bhardwaj2018finite,bhardwaj2022universal,bartsch2022non} rather than a group;
(iii) categorical symmetry, where some monoidal category with extra structures characterizes the symmetry, like the unitary fusion category (UFC) and unitary modular tensor category (UMTC), and even higher category \cite{Levin2005,kong2014braided,johnson2022classification}.
These concepts of symmetry are not independent, they are closely related to each other, and many efforts have been made to subsume these symmetries in a unified framework.

Categorical symmetry is a promising candidate framework for characterizing the generalized phase transition where there is no symmetry breaking in the usual sense, like fractional quantum Hall system \cite{Tsui1982, Wen2004}. There are several different constructions of UFC symmetry in $(1+1)D$, like rational CFT \cite{FROHLICH2007duality}, TQFT \cite{thorngren2019fusion,thorngren2021fusion,inamura2021topological,inamura2022lattice}, anyonic chain \cite{Feiguin2007interacting,Gils2013anyonic,buican2017anyonic}, and so on.
The UMTC symmetry in $(2+1)D$ is also extensively investigated, like Levin-Wen string-net model \cite{Levin2005}, Kitaev quantum double model \cite{Kitaev2003}, etc.
It is well known that the representation category of a Hopf algebra is a UFC. And when the Hopf algebra is equipped with a quasi-triangular structure, its representation category is a UMTC.
This quantum group symmetry has been investigated for a long time \cite{majid2000foundations}, and it sits in between the usual group symmetry and categorical symmetry.
A comprehensive understanding of Hopf algebra symmetry may shed new light on the understanding of phase, phase transition, and categorical symmetry.

A typical Hopf algebra symmetry is the so-called quantum double symmetry.
Consider the $(2+1)D$ discrete gauge theory, like Gauge-Higgs theory for which Higgs field has broken the continuous gauge group $U$ down to a finite group $G$ \cite{Propitius1995,bais1992quantum}. The theory becomes topological in the low energy limit or long-distance limit. 
The topological excitations carry the fusion and braiding data. These data are mathematically captured by $\mathsf{Rep}(D(G))$, the representation category of quantum double for some finite group $G$.
The Hopf algebraic generalization of the formalism is discussed in \cite{bais2003hopf}, where the unbroken symmetry is given by a Hopf algebra $H$, for which group algebra is a particular case.
It turns out that this Hopf algebraic symmetry has broad applications in quantum gravity \cite{Bais2002quantum,majid2000foundations,delcamp2017fusion}, conformal field theory (CFT) \cite{fuchs1995affine}, topological quantum field theory (TQFT) \cite{meusburger2021hopf,meusburger2017kitaev}, quantum Hall effect \cite{Slingerland2001quantum}, topological phase \cite{Buerschaper2013a,buerschaper2013electric,meusburger2017kitaev,meusburger2021hopf,koppen2020defects,girelli2021semidual,voss2021defects,chen2021ribbon,jia2022boundary}, and so on.

From a lattice model perspective, Kitaev proposed the so-called quantum double model based on a finite group \cite{Kitaev2003}, which turned out to be a powerful lattice model that has potential applications in many fields, including topological quantum computation \cite{Nayak2008}, and topological quantum error correcting code \cite{Terhal2015quantum}.
The topological orders of these quantum double phases have been extensively investigated from many aspects, e.g., topological ground state degeneracy \cite{Kitaev2003}, anyons and their mutual statistics \cite{Kitaev2003}, topological entanglement entropy \cite{Kitaev2005topological}, boundary and domain wall \cite{bravyi1998quantum,Bombin2008family,Beigi2011the,Kitaev2012a,Cong2017}, anyon condensation \cite{Kong2014}, and electric-magnetic (EM) duality \cite{Buerschaper2009mapping,buerschaper2013electric,wang2020electric,hu2020electric,Jia2022electric}.
Its twisted generalization \cite{Bullivant2017twisted} and higher dimensional (twisted) generalization \cite{Moradi2015universal,Wan2015twisted,wang2018gapped,Levin2005,Hamma2005string,kong2020defects,delcamp2021tensornet,delcamp2022tensor} are also discussed.
However, in a general (weak) Hopf algebra setting, such a characterization of topological properties of the quantum double model has been proved far more elusive and they are largely unexplored \cite{Buerschaper2013a,meusburger2017kitaev,meusburger2021hopf,koppen2020defects,girelli2021semidual,voss2021defects,chen2021ribbon,jia2022boundary},
even less research has been done on the weak Hopf quantum dual model  \cite{chang2014kitaev}.

On the other hand, the Hopf and weak Hopf algebras provide us with the general framework to understand symmetry and quantized space-time.
The notion of a weak Hopf algebra is proposed in \cite{BOHM1998weak}. As a generalization of Hopf algebra, it is co-associative with weaker unit and counit.
They are designed in such a way that their representation category has a monoidal structure.
In fact, it is shown that any finite monoidal category can be realized as a representation category for some weak Hopf algebra \cite{ostrik2003module}. 
Thus a comprehensive investigation of weak Hopf symmetry will be helpful for us to understand categorical symmetry.
Compared with categorical symmetry, weak Hopf symmetry is easier to control due to its intimate relation with the group description of symmetry.
This is one motivation for us to carry out the current work.
From the lattice model perspective, the weak Hopf quantum double lattice models are of their own interests.
Since both group algebra and Hopf algebra are special cases of weak Hopf algebras, establishing the theories of topological excitation, gapped boundary, gapped domain wall, and duality of weak Hopf quantum double model can solve the corresponding problems for quantum double phase once and for all.

In this work, we take the first step to understanding the weak Hopf symmetry and weak Hopf quantum double model.
We establish the theory of weak Hopf symmetry and its breaking. 
As a concrete example, we carefully investigate the weak Hopf quantum double model, including its lattice construction, topological excitation and ribbon operators, its boundary and domain wall theory, and the duality between different weak Hopf quantum double models.

Our first aim of the paper is to summarize the basic facts about weak Hopf algebras and establish the theory of weak Hopf symmetry and weak Hopf quantum double model in a way hopefully accessible to readers from high energy, condensed matter, and quantum information communities.
Thus in the first part (Sec.~\ref{sec:pre}), we collect some necessary definitions, formulae, and theorems that will be used later. We have tried to give more details whenever we think it is not too lengthy to explain. 

Sec.~\ref{sec:WHAsymmetry} establishes the theory of weak Hopf symmetry and its breaking.
Our discussion of symmetry focuses on the symmetry of the vacuum sector. 
Using the concept of fusion closed set of anyons \cite{bais2003hopf} and injective and surjective morphisms of weak Hopf algebras, we obtain our main result in this part, Theorems~\ref{thm:SymBreak1} and \ref{thm:SymBreak2}.

Sec.~\ref{sec:Kitaev} is devoted to the discussion of the weak Hopf quantum double lattice model.
The construction has been briefly discussed by one of the authors in \cite{chang2014kitaev}.
Here we give a more detailed discussion, including the local stabilizer, the ribbon operators, and the topological excitations.

Sec.~\ref{sec:BdDomain} and \ref{sec:bdHam} generalize our algebraic theory and lattice construction of gapped boundary and domain wall \cite{jia2022boundary} of Hopf quantum double model to the weak Hopf quantum double lattice.
We show that most results remain similar to the Hopf case.

Sec.~\ref{sec:HopfTN} solves the weak Hopf quantum double model using the weak Hopf tensor network representation of states.

Sec.~\ref{sec:duality} discusses the duality of weak Hopf quantum double models.
The topological phases related by a duality can be regarded as the boundary theories of a one-dimension higher topological order.
For the Hopf algebra case, the duality is fully characterized by the twist deformation.

In Sec.~\ref{sec:conclusion}, we conclude and discuss future questions. In the appendix, we collect some detailed calculations about weak Hopf quantum double and ribbon operators.


\section{Preliminary}
\label{sec:pre}

To start with, let us briefly review some basic concepts and mathematical results concerning weak Hopf algebra (a.k.a. quantum groupoid) and meanwhile fix our notations.
Hereinafter, we assume all algebras are finite-dimensional unless otherwise specified.
The reader familiar with the notion of weak Hopf algebra might skip this part and get back if necessary.

\begin{definition}[Weak bialgebra \cite{bohm1996coassociative,BOHM1998weak,nill1998axioms}]
\label{def:weakbialgebra} 
	
	A complex weak bialgebra $(W,\mu,\eta,\Delta,\varepsilon)$ is  a $\mathbb{C}$-vector space $W$ equipped with the following structures:
	\begin{itemize}
		\item An algebra structure $(W,\mu,\eta)$, where $\mu: W\otimes W \to W$ and  $\eta: \mathbb{C} \to W$  are linear morphisms called multiplication and unit morphisms respectively.  Diagrammatically,
		\begin{equation}
			\mu=\begin{aligned}
			\begin{tikzpicture}
				 \draw[black, line width=1.0pt]  (-0.5, 0) .. controls (-0.4, 1) and (0.4, 1) .. (0.5, 0);
				 \draw[black, line width=1.0pt]  (0,0.75)--(0,1.15);
				\end{tikzpicture}
			\end{aligned},\quad 
		\eta= \begin{aligned}
			\begin{tikzpicture}
				\draw[black, line width=1.0pt]  (0,0.75)--(0,1.5);
				\draw [fill = white](0, 0.73) circle (2pt);
			\end{tikzpicture}
		\end{aligned}\,\,\, .
		\end{equation}
		They satisfy 
		\begin{equation}
			\mu \comp (\mu \otimes \id) =\mu \comp (\id \otimes \mu), \quad  \mu \comp (\eta \otimes \id)=\id =\mu \comp (\id \otimes \eta).
		\end{equation}

	\begin{equation}
	\begin{aligned}
			\begin{tikzpicture}
				\draw[black, line width=1.0pt]  (-0.5, 0) .. controls (-0.4, 0.8) and (0.4, 0.8) .. (0.5, -0.5);
				\draw[black, line width=1.0pt]  (-0.08,0.53)--(-0.08,0.93);
				\draw[black, line width=1.0pt]  (-0.8, -0.5) .. controls (-0.7, 0.19) and (-0.3, 0.19) .. (-0.2, -0.5);
			\end{tikzpicture}
		\end{aligned}
=	
	\begin{aligned}
	\begin{tikzpicture}
		\draw[black, line width=1.0pt]  (-0.5, -0.5) .. controls (-0.4, 0.8) and (0.4, 0.8) .. (0.5, 0);
		\draw[black, line width=1.0pt]  (0.08,0.53)--(0.08,0.93);
		\draw[black, line width=1.0pt]  (0.2, -0.5) .. controls (0.3, 0.19) and (0.7, 0.19) .. (0.8, -0.5);
	\end{tikzpicture}
\end{aligned},
	\quad 
			\id\,\,\,
	\begin{aligned}
		\begin{tikzpicture}
			\draw[black, line width=1.0pt] (0,-0.5)--(0,0.7);
		\end{tikzpicture}
	\end{aligned}
= \begin{aligned}
			\begin{tikzpicture}
				\draw[black, line width=1.0pt]   (0,0.3) arc (90:180:0.5);
				\draw [fill = white](-0.5, -0.2) circle (2pt);
				\draw[black, line width=1.0pt] (0,-0.5)--(0,0.7);
			\end{tikzpicture}
		\end{aligned}
	=
			 \begin{aligned}
		\begin{tikzpicture}
			\draw[black, line width=1.0pt]   (0.5,-0.2) arc (0:90:0.5);
			\draw [fill = white](0.5, -0.2) circle (2pt);
			\draw[black, line width=1.0pt] (0,-0.5)--(0,0.7);
		\end{tikzpicture}
	\end{aligned}\,\,\, .
	\end{equation}
	
		We can abbreviate $\mu(a,b)=a\cdot b$, 	and the unit element is denoted as $1_W=\eta(1)$.
		
		\item A coalgebra structure $(W,\Delta,\varepsilon)$, where $\Delta: W\to W\otimes W$ and $\varepsilon: W\to \mathbb{C}$ are  linear morphisms called comultiplication and counit morphisms respectively. Diagrammatically,
				\begin{equation}
			\Delta=\begin{aligned}
				\begin{tikzpicture}
					\draw[black, line width=1.0pt]  (-0.5, 1) .. controls (-0.4, 0) and (0.4, 0) .. (0.5, 1);
					\draw[black, line width=1.0pt]  (0,0.23)--(0,-0.23);
				\end{tikzpicture}
			\end{aligned},\quad 
			\varepsilon= \begin{aligned}
				\begin{tikzpicture}
					\draw[black, line width=1.0pt]  (0,0.75)--(0,1.5);
					\draw [fill = white](0, 1.5) circle (2pt);
				\end{tikzpicture}
			\end{aligned}\,\,\, .
		\end{equation}
		They satisfy
		\begin{equation}
			(\Delta \otimes \id)\comp \Delta = (\id \otimes \Delta )\comp \Delta,\quad  (\varepsilon \otimes \id) \comp \Delta = \id = (\id \otimes \varepsilon)\comp \Delta.
		\end{equation}
		
		\begin{equation}
	\begin{aligned}
			\begin{tikzpicture}
				\draw[black, line width=1.0pt]  (-0.5, 0.8) .. controls (-0.4, 0) and (0.4, 0) .. (0.5, 1.3);
				\draw[black, line width=1.0pt]  (-0.1,-0.16)--(-0.1,0.24);
				\draw[black, line width=1.0pt]  (-0.8, 1.3) .. controls (-0.7, 0.62) and (-0.3, 0.62) .. (-0.2, 1.3);
			\end{tikzpicture}
		\end{aligned}
	=
		\begin{aligned}
		\begin{tikzpicture}
			\draw[black, line width=1.0pt]  (-0.5, 1.3) .. controls (-0.4, 0) and (0.4, 0) .. (0.5, 0.8);
			\draw[black, line width=1.0pt]  (0.1,-0.16)--(0.1,0.24);
			\draw[black, line width=1.0pt]  (0.2, 1.3) .. controls (0.3, 0.62) and (0.7, 0.62) .. (0.8, 1.3);
		\end{tikzpicture}
	\end{aligned},
	\quad 
		\id\,\,\,
			\begin{aligned}
			\begin{tikzpicture}
				\draw[black, line width=1.0pt] (0,-0.5)--(0,0.7);
			\end{tikzpicture}
		\end{aligned}
		=\begin{aligned}
			\begin{tikzpicture}
				\draw[black, line width=1.0pt]   (-0.5,0.5) arc (180:270:0.5);
				\draw [fill = white](-0.5, 0.5) circle (2pt);
				\draw[black, line width=1.0pt] (0,-0.5)--(0,0.7);
			\end{tikzpicture}
		\end{aligned}
		=
		\begin{aligned}
			\begin{tikzpicture}
				\draw[black, line width=1.0pt]   (0,0) arc (270:360:0.5);
				\draw [fill = white](0.5, 0.5) circle (2pt);
				\draw[black, line width=1.0pt] (0,-0.5)--(0,0.7);
			\end{tikzpicture}
		\end{aligned}.
	\end{equation}

		We will adopt the Sweedler's notation $\Delta(u)=\sum_{(u)}u^{(1)}\otimes u^{(2)}:=\sum_{i}u_i^{(1)}\otimes u_i^{(2)}$.
		The comultiplication law ensures that $(\Delta\otimes \id)\comp \Delta(u) = (\id \otimes \Delta)\comp \Delta (u)=\sum_{(u)} u^{(1)}\otimes u^{(2)}\otimes u^{(3)}$. In general, we define $\Delta_1=\Delta$ and $\Delta_n=(\id \otimes \cdots \otimes \id \otimes \Delta)\comp \Delta_{n-1}$, then $\Delta_n(u)=\sum_{(u)} u^{(1)} \otimes \cdots \otimes u^{(n+1)}$.
	\end{itemize}
	To form a weak bialgebra, the above algebra and coalgebra structures must satisfy the following consistency conditions:
	\begin{enumerate}
		\item[(1)] The comultiplication preserves multiplication, i.e., $\Delta\comp\mu = \mu_{W\otimes W}\comp (\Delta\otimes \Delta)= (\mu\otimes \mu)\comp (\id\otimes \sigma_{W,W} \otimes \id)\comp(\Delta\otimes \Delta)$,  
			\begin{equation}
			\begin{aligned}
			\begin{tikzpicture}
				\draw[black, line width=1.0pt]   (-0.5,0)..   controls (-0.4,0.8) and (0.4,0.8)             ..(0.5,0);
				\draw[black, line width=1.0pt] (0,0.6)--(0,1);
					\draw[black, line width=1.0pt]   (-0.5,1.6)..   controls (-0.4,0.8) and (0.4,0.8)             ..(0.5,1.6);
			\end{tikzpicture}
		\end{aligned}
	=
		\begin{aligned}
		\begin{tikzpicture}
		\draw[black, line width=1.0pt] (-0.1,-1) arc (180:360:0.3);
		\draw[black, line width=1.0pt] (0.5,0) arc (0:180:0.3);
		 \draw[black, line width=1.0pt] (-0.1,0)--(-0.1,-1);
		 	\draw[black, line width=1.0pt] (1.6,0) arc (0:180:0.3);
		 	\draw[black, line width=1.0pt] (1,-1) arc (180:360:0.3);
		 \draw[black, line width=1.0pt] (1.6,0)--(1.6,-1);
		 	 \draw[black, line width=1.0pt] (0.2,-1.3)--(0.2,-1.6);
		 	  \draw[black, line width=1.0pt] (0.2,0.3)--(0.2,0.6);
		 	  \draw[black, line width=1.0pt] (1.3,0.3)--(1.3,0.6);
		 	  \draw[black, line width=1.0pt] (1.3,-1.3)--(1.3,-1.6);
			\braid[
			width=0.5cm,
			height=0.5cm,
			line width=1.0pt,
			style strands={1}{black},
			style strands={2}{black}] (Kevin)
			s_1^{-1} ;
		\end{tikzpicture}
	\end{aligned}\,\,\,. 
	\end{equation}
		Here $\sigma_{W,W}$ is the swap map $\sigma_{W,W}(u\otimes v)=v\otimes u$, diagrammatically denoted as $\sigma_{W,W}=\begin{aligned}\begin{tikzpicture}\braid[
			width=0.4cm,
			height=0.1cm,
			line width=0.3pt,
			style strands={1}{black},
			style strands={2}{black}] (Kevin)
			s_1^{-1} ;	\end{tikzpicture}\end{aligned}\,$,
		and we will also adopt the notations $\mu^{\rm op}=\mu \comp \sigma_{W,W}$ and $\Delta^{\rm op}=\sigma_{W,W}\comp \Delta$.
		Written in elements, we have
		\begin{equation}\label{}
			\Delta(uv)=\Delta(u)\cdot \Delta(v), \quad \forall\, u,v\in W.
		\end{equation}
		\item[(2)]Compatibility of comultiplication and unit  $\Delta_2 \comp \eta=(\id \otimes \mu \otimes \id )\comp  (\Delta \otimes \Delta)  \comp(\eta \otimes \eta) =  (\id \otimes \mu \otimes \id )\comp (\id \otimes \sigma_{W,W}\otimes \id)\comp (\Delta \otimes \Delta)\comp (\eta \otimes \eta)$, 
			\begin{equation}\label{eq:unitG}
				\begin{aligned}
			\begin{tikzpicture}
				\draw[black, line width=1.0pt] (-0.1,-1) arc (180:360:0.3);
				\draw[black, line width=1.0pt] (-0.1,-0.8)--(-0.1,-1);
				\draw[black, line width=1.0pt] (0.5,-1).. controls (0.6,-0.6)and (0.7,-0.6) ..(0.8,-0.5);
				\draw[black, line width=1.0pt] (-0.4,-0.5) arc (180:360:0.3);
				\draw[black, line width=1.0pt] (0.2,-1.3)--(0.2,-1.6);
				\draw [fill = white] (0.2,-1.6) circle (2pt);
			\end{tikzpicture}		\end{aligned}
=			
	\begin{aligned}	\begin{tikzpicture}
		\draw[black, line width=1.0pt] (-0.1,-0.8) arc (180:360:0.3);
		\draw[black, line width=1.0pt] (1,-0.8) arc (180:360:0.3);
		\draw[black, line width=1.0pt] (1,0) arc (0:180:0.25);
		\draw[black, line width=1.0pt] (-0.1,0.5)--(-0.1,-0.8);
        \draw[black, line width=1.0pt] (0.5,0)--(0.5,-0.8);
          \draw[black, line width=1.0pt] (1,0)--(1,-0.8);
		\draw[black, line width=1.0pt] (0.75,0.25)--(0.75,0.5);
		\draw[black, line width=1.0pt] (1.6,0.5)--(1.6,-0.8);
		\draw[black, line width=1.0pt] (1.3,-1.1)--(1.3,-1.35);
		\draw[black, line width=1.0pt] (0.2,-1.1)--(0.2,-1.35);
		\draw [fill = white](0.2,-1.35) circle (2pt);
		\draw [fill = white] (1.3,-1.35) circle (2pt);
	\end{tikzpicture}
\end{aligned}
=
			\begin{aligned}	\begin{tikzpicture}
				\draw[black, line width=1.0pt] (-0.1,-0.8) arc (180:360:0.3);
					\draw[black, line width=1.0pt] (1,-0.8) arc (180:360:0.3);
			\draw[black, line width=1.0pt] (1,0) arc (0:180:0.25);
				\draw[black, line width=1.0pt] (-0.1,0.5)--(-0.1,-0.8);
					\braid[
				width=0.5cm,
				height=0.3cm,
				line width=1.0pt,
				style strands={1}{black},
				style strands={2}{black}] (Kevin)
				s_1^{-1} ;
				\draw[black, line width=1.0pt] (0.75,0.25)--(0.75,0.5);
				\draw[black, line width=1.0pt] (1.6,0.5)--(1.6,-0.8);
	    	\draw[black, line width=1.0pt] (1.3,-1.1)--(1.3,-1.35);
	    	\draw[black, line width=1.0pt] (0.2,-1.1)--(0.2,-1.35);
	    		\draw [fill = white](0.2,-1.35) circle (2pt);
	    		\draw [fill = white] (1.3,-1.35) circle (2pt);
			\end{tikzpicture}
		\end{aligned}\,\,\,,
	\end{equation}
		\begin{equation}\label{}
			\begin{aligned}
			(\Delta\otimes \id)\comp \Delta(1_W)=&(\Delta(1_W)\otimes 1_W)(1_W\otimes \Delta(1_W))\\
			=&(1_W\otimes \Delta(1_W))(\Delta(1_W)\otimes 1_W).
	\end{aligned}
	\end{equation}
		\item[(3)]Compatibility of multiplication and counit $\varepsilon \comp \mu \comp (\id \otimes \mu) = (\varepsilon \otimes \varepsilon)\comp (\mu \otimes \mu)\comp (\id \otimes  \Delta \otimes\id)=(\varepsilon \otimes \varepsilon)\comp (\mu \otimes \mu)\comp (\id \otimes  \Delta^{op} \otimes\id)$, 
					\begin{equation}
			\begin{aligned}
				\begin{tikzpicture}
					\draw[black, line width=1.0pt] (0.5,-0.8) arc (0:180:0.3);
					\draw[black, line width=1.0pt] (-0.1,-0.8)--(-0.1,-1);
					\draw[black, line width=1.0pt] (0.2,-0.5)--(0.2,-0.25);
				    \draw[black, line width=1.0pt]  (0.5,-0.8).. controls (0.6,-1.18)and (0.7,-1.25) ..(0.8,-1.3);
					\draw[black, line width=1.0pt] (0.2,-1.3) arc (0:180:0.3);
							\draw [fill = white] (0.2,-0.25) circle (2pt);
			\end{tikzpicture}		\end{aligned}
			=			
\begin{aligned}	\begin{tikzpicture}
		\draw[black, line width=1.0pt] (0.5,0) arc (0:180:0.3);
		\draw[black, line width=1.0pt] (1.6,0) arc (0:180:0.3);
		\draw[black, line width=1.0pt] (0.5,-0.8) arc (180:360:0.25);
		\draw[black, line width=1.0pt] (0.5,0)--(0.5,-0.8);
			\draw[black, line width=1.0pt] (1,0)--(1,-0.8);
		\draw[black, line width=1.0pt] (-0.1,0)--(-0.1,-1.3);
		\draw[black, line width=1.0pt] (1.6,0)--(1.6,-1.3);
		\draw[black, line width=1.0pt] (0.75,-1.05)--(0.75,-1.3);
		\draw[black, line width=1.0pt] (1.3,0.3)--(1.3,0.55);
		\draw[black, line width=1.0pt] (0.2,0.3)--(0.2,0.55);
		\draw [fill = white]  (0.2,0.55) circle (2pt);
		\draw [fill = white] (1.3,0.55) circle (2pt);
	\end{tikzpicture}
\end{aligned}
			=
			\begin{aligned}	\begin{tikzpicture}
					\draw[black, line width=1.0pt] (0.5,0) arc (0:180:0.3);
               	\draw[black, line width=1.0pt] (1.6,0) arc (0:180:0.3);
               	 	\draw[black, line width=1.0pt] (0.5,-0.8) arc (180:360:0.25);
					\braid[
					width=0.5cm,
					height=0.3cm,
					line width=1.0pt,
					style strands={1}{black},
					style strands={2}{black}] (Kevin)
					s_1^{-1} ;
					\draw[black, line width=1.0pt] (-0.1,0)--(-0.1,-1.3);
					\draw[black, line width=1.0pt] (1.6,0)--(1.6,-1.3);
					\draw[black, line width=1.0pt] (0.75,-1.05)--(0.75,-1.3);
					\draw[black, line width=1.0pt] (1.3,0.3)--(1.3,0.55);
					\draw[black, line width=1.0pt] (0.2,0.3)--(0.2,0.55);
					\draw [fill = white]  (0.2,0.55) circle (2pt);
					\draw [fill = white] (1.3,0.55) circle (2pt);
				\end{tikzpicture}
			\end{aligned}\,\,\,,
		\end{equation}
		\begin{equation}\label{eq:counit}
\varepsilon(uvw)=\sum_{(v)}\varepsilon(uv^{(1)})\varepsilon(v^{(2)}w)=\sum_{(v)}\varepsilon(uv^{(2)})\varepsilon(v^{(1)}w), \quad \forall\, u,v,w\in W.
		\end{equation}
	\end{enumerate}
\end{definition}

\begin{definition}[Weak Hopf algebra \cite{bohm1996coassociative,BOHM1998weak,nill1998axioms}]
\label{def:antipode}
	A complex weak Hopf algebra $(W,\mu,\eta,\Delta,\varepsilon,S)$ is a weak bialgebra  $(W,\mu,\eta,\Delta,\varepsilon)$ equipped with a linear morphism $S: W\to W$ called antipode which satisfies the following three conditions:
	\begin{enumerate}
		\item[(1)] $\mu \comp (\mathrm{id}_W \otimes S) \comp \Delta  =  (\varepsilon \otimes \mathrm{id}_W) \comp (\mu \otimes \mathrm{id}_W) \comp (\mathrm{id}_W \otimes \sigma_{W, W}) \comp (\Delta \otimes \mathrm{id}_W) \comp (\eta \otimes \mathrm{id}_W)$, 
					\begin{equation}
			\begin{aligned}	
				\begin{tikzpicture}
					\draw[black, line width=1.0pt] (0.5,0) arc (0:180:0.3);
					\draw[black, line width=1.0pt] (-0.1,-0.8) arc (180:360:0.3);
					\draw[black, line width=1.0pt] (-0.1,0)--(-0.1,-0.8);
					\draw[black, line width=1.0pt] (0.5,-0.8)--(0.5,-0.65);
					\draw[black, line width=1.0pt] (0.5,0)--(0.5,-0.3);
					\draw[black, line width=1.0pt] (0.2,-1.1)--(0.2,-1.35);
					\draw[black, line width=1.0pt] (0.2,0.3)--(0.2,0.55);
					\draw (0.3,-0.3) rectangle (0.7,-0.65);
					\node (start) [at=(Kevin-1-s),yshift=-0.47cm] {$S$};
				\end{tikzpicture}
			\end{aligned}
			=
			\begin{aligned}	
				\begin{tikzpicture}
					\draw[black, line width=1.0pt] (0.5,0) arc (0:180:0.3);
					\draw[black, line width=1.0pt] (-0.1,-0.8) arc (180:360:0.3);
					\braid[
					width=0.5cm,
					height=0.3cm,
					line width=1.0pt,
					style strands={1}{black},
					style strands={2}{black}] (Kevin)
					s_1^{-1} ;
					\draw[black, line width=1.0pt] (-0.1,0)--(-0.1,-0.8);
					\draw[black, line width=1.0pt] (1,-0.8)--(1,-1.35);
					\draw[black, line width=1.0pt] (1,0)--(1,0.55);
					\draw[black, line width=1.0pt] (0.2,-1.1)--(0.2,-1.35);
					\draw[black, line width=1.0pt] (0.2,0.3)--(0.2,0.55);
					\draw [fill = white]  (0.2,0.55) circle (2pt);
					\draw [fill = white] (0.2,-1.35) circle (2pt);
				\end{tikzpicture}
			\end{aligned}\,\,\,.
		\end{equation}
        The map on the right hand side will be denoted as $\varepsilon_L: W\to W$,  $\varepsilon_L(h)=(\varepsilon\otimes\id)(\Delta(1_W)(h\otimes 1_W))=\sum_{(1_W)} \varepsilon(1_W^{(1)}h)1^{(2)}_W$ and we denote $W_L=\varepsilon_L(W)$.
		Equivalently, we have
	\begin{equation}\label{eq:SL}
		\mu\comp(\id\otimes S)\comp\Delta(h)=\varepsilon_L(h),\quad  \forall\, h\in W.
	\end{equation}

		\item[(2)] $\mu \comp (S \otimes \mathrm{id}_W) \comp \Delta = (\mathrm{id}_W \otimes \varepsilon) \comp (\mathrm{id}_W \otimes \mu) \comp (\sigma_{W, W} \otimes \mathrm{id}_W) \comp (\mathrm{id}_W \otimes \Delta) \comp (\mathrm{id}_W \otimes \eta)$, 	
							\begin{equation}
			\begin{aligned}	
				\begin{tikzpicture}
					\draw[black, line width=1.0pt] (0.5,0) arc (0:180:0.3);
					\draw[black, line width=1.0pt] (-0.1,-0.8) arc (180:360:0.3);
					\draw[black, line width=1.0pt] (-0.1,-0.65)--(-0.1,-0.8);
					\draw[black, line width=1.0pt] (-0.1,0)--(-0.1,-0.3);
					\draw[black, line width=1.0pt] (0.5,-0.8)--(0.5,0);
					\draw[black, line width=1.0pt] (0.2,-1.1)--(0.2,-1.35);
					\draw[black, line width=1.0pt] (0.2,0.3)--(0.2,0.55);
					\draw (-0.3,-0.3) rectangle (0.1,-0.65);
					\node (start) [at=(Kevin-1-s),xshift=-0.58cm,yshift=-0.47cm] {$S$};
				\end{tikzpicture}
			\end{aligned}
			=
			\begin{aligned}	
				\begin{tikzpicture}
					\draw[black, line width=1.0pt] (1.6,0) arc (0:180:0.3);
					\draw[black, line width=1.0pt] (1.3,0.3)--(1.3,0.55);
					\draw[black, line width=1.0pt] (1.0,-0.8) arc (180:360:0.3);
					\draw[black, line width=1.0pt] (1.6,0)--(1.6,-0.8);
					\draw[black, line width=1.0pt] (1.3,-1.1)--(1.3,-1.35);
					\braid[
					width=0.5cm,
					height=0.3cm,
					line width=1.0pt,
					style strands={1}{black},
					style strands={2}{black}] (Kevin)
					s_1^{-1} ;
					\draw[black, line width=1.0pt] (0.5,0)--(0.5,0.55);
					\draw[black, line width=1.0pt] (0.5,-0.6)--(0.5,-1.35);
					\draw [fill = white] (1.3,0.55) circle (2pt);
					\draw [fill = white] (1.3,-1.35) circle (2pt);
				\end{tikzpicture}
			\end{aligned}\,\,\,.
		\end{equation}
			The map on the right hand side will be denoted as $\varepsilon_R: W\to W$,  $\varepsilon_R(h)=(\id\otimes \varepsilon)((1_W\otimes h)\Delta(1_W))=\sum_{(1_W)} 1_W^{(1)}\varepsilon(h1_W^{(2)})$ and we denote $W_R=\varepsilon_R(W)$. 
		Equivalently, we have
	\begin{equation}\label{eq:SR}
		\mu\comp(S\otimes \id)\comp\Delta(h)=\varepsilon_R(h),\quad \forall\, h\in W.
	\end{equation}

	   \item[(3)] $S  = \mu \comp (\mu \otimes \mathrm{id}_W) \comp (S \otimes \mathrm{id}_W \otimes S) \comp (\Delta \otimes \mathrm{id}_W) \comp \Delta$, 
	   \begin{equation}
	   		\begin{aligned}	
	   		\begin{tikzpicture}
	   			\draw[black, line width=1.0pt] (-0.1,-0.65)--(-0.1,-1.05);
	   			\draw[black, line width=1.0pt] (-0.1,0.1)--(-0.1,-0.3);
	   			\draw (-0.3,-0.3) rectangle (0.1,-0.65);
	   			\node (start) [at=(Kevin-1-s),xshift=-0.58cm,yshift=-0.47cm] {$S$};
	   		\end{tikzpicture}
	   	\end{aligned}
   	=
	   	\begin{aligned}	
	   	\begin{tikzpicture}
	   		\draw[black, line width=1.0pt] (0.5,0) arc (0:180:0.3);
	   		\draw[black, line width=1.0pt] (-0.1,-0.4) arc (180:360:0.3);
	   		\draw[black, line width=1.0pt] (0.5,-0.4)--(0.5,0);
	   		\draw[black, line width=1.0pt] (0.2,-0.7) arc (180:360:0.3);
	   		\draw[black, line width=1.0pt] (0.8,0.3) arc (0:180:0.3);
	   		\draw[black, line width=1.0pt] (0.8,0.3)--(0.8,0);
	   		\draw[black, line width=1.0pt] (0.8,-0.7)--(0.8,-0.4);
	   		\draw[black, line width=1.0pt] (0.5,0.6)--(0.5,0.85);
	   		\draw[black, line width=1.0pt] (0.5,-1.25)--(0.5,-1);
	   		\draw (-0.3,0) rectangle (0.1,-0.4);
	   		\node (start) [at=(Kevin-1-s),xshift=-0.58cm,yshift=-0.2cm] {$S$};
	   		\draw (0.6,-0.4) rectangle (1,0);
	   		\node (start) [at=(Kevin-1-s),xshift=0.35cm,yshift=-0.2cm] {$S$};
	   	\end{tikzpicture}
	   \end{aligned}\,\,\,.
	   \end{equation}
	   Equivalently, we have
		\begin{equation}\label{}
			S(h)=\sum_{(h)} S(h^{(1)})h^{(2)} S(h^{(3)}),\quad \forall\, h \in W.
		\end{equation}
	\end{enumerate}
\end{definition}

A Hopf algebra is a particular case of weak Hopf algebra.
If $\Delta(1_W)=1_W\otimes 1_W$, then $\Delta$ is an algebra homomorphism. From Eq.~\eqref{eq:counit} by taking $v=1_W$, we see that $\varepsilon$ is also an algebra homomorphism. Eqs.~\eqref{eq:SL} and \eqref{eq:SR} imply the consistency condition of the antipode for Hopf algebra. In this case, $W$ becomes a Hopf algebra.
Also, if $\varepsilon$ is a homomorphism of algebras, acting on the middle top of Eq.~(\ref{eq:unitG})
with $\varepsilon$, from graphical calculus, it is easily checked that $\Delta(1_W)=1_W\otimes 1_W$, so $W$ becomes a Hopf algebra. If $S$ is a Hopf antipode, then $W$ also becomes a Hopf algebra \cite{nill1998axioms,BOHM1998weak}.

Notice that for a weak Hopf algebra, the antipode is anti-multiplicative and anti-comultiplicative
\begin{equation}
	S(xy)=S(y)S(x), \quad \sum_{(S(x))}S(x)^{(1)}\otimes S(x)^{(2)}=\sum_{(x)}S(x^{(2)})\otimes S(x^{(1)}).
\end{equation}
The antipode preserves the unit and the counit
\begin{equation}
	S(1_W)=1_W,\quad \varepsilon\comp S=\varepsilon.
\end{equation}
And the antipode $S$ is invertible. See \cite[Theorem 2.10]{BOHM1998weak} for proofs of the above properties.
Note that in contrast to finite-dimensional semisimple Hopf algebras where $S^2=\id$, the antipode in a finite-dimensional weak Hopf algebra is not necessarily involutive \cite{nikshych2003invariants}.
If $W$ is a weak Hopf algebra such that $S^2=\id$, it is called a weak Kac algebra. 

The $W_L=\varepsilon_L(W),W_R=\varepsilon_R(W)$ in Definition~\ref{def:antipode} are two separable subalgebras of $W$\,\footnote{When $W$ is a Hopf algebra, $\varepsilon_L=\varepsilon_R: x\mapsto \varepsilon(x)1_W$.}. They are called left and right counital subalgebras and play crucial roles in studying weak Hopf symmetry.
We have $\varepsilon_L\comp S=\varepsilon_L\comp\varepsilon_R=S\comp \varepsilon_R$ and $\varepsilon_R\comp S=\varepsilon_R\comp\varepsilon_L=S\comp \varepsilon_L$. It holds that $\Delta(1_W)\in W_R\otimes W_L$.

A complex weak Hopf algebra $W$ is called simple (or indecomposable) if its underlying algebra $(W,\mu,\eta)$ has no nontrivial subalgebras, and is called semisimple if its underlying algebra can be written as a direct sum of simple algebras. 
A $*$-weak Hopf algebra $(W,*)$ is a weak Hopf algebra $W$ equipped with a $C^*$-structure  $*:W\to W$ such that $\Delta$ is a $*$-homomorphism. That is
\begin{equation}
	(x^*)^*=x,\; (x+y)^*=x^*+y^*,\; (xy)^*=y^*x^*,\; (c x)^*=\bar{c} x^*,\;
	\Delta(x)^*=\Delta (x^*),
\end{equation}
for all $x,y\in W$ and $c\in \mathbb{C}$. We also have $S(x^*)=S^{-1}(x)^*$. 
$(W,*)$ is called a $C^*$-weak Hopf algebra if there exists a fully faithful $*$-representation $\rho: W\to \mathbf{B}(\mathcal{H})$ for some operator space over some Hilbert space $\mathcal{H}$.

Let $W$ be a weak Hopf algebra. A left (resp. right) integral of $W$ is an element $l$ (resp. $r$) satisfying $xl=\varepsilon_L(x) l$ (resp. $rx=r\varepsilon_R(x)$). 
A left (resp. right) integral $l$ (resp. $r$) is called left (resp. right) normalized if $\varepsilon_L(l)=1_W$ (resp. $\varepsilon_R(r)=1_W$).
If $h$ is both a left and right integral, it is called a two-side integral.
A Haar integral in $W$ (or Haar measure on $\hat{W}$) is a two-side normalized two-side integral.

Notice that Haar integral in a weak Hopf algebra, if exists,  must be unique. To see this, suppose that $h,h'$ are two Haar integrals, then $h'=\varepsilon_{L} (h)h' = hh'=h\varepsilon_R(h')=h$. Since $S(h)$ is a Haar integral if $h$ is, then from uniqueness we see that Haar integral is $S$-invariant, $S(h)=h$. It is also clear that $h^2=\varepsilon_L(h)h=h$.
A  $C^*$ weak Hopf algebra always has a unique Haar integral $h$ satisfying $h^*=h$ \cite{BOHM1998weak}. 
An element $h\in W$ is called cocommutative if $\Delta^{\rm op}(h)=\Delta(h)$; the set of all cocommutative elements in $W$ is denoted as $\operatorname{Cocom}(W)$. We will only consider weak Hopf algebras which are $C^*$ and whose Haar integral is cocommutative.

A crucial tool that we will use is the pairing between two weak Hopf algebras.
     A pairing $\lambda=\langle \bullet, \bullet \rangle: J\otimes K\to \mathbb{C}$ between two weak Hopf algebras $J,K$ is a bilinear map  satisfying
     \begin{align}
       &  \langle hg,a\rangle = \sum_{(a)} \langle h,a^{(1)}\rangle \langle g,a^{(2)}\rangle,\\
    &     \langle h,ab\rangle =\sum_{(h)} \langle h^{(1)} ,a\rangle \langle h^{(2)},b\rangle,\\
     &    \langle 1_J,a\rangle =\varepsilon_K(a),\quad
         \langle h,1_K\rangle =\varepsilon_J(h).
     \end{align}
     A pairing between $J^{\rm cop}$ (the weak Hopf algebra with coproduct $\Delta^{\rm op}$, see below) and $K$ is called a skew-pairing between $J$ and $K$.

For a $C^*$ weak Hopf algebra $W$, its dual space $\hat{W}:=\Hom (W,\mathbb{C})=W^{\vee}$ has a canonical $C^*$ weak Hopf algebra structure induced by the canonical pairing $\langle\bullet, \bullet \rangle: W^{\vee}\times W\to \mathbb{C}$, $\langle \varphi, h\rangle:=\varphi(h)$. More precisely, $\hat{\mu}=\Delta^{\vee}$, $\hat{\eta}=\varepsilon^{\vee}$, $\hat{\Delta}=\mu^{\vee}$, $\hat{\varepsilon}=\eta^{\vee}$ and $\hat{S}=S^{\vee}$, \emph{viz.},
\begin{align}
&	\langle \hat{\mu}(\varphi\otimes \psi),x\rangle=\langle\varphi\otimes \psi, \Delta(x)\rangle,\\
&	\langle \hat{\eta} (1),x\rangle= \varepsilon(x),   \; \text{i.e.,}\; \hat{1}=\varepsilon,\\
&   \langle \hat{\Delta}(\varphi), x\otimes y \rangle=\langle \varphi, \mu(x\otimes y)\rangle,\\
&    \hat{\varepsilon}(\varphi)=\langle \varphi, \eta(1)\rangle,\\
&    \langle \hat{S}(\varphi),x\rangle =\langle \varphi, S(x)\rangle.
\end{align}
The star operation on $\hat{W}$ is defined as 
\begin{equation}
	\langle \varphi^*, x\rangle=\overline{ \langle \varphi, S(x)^*\rangle}. 
\end{equation}
The opposite weak Hopf algebra $W^{\rm op}$ is defined as $(W, \mu^{\rm op},\eta,\Delta,\varepsilon, S^{-1})$, and the co-opposite weak Hopf algebra $W^{\rm cop}$ is defined as $(W, \mu,\eta, \Delta^{\rm op},\varepsilon,S^{-1})$. It is easily checked that $(W^{\rm op})^{\vee}\simeq (W^{\vee})^{\rm cop}$ and  $(W^{\rm cop})^{\vee}\simeq (W^{\vee})^{\rm op}$ as weak Hopf algebras.

When considering the action of $W$ on $\hat{W}$, the Sweedler's arrow notation will be useful,
\begin{equation} \label{eq:poon}
	x\rightharpoonup \varphi :=\sum_{(\varphi)} \varphi^{(1)} \langle \varphi^{(2)}, x\rangle, \quad  \varphi \leftharpoonup x :=\sum_{(\varphi)}\langle \varphi^{(1)},x\rangle \varphi^{(2)}.
\end{equation}
Since $W$ is the dual weak Hopf algebra of $\hat{W}$, we also have
\begin{equation}
\varphi \rightharpoonup x:=\sum_{(x)}x^{(1)} \langle \varphi, x^{(2)}\rangle, \quad  x \leftharpoonup \varphi:=\sum_{(x)}   \langle \varphi,    x^{(1)}    \rangle  x^{(2)}.
\end{equation}
The left (resp. right) action of $W$ on $W$ will be denoted as $h\triangleright x:=hx$ (resp. $x\triangleleft h = xh$), and similarly for $\hat{W}$.

The following formulae are useful for computation. 
    For $x_L=\varepsilon_L(x)\in W_L,~y_R=\varepsilon_R(y)\in W_R$ and $\varphi\in\hat{W}$, the following identities hold: 
    \begin{align}
        & \varphi \leftharpoonup y_R = \varphi(\varepsilon \leftharpoonup y_R), \quad y_R \rightharpoonup \varphi = \varphi(y_R\rightharpoonup \varepsilon),\\
        &\varepsilon \leftharpoonup S(x_L) = \varepsilon \leftharpoonup x_L,  \quad   S(y_R)\rightharpoonup \varepsilon = y_R\rightharpoonup \varepsilon, \label{eq:S-and-original} \\
       \varepsilon & \leftharpoonup S^{-1}(y_R) = \varepsilon \leftharpoonup y_R, \quad S^{-1}(x_L) \rightharpoonup \varepsilon = x_L \rightharpoonup \varepsilon.  \label{eq:S-inverse-and-original}
    \end{align}
    The first line is from \cite[Scholium 2.7]{BOHM1998weak}. To prove the second line, for any $y\in W$, 
    \begin{align*}
        \langle \varepsilon \leftharpoonup S(x_L),y \rangle & = \varepsilon(S(x_L)y) = \varepsilon(S(\varepsilon_L(x))y) \\
        & = \varepsilon(\varepsilon_R(S(x))y) = \varepsilon(\varepsilon_R( \varepsilon_L(x))y) \\
        &= \varepsilon(x_Ly) = \langle  \varepsilon \leftharpoonup x_L, y\rangle.
    \end{align*}
    Since this is true for any $y\in W$, it follows that $\varepsilon \leftharpoonup S(x_L) = \varepsilon \leftharpoonup x_L$. The other identity in the second line is proved similar. The third lines follow from the second lines by taking $x'_L = S^{-1}(y_R)$ and $y'_R = S^{-1}(x_L)$. 

The Haar integral of $\hat{W}$ is called the Haar measure of $W$. It is proved in \cite{BOHM1998weak} that the Haar measure induces an inner product structure over $W$ by
\begin{equation}
    \langle x, y\rangle =\varphi_{\What}(x^*y).
\end{equation}
Hereinafter, we will denote the Haar integrals of $W$ and $\What$ as $h_W$ and $\varphi_{\What}$ respectively.

In this work, we will mainly consider the lattice model whose weak Hopf symmetry is given by the quantum double of some weak Hopf algebra $W$. We shall call such a kind of lattice model a weak Hopf quantum double model.
By definition, the quantum double of $W$ is a quotient algebra of $\hat{W}^{\rm cop} \otimes W$ equipped with a weak Hopf algebra structure; we denote it as $D(W)$, and the elements are equivalence classes $[\varphi\otimes h]$.\footnote{There are several different constructions of quantum double \cite{majid2000foundations}. To keep the construction consistent with that for the Hopf algebra case  discussed in our previous work \cite{jia2022boundary}, we choose the construction here. The generalization of results in this work to all other constructions is straightforward.}
Both $\hat{W}^{\rm cop}$ and $W$ can be embedded into $D(W)$ as weak Hopf subalgebras. 

For a weak Hopf algebra $W$, $\hat{W}^{\rm cop} \otimes W$ is an algebra with the multiplication \cite{drinfel1988quantum,majid1990physics,majid1994some}
\begin{equation}
	(\varphi \otimes h) (\psi\otimes g)= \sum_{(h)} \sum_{(\psi)}  \varphi \psi^{(2)} \otimes  h^{(2)} g  \langle \psi^{(1)}, S^{-1}(h^{(3)}) \rangle \langle \psi^{(3)}, h^{(1)}\rangle,
\end{equation}
and the unit $\varepsilon\otimes 1_W$. Here the comultiplication of $\psi$ is taken in $\hat{W}$. The linear span $J$ of the elements 
\begin{align}
	\varphi \otimes xh -  \varphi (x\rightharpoonup \varepsilon)\otimes h, \quad x\in W_L,\\
	\varphi \otimes yh -  \varphi ( \varepsilon \leftharpoonup y) \otimes h, \quad y\in W_R,
\end{align}
is a two-sided ideal of $\hat{W}^{\rm cop} \otimes W$ (see Appendix \ref{app:QD}). We denote the quotient algebra $(\hat{W}^{\rm cop} \otimes W)/J$ as $D(W)$ and equivalent classes in $D(W)$ as $[\varphi \otimes h]$ for $\varphi\otimes h\in \hat{W}^{\rm cop}\otimes W$.

\begin{definition}[Quantum double] \label{def:quantum-double}
The quantum double of $W$ is $D(W)$ equipped with the following weak Hopf algebra structure:
\begin{enumerate}
	\item[(1)] The multiplication $[\varphi \otimes h] [\psi \otimes g]=\sum_{(\psi),(h)} [\varphi \psi^{(2)} \otimes  h^{(2)} g]  \langle \psi^{(1)}, S^{-1}(h^{(3)}) \rangle \langle \psi^{(3)}, h^{(1)}\rangle$.
	\item[(2)] The unit $[\varepsilon \otimes 1_W]$.
	\item[(3)] The comultiplication $\Delta([\varphi \otimes  h])= \sum_{(\varphi), (h)} [\varphi^{(2)} \otimes h^{(1)}] \otimes [\varphi^{(1)} \otimes h^{(2)}] $.
	\item[(4)] The counit $\varepsilon ([\varphi \otimes h])= \langle \varphi, \varepsilon_R(S^{-1}(h))\rangle$.
	\item[(5)] The antipode $S([\varphi \otimes h])
	= \sum_{(\varphi), (h)} [\hat{S}^{-1}(\varphi^{(2)}) \otimes S(h^{(2)})]
	\langle  \varphi^{(1)},h^{(3)}\rangle
	\langle  \varphi^{(3)}, S^{-1}(h^{(1)})\rangle$.
\end{enumerate}
It can be verified $D(W)$ is indeed a weak Hopf algebra (see Appendix \ref{app:QD}); see also \cite{nikshych2003invariants} for other construction. We will also use the notation $D(W)=\What \Join W$, and  ``$\Join$'' is usually called the bicrossed product.
\end{definition}

Both $\hat{W}^{\rm cop}$ and $W$ are weak Hopf subalgebras of $D(W)$ with inclusion maps
\begin{align}
&	i_{\hat{W}^{\rm cop}} :\hat{W}^{\rm cop} \hookrightarrow D(W), \quad \varphi \mapsto [\varphi\otimes 1_W],\\
&	i_{W}: W\hookrightarrow D(W), \quad h \mapsto [\varepsilon \otimes h].
\end{align}
This is the origin of the name ``quantum double''. Also, these inclusions give us the identifications $W\simeq (\varepsilon\otimes W)/J$ and $\hat{W}^{\rm cop}\simeq (\hat{W}^{\rm cop}\otimes 1_W)/J$. The multiplication in $D(W)$ implies that 
\begin{align}
    i_{\hat{W}^{\rm cop}}(\varphi)i_{W}(h) & = \sum_{(1_W),(\varepsilon)}[\varphi\varepsilon^{(2)}\otimes 1_W^{(2)}h]\langle \varepsilon^{(1)},S^{-1}(1_W^{(3)})\rangle \langle \varepsilon^{(3)},1_W^{(1)} \rangle \nonumber \\
    & = \sum_{(1_W),(\varepsilon)}\sum_{(1_W')}[\varphi\varepsilon^{(2)}\otimes \langle \varepsilon^{(3)},1_W^{(1)} \rangle 1_W^{(2)}1_W'^{(1)}\langle \varepsilon^{(1)},S^{-1}(1_W'^{(2)})\rangle h] \nonumber \\
    & = \sum_{(1_W),(\varepsilon)}\sum_{(1_W')}[\varphi\varepsilon^{(2)}(1_W^{(2)}\rightharpoonup \varepsilon)\langle \varepsilon^{(3)},1_W^{(1)} \rangle (\varepsilon\leftharpoonup 1_W'^{(1)})\langle \varepsilon^{(1)},S^{-1}(1_W'^{(2)})\rangle\otimes h] \nonumber \\
    & = \sum_{(\varepsilon)}[\varphi\varepsilon^{(2)}\varepsilon_R(\varepsilon^{(3)})S^{-1}(\varepsilon_R(\varepsilon^{(1)}))\otimes h] = [\varphi\otimes h], \label{eq:prod-with-unity}
\end{align}
for all $\varphi\in \hat{W}^{\rm cop}$ and $h\in W$.\footnote{In general, $[\varphi\otimes 1_W][\psi\otimes h] = [\varphi\psi\otimes h]$ for any $\varphi,\psi\in\hat{W}$ and $h\in W$. This can be proved exactly as Eq.~\eqref{eq:prod-with-unity}.} Write $\varphi h$ instead of $i_{\hat{W}^{\rm cop}}(\varphi)i_{W}(h)$ for simplicity. Then the multiplication in $D(W)$ is determined by the following ``straightening formula''
\begin{equation}
    h\varphi = \sum_{(h)}\varphi(S^{-1}(h^{(3)})\bullet h^{(1)})h^{(2)},\quad \forall\, \varphi\in\hat{W}^{\rm cop},\, \forall\, h\in W,
\end{equation}
where ``$\bullet$'' is the argument of the function. For cocommutative $h\in \operatorname{Cocom}(W)$ and $\varphi \in \operatorname{Cocom}(\hat{W}^{\rm cop})$, we have
\begin{equation}
	[\varphi \otimes 1] [\varepsilon \otimes h] = [\varphi \otimes h] = [\varepsilon \otimes h]  [\varphi \otimes 1].
\end{equation}

\begin{definition}[Yetter-Drinfeld module]
    A left-right Yetter-Drinfeld module over $W$ is a complex vector space $M$ such that:
    \begin{itemize}
        \item[(1)] $M$ is a left $W$-module with the action $W\otimes M\to M$, $h\otimes m\mapsto h\triangleright m$;
        \item[(2)] $M$ is a right $W$-comodule with the coaction $\rho:M\to M\otimes W$, $\rho(m) = \sum_{(m)}m^{[0]}\otimes m^{[1]}$;
        \item[(3)] The action and coaction are compatible: for all $h\in W$ and $m\in M$, 
        \begin{align*}
            \sum_{(h),(m)}(h^{(1)}\triangleright m^{[0]})\otimes h^{(2)}m^{[1]} &= \sum_{(h)}(h^{(2)}\triangleright m)^{[0]}\otimes (h^{(2)}\triangleright m)^{[1]}h^{(1)}, \\
            \sum_{(1_W),(m)}(1_W^{(1)}\triangleright m^{[0]})\otimes & 1_W^{(2)}m^{[1]} =\sum_{(m)}m^{[0]}\otimes m^{[1]}. 
        \end{align*}
    \end{itemize}
\end{definition}

Let ${_W}\mathsf{YD}^W$ be the category of left-right Yetter-Drinfeld modules over $W$ whose morphisms are $W$-linear and $W$-colinear. It is proved in \cite{nenciu2002center} that ${_W}\mathsf{YD}^W$ can be identified with $\mathsf{Rep}(D(W))$. In fact, for $V\in\mathsf{Rep}(D(W))$, it is naturally a left $W$-module and a left $\hat{W}$-module. Define the coaction $\rho:V\to V\otimes W$ by $\rho(v) = \sum_k (\hat{k}\triangleright v)\otimes k$ where $\{k\}$ and $\{\hat{k}\}$ are dual bases in $W$ and $\hat{W}$ respectively. One can verify that $V$ is a left-right Yetter-Drinfeld module. Conversely, for $M\in {_W}\mathsf{YD}^W$, the action $[\varphi\otimes h]\triangleright m = \sum_{(m)}\varphi(m^{[1]})h\triangleright m^{[0]}$ makes $M$ a left $D(W)$-module. By \cite[Theorem 4.5]{nenciu2002center}, $\mathcal{Z}(\mathsf{Rep}(W))\simeq \mathsf{Rep}(D(W))$ as braided tensor categories. Its proof makes use of the aforementioned identification through ${_W}\mathsf{YD}^W$.


\section{Weak Hopf symmetry}
\label{sec:WHAsymmetry}  
\subsection{Weak Hopf symmetry of vacuum sector}

Consider a quantum system $(H,\mathcal{V}_{\rm GS})$ with Hamiltonian $H$ and ground state space $\mathcal{V}_{\rm GS}$, the Hamiltonian symmetry $G_{H}$ is a group such that $U_gH=HU_g$ for all $g\in G_{H}$, and the ground state symmetry $G_{\rm GS}$ is a group such that $U_g$ stabilizes $\mathcal{V}_{GS}$ for all $g\in G_{\rm GS}$.
It is clear that $G_{H}\subseteq G_{\rm GS}$.
When dealing with quantum group symmetry, the requirement that the vacuum sector is invariant under the quantum group action is too strong, so we will need to modify the definition of symmetry \cite{bais2003hopf}.
In this section, we will establish the theory of weak Hopf symmetry.

For a quantum system, we are mainly concerned about the symmetry of the vacuum sector $\mathcal{V}_{\rm GS}$.
Recall that for usual group symmetry, to discuss the symmetry a phase $\mathcal{P}=\{H_k,\mathcal{V}_{\rm GS}^k\}$ (Here, the parameter $k$ serves as a label for the Hamiltonian in the phase. For example, in the case of a 1d Ising model, $k$ can be chosen to represent the length of the spin chain), we need to fix a general symmetry group, usually chosen as a unitary group\,\footnote{When considering time-reversal symmetries, we also need to consider the antiunitary operators. And for the open system, the Hamiltonian is replaced with a superoperator called Lindbladian, and the symmetry operators need to be generalized to linear and antilinear completely positive trace-preserving (CPTP) maps \cite{wei2022antilinear}.} $U(\mathcal{V}_{\rm GS}^k)$ for each $k$.
A $G$ symmetry of the system is a group homomorphism $G\to U(\mathcal{V}_{\rm GS}^k)$ such that $U_g$ stabilizes the 
$\mathcal{V}_{\rm GS}$ for all $g\in G$. However, when dealing with Hopf algebra, this requirement is too strong.
Bais et al. \cite{bais2003hopf} proposed the notion of Hopf symmetry, where a state $\psi$ is called Hopf invariant if $g\triangleright \psi= \varepsilon(g)\psi$.
For the weak Hopf algebra case, we cannot straightforwardly generalize the above definition.
To see this, consider the vacuum sector $\mathcal{V}_{\rm GS}$ which is a $W$-module such that $g\triangleright \phi=\varepsilon_L(g)\triangleright \phi$ for all $\phi \in \mathcal{V}_{\rm GS}$.
A general state should transform in a similar way as that of the vacuum state, this inspires us to introduce the following definition of weak Hopf symmetry:

\begin{definition}
    Let $W$ be a weak Hopf algebra and $\psi \in \mathcal{V}$ be a state in some $W$-module $\mathcal{V}$.
    We say that $\psi$ is left invariant under the action of $g\in W$ if $g\triangleright \psi =\varepsilon_L(g)\triangleright \psi$; similarly, $\psi$ is right invariant under the action of $g\in W$ if $\psi \triangleleft g=\psi \triangleleft \varepsilon_R(g)$.
\end{definition}

If $W$ is a Hopf algebra, the above definition reduces to the definition given in \cite{bais2003hopf}.
If $W$ is the group algebra, the group symmetry definition is also recovered.
The weak Hopf symmetry of a state $\psi$ can be naturally defined as the maximal weak Hopf subalgebra that leaves $\psi$ invariant in the sense of the above definition.

\begin{definition}
    For a system with ground state $\psi$, the weak  Hopf symmetry of the system is the maximal weak Hopf subalgebra in a weak Hopf algebra $W$ that leaves $\psi$ invariant.
\end{definition}

We will denote the weak Hopf symmetry of $\psi$ as $\operatorname{Stab}_W(\psi)$. 
More precisely, we define $\operatorname{Stab}_W(\psi)$ in such a way that, if weak Hopf subalgebra $K$ leaves $\psi$ invariant, then $K\subseteq \operatorname{Stab}_W(\psi)$.
Notice that it is unique, because if $K,G$ are two weak Hopf symmetries of $\psi$, then they can generate a larger weak Hopf symmetry that contains both of them, contradicting to the maximality of $\operatorname{Stab}_W(\psi)$.

The basic physical data of the system are particle types, conjugation of particles, and the fusion rule between these particles.
These physical data are captured by the category of unitary left $W$-modules $\Rep(W)$ for the weak Hopf symmetry $W$.
\begin{enumerate}
    \item Particle types (a.k.a., topological charges, sectors) are given by the equivalent classes of irreducible representations $a=(\rho,M_{\rho}) \in \operatorname{Irr}(W)$;
    \item Antiparticles are given by the dual (conjugation) of representations $\bar{a}=(\bar{\rho},M_{\bar{\rho}})$;
    \item Fusion rule is given by the fusion rule of irreducible representations $a\times b =\sum_{c}N_{ab}^c c$.
\end{enumerate}
Thus to understand these physical data, we need to explore the (unitary) representation category of $W$.

A unitary module $M$ is a $W$-module with an inner product such that $\langle x,h\triangleright y\rangle=\langle h^*\triangleright x,y\rangle$, for all $h\in W$, $x,y\in M$ ($h^*$ is the $C^*$ involution of $h$).
It is proved that $\Rep(W)$ is a unitary multi-fusion category \cite{nill1998axioms,nikshych2003invariants,bohm2011weak}.

The vacuum sector is given by the Gelfand–Naimark–Segal (GNS) representation associated with the counit $\varepsilon$.
More specifically, the module of vacuum sector is $\mathds{1}=W_L$, whose $W$-module structure is given by $h\triangleright x =\varepsilon_L(hx)$, for $h\in W$ and $x\in W_L$. The inner product is chosen as $\langle x,y\rangle := \varepsilon(x^*y)$.
Notice that  there are several equivalent choices of vacuum modules, like $\hat{W}_L,\hat{W}_R$, and $W_R$ with their corresponding module structures \cite[Lemma 2.12]{BOHM1998weak}.
The vacuum sector is in general not simple, so it can be decomposed into the direct sum of some particle sectors $\one =a_1\oplus \cdots \oplus a_n$.
The fusion (tensor product) of two $W$-modules $M,\,N$ are defined by $M\otimes N:=\{x\in M\otimes_{\mathbb
C} N\,|\,x=\Delta(1)\,x\}$\,\footnote{It is defined in this way to ensure that the identity $1$ acts as the identity of the tensor product of modules.}, and its $W$-module structure is given by $h\triangleright (x\otimes y)=\sum_{(h)}(h^{(1)}\triangleright x)
\otimes (h^{(2)}\triangleright y)$.
The tensor product of module morphisms $f_1,f_2$ is defined as the restriction of $f_1\otimes_{\mathbb{C}}f_2$ to the subspace $M\otimes N$.

Since $\Rep(W)$ is a rigid category, the antiparticle of $a=(\rho,M)$ is given by the dual $\bar{a}=(\bar{\rho},\bar{M})$. Here $\bar{M}=\hat{M}=\Hom(M,\mathbb{C})$, where the left $W$-module structure is given by $x\triangleright \phi=\phi(S(x)\bullet)$.
Since $W$ is a $C^*$ weak Hopf algebra, the left dual and right dual of a $W$-module are isomorphic ($M^{\vee}\cong {^{\vee}}M=:\bar{M}$).
The inner product of $\bar{M}$ is canonically induced by inner product of $M$, $\langle \bar{x},\bar{y}\rangle :=\langle x, y\rangle$.

\begin{definition}
\label{def:fusionclosed}
    For a fusion anyon model $\EC=\Rep(W)$, a subset of simple anyons (irreps) $\EA\subset \operatorname{Irr}(\EC)= \operatorname{Irr}(W)$ is called fusion closed if it is (i) conjugation closed: $a\in \EA \Rightarrow \bar{a}\in \EA$; (ii) tensor product closed: $a,b\in \EA$, and $a\times b =\sum_{c}N_{ab}^c c$, then $c \in \EA$ for all $N_{ab}^c\neq 0$.
\end{definition}

For a $(2+1)D$ phase, besides the charge type, antiparticle, and fusion rule, there is some extra data called braiding that characterizes the mutual statistics of these charges.
The weak Hopf algebra in this case needs to possess a quasitriangular structure.

A quasitriangular weak Hopf algebra $(W,R)$ is a weak Hopf algebra $W$ equipped with an element  $R=\sum_{i}a_i\otimes b_i \in \Delta^{\rm op}(1)(W\otimes_{\mathbb{C}} W) \Delta(1)$ which satisfies
    \begin{enumerate}
        \item $(\id \otimes \Delta)(R)=R_{13}R_{12}$,
        \item $(\Delta \otimes \id)(R)=R_{13}R_{23}$,
        \item $\Delta^{\rm op}(h)R=R\Delta(h),\forall \,h\in W$,
        \item there exists $\Tilde{R}\in \Delta(1)(W\otimes_{\mathbb{C}} W) \Delta^{\rm op}(1)$ such that $\tilde{R}R=\Delta(1)$ and $R\tilde{R}=\Delta^{\rm op}(1)$,
    \end{enumerate}
    where we have used the notation $R_{kl}=\sum_i1\otimes \cdots 
 \otimes 1\otimes a_i \otimes 1\otimes \cdots \otimes 1\otimes b_i\otimes 1 \otimes \cdots \otimes 1$ with $a_i$ and $b_i$ appearing in the $k$-th and $l$-th places respectively, whose length depends on the context.
For a weak Hopf algebra $W$, its quantum double $D(W)$ has a quasitriangular structure \cite{nikshych2003invariants}. We also include detailed proof of this fact in Appendix \ref{app:QD} for completeness.

\subsection{Weak Hopf symmetry breaking}

In this part, we generalize the results of Hopf symmetry breaking \cite{bais2003hopf} to the weak Hopf symmetry case.
Consider a system with vacuum sector $\mathcal{V}_{\one}$ that has a weak Hopf symmetry $W$. After condensation, the vacuum of the condensed phase is the sea of vacuum charges. The $W$ symmetry is broken and the vacuum sector becomes $\mathcal{V}_{\one'}$. The residual symmetry of the condensate is a weak Hopf subalgebra $V=\operatorname{Stab}_W(\mathcal{V}_{\one'})\subset W$.
The excitation of the effective theory must carry an irreducible representation of $\operatorname{Stab}_W(\mathcal{V}_{\one'})$.
Some charges of the original phase get confined, and others are free (deconfined).
These deconfined particles are fusion closed, and they are irreducible representations of a new weak Hopf symmetry $U$, which is the quotient of $V$. In this part we shall prove the following theorem:

\begin{theorem}
\label{thm:SymBreak1}
Consider a quantum phase $\EC =\Rep(W)$ with weak Hopf symmetry $W$, after the formation of condensate, the symmetry is broken into a weak Hopf subalgebra $V\hookrightarrow W$.
The new phase is given by $\ED=\Rep(V)$. In this new phase, the particles  that have nontrivial monodromy with condensate are confined, and the particles that have trivial monodromy with condensate are deconfined.
The set $\EA_{\rm dc}$ of deconfined particles are fusion closed.
These deconfined particles are irreducible representations of a new weak Hopf symmetry $U$, which is a weak Hopf quotient of $V$.
\begin{equation}
    \text{Symmetry breaking:}
    \begin{tikzcd}
 \Rep(W) \arrow[r, "F_1"] \arrow[d, leftrightarrow]
& \Rep(V)  \arrow[r, "F_2"] \arrow[d, leftrightarrow] & \Rep(U)\arrow[d, leftrightarrow]\\
W 
& V \arrow[l, hook,"\iota"] \arrow[r, two heads,"\pi"'] & U
\end{tikzcd}
\end{equation}
where $F_1$ and $F_2$ are monoidal functors, $\iota$ is an injective weak Hopf map (embedding), and $\pi$ is a surjective weak Hopf map (quotient).
\end{theorem}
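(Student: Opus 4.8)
The plan is to read the commuting diagram from the bottom up: I first construct the two weak Hopf maps $\iota$ and $\pi$, then read off $F_1,F_2$ as the functors they induce on representation categories, isolating the fusion-closure of $\EA_{\rm dc}$ as the technical core. For the left square, recall that $V=\operatorname{Stab}_W(\mathcal{V}_{\one'})$ is by definition the maximal weak Hopf subalgebra fixing the new vacuum, so the inclusion $\iota:V\hookrightarrow W$ is automatically an injective morphism of weak Hopf algebras; in particular it intertwines $\Delta,\varepsilon,\eta$ and $S$. Taking $F_1$ to be restriction of the $W$-action along $\iota$, the fact that $\iota$ preserves $\Delta$ (so the tensor product of modules, cut out by the idempotent $\Delta(1)$, is computed by the same coproduct before and after restriction) makes $F_1$ monoidal, and its compatibility with the vertical identifications $\Rep(-)\leftrightarrow(-)$ is then immediate. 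This square is essentially formal.

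Next I set up confinement. Since the ambient symmetry carries the quasitriangular structure $R$ of the quantum double, $\Rep(W)$ and hence $\Rep(V)$ are braided, with braiding $c$ built from $R$. I model the condensate by the object $\mathcal{V}_{\one'}$ and, for a simple $V$-module $a$, define its monodromy with the condensate via the double braiding $c_{a,\mathcal{V}_{\one'}}\circ c_{\mathcal{V}_{\one'},a}$, declaring $a$ \emph{deconfined} when this is trivial on the relevant $\Delta(1)$-subspace and \emph{confined} otherwise, which is exactly the second claim. To prove that $\EA_{\rm dc}$ is fusion closed in the sense of Definition~\ref{def:fusionclosed}, I use only the braiding axioms and rigidity. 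Conjugation closure follows because $c_{\bar a,\mathcal{V}_{\one'}}$ is determined by $c_{a,\mathcal{V}_{\one'}}$ through the evaluation and coevaluation of the dual, so trivial monodromy for $a$ forces it for $\bar a$; tensor closure follows from the hexagon identities, which factor the double braiding of the condensate past $a\otimes b$ into the composite of the double braidings past $a$ and past $b$, so triviality of both factors gives triviality for $a\otimes b$ and hence for every simple summand $c$ with $N_{ab}^c\neq 0$. The care needed is that the unit $\one=W_L$ is not simple, so "trivial monodromy" must be phrased relative to the counital data $W_L,W_R$ and the idempotent $\Delta(1)$; keeping track of these throughout is what lets the Hopf-case argument of \cite{bais2003hopf} survive in the multi-fusion setting.

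For the right square I construct $U$ explicitly. Using semisimplicity of the $C^*$ weak Hopf algebra $V$, I write its Wedderburn block decomposition indexed by $\Irr(V)$ and let $I\subseteq V$ be the two-sided ideal supported on the confined blocks, so that $U:=V/I$ is the sum of the deconfined blocks and $\pi:V\twoheadrightarrow U$ is a surjection of algebras. The two fusion-closure properties are precisely what promote $I$ to a weak Hopf ideal: conjugation closure yields $S(I)\subseteq I$, while tensor closure yields the coideal condition $\Delta(I)\subseteq I\otimes V+V\otimes I$ (equivalently, it guarantees that a tensor product of deconfined modules remains deconfined, so that $\Rep(U)$ is monoidal). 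Consequently $\pi$ is a surjective morphism of weak Hopf algebras, $\Rep(U)$ is identified with the fusion subcategory generated by $\EA_{\rm dc}$ so that $\EA_{\rm dc}=\Irr(U)$, and $F_2$ is the induced base-change functor $U\otimes_V(-):\Rep(V)\to\Rep(U)$, which is monoidal precisely because $I$ is a weak Hopf ideal; alternatively one may invoke the reconstruction theorem \cite{ostrik2003module} to produce $U$ abstractly and then match it with $V/I$.

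I expect the main obstacle to be this last square, namely showing that the confined ideal $I$ is genuinely a weak Hopf \emph{ideal} and not merely an algebra ideal, i.e.\ that fusion-closure of $\EA_{\rm dc}$ translates cleanly into the coideal and antipode conditions in the presence of the nontrivial counital subalgebras $W_L,W_R$ and the idempotent $\Delta(1)$. The fusion-closure argument itself is conceptually routine once the monodromy is in place, but its multi-fusion bookkeeping feeds directly into establishing the ideal structure, so in practice the two steps should be carried out together.
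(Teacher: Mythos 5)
Your proposal is correct in outline and arrives at the same diagram, but the key step — producing the quotient $\pi:V\twoheadrightarrow U$ from the deconfined set — is done by a genuinely different route. The paper never touches a Wedderburn-block ideal: it passes to the dual, uses the lemma that the matrix coefficients of a fusion-closed set $\EA_{\rm dc}$ span a weak Hopf (semi-)subalgebra $\VV_{\EA_{\rm dc}}\subset\hat{V}$, lets $\hat{U}=\tilde{\VV}_{\EA_{\rm dc}}$ be the weak Hopf subalgebra it generates, and then invokes Lemma~\ref{lemma:dual} (injective morphisms dualize to surjective ones) together with $V\cong\hat{\hat{V}}$ to obtain $U=\hat{\hat{U}}$ as a quotient of $V$; this is exactly the mechanism packaged in Theorem~\ref{thm:SymBreak2}. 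The advantage of that route is that fusion closure translates \emph{immediately} into closure of $\VV_{\EA_{\rm dc}}$ under $\hat{\mu}$, $\hat{\Delta}$ and $\hat{S}$ (products of matrix coefficients are matrix coefficients of tensor products), so the coideal, counit and antipode conditions that you flag as the main obstacle never have to be verified on the nose. Your direct construction $U=V/I$ with $I$ the sum of confined blocks is the dual picture and does go through — $(\pi\otimes\pi)\Delta(x)$ acts on each truncated tensor product $\Delta(1)(M_a\otimes_{\mathbb{C}}M_b)\cong\oplus_c N_{ab}^cM_c$ with $c\in\EA_{\rm dc}$, hence vanishes for $x\in I$ — but you would additionally need $\varepsilon(I)=0$, i.e.\ that every simple constituent of the non-simple vacuum $\one=V_L$ is deconfined, a point worth stating explicitly in the weak Hopf setting. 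Finally, note that you prove more than the paper does on fusion closure: the paper simply \emph{postulates} that duals and fusion products of deconfined particles are deconfined as a physical requirement, whereas your hexagon/rigidity argument actually derives it from the double braiding; that derivation presupposes a braiding on $\Rep(V)$, which is available for the quantum-double symmetry $D(W)$ but not for an arbitrary weak Hopf $W$, so as written it narrows the scope of the theorem relative to the paper's (weaker but more general) axiomatic treatment of deconfinement.
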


To understand the weak Hopf symmetry breaking, it is crucial to understand the structure of weak Hopf subalgebras and quotients. By definition,
a weak Hopf morphism $f:W \to V$ is a weak bialgebra morphism such that $S_V\comp f=f\comp S_W$.
The weak Hopf subalgebra and quotient are naturally defined as follows:

\begin{definition}
    Let $W,V$ be weak Hopf algebras. (i) $V$ is called a weak Hopf quotient of $W$ if there is a surjective weak Hopf morphism $f:W \twoheadrightarrow
 V$; (ii) $V$ is called a weak Hopf subalgebra if there is an injective weak Hopf morphism $f:V \hookrightarrow W$.
\end{definition}

It is clear that the image $\operatorname{Im} f$ of a weak Hopf morphism $f$ is a weak Hopf subalgebra of the target weak Hopf algebra.
As we have pointed out before, the weak Hopf algebra is self-dual, and the dual algebra of $W$ is also a weak Hopf algebra.
The subalgebra and quotient of $W$ and $\What$ are also mutually related.

\begin{lemma}
\label{lemma:dual}
    If $f:W\to V$ is a weak Hopf morphism, then the dual map $\hat{f}:\hat{V} \to \hat{W}$ is also a weak Hopf morphism.
    Moreover, if $f$ is injective then $\hat{f}$ is surjective; if $f$ is surjective, then $\hat{f}$ is injective.
    This further implies that 
    \begin{enumerate}
        \item If $V$ is a weak Hopf subalgebra of $W$, then $\hat{V}$ is a weak Hopf quotient of $\hat{W}$.
        \item If $V$ is a weak Hopf quotient of $W$, then $\hat{V}$ is a weak Hopf subalgebra of $\hat{W}$.
    \end{enumerate}
\end{lemma}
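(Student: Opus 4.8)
The plan is to exploit the fact that the entire dual weak Hopf structure on $\hat{W}$ is built by transposing the structure maps of $W$, exactly as recorded in the preliminaries ($\hat{\mu}=\Delta^{\vee}$, $\hat{\eta}=\varepsilon^{\vee}$, $\hat{\Delta}=\mu^{\vee}$, $\hat{\varepsilon}=\eta^{\vee}$, $\hat{S}=S^{\vee}$). Since $\hat{f}$ is by definition the transpose $f^{\vee}$, i.e. $\langle \hat{f}(\psi),x\rangle=\langle\psi,f(x)\rangle$ for $\psi\in\hat{V}$ and $x\in W$, each axiom for $\hat{f}$ should fall out of the corresponding axiom for $f$ by applying the contravariant transpose functor. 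The first assertion is therefore essentially formal, and the only care needed is in the reversal of composition order.

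Concretely, I would translate each defining property of a weak Hopf morphism for $f$ into an equation of linear maps and dualize it. Multiplicativity $f\circ\mu_W=\mu_V\circ(f\otimes f)$ transposes to $\mu_W^{\vee}\circ f^{\vee}=(f\otimes f)^{\vee}\circ\mu_V^{\vee}$, which under the canonical finite-dimensional identifications $(W\otimes W)^{\vee}\cong\hat{W}\otimes\hat{W}$ and $(f\otimes f)^{\vee}=\hat{f}\otimes\hat{f}$ reads $\hat{\Delta}_{\hat{W}}\circ\hat{f}=(\hat{f}\otimes\hat{f})\circ\hat{\Delta}_{\hat{V}}$; that is, $f$ multiplicative forces $\hat{f}$ comultiplicative. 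Dually, $f$ comultiplicative gives $\hat{f}$ multiplicative; $f\circ\eta_W=\eta_V$ gives $\hat{\varepsilon}_{\hat{W}}\circ\hat{f}=\hat{\varepsilon}_{\hat{V}}$ (counit preserved); $\varepsilon_V\circ f=\varepsilon_W$ gives $\hat{f}\circ\hat{\eta}_{\hat{V}}=\hat{\eta}_{\hat{W}}$ (unit preserved); and $S_V\circ f=f\circ S_W$ gives $\hat{f}\circ\hat{S}_{\hat{V}}=\hat{S}_{\hat{W}}\circ\hat{f}$ (antipode intertwined). Collecting these shows $\hat{f}$ is a weak bialgebra morphism commuting with the antipodes, hence a weak Hopf morphism.

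For the rank statement I would invoke the standard finite-dimensional annihilator identities $\operatorname{Im}(\hat{f})=(\ker f)^{\perp}$ and $\ker(\hat{f})=(\operatorname{Im} f)^{\perp}$. If $f$ is injective then $\ker f=0$, so $\operatorname{Im}(\hat{f})=\hat{W}$ and $\hat{f}$ is surjective; if $f$ is surjective then $\operatorname{Im} f=V$, so $\ker(\hat{f})=0$ and $\hat{f}$ is injective. Finiteness of dimension, assumed throughout the paper, is exactly what legitimizes both the tensor identification used above and these annihilator identities.

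Finally, the two structural implications follow by specialization. A weak Hopf subalgebra $V$ comes with an injection $\iota\colon V\hookrightarrow W$, whose dual $\hat{\iota}\colon\hat{W}\twoheadrightarrow\hat{V}$ is a surjective weak Hopf morphism, exhibiting $\hat{V}$ as a weak Hopf quotient of $\hat{W}$; and a quotient $V$ comes with a surjection $\pi\colon W\twoheadrightarrow V$, whose dual $\hat{\pi}\colon\hat{V}\hookrightarrow\hat{W}$ is an injective weak Hopf morphism, exhibiting $\hat{V}$ as a weak Hopf subalgebra of $\hat{W}$. The only genuine obstacle is bookkeeping: keeping track of the contravariance (reversal of arrows and of composition order) and checking that the identification $(f\otimes f)^{\vee}=\hat{f}\otimes\hat{f}$ intertwines the tensor-product algebra and coalgebra structures in a way consistent with the sided conventions fixed for the dual weak Hopf algebra. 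None of this is deep, but it must be carried out carefully so that the transposed equations match those conventions.
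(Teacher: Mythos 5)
Your proposal is correct and follows essentially the same route as the paper: the paper also dualizes the structure maps (stating the weak bialgebra part as ``clear'' and writing out only the antipode-intertwining computation $\langle\hat{f}\circ\hat{S}_V(\varphi),x\rangle=\langle\varphi,S_V\circ f(x)\rangle=\langle\varphi,f\circ S_W(x)\rangle=\langle\hat{S}_W\circ\hat{f}(\varphi),x\rangle$) and then invokes the standard fact that duals of injections are surjections and vice versa. Your version merely makes the transposition of each axiom and the annihilator identities explicit, which the paper leaves implicit.
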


\begin{proof}
    Clearly, $\hat{f}$ is a weak bialgebra morphism. For any $\varphi\in\hat{V}$ and $x\in W$,
    \begin{equation}
        \langle (\hat{f}\comp \hat{S}_V)(\varphi), x \rangle = \langle \varphi, (S_V\comp f)(x) \rangle = \langle \varphi, (f\comp S_W)(x) \rangle = \langle (\hat{S}_W\comp \hat{f})(\varphi),x\rangle. 
    \end{equation}
   This means $\hat{f}$ is a weak Hopf morphism. 
   Since  the dual of the injective map is surjective and the dual of the surjective map is injective, we complete the proofs of 1 and 2.
\end{proof}

The above result means that we can equivalently describe the weak Hopf symmetry breaking using weak Hopf subalgebra $f: V\hookrightarrow W$ or using quotient $\hat{f}:\hat{W}\twoheadrightarrow \hat{V}$.
For a weak Hopf symmetry, the physical data, like particle type, antiparticle, and fusion rule, of the system with a weak Hopf quotient  symmetry has a one-to-one correspondence with that of the original system.
This correspondence is characterized by the representation factors over the quotient map.
For a weak Hopf quotient map $f: W\to V$, a representation of $W$ which factors over $f$ is defined as $\rho=\sigma \comp f$, where $\sigma$ is a representation of $V$.

\begin{lemma}
\label{lemma:FusionClose}
    Let $f: W\to V$ be a quotient map of weak Hopf algebras, then the physical data of $V$-symmetry model are in one-to-one correspondence with the physical data of $W$-symmetry that factors over $f$. 
    \begin{enumerate}
        \item  For a charge type $a=(\sigma,M_{\sigma})$ of $V$-symmetry model, there is a corresponding charge type $a_f=(\rho=\sigma\comp f,M_{\sigma})$.
        If $\rho$ is an irreducible representation of $V$, it is also an irreducible representation of $W$.
        \item The correspondence preserves conjugation of particles $\overline{a_f}=(\bar{\rho}=\bar{\sigma}\comp f,M_{\bar{\sigma}})=\bar{a}_f$.
        \item The correspondence preserves fusion rules.
        \item If $W$ is semisimple, then $V$ is also semisimple.
    \end{enumerate}
\end{lemma}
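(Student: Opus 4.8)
The plan is to package all four claims into properties of the single \emph{pullback (restriction) functor}
\begin{equation}
F:\Rep(V)\to\Rep(W),\qquad (\sigma,M_\sigma)\mapsto (\rho,M_\sigma),\quad \rho=\sigma\circ f,
\end{equation}
and to let the surjectivity of $f$ do the heavy lifting. The key observation is that, because $f$ is onto, the image subalgebra $\rho(W)=\sigma(f(W))=\sigma(V)\subseteq\operatorname{End}(M_\sigma)$ is literally the same operator algebra whether we act through $W$ or through $V$. Consequently a linear subspace $N\subseteq M_\sigma$ is $W$-invariant (via $\rho$) if and only if it is $V$-invariant (via $\sigma$), so the two submodule lattices coincide. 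This gives claim~1 at once: $\sigma$ is irreducible over $V$ iff $\rho=\sigma\circ f$ is irreducible over $W$. The same identity of operator algebras shows that a linear map $g:M_\sigma\to M_{\sigma'}$ satisfies $g\sigma(v)=\sigma'(v)g$ for all $v\in V$ iff $g\rho(h)=\rho'(h)g$ for all $h\in W$, i.e.\ $\Hom_V(M_\sigma,M_{\sigma'})=\Hom_W(FM_\sigma,FM_{\sigma'})$. Hence $F$ is fully faithful, and this is exactly the ``one-to-one correspondence'' being asserted: non-isomorphic charges of $V$ map to non-isomorphic charges of $W$, and the image consists precisely of the $W$-representations that factor through $f$.

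For claim~2 I would compute the pullback of the dual directly. By definition the $W$-action on $\overline{M_\sigma}=\Hom(M_\sigma,\mathbb{C})$ is $(h\triangleright\phi)=\phi(\rho(S_W(h))\,\bullet)$. Since $f$ is a weak Hopf morphism it intertwines antipodes, $f\circ S_W=S_V\circ f$, so
\begin{equation}
(h\triangleright\phi)=\phi\big(\sigma(f(S_W(h)))\,\bullet\big)=\phi\big(\sigma(S_V(f(h)))\,\bullet\big)=\big((\bar\sigma\circ f)(h)\big)\phi,
\end{equation}
which is precisely $\overline{a_f}=\bar a_f$. The only input beyond definitions is the antipode-intertwining property built into ``weak Hopf morphism''.

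For claim~3 the point is that $F$ is monoidal, and here I expect the genuine subtlety of the weak (as opposed to ordinary Hopf) case to appear. Because $f$ is a weak bialgebra morphism it preserves the unit, $f(1_W)=1_V$, and intertwines comultiplication, $(f\otimes f)\circ\Delta_W=\Delta_V\circ f$; applying the latter to $1_W$ gives $(f\otimes f)(\Delta_W(1_W))=\Delta_V(1_V)$. This last identity is exactly what matches the \emph{truncated} tensor products: the defining subspace $FM\otimes_W FN=\{x\mid \Delta_W(1_W)\triangleright x=x\}$ (with $\Delta_W(1_W)$ acting via $\rho\otimes\rho'$) coincides with $\{x\mid \Delta_V(1_V)\triangleright x=x\}=M\otimes_V N$, and on this common subspace the $W$-action via $\Delta_W$ agrees with the pulled-back $V$-action via $\Delta_V$. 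Thus $F(M\otimes_V N)=FM\otimes_W FN$ as $W$-modules. Combining this with full faithfulness and the preservation of irreducibility from claim~1, the multiplicities in $a\times b=\sum_c N_{ab}^c c$ in $\Rep(V)$ are carried verbatim to $a_f\times b_f=\sum_c N_{ab}^c c_f$ in $\Rep(W)$. The delicate step, and the one I would write out most carefully, is verifying that the $\Delta(1)$-truncation is respected; in the ordinary Hopf case $\Delta(1)=1\otimes 1$ and this is vacuous, so all the real content of the weak setting is concentrated here.

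Finally, claim~4 is purely algebraic: since $f$ is a surjective algebra homomorphism, the first isomorphism theorem gives $V\cong W/\ker f$ as algebras, and a quotient of a finite-dimensional semisimple $\mathbb{C}$-algebra by a two-sided ideal is again semisimple (the ideal is a sum of a subset of the simple matrix-algebra factors, and the quotient is the complementary sum). Hence $W$ semisimple implies $V$ semisimple, and I anticipate no obstacle here beyond recalling the structure theorem for semisimple algebras.
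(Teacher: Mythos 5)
Your proof is correct and takes essentially the same route as the paper's: pull back along $f$, use surjectivity to identify the image operator algebras $\rho(W)=\sigma(V)$ (hence the submodule lattices) for item 1, use $f\comp S_W=S_V\comp f$ for item 2, and use the weak-bialgebra-morphism property for item 3. The only minor divergences are that you prove item 4 directly via $V\cong W/\ker f$ and the structure theorem rather than deducing it from item 1, and you make explicit the check that $(f\otimes f)(\Delta_W(1_W))=\Delta_V(1_V)$ matches the truncated tensor products — a step the paper's computation $(\rho_1\otimes\rho_2)\comp\Delta_W=(\sigma_1\otimes\sigma_2)\comp\Delta_V\comp f$ leaves implicit; both refinements are sound.
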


\begin{proof}
1. If $\sigma: V \to \operatorname{End}(M_{\sigma})$ is a representation of $V$, then $\rho=\sigma\comp f$ is a representation of $W$ factors over $f$; and vice versa.
If $\sigma$ is irreducible then there is no nontrivial submodule that is invariant under the action of $\sigma$. Since $f$ is surjective, this also implies that there is no nontrivial submodule that is invariant under the action of $\rho$. If $\rho$ is irreducible, a similar argument shows that $\sigma$ is irreducible.

2. Since the matrix of conjugation of a representation 
 $\zeta$ satisfies $\bar{\zeta}(g)_{ij}=\zeta(S(g))_{ji}=\zeta_{ji}\comp S(g)$ for all $g$, we see that $\bar{\sigma}_{ij}\comp f=\sigma_{ji}\comp S_V\comp f=\sigma_{ji}\comp f \comp S_W=\rho_{ji}\comp S_W=\bar{\rho}_{ij}$. Notice that we have used the fact that $f$ is a weak Hopf morphism.

 3. For fusion of particles, suppose that $\rho_1=\sigma_1 \comp f$ and $\rho_2=\sigma_2\comp f$. By definition of tensor product of two representations, we have $(\rho_1\otimes \rho_2)\comp \Delta_W=(\sigma_1 \otimes \sigma_2)\comp (f\otimes f)\comp \Delta_W = (\sigma_1 \otimes \sigma_2)\comp \Delta_V \comp f$ by the fact that $f$ is a weak bialgebra morphism. Then using the fact that $f$ is surjective, we see that the fusion rule remains unchanged under the correspondence.

 4. It is a direct corollary of 1.
\end{proof}

\begin{lemma}
    For a weak Hopf algebra $W$ and a fusion closed set $\EA \subset \operatorname{Irr}(W)$, the linear space $\mathcal{V}_{\EA}$ generated by the matrix elements of representations in $\EA$ is a weak Hopf semi-subalgebra of $\hat{W}$, \emph{viz.}, $\mathcal{V}_{\EA}$ is closed under multiplication, coclosed under comultiplication, and closed under antipode.
\end{lemma}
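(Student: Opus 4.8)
The plan is to verify the three closure properties separately. Write $\rho_{ij}\in\hat W$ for the matrix coefficient $x\mapsto \rho(x)_{ij}$ of a representation $(\rho,M_\rho)$ relative to a fixed basis $\{e_i\}$ of $M_\rho$, so that $\mathcal V_{\EA}=\operatorname{span}\{\rho_{ij}:(\rho,M_\rho)\in\EA\}$, and use throughout the dual structure maps recalled above, $\hat\mu=\Delta^\vee$, $\hat\Delta=\mu^\vee$, $\hat S=S^\vee$. The single organizing remark I would isolate first is that the span of the matrix coefficients of a representation is independent of the chosen basis and invariant under isomorphism: if $\rho'=T\rho T^{-1}$ with $T$ invertible, then each $\rho'_{ij}$ is a linear combination of the $\rho_{kl}$ and conversely, so isomorphic representations contribute the same subspace of $\hat W$.

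The comultiplication and antipode cases are essentially formal once these maps are unwound. For comultiplication, $\langle\hat\Delta(\rho_{ij}),x\otimes y\rangle=\langle\rho_{ij},xy\rangle=\sum_k\rho(x)_{ik}\rho(y)_{kj}$ gives $\hat\Delta(\rho_{ij})=\sum_k\rho_{ik}\otimes\rho_{kj}\in\mathcal V_{\EA}\otimes\mathcal V_{\EA}$, needing nothing beyond $\rho\in\EA$. For the antipode, $\langle\hat S(\rho_{ij}),x\rangle=\rho(S(x))_{ij}$, which I would compare with the conjugate representation $\bar\rho$ on $\bar M_\rho=\operatorname{Hom}(M_\rho,\mathbb C)$ with action $x\triangleright\phi=\phi(S(x)\,\bullet)$; evaluating in the dual basis yields $\bar\rho(x)_{ij}=\rho(S(x))_{ji}$, hence $\hat S(\rho_{ij})=\bar\rho_{ji}$. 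Since $\EA$ is conjugation closed, $\bar\rho\in\EA$, so $\hat S(\rho_{ij})\in\mathcal V_{\EA}$.

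The real work is closure under multiplication, and this is the only place where the genuinely weak structure intervenes. Here $\langle\rho_{ij}\cdot\sigma_{kl},x\rangle=\sum_{(x)}\rho(x^{(1)})_{ij}\,\sigma(x^{(2)})_{kl}$, so $\rho_{ij}\cdot\sigma_{kl}$ is a matrix coefficient, in the product basis $\{e_i\otimes f_k\}$, of the ``Sweedler'' representation $\rho\boxtimes\sigma:x\mapsto(\rho\otimes\sigma)(\Delta(x))$ on $M_\rho\otimes_{\mathbb C}M_\sigma$. The obstacle is that $\Delta(1_W)\neq 1_W\otimes 1_W$, so $\rho\boxtimes\sigma$ is \emph{not} the categorical tensor product of $\EC=\Rep(W)$. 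I would resolve this by setting $P=(\rho\otimes\sigma)(\Delta(1_W))$, which is idempotent because $\Delta$ preserves multiplication and $\rho\otimes\sigma$ is an algebra map; the identities $\Delta(1_W)\Delta(x)=\Delta(x)=\Delta(x)\Delta(1_W)$ then show that $\operatorname{Im}P$ and $\ker P$ are $\rho\boxtimes\sigma$-submodules, with $\rho\boxtimes\sigma$ restricting to the honest tensor product $\rho\otimes\sigma$ on $\operatorname{Im}P$ and to the zero representation on $\ker P$. Passing to a basis adapted to $M_\rho\otimes_{\mathbb C}M_\sigma=\operatorname{Im}P\oplus\ker P$, the coefficients of $\rho\boxtimes\sigma$ are exactly those of $\rho\otimes\sigma$ together with zeros, and by the basis-independence remark the span of $\{\rho_{ij}\cdot\sigma_{kl}\}$ coincides with the span of the coefficients of $\rho\otimes\sigma$. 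Decomposing $\rho\otimes\sigma\cong\bigoplus_c N_{\rho\sigma}^c\,c$ into simples and invoking tensor-product closedness of $\EA$ (so each $c\in\EA$) places this span inside $\mathcal V_{\EA}$, completing the argument. I expect this separation of the honest tensor product from the $(1_W\otimes 1_W-\Delta(1_W))$-part that $\rho\boxtimes\sigma$ annihilates to be the one genuinely delicate point of the proof.
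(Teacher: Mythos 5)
Your proof is correct and follows essentially the same route as the paper's: direct verification of the three closure properties via $\hat\Delta(\rho_{ij})=\sum_k\rho_{ik}\otimes\rho_{kj}$, $\hat S(\rho_{ij})=\bar\rho_{ji}$, and the identification of $\rho_{ij}\sigma_{kl}$ with a matrix coefficient of $(\rho\otimes\sigma)\comp\Delta$ followed by fusion closedness. The one place you go beyond the paper is the multiplication step, where the paper simply asserts that products of matrix elements are matrix elements of the tensor product, while you correctly flag that $\Delta(1_W)\neq 1_W\otimes 1_W$ and supply the $\operatorname{Im}P\oplus\ker P$ decomposition showing $(\rho\otimes\sigma)\comp\Delta$ is the categorical tensor product on $\operatorname{Im}P$ and zero on $\ker P$ — a worthwhile clarification of a point the paper glosses over.
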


\begin{proof}
    It is sufficient to prove that $\mathcal{V}_{\EA}$ satisfy the following conditions:

    (i) $\mathcal{V}_{\EA}$  is closed under multiplication $\hat{\mu}$ of $\hat{W}$. This is clear from $\hat{\mu}(\rho_{ij}\sigma_{kl})=((\rho_{ij}\otimes \sigma_{kl})\comp \Delta_{W})$. The product of matrix elements of two representations is the matrix elements of the tensor product of those two representations, and since $\EA$ is fusion closed, it is also in $\mathcal{V}_{\EA}$.
    
    (iii) $\mathcal{V}_{\EA}$  is coclosed under coproduct of $\hat{W}$. From pairing $\langle \hat{\Delta} (\rho_{ij}), x\otimes y\rangle = \rho_{ij}(xy)=\sum_k \rho_{ik}(x)\rho_{kj}(y)$, we see that $\hat{\Delta}(\rho_{ij})=\sum_k \rho_{ik}\otimes \rho_{kj}$. Thus $\hat{\Delta}(\mathcal{V}_{\EA}) \subset \mathcal{V}_{\EA}\otimes \mathcal{V}_{\EA}$.

    (iv) $\hat{S}(\mathcal{V}_{\EA})\subset \mathcal{V}_{\EA}$. This is clear from $\hat{S}(\rho_{ij})=\rho_{ij}\comp S=\bar{\rho}_{ji}$.
\end{proof}

It is natural to introduce the weak Hopf subalgebra generated by fusion closed set $\EA$ as the minimal weak Hopf subalgebra of $\What$ that contains $\VV_{\EA}$. We denote it as $\tilde{\VV}_{\EA}$.
Since the double-dual $\hat{\hat{W}}$ can be canonically identified with $W$, using the above-established results, we will see that each weak Hopf subalgebra $V\hookrightarrow W$ is determined by a fusion closed set of anyons.
For a weak Hopf quotient algebra, a similar result holds.

\begin{theorem}
\label{thm:SymBreak2}
        Consider a finite-dimensional $C^*$ weak Hopf algebra $W$.
    \begin{enumerate}
        \item  Let $V \hookrightarrow W$ be a weak Hopf subalgebra, then $V$ is determined by a fusion closed subset $\EA \subset \Irr(\hat{W})$. More precisely, $V$ is
        isomorphic to the weak Hopf subalgebra $\tilde{\VV}_{\EA}$ generated by a fusion closed set $\EA$.
        \item Let $W \twoheadrightarrow V$ be a weak Hopf quotient, then $V$ is determined by a fusion closed subset $\EA \subset \Irr (W)$. More precisely, $V$ is isomorphic to the dual weak Hopf algebra $\tilde{\VV}_{\EA}^{\vee}$.
    \end{enumerate}
    
\end{theorem}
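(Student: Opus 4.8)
The plan is to establish the quotient statement (part~2) directly and then obtain the subalgebra statement (part~1) from it purely by dualizing via Lemma~\ref{lemma:dual} together with the canonical identification $\hat{\hat{W}}\cong W$. Beyond the three lemmas already proved, the single extra ingredient I need is a Peter--Weyl-type spanning fact: since a finite-dimensional $C^*$ weak Hopf algebra is semisimple as a $\mathbb{C}$-algebra, Artin--Wedderburn gives $V\cong\bigoplus_k\operatorname{End}(M_k)$ with the $M_k$ ranging over $\Irr(V)$, and dualizing this isomorphism shows that the matrix coefficients $\{\sigma_{ij}:\sigma\in\Irr(V)\}$ form a basis of $\hat{V}$. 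In words, the full dual is spanned by matrix elements of irreducible representations; this is exactly what lets me translate between representation-theoretic data and linear spans of matrix coefficients.

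For part~2, I would start from the quotient $\pi:W\twoheadrightarrow V$ and apply Lemma~\ref{lemma:dual} to get an injective weak Hopf morphism $\hat{\pi}:\hat{V}\hookrightarrow\hat{W}$. Let $\EA\subset\Irr(W)$ be the set of irreducible representations of $W$ that factor over $\pi$; by Lemma~\ref{lemma:FusionClose} this set is in bijection with $\Irr(V)$ and is stable under conjugation and under fusion, hence fusion closed in the sense of Definition~\ref{def:fusionclosed}. The key computation is that $\hat{\pi}$ sends matrix coefficients to matrix coefficients: for $\sigma\in\Irr(V)$ and $\rho=\sigma\circ\pi\in\EA$ one has $\hat{\pi}(\sigma_{ij})=\sigma_{ij}\circ\pi=\rho_{ij}$. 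Combining this with the Peter--Weyl spanning of $\hat{V}$ yields $\hat{\pi}(\hat{V})=\VV_{\EA}$. Because $\hat{\pi}$ is a weak Hopf morphism, its image is automatically a genuine weak Hopf subalgebra of $\hat{W}$; this upgrades the semi-subalgebra $\VV_{\EA}$ (which the preceding lemma only showed to be closed under $\hat{\mu},\hat{\Delta},\hat{S}$) to the full weak Hopf subalgebra $\tilde{\VV}_{\EA}$ it generates, so in fact $\VV_{\EA}=\tilde{\VV}_{\EA}=\hat{\pi}(\hat{V})$. Since $\hat{\pi}$ is injective, $\hat{V}\cong\tilde{\VV}_{\EA}$, and dualizing once more gives $V\cong\hat{V}^{\vee}\cong\tilde{\VV}_{\EA}^{\vee}$, as claimed.

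For part~1, given a subalgebra $\iota:V\hookrightarrow W$, Lemma~\ref{lemma:dual} converts it into a quotient $\hat{\iota}:\hat{W}\twoheadrightarrow\hat{V}$. Applying part~2 to this quotient produces a fusion closed set $\EA\subset\Irr(\hat{W})$ with $\hat{V}\cong\tilde{\VV}_{\EA}^{\vee}$, where now $\tilde{\VV}_{\EA}$ is the weak Hopf subalgebra of $\hat{\hat{W}}\cong W$ generated by the matrix coefficients of the representations in $\EA$. Taking duals and using $\hat{\hat{V}}\cong V$ then gives $V\cong(\tilde{\VV}_{\EA}^{\vee})^{\vee}=\tilde{\VV}_{\EA}$, which is precisely the asserted description with $\EA\subset\Irr(\hat{W})$. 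Thus the two halves of the theorem are genuinely dual to one another, and only part~2 needs real work.

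The main obstacle is the identification $\hat{\pi}(\hat{V})=\VV_{\EA}$ together with the passage from the semi-subalgebra $\VV_{\EA}$ to the honest weak Hopf subalgebra $\tilde{\VV}_{\EA}$. This rests on two points that deserve care: (i) the Peter--Weyl spanning statement, which requires semisimplicity and hence genuinely uses the $C^*$ hypothesis (with semisimplicity of the quotient $V$ supplied by Lemma~\ref{lemma:FusionClose}); and (ii) the fact that the image of a weak Hopf morphism is a weak Hopf subalgebra, which guarantees that no extra generators---the unit $\varepsilon$ and the counital subalgebras $W_L,W_R$---are missing from $\VV_{\EA}$, so that indeed $\tilde{\VV}_{\EA}=\VV_{\EA}$. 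The remaining ingredients, namely fusion-closedness of $\EA$ and compatibility with conjugation and fusion, are delivered verbatim by Lemma~\ref{lemma:FusionClose}, so they require no further argument.
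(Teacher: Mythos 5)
Your proof is correct and follows essentially the same route as the paper's: convert between subalgebras and quotients via Lemma~\ref{lemma:dual}, identify the fusion closed set with the irreducibles of the quotient via Lemma~\ref{lemma:FusionClose}, and match $V$ (respectively $\hat{V}$) with the span of matrix coefficients inside the appropriate dual. The only differences are cosmetic---you prove part~2 first and deduce part~1 by dualizing, whereas the paper proves part~1 first and deduces part~2---and you spell out the Peter--Weyl spanning argument and the identification $\VV_{\EA}=\tilde{\VV}_{\EA}$ (via the fact that the image of a weak Hopf morphism is a weak Hopf subalgebra), both of which the paper leaves implicit.
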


\begin{proof}
    1. From Lemma \ref{lemma:dual} we see that $\Vhat$ is a quotient of $\What$.
    It is clear that $\EA=\Irr(\hat{V})$ of $\hat{V}$ is a fusion closed set of $\Vhat$. By Lemma \ref{lemma:FusionClose}, it is also a fusion closed set of $W$. Since weak Hopf algebra $\tilde{\VV}_{\EA}$ generated by $\EA$ is isomorphic to $V\simeq\hat{\Vhat}$, we complete the proof.

    2. From Lemma \ref{lemma:dual}, $\hat{V}$ is a weak Hopf subalgebra of $\What$. Then using 1, we obtain the assertion.
\end{proof}

The above result provides an operational method to determine the subalgebra and/or quotient algebra of a given weak Hopf algebra. For a given weak Hopf symmetry, if we have the information of particles, then we can determine all the sub and quotient algebras.

Now let us turn back to the weak Hopf symmetry breaking.
The particles of a quantum phase with weak Hopf symmetry $W$ are irreducible representations.
The formation of condensate particles in a new vacuum sector $\VV_{\one'}$ breaks weak Hopf symmetry $W$ to weak Hopf stabilizer of the new vacuum sector. The residual weak Hopf symmetry is thus a weak Hopf subalgebra $W\hookleftarrow V=\operatorname{Stab}_W(\VV_{\one'})$.
The particles in this new phase are irreducible representations of $V$.
In this new phase, not all particles are free, the particles that have nontrivial monodromy with condensate particles are confined.
The one with trivial monodromy with condensate particles is deconfined.
Two natural requirements of deconfined particles are that their antiparticles are also deconfined, and the tensor product of any two deconfined particles is deconfined.
This means that the set $\EA_{\rm dc}\subset \Irr(V)$ of deconfined particles in the new phase is fusion closed.
This $\EA_{\rm dc}$ will generate a weak Hopf subalgebra $\hat{U}\hookrightarrow \hat{V}$.
And $U\cong \hat{\hat{U}}\twoheadleftarrow V$ is a quotient of $V$. 
The deconfined particles are representations of $U$. This completes the proof of Theorem \ref{thm:SymBreak1}.


\section{Two-dimensional lattice model}
\label{sec:Kitaev}

In this section, we present the lattice gauge theory whose symmetry is given by a finite-dimensional weak Hopf algebra $D(W)$.
This construction was proposed by one of the authors in \cite{chang2014kitaev}. Here we will give a more detailed and comprehensive investigation, including the lattice model construction and ribbon operators.

\subsection{Weak Hopf quantum double model}

Consider a given 2$d$ closed oriented manifold $\Sigma$, a lattice on it  (a.k.a. a cellulation of $\Sigma$) is $C(\Sigma)=V(\Sigma)\cup E(\Sigma)\cup F(\Sigma)$, where $V(\Sigma)$, $ E(\Sigma)$ and $F(\Sigma)$ are sets of vertices, edges, and faces, respectively. 
The dual lattice of $C(\Sigma)$ is a lattice $\tilde{C}(\Sigma)$ for which
the vertices and faces of the original lattice are switched while the edge set remains unchanged.
A directed lattice is a lattice such that each edge $e \in E(\Sigma)$ is assigned a direction. The direction of the corresponding dual edge $\tilde{e}\in \tilde{E}(\Sigma)$ is defined by rotating the direction of $e$ counterclockwise by $\pi/2$. 
A site $s=(v,f)$ is a pair of vertex $v$ and an adjacent face $f$. Two sites $s,s'$ are called adjacent if they share a common vertex or face.

Let $W$ be a fixed weak Hopf algebra. Then the total Hilbert space is $\mathcal{H}_{\rm tot} = \otimes_{e\in E(\Sigma)}W$. The $W$ is a left $W$-module with respect to the following two actions $L_{\pm}$ and also a left $\hat{W}$-module with respect to the actions $T_{\pm}$:
\begin{align}
	&L_+^h |x\rangle =|h \triangleright  x\rangle=|hx\rangle,\label{eq:L1}\\
	&L_-^h|x\rangle=|x \triangleleft S^{-1}(h) \rangle=|xS^{-1}(h)\rangle,\\
	&T_+^{\varphi}|x\rangle=|\varphi \rightharpoonup x\rangle=| \sum_{(x)}\langle \varphi, x^{(2)}\rangle x^{(1)}\rangle,\\
	&T_-^{\varphi}|x\rangle=|x\leftharpoonup \hat{S}(\varphi) \rangle
	=| \sum_{(x)}\langle \hat{S}(\varphi), x^{(1)}\rangle x^{(2)}\rangle
	=| \sum_{(x)}\langle \varphi, S(x^{(1)})\rangle x^{(2)}\rangle,
\end{align}
where $h\in W$ and $\varphi \in \hat{W}$. Here, we have adopted Sweedler's arrow notations. 
Similarly, we have the right module actions:
\begin{align}
	&\tilde{L}_-^h |x\rangle =|x \triangleleft  h\rangle=|xh\rangle,\\
	&\tilde{L}_+^h|x\rangle=|S^{-1}(h) \triangleright x\rangle=|S^{-1}(h) x\rangle,\\
	&\tilde{T}_-^{\varphi}|x\rangle = |x \leftharpoonup \varphi \rangle  
	=|\sum_{(x)}\langle \varphi, x^{(1)}\rangle x^{(2)}\rangle,\\
	&\tilde{T}_+^{\varphi}|x \rangle=| \hat{S}(\varphi) \rightharpoonup x\rangle
	=| \sum_{(x)}\langle \hat{S}(\varphi), x^{(2)}\rangle x^{(1)}\rangle
	=| \sum_{(x)}\langle \varphi, S(x^{(2)})\rangle x^{(1)}\rangle. \label{eq:L2}
\end{align}
In this work, we will construct the lattice Hamiltonian based on the left module actions. Nevertheless, it will become clear later that the right actions are necessary for the construction of ribbon operators.

For a given directed edge $e$, we assign  $L_-$ and $L_+$  to its starting and ending vertices respectively.
Similarly, for the dual edge $\tilde{e}$, we assign  $T_+$ and $T_-$  to its starting and ending faces respectively.
Fix a site $s=(v,f)$, we order the edges around the vertex $v$ and the dual edges around the face $f$ counterclockwise with the origin $s$ (the operators act on edges from the left-hand side of edges).
We then introduce two kinds of local operators associated with $s$. The first one is attached to the vertex of $s$: for $h\in W$, 
\begin{equation}\label{eq:Av}
	A^h(s)
\big{|}	\begin{aligned}
		\begin{tikzpicture}
			\draw[-latex,black] (-0.5,0) -- (0,0);
			\draw[-latex,black] (0,0) -- (0,0.5); 
			\draw[-latex,black] (0,0) -- (0.5,0); 
			\draw[-latex,black] (0,-0.5) -- (0,0); 
			\draw[red,line width=1pt] (0,0) -- (0.25,-0.25);
			\draw [fill = black] (0,0) circle (1.2pt);
			\node[ line width=0.2pt, dashed, draw opacity=0.5] (a) at (0.7,0){$x_1$};
			\node[ line width=0.2pt, dashed, draw opacity=0.5] (a) at (-0.7,0){$x_3$};
			\node[ line width=0.2pt, dashed, draw opacity=0.5] (a) at (0,-0.7){$x_4$};
			\node[ line width=0.2pt, dashed, draw opacity=0.5] (a) at (0,0.7){$x_2$};
			\node[ line width=0.2pt, dashed, draw opacity=0.5] (a) at (0.4,-0.4){$f$};
			\node[ line width=0.2pt, dashed, draw opacity=0.5] (a) at (-0.2,0.2){$v$};
		\end{tikzpicture}
	\end{aligned}   \big{ \rangle}     
= \sum_{(h)}
\big{|}	\begin{aligned}
	\begin{tikzpicture}
		\draw[-latex,black] (-0.5,0) -- (0,0);
		\draw[-latex,black] (0,0) -- (0,0.5); 
		\draw[-latex,black] (0,0) -- (0.5,0); 
		\draw[-latex,black] (0,-0.5) -- (0,0); 
		\draw[red,line width=1pt] (0,0) -- (0.25,-0.25);
		\node[ line width=0.2pt, dashed, draw opacity=0.5] (a) at (1.2,0){$L_{-}^{h^{(1)}}x_1$};
		\node[ line width=0.2pt, dashed, draw opacity=0.5] (a) at (-1.2,0){$L_{+}^{h^{(3)}}x_3$};
		\node[ line width=0.2pt, dashed, draw opacity=0.5] (a) at (0,-0.8){$L_{+}^{h^{(4)}}x_4$};
		\node[ line width=0.2pt, dashed, draw opacity=0.5] (a) at (0,0.8){$L_{-}^{h^{(2)}}x_2$};
		\node[ line width=0.2pt, dashed, draw opacity=0.5] (a) at (0.4,-0.4){$f$};
		\node[ line width=0.2pt, dashed, draw opacity=0.5] (a) at (-0.2,+0.2){$v$};
	\end{tikzpicture}
\end{aligned}   \big{ \rangle} .
\end{equation}
If $h$ is cocommutative, the corresponding operator will
 not depend on $s$ but only on $v$, and thus we will denote it as $A_v^h$.
The second one is attached to the face of $s$: for $\varphi\in\hat{W}$, 
\begin{equation}\label{eq:Bf}
	B^{\varphi}(s)
	\big{|}	\begin{aligned}
		\begin{tikzpicture}
			\draw[-latex,black] (-0.5,0.5) -- (0.5,0.5);
			\draw[-latex,black] (-0.5,-0.5) -- (-0.5,0.5); 
			\draw[-latex,black] (0.5,-0.5) -- (0.5,0.5); 
			\draw[-latex,black] (-0.5,-0.5) -- (0.5,-0.5); 
			\draw [fill = black] (0,0) circle (1.2pt);
			\draw[red,line width=1pt] (0,0) -- (0.5,-0.5);
			\node[ line width=0.2pt, dashed, draw opacity=0.5] (a) at (0.75,0){$x_1$};
			\node[ line width=0.2pt, dashed, draw opacity=0.5] (a) at (-0.75,0){$x_3$};
			\node[ line width=0.2pt, dashed, draw opacity=0.5] (a) at (0,-0.7){$x_4$};
			\node[ line width=0.2pt, dashed, draw opacity=0.5] (a) at (0,0.7){$x_2$};
		\end{tikzpicture}
	\end{aligned}   \big{ \rangle}     
	= 
	\sum_{(\varphi)}
	\big{|}	\begin{aligned}
	\begin{tikzpicture}
		\draw[-latex,black] (-0.5,0.5) -- (0.5,0.5);
		\draw[-latex,black] (-0.5,-0.5) -- (-0.5,0.5); 
		\draw[-latex,black] (0.5,-0.5) -- (0.5,0.5); 
		\draw[-latex,black] (-0.5,-0.5) -- (0.5,-0.5); 
		\draw [fill = black] (0,0) circle (1.2pt);
		\draw[red,line width=1pt] (0,0) -- (0.5,-0.5);
		\node[ line width=0.2pt, dashed, draw opacity=0.5] (a) at (1.3,0){$T^{\varphi^{(1)}}_{-}x_1$};
		\node[ line width=0.2pt, dashed, draw opacity=0.5] (a) at (-1.3,0){$T^{\varphi^{(3)}}_{+}x_3$};
		\node[ line width=0.2pt, dashed, draw opacity=0.5] (a) at (0,-0.9){$T^{\varphi^{(4)}}_{-}x_4$};
		\node[ line width=0.2pt, dashed, draw opacity=0.5] (a) at (0,0.9){$T^{\varphi^{(2)}}_+x_2$};
	\end{tikzpicture}
\end{aligned}   \big{ \rangle} .
\end{equation}
If $\varphi$ is cocommutative, the corresponding operator will not depend on $s$ but only on $f$. We will denote it as $B_f^{\varphi}$.

\begin{proposition}\label{prop:doubledRep}
    For any site $s$, the operators $A^h(s)$ and $B^\varphi(s)$ satisfy the commutation relations
    \begin{align}
        A^h(&s)B^\varphi(s) = \sum_{(h),(\varphi)}B^{\varphi^{(2)}}(s)A^{h^{(2)}}(s)\langle\varphi^{(1)},S^{-1}(h^{(3)})\rangle \langle \varphi^{(3)},h^{(1)}\rangle, \label{eq:rep-comm-rel-1} \\
        B^{\varphi}(s)&A^{xh}(s) = B^{\varphi(x\rightharpoonup \varepsilon)}(s)A^h(s),\quad B^{\varphi}(s)A^{yh}(s) = B^{\varphi(\varepsilon \leftharpoonup y)}(s)A^h(s),  \label{eq:rep-comm-rel-2}
    \end{align}
    for  any $x\in W_L$, $y\in W_R$. Therefore, the map 
    \begin{equation}
        \rho:D(W)\to \operatorname{End}(\mathcal{H}(s)),\quad [\varphi\otimes h]\mapsto D^{\varphi\otimes h}(s)=B^\varphi(s)A^h(s)
    \end{equation}
    is an algebra homomorphism, that is, $\rho$ is a representation of $D(W)$ over $\mathcal{H}(s)=\otimes_{j\in \partial s} W$. 
\end{proposition}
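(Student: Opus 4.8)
The plan is to realise $\rho$ as the descent of a crossed-product representation. I would first show that $A^\bullet(s)$ and $B^\bullet(s)$ are separately representations of $W$ and of $\hat W$ on $\mathcal H(s)$, then establish the cross relations (\ref{eq:rep-comm-rel-1})--(\ref{eq:rep-comm-rel-2}), and finally assemble everything and pass to the quotient defining $D(W)$.

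For the first step, observe that on a single edge the maps in (\ref{eq:L1})--(\ref{eq:L2}) satisfy $L^a_+L^b_+=L^{ab}_+$ and, since $S^{-1}$ is anti-multiplicative, $L^a_-L^b_-=L^{ab}_-$; moreover the $L_\pm$ legs attached to distinct edges around $v$ commute. As $A^h(s)$ merely distributes the iterated coproduct $\Delta_{n-1}(h)$ over these edges, the composite $A^h(s)A^{h'}(s)$ multiplies the components edge by edge, and the identity $\Delta_{n-1}(h)\Delta_{n-1}(h')=\Delta_{n-1}(hh')$ (a consequence of $\Delta$ being an algebra map) gives $A^h(s)A^{h'}(s)=A^{hh'}(s)$. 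The dual computation, using that $\hat\Delta=\mu^\vee$ is multiplicative and that the $T_\pm$ actions on the dual edges around $f$ commute, yields $B^\varphi(s)B^\psi(s)=B^{\varphi\psi}(s)$, the product being taken in $\hat W$ (equivalently in $\hat W^{\rm cop}$, whose multiplication coincides).

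The decisive step is the cross relation (\ref{eq:rep-comm-rel-1}). Here I would localise: $A^h(s)$ and $B^\varphi(s)$ act nontrivially only on the edges of $\partial s$ and interact only on the edges incident to both $v$ and $f$ at the corner marked by the site. On such a shared edge $A$ acts by left/right multiplication (an $L_\pm$ leg) while $B$ acts by an arrow ($T_\pm$); commuting the multiplication past the arrow and re-expanding via $\Delta(xh)=\Delta(x)\Delta(h)$ together with the pairing axioms produces exactly the scalars $\langle\psi^{(1)},S^{-1}(h^{(3)})\rangle\langle\psi^{(3)},h^{(1)}\rangle$ appearing on the right-hand side. I expect this to be the main obstacle, because it is precisely where the weak Hopf axioms must be treated with care: $\Delta(1_W)\neq 1_W\otimes 1_W$, so the factors $1_W^{(i)}$ and the counital projections $\varepsilon_L,\varepsilon_R$ do not drop out as in the Hopf case, and the Sweedler bookkeeping across the shared edges is delicate. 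I would carry this out explicitly on the four-edge site of (\ref{eq:Av})--(\ref{eq:Bf}) and relegate the full index chase to the appendix. By contrast, the relations (\ref{eq:rep-comm-rel-2}) are light: writing $x\in W_L$, $y\in W_R$ and expanding $A^{xh}(s),A^{yh}(s)$, they follow from the counital identities $\varphi\leftharpoonup y_R=\varphi(\varepsilon\leftharpoonup y_R)$ and $y_R\rightharpoonup\varphi=\varphi(y_R\rightharpoonup\varepsilon)$ recorded earlier, applied to the shared leg.

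The conclusion is then formal. Define $\tilde\rho:\hat W^{\rm cop}\otimes W\to\operatorname{End}(\mathcal H(s))$ by $\varphi\otimes h\mapsto B^\varphi(s)A^h(s)$ and compute
\begin{equation}
\tilde\rho(\varphi\otimes h)\,\tilde\rho(\psi\otimes g)=B^\varphi(s)\,A^h(s)B^\psi(s)\,A^g(s).
\end{equation}
Substituting (\ref{eq:rep-comm-rel-1}) for $A^h(s)B^\psi(s)$ and recombining with $B^\varphi(s)B^{\psi^{(2)}}(s)=B^{\varphi\psi^{(2)}}(s)$ and $A^{h^{(2)}}(s)A^g(s)=A^{h^{(2)}g}(s)$ reproduces precisely the crossed-product multiplication of $\hat W^{\rm cop}\otimes W$, so $\tilde\rho$ is an algebra homomorphism. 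Finally, (\ref{eq:rep-comm-rel-2}) states exactly that $\tilde\rho$ annihilates the two families of generators spanning the ideal $J$; hence $\tilde\rho$ factors through $D(W)=(\hat W^{\rm cop}\otimes W)/J$, giving the asserted algebra homomorphism $\rho([\varphi\otimes h])=B^\varphi(s)A^h(s)$.
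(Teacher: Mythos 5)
Your proposal is correct and follows essentially the same route as the paper: the relations (\ref{eq:rep-comm-rel-1})--(\ref{eq:rep-comm-rel-2}) are verified by a direct Sweedler computation on the edges shared by $v$ and $f$ (the paper does this on an explicit six-edge configuration, using exactly the identities you flag --- $\Delta(xh)=\sum_{(h)} xh^{(1)}\otimes h^{(2)}$ for $x\in W_L$, the counital formulas $y_R\rightharpoonup\varphi=\varphi(y_R\rightharpoonup\varepsilon)$ and $S^{-1}(x_L)\rightharpoonup\varepsilon=x_L\rightharpoonup\varepsilon$, and the bookkeeping forced by $\Delta(1_W)\neq 1_W\otimes 1_W$), after which the final assertion is the formal consequence you describe. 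Your extra scaffolding --- checking $A^hA^{h'}=A^{hh'}$ and $B^\varphi B^\psi=B^{\varphi\psi}$ edge by edge and exhibiting $\rho$ as the descent of a crossed-product representation $\tilde\rho$ of $\hat W^{\rm cop}\otimes W$ annihilating the ideal $J$ --- is left implicit in the paper, but the index chase for (\ref{eq:rep-comm-rel-1}) that you defer is precisely the substance of the paper's proof, so it must actually be carried out.
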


\begin{proof}
Consider the following configuration 
\begin{equation}
	\begin{aligned}
\begin{tikzpicture}
	\draw[-latex,black] (-1,0) -- (0,0); 
	\draw[-latex,black] (0,0) -- (1,0); 
	\draw[-latex,black] (0,0) -- (0,1); 
	\draw[-latex,black] (0,-1) -- (0,0); 
	\draw[-latex,black] (0,1) -- (1,1);
	\draw[-latex,black] (1,0) -- (1,1);  
	\draw[line width=0.5pt, red] (0,0) -- (0.5,0.5);
	\draw [fill = black] (0,0) circle (1.2pt);
	\draw [fill = black] (0.5,0.5) circle (1.2pt);
	\node[ line width=0.2pt, dashed, draw opacity=0.5] (a) at (0.2,0.5){$s$};
	\node[ line width=0.2pt, dashed, draw opacity=0.5] (a) at (-0.2,-0.2){$v$};
	\node[ line width=0.2pt, dashed, draw opacity=0.5] (a) at (0.7,0.7){$f$};
	\node[ line width=0.2pt, dashed, draw opacity=0.5] (a) at (-1.2,0){$x_2$};
	\node[ line width=0.2pt, dashed, draw opacity=0.5] (a) at (0.3,-0.9){$x_3$};
	\node[ line width=0.2pt, dashed, draw opacity=0.5] (a) at (0.8,-0.3){$x_4$};
	\node[ line width=0.2pt, dashed, draw opacity=0.5] (a) at (1.3,0.5){$x_5$};
	\node[ line width=0.2pt, dashed, draw opacity=0.5] (a) at (0.5,1.2){$x_6$};
	\node[ line width=0.2pt, dashed, draw opacity=0.5] (a) at (-0.2,0.5){$x_1$};
\end{tikzpicture}
\end{aligned}
\quad 
\begin{aligned}
&A^h(s)=\sum_{(h)}L_-^{h^{(1)}}(j_1)\otimes L_+^{h^{(2)}}(j_2)\otimes L_+^{h^{(3)}}(j_3)\otimes L_-^{h^{(4)}}(j_4),\\
& B^{\varphi}(s)=\sum_{(\varphi)}T_{-}^{\varphi^{(1)}} (j_4) \otimes T_{-}^{\varphi^{(2)}} (j_5) \otimes T_{+}^{\varphi^{(3)}} (j_6) \otimes T_{+}^{\varphi^{(4)}} (j_1).
\end{aligned}
\end{equation}
To prove the first identity, on the one hand, we see that
\begin{align*}
    & \quad A^h(s)B^\varphi(s)|x_1, x_2, x_3,x_4,x_5,x_6\rangle \\
    & = \sum_{(h)}\sum_{(x_i)}\varphi\left(S(x_4^{(1)})S(x_5^{(1)})x_6^{(2)}x_1^{(2)}\right)\big{|}x_1^{(1)}S^{-1}(h^{(1)}),h^{(2)}x_2,h^{(3)}x_3,x_4^{(2)}S^{-1}(h^{(4)}),x_5^{(2)},x_6^{(1)}\big{\rangle}. 
\end{align*}
On the other hand, we have 
\begin{align*}
    &\quad \sum_{(h),(\varphi)}B^{\varphi^{(2)}}(s)A^{h^{(2)}}(s)\langle\varphi^{(1)},S^{-1}(h^{(3)})\rangle \langle \varphi^{(3)},h^{(1)}\rangle |x_1, x_2, x_3,x_4,x_5,x_6\rangle  \\
    & = \sum_{(h)}\sum_{(x_i)}\varphi\left(S^{-1}(h^{(8)})h^{(7)}S(x_4^{(1)})S(x_5^{(1)})x_6^{(2)}x_1^{(2)}S^{-1}(h^{(2)})h^{(1)}\right) \\
    & \quad  \big{|} x_1^{(1)}S^{-1}(h^{(3)}),h^{(4)}x_2,h^{(5)}x_3,x_4^{(2)}S^{-1}(h^{(6)}),x_5^{(2)},x_6^{(1)} \big{\rangle} \\
    & = \sum_{(h)}\sum_{(x_i)}\varphi\left(S^{-1}(\varepsilon_R(h^{(6)}))S(x_4^{(1)})S(x_5^{(1)})x_6^{(2)}x_1^{(2)}S^{-1}(\varepsilon_R(h^{(1)}))\right) \\
    & \quad  \big{|} S^{-1}(S(x_1^{(1)}))S^{-1}(h^{(2)}),h^{(3)}x_2,h^{(4)}x_3,x_4^{(2)}S^{-1}(h^{(5)}),x_5^{(2)},x_6^{(1)} \big{\rangle} \\ 
    & = \sum_{(h)}\sum_{(x_i)}\varphi\left(S(x_4^{(1)})S(x_5^{(1)})x_6^{(2)}x_1^{(2)}\right) \\
    & \quad  \big{|} x_1^{(1)}S^{-1}(h^{(2)}S^{-1}(\varepsilon_R(h^{(1)}))),h^{(3)}x_2,h^{(4)}x_3,x_4^{(2)}S^{-1}(h^{(5)}\varepsilon_R(h^{(6)})),x_5^{(2)},x_6^{(1)} \big{\rangle} \\ 
    & = \sum_{(h)}\sum_{(x_i)}\varphi\left(S(x_4^{(1)})S(x_5^{(1)})x_6^{(2)}x_1^{(2)}\right)\\
    &\quad  \big{|}x_1^{(1)}S^{-1}(h^{(1)}),h^{(2)}x_2,h^{(3)}x_3,x_4^{(2)}S^{-1}(h^{(4)}),x_5^{(2)},x_6^{(1)}\big{\rangle}. 
\end{align*}
Here in the third equality we used $\sum_{(w)}zS(w^{(1)})\otimes w^{(2)} = \sum_{(w)}S(w^{(1)})\otimes w^{(2)}z$ for any $z\in W_L$ and $w\in W$; the last equality follows since $w = \sum_{(w)}w^{(1)}\varepsilon_R(w^{(2)})$ and $\sum_{(w)}w^{(2)}S^{-1}(\varepsilon_R(w^{(1)})) = \sum_{(w)}S^{-1}(S(w^{(1)})w^{(2)}S(w^{(3)})) = S^{-1}(S(w)) = w$. To prove the second identity, one proceeds
\begin{align*}
    &\quad B^\varphi(s)A^{xh}(s) |x_1, x_2, x_3,x_4,x_5,x_6\rangle \\
    & = \sum_{(xh)} \sum_{(x_i)}\varphi\left(S(x_4^{(1)}S^{-1}((xh)^{(6)})S(x_5^{(1)})x_6^{(2)}x_1^{(2)}S^{-1}((xh)^{(1)})\right) \\
    & \quad \big{|} x_1^{(1)}S^{-1}((xh)^{(2)}),(xh)^{(3)}x_2,(xh)^{(4)}x_3,x_4^{(2)}S^{-1}((xh)^{(5)}),x_5^{(2)},x_6^{(1)}\big{\rangle} \\
    & = \sum_{(h)} \sum_{(x_i)}\varphi\left(S(x_4^{(1)}S^{-1}(h^{(6)}))S(x_5^{(1)})x_6^{(2)}x_1^{(2)}S^{-1}(h^{(1)})S^{-1}(x)\right) \\
    & \quad \big{|} x_1^{(1)}S^{-1}(h^{(2)}),h^{(3)}x_2,h^{(4)}x_3,x_4^{(2)}S^{-1}(h^{(5)}),x_5^{(2)},x_6^{(1)}\big{\rangle} \\
    & = \sum_{(h)} \sum_{(x_i)}(S^{-1}(x)\rightharpoonup\varphi)\left(S(x_4^{(1)}S^{-1}(h^{(6)}))S(x_5^{(1)})x_6^{(2)}x_1^{(2)}S^{-1}(h^{(1)})\right) \\
    & \quad \big{|} x_1^{(1)}S^{-1}(h^{(2)}),h^{(3)}x_2,h^{(4)}x_3,x_4^{(2)}S^{-1}(h^{(5)}),x_5^{(2)},x_6^{(1)}\big{\rangle} \\
    & = \sum_{(h)} B^{\varphi(x\rightharpoonup \varepsilon)}(s) \big{|} x_1S^{-1}(h^{(1)}),h^{(2)}x_2, h^{(3)}x_3,x_4S^{-1}(h^{(4)}),x_5,x_6 \big{\rangle} \\
    & = B^{\varphi(x\rightharpoonup \varepsilon)}(s)A^h(s) |x_1, x_2, x_3,x_4,x_5,x_6\rangle,
\end{align*}
since $\Delta(xh) = \sum_{(h)} xh^{(1)}\otimes h^{(2)}$, and $S^{-1}(x)\rightharpoonup \varphi = \varphi(S^{-1}(x)\rightharpoonup \varepsilon) = \varphi(x\rightharpoonup \varepsilon)$. The third identity is proved similarly. 
 The last assertion follows from \eqref{eq:rep-comm-rel-1} and \eqref{eq:rep-comm-rel-2}.
\end{proof}

\begin{proposition}
If $h$ and $\varphi$ are the respective Haar integrals of $W$ and $\hat{W}$, then $A_v^h$ and $B_f^{\varphi}$ are Hermitian projectors, i.e., $(A^h_v)^{\dagger}=A^h_v$, $(A^h_v)^2=(A^h_v)$,
$(B_f^{\varphi})^{\dagger}=B_f^{\varphi}$, $(B_f^{\varphi})^{2}=B_f^{\varphi}$, and  $[A^h_v,B_f^{\varphi}]=0$ for all $v,f$.
\end{proposition}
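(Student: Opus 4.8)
The plan is to view $A^h(s)$ and $B^\varphi(s)$ as images of $h\in W$ and $\varphi\in\hat W$ under algebra homomorphisms into $\operatorname{End}(\mathcal{H}(s))$, and to feed in the four structural facts about Haar integrals recalled in Section~\ref{sec:pre}: idempotency ($h^2=h$, $\varphi^2=\varphi$), $S$-invariance ($S(h)=h$, $\hat{S}(\varphi)=\varphi$), self-adjointness ($h^*=h$, $\varphi^*=\varphi$), and cocommutativity. First I would check that $h\mapsto A^h(s)$ is multiplicative: each single-edge action is multiplicative ($L_+^hL_+^{h'}=L_+^{hh'}$, and $L_-^hL_-^{h'}=L_-^{hh'}$ since $S^{-1}$ is anti-multiplicative), and $A^h(s)$ distributes a higher coproduct $\Delta_n(h)$ over the incident edges, so multiplicativity of $\Delta$ yields $A^h(s)A^{h'}(s)=A^{hh'}(s)$; the same holds for $B$. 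Idempotency is then immediate, $A^h(s)^2=A^{h^2}(s)=A^h(s)$ and $B^\varphi(s)^2=B^{\varphi^2}(s)=B^\varphi(s)$, and cocommutativity of $h,\varphi$ makes $A^h_v$, $B^\varphi_f$ well defined (independent of the chosen site).

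For Hermiticity, on each edge the inner product is $\langle x,y\rangle=\varphi_{\What}(x^*y)$. Here $L_+$ is manifestly a $*$-representation, $\langle L_+^hx,y\rangle=\varphi_{\What}(x^*h^*y)=\langle x,L_+^{h^*}y\rangle$, so $(L_+^h)^\dagger=L_+^{h^*}$; for $L_-$ I would use $S^{-1}(h)^*=S(h^*)$ together with the modular property of the Haar measure $\varphi_{\What}$ to identify $(L_-^h)^\dagger$ as $L_-$ evaluated on an $S^2$- and $*$-twisted argument. Taking the adjoint edgewise and using that $*$ commutes with comultiplication ($\Delta(x)^*=\Delta(x^*)$), $A^h(s)^\dagger$ becomes $A$ evaluated on the $*$-ed, partially $S^2$-twisted legs of $\Delta_n(h)$. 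For the Haar integral these twists disappear: $h^*=h$, and $\Delta(h)$ is invariant under $S\otimes S$ — a consequence of $S(h)=h$, the anti-comultiplicativity of $S$, and cocommutativity — so $A^h(s)^\dagger=A^h(s)$. The identical argument with $T_\pm$, $\hat{S}(\varphi)=\varphi$ and $\varphi^*=\varphi$ gives $B^\varphi(s)^\dagger=B^\varphi(s)$. The one nontrivial input is the modular property of $\varphi_{\What}$, which is available in the $C^*$, cocommutative-Haar setting assumed throughout.

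For $[A^h_v,B^\varphi_f]=0$ I would split into two cases. If $v$ is not a corner of $f$, the two operators act on disjoint sets of edges and commute trivially. The only interacting case is $v\in\partial f$: choosing the common site $s=(v,f)$ and using cocommutativity, $A^h_v=A^h(s)$ and $B^\varphi_f=B^\varphi(s)$, so by the commutation relation \eqref{eq:rep-comm-rel-1} of Proposition~\ref{prop:doubledRep} it suffices to prove, for Haar integrals $h,\varphi$,
\[
\sum_{(h),(\varphi)}\langle\varphi^{(1)},S^{-1}(h^{(3)})\rangle\,\langle\varphi^{(3)},h^{(1)}\rangle\,\varphi^{(2)}\otimes h^{(2)}=\varphi\otimes h .
\]
Contracting the $\varphi$-legs rewrites the left-hand side as $\sum_{(h)}\big(h^{(1)}\rightharpoonup(\varphi\leftharpoonup S^{-1}(h^{(3)}))\big)\otimes h^{(2)}$, and I would collapse this using cocommutativity of $h$ to reorder the three legs, the strong invariance of the Haar integral $\varphi$ (relating $x\rightharpoonup\varphi$ and $\varphi\leftharpoonup x$ through $S$), and the counital identities \eqref{eq:S-and-original} and \eqref{eq:S-inverse-and-original} together with $S(h)=h$, so that the two pairings combine into $\varepsilon_R$-factors that are reabsorbed via $h=\sum_{(h)}h^{(1)}\varepsilon_R(h^{(2)})$. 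I expect this collapse — forcing the two pairings to cancel against the coproduct of the Haar integrals — to be the main obstacle; the disjoint-support case, idempotency, and the $*$-bookkeeping are comparatively routine.
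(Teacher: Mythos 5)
Your proposal is correct in outline, and for idempotency and the Hermiticity of $A_v^h$ it runs essentially parallel to the paper: multiplicativity of $h\mapsto A^h(s)$ together with $h^2=h$ gives the projector property, and the observation that the cocommutative, $S$-invariant Haar integral satisfies $(S\otimes S)\Delta(h)=\Delta(h)$ (the paper's ``$S(h_W^{(i)})=S^{-1}(h_W^{(i)})$'') removes the antipode twists in the adjoint. You genuinely diverge on the commutator: the paper simply cites \cite{chang2014kitaev} for $[A_v^h,B_f^\varphi]=0$, whereas you reduce it, via \eqref{eq:rep-comm-rel-1} and the disjoint-support observation, to the identity $\sum\langle\varphi^{(1)},S^{-1}(h^{(3)})\rangle\langle\varphi^{(3)},h^{(1)}\rangle\,\varphi^{(2)}\otimes h^{(2)}=\varphi\otimes h$. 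That is a more self-contained route, and in fact you do not need any Haar-specific invariance to finish it: read inside $D(W)$ (which suffices, since Proposition~\ref{prop:doubledRep} says the assignment $[\varphi\otimes h]\mapsto B^\varphi(s)A^h(s)$ factors through $D(W)$), your identity is exactly the paper's straightening statement $[\varepsilon\otimes h][\varphi\otimes 1_W]=[\varphi\otimes h]$, valid for all cocommutative $h$ and $\varphi$; as an identity in $\hat W\otimes W$ it would be too strong. The one step you undersell is $(B_f^\varphi)^\dagger=B_f^\varphi$: the argument is not ``identical'' to the $L_\pm$ case, because $T_\pm^\varphi$ acts by pairing against a coproduct leg of the edge variable, so moving $\varphi$ across the inner product $\langle x,y\rangle=\varphi_{\hat W}(x^*y)$ requires the strong invariance identities of the integral (the duals of \cite[Eqs.~(3.4a) and (3.4d)]{BOHM1998weak}, i.e.\ \eqref{eq:rightInt} and its companion), not merely a generic modular property; this is precisely the computation the paper carries out and explicitly flags as the subtler half. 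Your ingredients are the right ones, but that step needs to be written out rather than deferred to an ``identical argument''.
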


\begin{proof}

In \cite{chang2014kitaev}, Chang proved that for cocommutative  $h,\varphi$, $[A^h_v,B_f^{\varphi}]=0$ for all $v,f$.
We only need to show that they are Hermitian projectors.
Unlike $C^*$ Hopf algebra case, $S$ is now not equal to $S^{-1}$, $(L_{-}^g)^{\dagger}\neq L^{g^*}_{-}$.
Recall that the inner product is given by $\langle x,y\rangle=\varphi_{\What}(x^*y)$;
we have 
\begin{equation}
    \langle x, g\triangleright y\rangle = 
    \langle g\triangleright x,  y\rangle, \quad
    \langle x,  y\triangleleft S(g)\rangle 
    = \langle x \triangleleft S^{-1}(g),  y\rangle,
\end{equation}
where we have used $S(x)^*=S^{-1}(x^*)$.
Nevertheless, for Haar integral $S(h_W)=S^{-1}(h_W)=h_W$, we have $S(h_W^{(i)})=S^{-1}(h_W^{(i)})$, which further implies that 
$(A^{h_W}_v)^{\dagger}=A^{h_W^*}_v=A^{h_W}_v$.
Since $h^2=h$, $(A^h_v)^2=(A^h_v)$. 

The proof for the face operator is subtler. Using the fact that $\varphi_{\What}$ is a left and a right integral, it satisfies (the dual version of  \cite[Eqs.~(3.4a) and (3.4d)]{BOHM1998weak})
\begin{align}
    (y\leftharpoonup \psi)\rightharpoonup \varphi_{\What}=\hat{S}(\psi)(y\rightharpoonup \varphi_{\What}),\\
    \varphi_{\What}\leftharpoonup (\psi \rightharpoonup y)=(\varphi_{\What}\leftharpoonup y)\hat{S}(\psi). \label{eq:rightInt}
\end{align}
When acting on $x^*$ for both sides of Eq.~\eqref{eq:rightInt}, we obtain 
\begin{equation}\label{eq:Svarphi1}
    \operatorname{LHS}=\sum\psi(y^{(2)}) \varphi_{\What}^{(1)}(y^{(1)}) \varphi_{\What}^{(2)}(x^{*})=\langle x,\psi \rightharpoonup y\rangle,
\end{equation}
\begin{equation}
    \begin{aligned}
    \operatorname{RHS}= \sum   \varphi_{\What}^{(1)}(y) \varphi_{\What}^{(2)}((x^*)^{(1)})  \hat{S}(\psi)((x^*)^{(2)})=\sum\varphi_{\What}((x^*)^{(1)}y) \langle \psi, S((x^*)^{(2)})\rangle.
    \end{aligned} \label{eq:Svarphi}
\end{equation}
For Haar integral $\varphi_{\What}$, $\hat{S}(\varphi_{\What}^{(i)})=\hat{S}^{-1}(\varphi_{\What}^{(i)})$.
Take $\psi$ in Eqs.~\eqref{eq:Svarphi1} and \eqref{eq:Svarphi} as $\varphi_{\What}^{(i)}$, we obtain 
\begin{equation}
   \langle x, \varphi_{\What}^{(i)} \rightharpoonup y\rangle = \langle  (\varphi_{\What}^{(i)})^* \rightharpoonup x,y\rangle.
\end{equation}
Thus $(T_{+}^{\varphi_{\What}^{(i)}})^{\dagger}= T_{+}^{\varphi_{\What}^{*(i)}}=T_{+}^{\varphi_{\What}^{(i)}}$.
Similarly, we have $(T_{-}^{\varphi_{\What}^{(i)}})^{\dagger}= T_{-}^{\varphi_{\What}^{*(i)}}=T_{-}^{\varphi_{\What}^{(i)}}$.
Therefore,  $(B^{\varphi_{\What}}_f)^{\dagger}=B^{\varphi_{\What}}_f$. Since $\varphi_{\What}^2=\varphi_{\What}$, we see that $B_f^{\varphi_{\What}}$ is a projector.
\end{proof}

Denote $A_v=A_v^{h_W}$ and $B_f = B_f^{\varphi_{\hat{W}}}$, where $h_W$ and $\varphi_{\hat{W}}$ are Haar integrals of $W$ and $\hat{W}$ respectively. The local commutating projector Hamiltonian is given by
\begin{equation}
	H[D(W);C(\Sigma)]=\sum_{v\in V(\Sigma)}(I-A_v)+\sum_{f\in F(\Sigma)}(I-B_f),
\end{equation}
where $A_v$ imposes a local Gauss constraint and $B_f$ imposes a local flatness condition.
Since we have ${L}_{-}^h=S\comp L_{+}^h\comp S^{-1}$, $T_-^{\phi} =S^{-1}\comp T_+^{\phi}\comp S$, if the direction of $e$ is reversed, we just act the antipode on it. This means that the initial configuration of the lattice edge direction does not matter.
We will denote the quantum double model obtained in this way as $\mathsf{QD}(\What,W;\Sigma)$.
As we will see, its topological excitation is given by the representation category $\Rep(D(W))$, which is a braided fusion category.

\begin{remark}
	If we use the right-module structure, and order edges around the vertex and face clockwise, we will similarly obtain vertex and face operators $\tilde{A}^{h}(s)$ and $\tilde{B}^{\varphi}(s)$.
\end{remark}

\begin{remark}
    Since $W$ is finite dimensional, for each $\varphi\in \What$, there is a corresponding $g_{\varphi}\in W$ such that $\varphi(x)=\langle g_{\varphi},x\rangle$ for all $x\in W$.
\end{remark}

\begin{remark}
   For any given weak Hopf subalgebras $J\subset \What$ and $K\subset W$ for which there is a pairing $\langle \bullet,\bullet\rangle: J\otimes K \to \mathbb{C}$ induced by the canonical pairing of $\What$ and $W$, we can construct a lattice gauge theory model such that the topological excitation is given by the representation category of the generalized quantum double $J^{\rm cop}\Join K$ (the Hopf algebra case has be briefly discussed in \cite{jia2022boundary}).
\end{remark}

\vspace{1em}
\emph{Weak Hopf symmetry of the model.}\,\,\textemdash\, The vacuum sector of the weak Hopf quantum double model is given by
\begin{equation}
   \mathcal{V}_{\one}=\prod_v (I-A_v) \prod_f (I-B_f) \mathcal{H}_{\rm tot}.
\end{equation}
For any ground state $|\Omega\rangle\in \VV_{\one}$, we have
\begin{align*}
    B^\psi A^g|\Omega\rangle & = B^\psi A^gA^{h_W}B^{\varphi_{\hat{W}}}|\Omega\rangle  = B^\psi A^{\varepsilon_L(g)h_W}B^{\varphi_{\hat{W}}}|\Omega \rangle  \\
    &  = B^{\psi(\varepsilon_L(g)\rightharpoonup\varepsilon)\varphi_{\hat{W}}}A^{h_W}|\Omega\rangle = B^{(S^{-1}(\varepsilon_L(g))\rightharpoonup\psi)\varphi_{\hat{W}}}A^{h_W}|\Omega\rangle \\
    &=  B^{\hat{\varepsilon}_L((S^{-1}(\varepsilon_L(g))\rightharpoonup\psi))\varphi_{\hat{W}}}A^{h_W}|\Omega\rangle = B^{\hat{\varepsilon}_L((S^{-1}(\varepsilon_L(g))\rightharpoonup\psi))}A^{1_W}|\Omega\rangle.   
\end{align*}

\vspace{1em}
\emph{Generalized weak Hopf quantum double model and hierarchy structure.}\,\,\textemdash\,\,Consider two weak Hopf algebras $J,W$ equipped with a pairing $\langle \bullet, \bullet\rangle: J\otimes W \to \mathbb{C}$, we could construct a generalized quantum double $D(J^{\rm cop},W)$.
Using a very similar lattice construction, we could obtain a generalized weak Hopf quantum double model $\mathsf{QD}(J,W;\Sigma)$ on a lattice $\Sigma$.
The topological excitation is given by the representation category of the generalized quantum double $\Rep(D(J^{\rm cop},W))$.
It is interesting here that we have a hierarchy structure, namely, we could choose weak Hopf subalgebras $J'\subset J$ and $W'\subset W$ to build a quantum double model $\mathsf{QD}(J',W';\Sigma)$.
This theory is a sub-theory of the original theory in the sense that all topological excitation can be obtained from the original theory by restriction of the original weak Hopf algebras to their subalgebras.

\emph{Topological excitations.} --- 
Let us now provide an outline of the classification of the topological excitations for the weak Hopf quantum double model.
A more rigorous and comprehensive treatment will be given in our forthcoming work \cite{Jia2023classify}.

Before delving into the exploration of topological excitations within the weak Hopf algebraic quantum double model, it is pertinent to revisit the case where the algebraic structure is given by $W=\mathbb{C}[G]$, with $G$ representing a finite group \cite{Kitaev2003,lusztig1987leading,dijkgraaf1991quasi,gould1993quantum,Witherspoon1996the}. 
In this particular case, topological excitations can be classified using the notation $([g], \pi)$, where $[g]$ represents a conjugacy class of the group $G$, and $\pi$ denotes an irreducible representation of the centralizer $C_G(g)$. It is worth noting that there exists a $\mathbb{C}C_G(g)$-module, denoted as $M_\pi$, corresponding to the representation $\pi$. Consequently, a topological charge can be expressed as follows:
\begin{equation}\label{eq:anyon}
a_{[g],\pi}= \mathbb{C}[G] \otimes_{{\mathbb{C} C_G(g)}} M_{\pi}.
\end{equation}
Furthermore, it is essential to mention that the vacuum charge corresponds to the case where $g=e_G$ (the identity element of the group $G$) and $\pi=\mathds{1}$, indicating the trivial representation.
The antiparticle of the topological excitation described in Eq.~(\ref{eq:anyon}) is given by (notably, $C_G(g)=C_G(g^{-1})$):
\begin{equation}
	a_{[g^{-1}],\pi^{\dagger}}= \mathbb{C}[G] \otimes _{\mathbb{C} C_G(g^{-1})} M_{\pi^{\dagger}}.
\end{equation}
The conjugacy class $[g]$ associated with a topological charge is referred to as the magnetic charge, while the irreducible representation $\pi$ is known as the electric charge. In the case where $g=e_G$, $a_{[e_G],\pi}$ is characterized by a representation of $G$ and is referred to as a chargeon. When $\pi=\mathds{1}$ (the trivial representation), $a_{[g],\mathds{1}}$ is called a fluxion. Lastly, if both $g\neq e_G$ and $\pi\neq \mathds{1}$, $a_{[g],\pi}$ is referred to as a dyon.
The quantum dimension of a topological excitation is given by:
\begin{equation}
	\FPdim a_{[g],\pi} = |[g]| \dim \pi,
\end{equation}
where $|[g]|$ represents the size of the conjugacy class and $\dim \pi$ denotes the dimension of the irreducible representation. These topological excitations form a UMTC denoted as $\mathsf{Rep}(D(G))$, which represents the representation category of the quantum double of the finite group $G$.

In our previous work \cite{jia2022boundary}, we extensively discussed the topological excitations within the context of the Hopf quantum double model based on Refs.~\cite{gelaki2008nilpotent,burciu2012irreducible}. Notably, the insights and analysis presented in that study can be readily extended and applied to the weak Hopf quantum double model.
The topological excitations are classified by the irreducible representations of the quantum double $D(W)$ of the weak Hopf algebra $W$, and these topological excitations form a braided fusion category $\mathsf{Rep}(D(W))$.
To make it clearer, we need to introduce the notion of \emph{universal grading group} for a multi-fusion category \cite{gelaki2008nilpotent}.
A multi-fusion category $\EC$ is referred to as $G$-graded if it can be decomposed as follows:
\begin{equation}
\EC=\bigoplus_{g\in G} \EC_g,
\end{equation}
where $\EC_g$ represents full Abelian subcategories and the tensor product of objects from $\EC_g$ and $\EC_h$ maps to objects in $\EC_{gh}$ for all $g$ and $h$ in the finite group $G$. Here, $G$ is known as the grading group of $\EC$. When $G$ is maximal in the sense that any other grading can be obtained by taking a quotient group of $G$, it is referred to as the universal grading group, denoted as $G=U(\EC)$.
The Grothendieck ring $K_0(\EC)$ of $\EC$ is a ring with a multiplicative structure induced by the tensor products of $\EC$. Notice that
$K_0(\EC)$ is a multi-fusion ring when $\EC$ is a multi-fusion category.

For the weak Hopf quantum double model, we know that the topological excitations are classified by the irreducible representations of the quantum double $D(W)$. 
Since $\Rep(\What^{\rm cop})$ is a multi-fusion category, we have a corresponding universal grading group $G=U(\Rep(\What^{\rm cop}))$.
Then $\mathsf{Rep} (\hat{W}^{\rm cop}) =\oplus_{g\in G}	\mathsf{Rep} (\hat{W}^{\rm cop})_g$.
Suppose that $W_g$ is a weak Hopf subalgebra of $W$ such that $\mathsf{Rep}(\hat{W}_g) =\oplus_{x\in C_G(g)} \mathsf{Rep} (\hat{W}^{\rm cop}) x$. We denote by $\mathcal{I}_g={M_g}$ the set of all irreducible representations of $\hat{W}^{\rm cop} \Join W_g$ (where ``$\Join$'' denotes the bicrossed product) satisfying the condition that the character $\chi_{M_g}$, when restricted to the largest central weak Hopf subalgebra $Z(\hat{W}^{\rm cop})$ of $\hat{W}^{\rm cop}$, satisfies ${\chi_{M_g}}_{|{Z(\hat{W}^{\rm cop})}}=g \dim M_g$.
The topological excitations can be classified as follows:
\begin{equation}
a_{g, M_g}= W \otimes_{W_g} M_g,
\end{equation}
where $g\in G$ and $M_g\in \mathcal{I}_g$. This classification provides a complete description of the irreducible representations of the quantum double $D(W)$ \cite{Jia2023classify}.
The element $g\in G$ can be interpreted as a magnetic charge, while $M_g$ represents an electric charge. Specifically, when $g=e_G$, the topological excitation $a_{e_G,M_{e_G}}$ is referred to as a chargeon. If $M_g=\mathds{1}$, the topological excitation $a_{g,\mathds{1}}$ is called a fluxion. For the case where both $g\neq e_G$ and $M_g \neq \mathds{1}$, the topological excitation $a_{g, M_g}$ is known as a dyon.
The quantum dimension of a topological excitation is given by:
\begin{equation}
\FPdim a_{g, M_g} = \frac{|G|}{\dim W_g} \dim M_g,
\end{equation}
where $|G|$ represents the order of the group $G$, and $\dim W_g$ corresponds to the dimension of the subspace $W_g$. 

\subsection{Local algebra and ribbon operator} \label{subsec:ribbon-operator}

For a site $s$, the vertex and face operators will generate a local algebra
\begin{equation}
	\mathcal{A}(s):=\{D^{\varphi\otimes h}(s)=B^{\varphi}(s)A^h(s),\varphi\in \hat{W},h\in W\}.
\end{equation}
It is clear that $\mathcal{A}(s)\cong D(W)$ is a weak Hopf algebra. In this subsection, we will construct the ribbon operators on the quantum double model on closed surfaces based on $W$. These operators form an algebra isomorphic to the dual of $D(W)$. The ribbon operators on the model based on Hopf algebras have been studied in Refs.~\cite{chen2021ribbon,jia2022boundary}. 

We begin with the basic ingredients of ribbons. The building blocks of a ribbon are direct/dual triangles. A direct triangle is a triangle on the original lattice, consisting of a directed edge and two adjacent sites such that the edge connects the two sites, and a dual triangle is a direct triangle on the dual lattice. A direct (resp. dual) triangle is denoted as $\tau=(e,s_0,s_1)$ (resp. $\tilde{\tau} = (\tilde{e},s_0,s_1)$), where the ordered pair $(s_0,s_1)$ indicates the direction of the triangle. A (direct or dual) triangle is called a left-handed (resp. right-handed) triangle if the edge of the triangle is on the left-hand (resp. right-hand) side when one passes through the triangle along its positive direction. We call this feature the chirality of the triangle (this is called local orientation in Ref.~\cite{chen2021ribbon}). Left-handed (resp. right-handed) triangles are denoted as $\tau_L$ and $\tilde{\tau}_L$ (resp. $\tau_R$ and $\tilde{\tau}_R$). We use a superscript $+$ or $-$ to indicate that the direction of $\tau=(e,s_0,s_1)$ and $e$ coincide or not (the same for $\tilde{\tau}$):
\begin{align}
    &	\begin{aligned}
        \begin{tikzpicture}
            \draw[-latex,black] (1,0) -- (-1,0); 
            \node[ line width=0.2pt, dashed, draw opacity=0.5] (a) at (0,0.2){$x$};
            \draw[line width=0.5pt, red] (-1,0) -- (0,-1);
            \node[ line width=0.2pt, dashed, draw opacity=0.5] (a) at (-0.6,-0.7){$s_1$};
            \draw[line width=0.5pt, red] (1,0) -- (0,-1);
            \node[ line width=0.2pt, dashed, draw opacity=0.5] (a) at (0.6,-0.7){$s_0$};
            \draw[-stealth,gray, line width=3pt] (0.5,-0.4) -- (-0.5,-0.4); 
            \node[ line width=0.2pt, dashed, draw opacity=0.5] (a) at (0.1,-1.5){$\tau^+$};
        \end{tikzpicture}
    \end{aligned}
    \quad \quad \quad \quad \quad
    \begin{aligned}
        \begin{tikzpicture}
            \draw[-latex,black] (-1,0) -- (1,0); 
            \node[line width=0.2pt, dashed, draw opacity=0.5] (a) at (0,0.2){$x$};
            \draw[line width=0.5pt, red] (-1,0) -- (0,-1);
            \node[line width=0.2pt, dashed, draw opacity=0.5] (a) at (-0.6,-0.7){$s_1$};
            \draw[line width=0.5pt, red] (1,0) -- (0,-1);
            \node[ line width=0.2pt, dashed, draw opacity=0.5] (a) at (0.6,-0.7){$s_0$};
            \draw[-stealth,gray, line width=3pt] (0.5,-0.4) -- (-0.5,-0.4); 
            \node[ line width=0.2pt, dashed, draw opacity=0.5] (a) at (0.1,-1.5){$\tau^-$};
        \end{tikzpicture}
    \end{aligned}
\end{align}

A ribbon, denoted as $\rho$, consists of a sequence of consecutive triangles such that the ending site of a triangle is the starting site of the next triangle and that there is no self-overlapping. A ribbon is called direct (resp. dual) if it consists of only direct (resp. dual) triangles, otherwise it is called proper. A closed ribbon is one with the starting and ending sites being the same. There are two types of ribbons: type-A ribbons, denoted as $\rho_A$, are the ones consisting of left-handed direct triangles and right-handed dual triangles, while type-B ribbons, denoted as $\rho_B$, are the ones consisting of right-handed direct triangles and left-handed dual triangles. See Fig.~\ref{fig:ribbon_type} for an illustration.  In the course of constructing the ribbon operators, we need to deal with different cases separately. 

\begin{figure}
    \centering
    \includegraphics[scale=1]{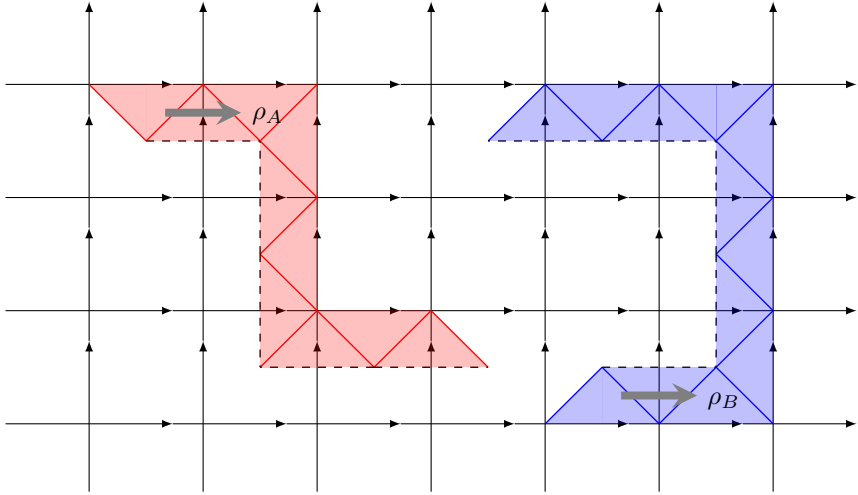}
    \caption{Ribbon types. A type-A ribbon $\rho_A$ (red) consists of left-handed direct triangles and right-handed dual triangles; a type-B ribbon $\rho_B$ (blue) consists of right-handed direct triangles and left-handed dual triangles.  \label{fig:ribbon_type}}
\end{figure}

First we define the triangle operators. We use the same convention as in Ref.~\cite{jia2022boundary} (note that this is slightly different from the one in Ref.~\cite{chen2021ribbon}). Specifically, the triangle operators are defined as follows: 
\begin{align}
    &F^{h,\varphi}( \tau_R^+)\vert x\rangle =  
    \varepsilon(h)  T_{-}^{\varphi} \vert x\rangle, \quad F^{h,\varphi}( \tau_R^-)\vert x\rangle = \varepsilon(h)  T_{+}^{\varphi} \vert x\rangle, \\
    &F^{h,\varphi}( \tau_L^-)\vert x\rangle =  
    \varepsilon(h)  \tilde{T}_{-}^{\varphi} \vert x\rangle, \quad F^{h,\varphi}( \tau_L^+)\vert x\rangle = 
    \varepsilon(h)  \tilde{T}_{+}^{\varphi} \vert x\rangle, \\
    &F^{h,\varphi}( \tilde{\tau}_R^+)\vert x\rangle =\hat{\varepsilon}(\varphi) L^{h}_{-}\vert x\rangle, \quad F^{h,\varphi}( \tilde{\tau}_R^-)\vert x\rangle =  
    \hat{\varepsilon}(\varphi) L^{h}_{+}\vert x\rangle, \\
    &F^{h,\varphi}( \tilde{\tau}_L^-)\vert x\rangle =\hat{\varepsilon}(\varphi)\tilde{L}_-^{h} \vert x\rangle, \quad F^{h,\varphi}( \tilde{\tau}_L^+)\vert x\rangle =   \hat{\varepsilon}(\varphi) \tilde{L}_{+}\vert x\rangle, 
\end{align}
for $x,\, h\in W$ and $\varphi\in\hat{W}$. We will see later that, by the above choice of convention, the vertex and face operators are special cases of ribbon operators.

Next we define the ribbon operators for a general ribbon. The strategy is the same as that in the Hopf algebra case \cite{chen2021ribbon}. The ribbon operators for type-B ribbons $\rho=\rho_B$ are parameterized by $h\otimes\varphi\in W^{\rm op}\otimes \hat{W}\simeq(\hat{W}^{\rm cop}\otimes W)^\vee$, denoted as $F^{h,\varphi}(\rho)$ or $F^{h,\varphi}_{\rho}$. It only acts non-trivially on edges contained in $\rho$. Since $\hat{W}^{\rm cop}\otimes W$ is an algebra, its dual has a coalgebra structure with comultiplication given by 
\begin{equation}
    \Delta(h\otimes \varphi) = \sum_k\sum_{(k),(h)}(h^{(1)}\otimes \hat{k})\otimes (S^{-1}(k^{(3)})h^{(2)}k^{(1)}\otimes \varphi(k^{(2)}\bullet )),
\end{equation}
where $\{k\}$ and $\{\hat{k}\}$ are dual bases of $W$ and $\hat{W}$. To define $F^{h,\varphi}(\rho)$, take a decomposition $\rho =  \rho_1\cup\rho_2$ such that both $\rho_1$ and $\rho_2$ have the same direction with $\rho$ and the terminal site of $\rho_1$ is the initial site of $\rho_2$. Then the ribbon operator $F^{h,\varphi}(\rho)$ is defined recursively by
\begin{equation} \label{eq:ribbon-operator}
    F^{h,\varphi}(\rho) = \sum_k\sum_{(k),(h)} F^{h^{(1)},\hat{k}}(\rho_1)F^{S^{-1}(k^{(3)})h^{(2)}k^{(1)},\varphi(k^{(2)}\bullet)}(\rho_2). 
\end{equation}
It is independent of the choice of the decomposition $\rho=\rho_1\cup \rho_2$ by the co-associativity of $(\hat{W}^{\rm cop}\otimes W)^\vee$. The ribbon operators for type-A ribbons are parameterized by $h\otimes \varphi\in W\otimes\hat{W}^{\rm op}\simeq (\hat{W}^{\rm cop}\otimes W)^{\vee,\rm op}$, and the construction is similar by the recursive formula \eqref{eq:ribbon-operator}.

Let $s=(v,f)$ be a site. There is a unique closed ribbon $\sigma_v$ of type-A surrounding $v$ which contains only dual triangles with one vertex $v$. Likewise, there is a unique closed ribbon $\sigma_f$ of type-B inside $f$ which contains only direct triangles with one vertex $\tilde{f}$, the vertex in the dual lattice corresponding to $f$. The directions of these closed ribbons are taken counterclockwise starting from the site $s$. Then one can see that $A^h(s) = F^{h,\varepsilon}(\sigma_v)$ and $B^{\varphi}(s) = F^{1_W,\varphi}$ for $h\in\operatorname{Cocom}(W)$ and $\varphi\in\operatorname{Cocom}(\hat{W})$. Thus, the vertex and face operators are special cases of ribbon operators, as expected. 

Note that $D(W)^\vee$ can be identified with a subspace of $W^{\rm op}\otimes \hat{W}\simeq(\hat{W}^{\rm cop}\otimes W)^\vee$ consisting of elements vanishing on the ideal $J$. With this identification, we have the following relations when the ribbon operators are parameterized by elements in $D(W)^\vee$. 

\begin{lemma} \label{lem:ribbon-local-ope}
    Suppose $\rho$ is a ribbon with initial site $s$. If $g\otimes \psi\in D(W)^\vee$, and $x\in W_L$, then 
    \begin{align}
        &F^{xg,\psi}(\rho)A^h(s)  = F^{g,\psi}(\rho)A^{S(x)h}(s),  \label{eq:flip-WL1} \\
        &F^{xg,\psi}(\rho)B^\varphi(s) = F^{g,\psi}(\rho)B^{\varphi\leftharpoonup x}(s). \label{eq:flip-WL}
    \end{align}
\end{lemma}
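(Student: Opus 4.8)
The plan is to prove both identities by localizing the interaction to the initial triangle of $\rho$ and then reducing to a direct computation on a single triangle. I would first peel off the initial triangle using the recursive definition \eqref{eq:ribbon-operator}, writing $\rho = \tau_0 \cup \rho'$ where $\tau_0$ is the first triangle (whose initial site is $s$) and $\rho'$ is the remainder. Because $A^h(s)$ acts only on the edges around the vertex of $s$ and $B^\varphi(s)$ only on the dual edges around the face of $s$, the factor $F(\rho')$ has support disjoint from them and commutes; hence the whole problem collapses to verifying the two flip relations when $F^{xg,\psi}(\rho)$ is replaced by the initial-triangle operator. The base case is thus a single triangle, and the general statement follows by induction on the number of triangles through \eqref{eq:ribbon-operator}.

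For the single-triangle base case I would split into the direct- versus dual-triangle cases (and their chirality and orientation sub-cases). For the first identity, only the dual triangles carry a nontrivial $W$-action, since the direct-triangle operators evaluate their $W$-argument through $\varepsilon$; so the factor $x\in W_L$ enters through an action of type $L_\pm$ on the edge shared with $A^h(s)$. Using $L_+^h\vert y\rangle = \vert hy\rangle$ and $L_-^h\vert y\rangle = \vert y S^{-1}(h)\rangle$ from \eqref{eq:L1}, transporting $x$ off the ribbon's first triangle onto the edge already acted on by $A^h(s)$ converts multiplication by $x$ into the appropriate one-sided multiplication, and the antipode produced by this transport is precisely the $S(x)$ appearing in $A^{S(x)h}(s)$. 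For the second identity the roles are exchanged: the direct triangles now carry the nontrivial $\hat{W}$-action $T_\pm$ (the dual triangles evaluate $\varphi$ through $\hat\varepsilon$), and I would rewrite the $W_L$-factor using the Sweedler-arrow identities collected in the preliminaries, in particular $\varphi \leftharpoonup y_R = \varphi(\varepsilon \leftharpoonup y_R)$ and $\varepsilon \leftharpoonup S(x_L) = \varepsilon \leftharpoonup x_L$ from \eqref{eq:S-and-original}, so that the factor reassembles into the coaction $\varphi \leftharpoonup x$ on $B^\varphi(s)$. Note that the second of these identities is exactly what makes the antipode disappear, matching the absence of $S$ in $B^{\varphi \leftharpoonup x}(s)$.

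The technical heart, and the main obstacle, is the bookkeeping of $\Delta(xg)$ across the recursion: since $x\in W_L$ does not factor out of $\Delta(xg)$ as a bare left tensor factor (its coproduct is governed by $\Delta(1_W)$ rather than being $x\otimes 1_W$), I would use the structural facts about the counital subalgebras recorded earlier, namely $S(W_L) = W_R$, the interplay $\varepsilon_L S = \varepsilon_L \varepsilon_R = S \varepsilon_R$, and $\Delta(1_W)\in W_R \otimes W_L$, to show that the $W_L$-component effectively acts only on the initial edge at $s$ and can therefore be absorbed into the local operator. Throughout, the non-involutivity of the antipode ($S \neq S^{-1}$ in the genuinely weak Hopf setting) forces a careful distinction between $S$ and $S^{-1}$, which is essentially the only place where the computation diverges from the Hopf-algebra case treated in \cite{chen2021ribbon,jia2022boundary}. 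Once the single-triangle identity is checked for each triangle type, assembling the general ribbon via \eqref{eq:ribbon-operator} is routine.
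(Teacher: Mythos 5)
Your overall skeleton---peel off the initial triangle with the recursion \eqref{eq:ribbon-operator}, commute the remainder of the ribbon past the local operators by disjointness of support, and reduce to a case analysis on a single triangle---is exactly the strategy of the paper's proof in Appendix~\ref{app:ribbon}. The gap is in the single-triangle analysis. You localize the first identity to dual initial triangles on the grounds that direct triangles ``evaluate their $W$-argument through $\varepsilon$,'' so that $x$ would enter only via $L_\pm$ on an edge shared with $A^h(s)$; symmetrically for the second identity. But a ribbon may well begin with a \emph{direct} triangle (this happens, e.g., for type-B ribbons, and it is precisely the case $\rho=\tau_R^+$ that the paper computes). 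There $F^{xg,\psi}(\tau_R^+)=\varepsilon(xg)\,T_-^{\psi}$, so the factor $x$ is trapped inside the scalar $\varepsilon(xg)$ and never acts on any edge; since $\varepsilon$ is not multiplicative on a weak bialgebra, $\varepsilon(xg)$ is unrelated to $\varepsilon(g)$ and your transport mechanism has nothing to grab. The missing ingredient is the defining property of $D(W)^\vee$ as the annihilator of the ideal $J$, recorded as \eqref{eq:aaa}: for $g\otimes\psi\in D(W)^\vee$ and $y\in W_R$ one has $\varphi(g)\psi(ym)=\varphi(\varepsilon\leftharpoonup y)(g)\psi(m)$. Combined with $\varepsilon(xg)=\varepsilon(S(x)g)$ from \eqref{eq:S-and-original}, this is what shuttles $S(x)\in W_R$ out of the $\varepsilon(\cdot\,g)$ slot and into the argument of $\psi$, where it then recombines with $h$ (resp.\ with $\varphi$) on the shared edge to produce $A^{S(x)h}(s)$ (resp.\ $B^{\varphi\leftharpoonup x}(s)$). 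This coupling between the two tensor slots of the label is the entire reason the lemma is stated for $g\otimes\psi\in D(W)^\vee$ rather than for arbitrary elements of $W^{\rm op}\otimes\hat{W}$; a proof that never invokes it cannot close the direct-triangle case.

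A secondary point: the issue you single out as the technical heart---that $x$ does not split off as a bare tensor factor of $\Delta(xg)$---is in fact benign. For $x\in W_L$ one has $\Delta(xg)=\sum_{(g)}xg^{(1)}\otimes g^{(2)}$ (because $\Delta(1_W)\in W_R\otimes W_L$ and $W_L$ commutes with $W_R$), so $x$ stays attached to the leg carried by the initial triangle and the recursion step of the paper goes through verbatim. The genuine difficulty sits in the annihilator condition above, not in the coproduct bookkeeping.
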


\begin{proof}
    See Appendix \ref{app:ribbon}. 
\end{proof}

\begin{lemma} \label{lem:comm-rel}
    Let $\rho=\rho_A$ or $\rho_B$ be an open ribbon with starting and ending sites $s_0,s_1$. Then the ribbon operator $F^{h,\varphi}(\rho)$ satisfies the following commutation relations with local operators: 
    \begin{itemize}
        \item[(1)] At the starting site $s_0$:
        \begin{align}
            A^g(s_0)F^{h,\varphi}(\rho_A)&=\sum_{(g)}F^{S^{-2}(g^{(1)})hS^{-1}(g^{(3)}),\varphi(S^{-1}(g^{(2)})\bullet)}(\rho_A)A^{g^{(4)}}(s_0),\label{eq:comm1}\\
            A^g(s_0)F^{h,\varphi}(\rho_B)&=\sum_{(g)}F^{S^{-2}(g^{(2)})hS^{-1}(g^{(4)}),\varphi(S^{-1}(g^{(3)})\bullet)}(\rho_B)A^{g^{(1)}}(s_0), \label{eq:comm2} \\
            B^{\psi}(s_0)F^{h,\varphi}(\rho_A)&=\sum_{(h)}F^{h^{(2)},\varphi}(\rho_A)B^{\psi(\bullet S^{-1}(h^{(1)}))}(s_0), \label{eq:comm3} \\
            B^{\psi}(s_0)F^{h,\varphi}(\rho_B)&=\sum_{(h)}F^{h^{(2)},\varphi}(\rho_B)B^{\psi(S(h^{(1)})\bullet)}(s_0).  \label{eq:comm4} 
        \end{align}
        \item[(2)] At the ending $s_1$: 
        \begin{align}
            A^g(s_1)F^{h,\varphi}(\rho_A)&=\sum_{(g)}F^{h,\varphi(\bullet S^{-2}(g^{(2)}))}(\rho_A)A^{g^{(1)}}(s_1), \label{eq:comm5} \\
            A^g(s_1)F^{h,\varphi}(\rho_B)&=\sum_{(g)}F^{h,\varphi(\bullet g^{(1)})}(\rho_B)A^{g^{(2)}}(s_1), \label{eq:comm6} \\
            B^{\psi}(s_1)F^{h,\varphi}(\rho_A)&=\sum_k\sum_{(k),(h)}\varphi(k^{(2)})F^{h^{(1)},\hat{k}}(\rho_A)B^{\psi(S^{-1}(k^{(3)})h^{(2)}k^{(1)} \bullet)}(s_1), \label{eq:comm7} \\
            B^{\psi}(s_1)F^{h,\varphi}(\rho_B)&=\sum_k\sum_{(k),(h)}\varphi(k^{(2)})F^{h^{(1)},\hat{k}}(\rho_B)B^{\psi(\bullet S^{-1}(k^{(3)})h^{(2)}k^{(1)})}(s_1).  \label{eq:comm8}  
        \end{align}
    \end{itemize}
\end{lemma}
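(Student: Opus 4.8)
The plan is to prove all eight identities by induction on the number of triangles making up the ribbon $\rho$, exploiting the recursive definition \eqref{eq:ribbon-operator} of $F^{h,\varphi}(\rho)$ together with the locality of the vertex and face operators. The base case is a single triangle, where $\rho$ reduces to one of the elementary triangles whose operators were fixed in Section~\ref{subsec:ribbon-operator}: for a type-A ribbon the triangle is a left-handed direct triangle (acting by $\tilde T_\pm$) or a right-handed dual triangle (acting by $L_\pm$), and for a type-B ribbon it is a right-handed direct triangle (acting by $T_\pm$) or a left-handed dual triangle (acting by $\tilde L_\pm$). For each such triangle one writes $A^g(s_0)F^{h,\varphi}(\tau)$ and $B^\psi(s_0)F^{h,\varphi}(\tau)$ explicitly on a basis vector, using the formulas \eqref{eq:L1}--\eqref{eq:L2} for $L_\pm,T_\pm,\tilde L_\pm,\tilde T_\pm$ and the definitions \eqref{eq:Av}, \eqref{eq:Bf} of the site operators, and then rearranges the result into the claimed form. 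The anti-(co)multiplicativity of $S$ and the pairing identities collected in Section~\ref{sec:pre} are what convert the raw products into the twisted factors $S^{-2}(g^{(\cdot)})hS^{-1}(g^{(\cdot)})$ and $\varphi(S^{-1}(g^{(\cdot)})\bullet)$ appearing on the right-hand sides; this is where the powers of $S^{-1}$ originate, since the $L_-$ leg of the vertex operator already carries an $S^{-1}$.

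For the inductive step I would decompose $\rho=\tau\cup\rho'$, where $\tau$ is the initial triangle containing the starting site $s_0$ and $\rho'$ is the remaining shorter ribbon, whose initial site is the terminal site of $\tau$ and therefore distinct from $s_0$. Applying \eqref{eq:ribbon-operator},
\begin{equation*}
    F^{h,\varphi}(\rho)=\sum_k\sum_{(k),(h)}F^{h^{(1)},\hat{k}}(\tau)\,F^{S^{-1}(k^{(3)})h^{(2)}k^{(1)},\,\varphi(k^{(2)}\bullet)}(\rho').
\end{equation*}
The key locality fact is that $A^g(s_0)$ and $B^\psi(s_0)$ act only on edges incident to the vertex $v_0$ and on the dual edges around the face $f_0$ of $s_0$, whereas $F(\rho')$ acts nontrivially only on edges of $\rho'$; since the ribbon does not overlap itself, $s_0$ is a site of $\tau$ alone, so these supports are disjoint and $A^g(s_0),B^\psi(s_0)$ commute with $F(\rho')$. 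It therefore suffices to push the site operator through the single triangle operator $F^{h^{(1)},\hat k}(\tau)$ using the base case, move the residual $A^{g^{(\cdot)}}(s_0)$ or $B^{\cdots}(s_0)$ past $F(\rho')$ once more, and reassemble the two triangle operators by the recursion. Coassociativity of $(\hat W^{\rm cop}\otimes W)^\vee$, which guarantees that \eqref{eq:ribbon-operator} is independent of the chosen decomposition, is precisely what makes this reassembly reproduce the stated formulas. The ending-site relations \eqref{eq:comm5}--\eqref{eq:comm8} are handled by the mirror-image decomposition $\rho=\rho''\cup\tau'$ with $\tau'$ the final triangle at $s_1$, so that $A^g(s_1),B^\psi(s_1)$ commute with $F(\rho'')$ and one only computes their action on the last triangle.

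The main obstacle will be the algebraic bookkeeping in the base case: there are many sub-cases (direct versus dual triangle, the two chiralities, and the two orientations $\tau^{\pm}$), and in each the Sweedler indices produced by the site operator must be matched against the twist carried by the ribbon parameter so that the $S^{-2}$ and $S^{-1}$ factors emerge exactly as written. A second, more delicate point is the rigorous justification of the locality claim along the gluing edge shared by $\tau$ and $\rho'$: one must verify that, once the site operator has been absorbed into $\tau$, the modified parameters create no new overlap with $\rho'$, which reduces to disjointness of supports together with Lemma~\ref{lem:ribbon-local-ope} for handling the $W_L$-valued ambiguity along that edge. Once these are settled, the inductive assembly is routine.
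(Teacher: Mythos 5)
Your strategy---explicit base-case computation on the initial/final short segment of the ribbon followed by induction through the recursive gluing formula \eqref{eq:ribbon-operator}, using disjointness of supports to commute the site operator past the distant part of the ribbon---is exactly the route the paper takes in Appendix~\ref{app:ribbon}, which works out \eqref{eq:comm4} and \eqref{eq:comm5} as representative short/long-ribbon cases and defers the remaining six to the analogous computations of Ref.~\cite{chen2021ribbon}. The one adjustment you should make is that for the face-operator relations at $s_0$ the base case must be the first \emph{two} triangles (e.g.\ $\tau_R\cup\tilde\tau_L$ for a type-B ribbon) rather than a single triangle, since the edges of both lie on $\partial f_0$ and hence inside the support of $B^{\psi}(s_0)$, so a one-triangle base case would not capture the full overlap.
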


\begin{proof}
    See Appendix \ref{app:ribbon}. 
\end{proof}

Now we show that the ribbon operators defined above are commuting with all stabilizer operators in the Hamiltonian except possibly those ones at the two ending sites of the underlying ribbons. 

\begin{proposition}\label{prop:comm-rib-local}
    Let $\rho$ be a ribbon. If $h\in\operatorname{Cocom}(W)$ and $\varphi\in\operatorname{Cocom}(\hat{W})$, then $A^h(s)$ and $B^{\varphi}(s)$ are commuting with $F^{g,\psi}(\rho)$ for all $s\neq \partial_0\rho,\partial_1\rho$ and $g\otimes \psi\in D(W)^\vee$:
    \begin{equation}
        A^h(s)F^{g,\psi}(\rho) = F^{g,\psi}(\rho) A^h(s),\quad B^\varphi(s)F^{g,\psi}(\rho) = F^{g,\psi}(\rho)B^\varphi(s). 
    \end{equation}
\end{proposition}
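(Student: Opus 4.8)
The plan is to reduce the statement to the case of an \emph{interior} site of $\rho$ and then to use the endpoint commutation relations of Lemma~\ref{lem:comm-rel} on the two halves of a split ribbon. First I would dispose of the geometrically trivial case. The ribbon operator $F^{g,\psi}(\rho)$ acts nontrivially only on the edges contained in $\rho$, whereas $A^h(s)$ acts only on the edges incident to the vertex $v$ of $s=(v,f)$ and $B^\varphi(s)$ only on the edges bounding the face $f$. If none of these edges lies in $\rho$, the two operators act on disjoint tensor factors of $\mathcal{H}_{\rm tot}$ and hence commute. It therefore suffices to treat a site $s$ that $\rho$ genuinely traverses, i.e.\ an interior site with $s\neq \partial_0\rho,\partial_1\rho$.

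Second, for such an interior site I would split the ribbon at $s$ as $\rho=\rho_1\cup\rho_2$, where $\rho_1,\rho_2$ have the same type and orientation as $\rho$ and $\partial_1\rho_1=\partial_0\rho_2=s$. By the recursive definition \eqref{eq:ribbon-operator} (and its type-A analogue), $F^{g,\psi}(\rho)$ is a sum of products $F(\rho_1)F(\rho_2)$. Now push $A^h(s)$ to the right through this product: it meets $\rho_1$ at its ending site, where \eqref{eq:comm5}--\eqref{eq:comm6} apply, and then $\rho_2$ at its starting site, where \eqref{eq:comm1}--\eqref{eq:comm2} apply. Each passage deforms the relevant ribbon parameters by Sweedler legs of $h$ (dressed by $S^{-1}$ and $S^{-2}$) and leaves a residual vertex operator $A^{h^{(\cdot)}}(s)$ to continue through.

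Third---and this is the crux---I would show that the two deformations, one on the parameter of $\rho_1$ and one on the parameters of $\rho_2$, recombine into the \emph{undeformed} recursion, so that the net effect is $F^{g,\psi}(\rho)A^h(s)$ with the original $g,\psi$ and a single leftover $A^h(s)$. This is exactly where the hypothesis $h\in\operatorname{Cocom}(W)$ enters: cocommutativity $\Delta(h)=\Delta^{\rm op}(h)$ lets me reorder the Sweedler legs of $h$ acting on the two halves, and, combined with the coassociativity of $(\hat W^{\rm cop}\otimes W)^\vee$ that guarantees independence of the splitting, it collapses the antipode-dressed insertions back into \eqref{eq:ribbon-operator}. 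The face operator $B^\varphi(s)$ is handled by the identical argument using the $B$-relations \eqref{eq:comm3}--\eqref{eq:comm4} and \eqref{eq:comm7}--\eqref{eq:comm8}, with cocommutativity of $\varphi\in\operatorname{Cocom}(\hat W)$ playing the symmetric role under the dualization $W\leftrightarrow\hat W$, $A\leftrightarrow B$.

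I expect the matching of Sweedler indices in the third step to be the main obstacle: one must track which leg of $\Delta_3(h)$ lands on $\rho_1$ versus $\rho_2$ and verify that the antipode powers produced by the ending-site relation ($S^{-2}$) cancel precisely against those from the starting-site relation ($S^{-1}$ and $S^{-2}$) once cocommutativity is invoked. I would relegate this bookkeeping to the appendix, treating the type-A and type-B cases in parallel, giving the vertex computation in full detail and letting the face computation follow by the evident dualization.
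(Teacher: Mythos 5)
Your plan coincides with the paper's own proof: the paper likewise splits $\rho=\rho_1\cup\rho_2$ at the interior site $s$ with $\partial_1\rho_1=\partial_0\rho_2=s$, pushes $A^h(s)$ (resp.\ $B^\varphi(s)$) through using the ending-site relations \eqref{eq:comm6}, \eqref{eq:comm8} on $\rho_1$ and the starting-site relations \eqref{eq:comm2}, \eqref{eq:comm4} on $\rho_2$, and then invokes cocommutativity to cyclically rotate the Sweedler legs so that the deformations recombine into the undeformed recursion \eqref{eq:ribbon-operator}. The only ingredient your sketch leaves implicit in the ``bookkeeping'' is that the leftover counital elements $\varepsilon_L(h^{(1)})$, $\varepsilon_R(j^{(4)})$ produced along the way do not simply vanish as in the Hopf case but must be absorbed into the local operator via the flip relations \eqref{eq:flip-WL1} and \eqref{eq:flip-WL} of Lemma~\ref{lem:ribbon-local-ope}, exactly as the paper does in its final steps.
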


\begin{proof}
We prove the assertions for type-B ribbons. The proof for type-A ribbons is similar. Take the decomposition $\rho=\rho_1\cup\rho_2$ such that $\partial_1\rho_1=\partial_0\rho_2=s$. Then by Eqs.~\eqref{eq:comm2} and \eqref{eq:comm6}, we have 
\begin{align}
    &\quad A^h(s)F^{g,\psi}(\rho)  \nonumber \\
    & = \sum_{k,(k),(g)}A^h(s)F^{g^{(1)},\hat{k}}(\rho_1)F^{S^{-1}(k^{(3)})g^{(2)}k^{(1)},\psi(k^{(2)}\bullet)}(\rho_2)  \nonumber \\
    & = \sum_{k,(k),(g)}\sum_{(h)}F^{g^{(1)},\hat{k}(\bullet h^{(1)})}(\rho_1)A^{h^{(2)}}(s)F^{S^{-1}(k^{(3)})g^{(2)}k^{(1)},\psi(k^{(2)}\bullet)}(\rho_2)  \nonumber \\
    & = \sum_{k,(k),(g)}\sum_{(h)}F^{g^{(1)},\hat{k}(\bullet h^{(1)})}(\rho_1)F^{S^{-2}(h^{(3)})S^{-1}(k^{(3)})g^{(2)}k^{(1)}S^{-1}(h^{(5)}),\psi(k^{(2)}S^{-1}(h^{(4)})\bullet)}(\rho_2) A^{h^{(2)}}(s)  \nonumber \\
    & = \sum_{j,(j),(g)}\sum_{(h)}F^{g^{(1)},\hat{j}}(\rho_1)F^{S^{-2}(h^{(5)})S^{-1}(j^{(3)}h^{(3)})g^{(2)}j^{(1)}h^{(1)}S^{-1}(h^{(7)}),\psi(j^{(2)}h^{(2)}S^{-1}(h^{(6)})\bullet)}(\rho_2) A^{h^{(4)}}(s)  \nonumber \\
    & = \sum_{j,(j),(g)}\sum_{(h)}F^{g^{(1)},\hat{j}}(\rho_1)F^{S^{-2}(h^{(5)})S^{-1}(j^{(3)}h^{(3)})g^{(2)}j^{(1)}S^{-1}(\varepsilon_L(h^{(1)})),\psi(j^{(2)}h^{(2)}S^{-1}(h^{(6)})\bullet)}(\rho_2) A^{h^{(4)}}(s)  \nonumber \\
    & = \sum_{j,(j),(g)}\sum_{(h)}F^{g^{(1)},\hat{j}}(\rho_1)F^{S^{-2}(h^{(5)})S^{-1}(j^{(3)}h^{(3)})g^{(2)}j^{(1)},\psi(j^{(2)}\varepsilon_L(h^{(1)})h^{(2)}S^{-1}(h^{(6)})\bullet)}(\rho_2) A^{h^{(4)}}(s)  \nonumber \\
    & = \sum_{j,(j),(g)}\sum_{(h)}F^{g^{(1)},\hat{j}}(\rho_1)F^{S^{-2}(h^{(4)})S^{-1}(j^{(3)}h^{(2)})g^{(2)}j^{(1)},\psi(j^{(2)}S^{-1}(\varepsilon_L(h^{(1)}))\bullet)}(\rho_2) A^{h^{(3)}}(s)  \nonumber \\
    & = \sum_{j,(j),(g)}\sum_{(h)}F^{g^{(1)},\hat{j}}(\rho_1)F^{S^{-2}(h^{(4)})S^{-1}(j^{(3)}\varepsilon_L(h^{(1)})h^{(2)})g^{(2)}j^{(1)},\psi(j^{(2)}\bullet)}(\rho_2) A^{h^{(3)}}(s)  \nonumber \\
    & = \sum_{j,(j),(g)}\sum_{(h)}F^{g^{(1)},\hat{j}}(\rho_1)F^{S^{-2}(\varepsilon_L(h^{(1)}))S^{-1}(j^{(3)})g^{(2)}j^{(1)},\psi(j^{(2)}\bullet)}(\rho_2) A^{h^{(2)}}(s) \nonumber  \\
    & = \sum_{j,(j),(g)}\sum_{(h)}F^{g^{(1)},\hat{j}}(\rho_1)F^{S^{-1}(j^{(3)})g^{(2)}j^{(1)},\psi(j^{(2)}\bullet)}(\rho_2) A^{S^{-1}(\varepsilon_L(h^{(2)}))h^{(1)}}(s) \nonumber \\
    & = F^{g,\psi}(\rho) A^h(s). 
\end{align}
Here, in the fifth to ninth equalities, we used a cyclic rotation $h^{(i)}$ to $h^{(i+1)}$ since $h\in\operatorname{Cocom}(W)$, and the identity $\sum_{(x)}x^{(1)}y\otimes x^{(2)} = \sum_{(x)}x^{(1)}\otimes x^{(2)}S(y)$ for $y\in W_R$; the tenth equality follows from Eq.~\eqref{eq:flip-WL1}. 

Similarly, by Eqs.~\eqref{eq:comm4} and \eqref{eq:comm8}, we have 
\begin{align}
    &\quad  B^\varphi(s)F^{g,\psi}(\rho) \nonumber \\
    & = \sum_{k,(k),(g)} B^\varphi(s) F^{g^{(1)},\hat{k}}(\rho_1)F^{S^{-1}(k^{(3)})g^{(2)}k^{(1)},\psi(k^{(2)}\bullet)}(\rho_2) \nonumber  \\
    & = \sum_{k,(k),(g)}\sum_{j,(j)}\hat{k}(j^{(2)})  F^{g^{(1)},\hat{j}}(\rho_1) B^{\varphi(\bullet S^{-1}(j^{(3)})g^{(2)}j^{(1)})}(s) F^{S^{-1}(k^{(3)})g^{(3)}k^{(1)},\psi(k^{(2)}\bullet)}(\rho_2) \nonumber  \\
    & = \sum_{k,(k),(g)}\sum_{j,(j)}\hat{k}(j^{(2)})  F^{g^{(1)},\hat{j}}(\rho_1)  F^{S^{-1}(k^{(4)})g^{(4)}k^{(2)},\psi(k^{(3)}\bullet)}(\rho_2) \nonumber  B^{\varphi(S(k^{(1)})S(g^{(3)})k^{(5)}\bullet S^{-1}(j^{(3)})g^{(2)}j^{(1)})}(s)  \nonumber \\
    & = \sum_{(g)}\sum_{j,(j)}  F^{g^{(1)},\hat{j}}(\rho_1)  F^{S^{-1}(j^{(5)})g^{(4)}j^{(3)},\psi(j^{(4)}\bullet)}(\rho_2)B^{\varphi(S(j^{(2)})S(g^{(3)})j^{(6)}\bullet S^{-1}(j^{(7)})g^{(2)}j^{(1)})}(s) \nonumber  \\
    & = \sum_{(g)}\sum_{j,(j)}  F^{g^{(1)},\hat{j}}(\rho_1)  F^{S^{-1}(j^{(4)})g^{(4)}j^{(2)},\psi(j^{(3)}\bullet)}(\rho_2)B^{\varphi(S^{-1}(j^{(6)})g^{(2)}\varepsilon_L(j^{(1)})S(g^{(3)})j^{(5)}\bullet )}(s)  \nonumber \\
    & = \sum_{(g)}\sum_{j,(j)}  F^{g^{(1)},\hat{j}}(\rho_1)  F^{S^{-1}(j^{(4)})g^{(3)}\varepsilon_L(j^{(1)})j^{(2)},\psi(j^{(3)}\bullet)}(\rho_2)B^{\varphi(S^{-1}(j^{(6)})\varepsilon_L(g^{(2)})j^{(5)}\bullet )}(s)  \nonumber \\
    & = \sum_{(g)}\sum_{j,(j)}  F^{g^{(1)},\hat{j}}(\rho_1)  F^{S^{-1}(S(\varepsilon_L(g^{(2)}))j^{(3)})g^{(3)}j^{(1)},\psi(j^{(2)}\bullet)}(\rho_2)B^{\varphi\leftharpoonup S^{-1}(\varepsilon_R(j^{(4)}))}(s)  \nonumber \\
    & = \sum_{(g)}\sum_{j,(j)}  F^{g^{(1)},\hat{j}}(\rho_1)  F^{S^{-1}(\varepsilon_R(j^{(4)}))S^{-1}(j^{(3)})g^{(2)}j^{(1)},\psi(j^{(2)}\bullet)}(\rho_2)B^{\varphi}(s)  \nonumber \\
    & = \sum_{(g)}\sum_{j,(j)}  F^{g^{(1)},\hat{j}}(\rho_1)  F^{S^{-1}(j^{(3)})g^{(2)}j^{(1)},\psi(j^{(2)}\bullet)}(\rho_2)B^{\varphi}(s)  \nonumber \\
    & = F^{g,\psi}(\rho)B^\varphi(s). 
\end{align}
Here, the fifth equality follows from the fact that $\varphi\in\operatorname{Cocom}(\hat{W})$, the seventh from $\sum_{(y)}y^{(1)}\otimes xy^{(2)} = \sum_{(y)}S(x)y^{(1)}\otimes y^{(2)}$ for $x\in W_L$, and the eighth from Eq.~\eqref{eq:flip-WL}. 
\end{proof}

\emph{Ribbon operator and topological excitation.} ---
Let us now consider how to use the ribbon operators to create topological excitations.
We take the type-B ribbon as an example to illustrate the mechanism, and the type-A ribbon can be handled similarly.
Given a ribbon $\rho$, and ground state $|\Omega\rangle$, we define
\begin{equation}
    \Vcal_\rho = \{ |\Omega^{h,\varphi}_\rho\rangle = F^{h,\varphi}(\rho)|\Omega\rangle\,|\,h\otimes\varphi\in D(W)^\vee\}, 
\end{equation}
where we assume the ground state degeneracy is one for simplicity. 
For the general case, we can gather all the $|\Omega^{h,\varphi}_\rho\rangle$ corresponding to the ground states $|\Omega\rangle$ together and form such a set.
Denote the two ends of the ribbon as $s_i = \partial_i\rho$ ($i=0,1$). Then both the local operators $B^\varphi(s_0)A^h(s_0)$ and $B^\varphi(s_1)A^h(s_1)$ define representations of $D(W)$ on $\Vcal_\rho$. 
    This follows from Lemma \ref{lem:comm-rel} and $A^{h_W}|\Omega\rangle = |\Omega\rangle = B^{\varphi_{\hat{W}}}|\Omega\rangle$. 
This means that at two ends of the ribbon, there are two topological excitations created.

More precisely, by Proposition~\ref{prop:doubledRep}, we see that the total space $\mathcal{V}=\otimes_{e\in E(\Sigma)} W$ is a $D(W)$-module, and at each site $s$, there is a canonical representation of $D(W)$, $[\varphi\otimes h]\mapsto D^{\varphi\otimes h}(s)=B^\varphi(s)A^h(s)$.
A topological excitation is mathematically characterized by a representation $X$ of $D(W)$ (\emph{viz.}, $X$ is a $D(W)$-module).
If at a site $s$ there is a topological excitation $X$, we denote the corresponding $D(W)$-module as $\Vcal(s,X)$, which is a submodule of $\Vcal$.
By the Artin-Wedderburn theorem, it is clear that 
\begin{equation}
    \Vcal= \bigoplus_{X} N_X\Vcal(s,X),
\end{equation}
for a fixed site $s$, where $N_X$ is a multiplicity of $\Vcal(s,X)$ (which is an integer). For $n$ disjoint sites $s_1,\cdots,s_n$, we define  
\begin{equation}\label{eq:Vspace}
    \Vcal(s_1,X_1;\cdots;s_n,X_n)=\bigcap_{i=1}^n \Vcal(s_i,X_i).
\end{equation}
The vacuum sector of the weak Hopf quantum double model is 
\begin{equation}
    \mathcal{V}_{\mathds{1}}
    :=
    \{\,|\Omega\rangle\,|\,A^h(s)|\Omega\rangle=\varepsilon(h) |\Omega\rangle,   B^{\phi}(s)|\Omega\rangle =\hat{\varepsilon}(\phi) |\Omega\rangle, \forall\,h, \phi, \forall\,s\}.
\end{equation}
Using the notation of Eq.~\eqref{eq:Vspace}, the vacuum sector is 
\begin{equation}
    \Vcal_{\one}=\bigcap_{s\,:\, \text{site}} \Vcal(s,\one),
\end{equation}
\emph{viz.}, for all sites, the excitations are vacuum particle $\one$.
The case of particular interest for our discussion of ribbon operators is when there are two excitations
\begin{equation}
    \Vcal(s_0,X_0;s_1,X_1),
\end{equation}
where $s_0=\partial_0\rho$ and $s_1=\partial_1 \rho$ are two ends of a ribbon $\rho$.
For type-B ribbons, we have shown that the ribbon operator algebra $\mathcal{F}\cong D(W)^{\vee}$.
By the Artin-Wedderburn theorem for the quantum double $D(W)$ and treating $D(W)^{\vee}$ as a $D(W)|D(W)$-bimodule (the left and right actions are given by ``$\rightharpoonup$'' and ``$\leftharpoonup$'' respectively, see Eq.~\eqref{eq:poon} for the definition), we have the decomposition\,\footnote{This is also known as the Peter-Weyl theorem in the context of Lie group theory.}
\begin{equation}\label{eq:DWdecomp}
    D(W)^{\vee}=\bigoplus_{X\in \operatorname{Irr}(D(W))} X\otimes X^{\vee},
\end{equation}
where $\operatorname{Irr}(D(W))$ is the set of all equivalence classes of irreducible representations of $D(W)$.
Therefore $X\otimes X^{\vee}$ can be embedded into  $D(W)^{\vee}$. We denote the corresponding image in $D(W)^{\vee}$ as $\operatorname{Proj}_XD(W)^{\vee}$; it is clear that $D(W)^{\vee}=\oplus_{X\in \operatorname{Irr}(D(W))}\operatorname{Proj}_XD(W)^{\vee}$.
The ribbon operator that creates topological excitations $X$ and $X^{\vee}$ at two ends of the ribbon $\rho$ can be constructed by choosing elements $g\otimes \psi \in \operatorname{Proj}_XD(W)^{\vee}$, i.e., 
\begin{equation}
    F^{g,\psi}_{\rho}, g\otimes \psi \in \operatorname{Proj}_XD(W)^{\vee}.
\end{equation}
When acting on the spaces with two given excitations $X_1$ and $X_2$, this kind of ribbon operators will induce the fusion of anyons at two ends $\Vcal(s_1,X_0\otimes X;s_1, X_1\otimes X^{\vee})$.


\section{Algebraic theory of gapped boundaries and domain walls}
\label{sec:BdDomain}

The gapped boundary of the topological phases is a crucial topic in understanding the topological phase.
The holographic boundary-bulk duality \cite{kong2017boundary} asserts that all the bulk information can be recovered from the boundary by taking the center of the boundary phase.
In our previous work \cite{jia2022boundary}, we establish the algebraic theory of gapped boundary and domain wall for the general Hopf quantum double model which can recover many existing results for the finite group boundary \cite{bravyi1998quantum,Bombin2008family,Beigi2011the,Cong2017}.
Here we will generalize the results there to the weak Hopf quantum double model.

\subsection{Gapped boundary}
\label{sec:GBoundary}

For $2d$ topological phase with input unitary (multi-) fusion category $\EC$, the gapped boundary theory is determined by finite indecomposable module categories $\EM$ over $\EC$ \cite{Kitaev2012a,fuchs2013bicategories}.
The boundary topological excitations are given by the category of all $\EC$-module functors from $\EM$ to itself, $\mathsf{Fun}_{\EC}(\EM,\EM)$.
For a bulk weak Hopf phase $D(W)$, we need to find an algebraic structure that fits into this formalism.
Our main observation is that the gapped boundary is characterized by a $W$-comodule algebra $\mathfrak{A}$ or equivalently a $W$-module algebra $\mathfrak{M}$.
The mathematical discussion of these structures can be found in \cite{nill1998weak,bohm2000doi,nikshych2000duality,henker2011module}.

\begin{definition}
 Let $W$ be a weak Hopf algebra:
 
 (1) An algebra $\mathfrak{A}$ is called a right $W$-comodule algebra if there is a comodule map $\beta:\FA \to \FA\otimes W$ such that $\beta(xy)=\beta(x)\beta(y)$ and $\beta(1_{\FA})(x\otimes 1_W)=(\id_{\FA}\otimes \varepsilon_L)\comp \beta (x)$ for all $x,y\in \mathfrak{A}$. We will adopt the Sweedler's notation $\beta(x)=\sum_{(x)}x^{[0]}\otimes x^{[1]}$.\,\footnote{We use square brackets to denote the comodule structure to distinguish it from the comultiplication of weak Hopf algebra.} The left $W$-comodule algebra can be defined similarly.

 (2) An algebra $\FM$ is called a left $W$-module algebra if $\FM$ is a left $W$-module such that $h\triangleright (xy)=\sum_{(h)}(h^{(1)}\triangleright x) (h^{(2)}\triangleright y)$ and $h\triangleright 1_\FM=\varepsilon_L(h)\triangleright 1_\FM$ for all $h\in W$ and $x\in \FM$. The right $W$-module algebra can be defined similarly.
\end{definition}

\begin{table}[t]
\centering \small 
\begin{tabular} {|l|c|c|} 
\hline
   &Weak Hopf quantum double model & String-net model   \\ \hline
 Bulk  & Hopf algebra $W$ &  $\EC=\mathsf{Rep}(W)$  \\ \hline
 Bulk phase  & 
 $\ED=\mathsf{Rep}(D(W))$ & $\mathsf{Fun}_{\EC |\EC}(\EC,\EC) $  \\ \hline
 Boundary & $W$-comodule algebra $\mathfrak{A}$ & $\EC$-module category ${_{\mathfrak{A}}}\EM={_{\mathfrak{A}}}\mathsf{Mod}$ \\\hline
 Boundary phase & $\EB\simeq {_{\mathfrak{A}}}\mathsf{Mod}_{\mathfrak{A}}$ & $\mathsf{Fun}_{\EC}({_{\mathfrak{A}}}\EM,{_{\mathfrak{A}}}\EM)$ \\\hline
Boundary defect & ${_{\mathfrak{B}}}\mathsf{Mod}_{\mathfrak{A}}$ & $\mathsf{Fun}_{\EC}({_{\mathfrak{A}}}\EM,{_{\mathfrak{B}}}\EM)$ \\\hline
\end{tabular}
\caption{The dictionary between $W$-comodule algebra description of weak Hopf quantum double boundary and string-net boundary.\label{tab:bdTop1}}
\end{table}

Let us first consider how to describe the gapped boundary via $W$-comodule algebra.
In the following, we assume $\EC=\Rep(W)$.
For the gapped boundary determined by a $W$-comodule algebra $\mathfrak{A}$, its corresponding $\EC$-module is given by the category ${_{\FA}}\Mod$ of finite dimensional $\FA$-modules.
It is proved in \cite[Lemma 10.1.1]{henker2011module} that ${_{\FA}}\Mod$ is a $\EC$-module category.
First notice that ${_\mathfrak{A}}\mathsf{Mod}$ is finite semisimple, where the simple objects are just simple $\mathfrak{A}$-modules. Since $\dim \mathfrak{A} < \infty$, there are finite simple objects up to equivalence, thus ${_{\FA}}\Mod$ is finite.
To make ${_{\FA}}\Mod$ a $\EC$-module, we need to introduce a bifunctor $\otimes:\EC\times {_{\FA}}\Mod\to {_{\FA}}\Mod$.
To this end, notice that there is a left $\mathfrak{A}$-module structure over $X\otimes M$ with $X\in \EC$ and $M\in \AMod$. More explicitly, the structure map $\mu_{X\otimes M}:\mathfrak{A}\otimes (X\otimes M)\to X\otimes M$ is given by (in diagrammatic representation)
\begin{equation}
    \mu_{X\otimes M}= \begin{aligned}
        \begin{tikzpicture}
			\draw[black, line width=1.0pt] (-0.1,-0.8) arc (180:360:0.3);
			\draw[black, line width=1.0pt] (1,0) arc (180:90:0.5);
			\draw[black, line width=1.0pt] (-0.1,0) arc (180:90:0.6);
			\draw[black, line width=1.0pt] (-0.1,0)--(-0.1,-0.8);
			\braid[
				width=0.5cm,
				height=0.3cm,
				line width=1.0pt,
				style strands={1}{black},
				style strands={2}{black}] (Kevin)
				s_1^{-1} ;
	    	\draw[black, line width=1.0pt] (1.0,-0.8)--(1.0,-1.35);
	    	\draw[black, line width=1.0pt] (0.2,-1.1)--(0.2,-1.35);
	    	\draw[black, line width=1.0pt] (1.5,0.9)--(1.5,-1.35);
	    	 \draw[black, line width=1.0pt] (0.5,0)--(0.5,0.9);
	    	\node[ line width=0.2pt, dashed, draw opacity=0.5] (a) at (0.2,-1.6){$\mathfrak{A}$};
	    	\node[ line width=0.2pt, dashed, draw opacity=0.5] (a) at (1,-1.6){$X$};
	    	\node[ line width=0.2pt, dashed, draw opacity=0.5] (a) at (1.5,-1.6){$M$};
	    	\node[ line width=0.2pt, dashed, draw opacity=0.5] (a) at (0.55,1.1){$X$};
	    	\node[ line width=0.2pt, dashed, draw opacity=0.5] (a) at (1.5,1.1){$M$};
	     	\node[ line width=0.2pt, dashed, draw opacity=0.5] (a) at (0.7,-1){$\mathfrak{A}$};
			\node[ line width=0.2pt, dashed, draw opacity=0.5] (a) at (-0.4,-1){$W$};
			\node[ line width=0.2pt, dashed, draw opacity=0.5] (a) at (0.2,-1.1){$\bullet$};
			\node[ line width=0.2pt, dashed, draw opacity=0.5] (a) at (0.5,0.6){$\bullet$};
			\node[ line width=0.2pt, dashed, draw opacity=0.5] (a) at (1.5,0.5){$\bullet$};
			\end{tikzpicture}
    \end{aligned},
\end{equation}
\emph{viz.}, $\mu_{X\otimes M}=(\mu_X\otimes \mu_M)\comp(\id_W\otimes \tau_{\mathfrak{A},X} \otimes \id_{M}) \comp (\beta_{\mathfrak{A}} \otimes \id_X\otimes \id_M)$, where $\mu_X$ and $\mu_M$ are the $W$-module structure and $\mathfrak{A}$-module structure maps of $X$ and $M$ respectively, $\beta_{\mathfrak{A}}$ is the $W$-comodule structure map of $\mathfrak{A}$, and $\tau_{\mathfrak{A},X}$ is the swap map.
The tensor product is defined as the submodule
\begin{equation}
    X\otimes_{\FA} M:=\{ x\,|\,\beta(1)\,x=x\}=\sum_{(1)}1^{[1]}X\otimes 1^{[0]}M.
\end{equation}
The associator and identity morphisms can be defined naturally.

If $\FA$ and $\FB$ are Morita equivalent, then $\AMod\simeq \BMod$, hence they give the same gapped boundary.
Therefore, the gapped boundaries are classified by Morita equivalent classes of $W$-comodule algebras.
The boundary excitation can be regarded as a point defect between two boundaries of the same type.
Thus we can consider the more general defects between two gapped boundaries $\AMod$ and $\BMod$, where the defects are classified by $\EC$-module functors between $\AMod$ and $\BMod$.
An Eilenberg-Watts type theorem \cite{eilenberg1960abstract,watts1960intrinsic} shows that these boundary defects can equivalently be classified by the $\FB|\FA$-bimodules.

\begin{theorem}
    The gapped boundaries of a weak Hopf quantum double phase $D(W)$ are characterized by $W$-comodule algebras $\FA$.
    The boundary excitations are described by the category of $\FA|\FA$-bimodules.
    For two gapped boundaries determined by $\FA,\FB$, the point defects between them are classified by the $\FB|\FA$-bimodules. The results are summarized in Table \ref{tab:bdTop1}.
\end{theorem}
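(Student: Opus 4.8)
The plan is to assemble the three assertions from the bulk--boundary correspondence together with the reconstruction theorem for module categories, specialised to $\EC=\Rep(W)$ and rephrased in terms of $W$-comodule algebras. Since the bulk phase is $\ED=\Rep(D(W))\simeq\mathcal{Z}(\Rep(W))$, the general dictionary for $(2+1)D$ topological orders \cite{kong2017boundary,Kitaev2012a} identifies the gapped boundaries of $\ED$ with indecomposable left $\EC$-module categories $\EM$, and the boundary excitations of a boundary labelled by $\EM$ with the dual category $\Fun_\EC(\EM,\EM)$. It therefore suffices to (i) match indecomposable $\EC$-module categories with Morita classes of $W$-comodule algebras, and (ii) identify $\Fun_\EC(\AMod,\BMod)$ with the appropriate bimodule category.

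For (i) I would first recall the forward direction: a $W$-comodule algebra $\FA$ yields the $\EC$-module category $\AMod$ through the action $\mu_{X\otimes M}$ displayed above, which is a genuine $\EC$-module category by \cite[Lemma 10.1.1]{henker2011module}, with indecomposability of the boundary corresponding to $\FA$ being connected. For the converse I would invoke Ostrik's reconstruction theorem in its multi-fusion form \cite{ostrik2003module}: given an indecomposable $\EC$-module category $\EM$, pick a generator $M\in\EM$, form the internal endomorphism algebra $A=\underline{\Hom}_\EC(M,M)$, an algebra object in $\Rep(W)$, and use $\EM\simeq\Mod_A(\EC)$. The remaining algebraic step is to translate $A$ into a bona fide $W$-comodule algebra $\FA$ with $\Mod_A(\EC)\simeq\AMod$; here the equivalent $W$-module-algebra $\FM$ is precisely the algebra-object picture, while the comodule structure $\beta_\FA$ is recovered by dualising the $W$-action with the canonical pairing of $\What$ and $W$. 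Finally, $\AMod\simeq\BMod$ as $\EC$-module categories iff $\FA$ and $\FB$ are Morita equivalent, so gapped boundaries are classified by Morita classes of $W$-comodule algebras.

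For (ii) I would run an Eilenberg--Watts argument \cite{eilenberg1960abstract,watts1960intrinsic}: every $\EC$-module functor $\AMod\to\BMod$ is given by the relative tensor product with a $\FB|\FA$-bimodule compatible with the comodule structure, and conversely, giving $\Fun_\EC(\AMod,\BMod)\simeq\BMod_\FA$. Specialising to $\FB=\FA$ yields the boundary excitations $\Fun_\EC(\EM,\EM)\simeq\AMod_\FA$, the category of $\FA|\FA$-bimodules, whose monoidal structure (the relative tensor product $\otimes_\FA$) comes from composition of endofunctors; this is the boundary phase $\EB$. The general-$\FB$ case is exactly the statement on point defects.

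The main obstacle I expect is the reconstruction step in the weak setting. Because $W$ is only a weak Hopf algebra, $\EC=\Rep(W)$ is multi-fusion rather than fusion---its unit $\one=W_L$ is not simple and $\Delta(1_W)\neq 1_W\otimes 1_W$---so Ostrik's theorem must be applied in its multi-tensor generalisation, and the internal-Hom algebra $A$ must be shown to correspond to a comodule algebra obeying the \emph{weak} unit axiom $\beta(1_\FA)(x\otimes 1_W)=(\id_\FA\otimes\varepsilon_L)\comp\beta(x)$ rather than $\beta(1_\FA)=1_\FA\otimes 1_W$. Verifying that the counital subalgebra $W_L$ is correctly absorbed into the unit of $\FA$, and that the resulting $\AMod$ reproduces the abstract $\EM$, is precisely where the weak Hopf structure enters nontrivially; everything else is formally parallel to the Hopf case treated in \cite{jia2022boundary}.
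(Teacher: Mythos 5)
Your step (ii) and the forward half of step (i) are sound and match what the paper actually establishes: the paper's entire proof is a citation to \cite[Lemma 10.1.2]{henker2011module}, which gives that ${_{\FA}}\Mod$ is a $\Rep(W)$-module category and that $\Fun_{\EC}({_{\FA}}\Mod,{_{\FB}}\Mod)\simeq{_{\FB}}\Mod_{\FA}$ by an Eilenberg--Watts argument, exactly as in your second and third paragraphs. The theorem, read in context, claims only this: every $W$-comodule algebra produces a gapped boundary, and for boundaries in this class the excitations and defects are the indicated bimodule categories.

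The genuine gap is in the converse direction of your step (i). You propose to show that \emph{every} gapped boundary, i.e.\ every indecomposable $\Rep(W)$-module category $\EM$, is of the form ${_{\FA}}\Mod$ by applying Ostrik reconstruction to get an algebra object $A=\underline{\Hom}_{\EC}(M,M)$ with $\EM\simeq\Mod_A(\EC)$, and then ``translating'' $A$ into a $W$-comodule algebra $\FA$ with $\Mod_A(\EC)\simeq{_{\FA}}\Mod$. That translation is not a verification to be carried out; it is precisely the step the paper flags as \emph{open} for weak Hopf algebras (Sec.~\ref{sec:GBoundary}: ``However, for weak Hopf algebra, this direction is still open''). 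An algebra object in $\Rep(W)$ is a $W$-module algebra, and its category of internal modules is $\EC_{\FM}$ in the notation of Table~\ref{tab:bdTop2} --- a construction of a different shape from ${_{\FA}}\Mod$ for a comodule algebra $\FA$. In the Hopf case one can pass between the two pictures (via smash-product-type constructions), but no such equivalence is established here in the weak setting, and your proposed recipe of ``dualising the $W$-action with the canonical pairing'' does not produce the required right $W$-coaction $\beta:\FA\to\FA\otimes W$ satisfying the weak unit axiom, nor a proof that the resulting ${_{\FA}}\Mod$ recovers $\EM$. As written, your argument therefore proves a strictly stronger classification statement than the paper claims, and it does so by assuming the unresolved reconstruction step. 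Restricting your proof to the forward direction plus the Eilenberg--Watts identification would recover exactly the content of the paper's theorem.
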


\begin{proof}
    This is proved in \cite[Lemma 10.1.2]{henker2011module}.
\end{proof}

For Hopf algebra $W=H$ case, we know that for any indecomposable module category $\EM$ over the representation category $\Rep(H)$ of the Hopf algebra, there is a corresponding $H$-comodule algebra $\mathfrak{A}$ such that $\EM \simeq \AMod$.
However, for weak Hopf algebra, this direction is still open.
Nevertheless, for any Morita equivalence of $W$-comodule algebra, we have a corresponding gapped boundary theory.
The main results of this formalism of gapped boundary are summarized in Table~\ref{tab:bdTop1}.
Similar to what we have done for Hopf quantum double phase, the gapped boundary can also be described by $W$-module algebra. We will not repeat the discussion we gave in \cite{jia2022boundary} here and just summarize the results in Table \ref{tab:bdTop2}.

\begin{table}[t]
\centering \small 
\begin{tabular} {|l|c|c|} 
\hline
   &Weak Hopf quantum double model & String-net model   \\ \hline
 Bulk  & Hopf algebra $W$ & UFC $\EC=\mathsf{Rep}(W)$  \\ \hline
 Bulk phase  & 
 $\ED=\mathsf{Rep}(D(W))$ & $\mathsf{Fun}_{\EC|\EC}(\EC,\EC) $  \\ \hline
 Boundary & $W$-module algebra $\mathfrak{M}$ & $\EC_{\FM}$ \\\hline
 Boundary phase & $\EB\simeq  {_{\FM}}\EC_{\FM}$ & $\mathsf{Fun}_{\EC}(\EC_{\FM},\EC_{\FM})$ \\\hline
Boundary defect & ${_{\FN}}\EC_{\FM}$ & $\mathsf{Fun}_{\EC}(\EC
_{\FM},\EC_{\FN})$ \\\hline
\end{tabular}
\caption{The dictionary between $W$-module algebra description of weak Hopf quantum double boundary and string-net boundary.\label{tab:bdTop2}}
\end{table}

\subsection{Gapped domain wall}

Consider the domain wall that separates two $2d$ topological phases with input unitary (multi-) fusion categories $\EC_1,\EC_2$, the domain wall theory is defined by a $\EC_1|\EC_2$-bimodule category $\EM$ \cite{Kitaev2012a,fuchs2013bicategories}.
The topological boundary excitations are given by the category of all $\EC_1|\EC_2$-bimodule functors from $\EM$ to itself, $\mathsf{Fun}_{\EC_1|\EC_2}(\EM,\EM)$.

Here in our case, two weak Hopf quantum double phases are determined by $\EC_i=\Rep(W_i)$, $i=1,2$.
To generalize our previous gapped domain wall theory \cite{jia2022boundary}, we introduce the concept of weak Hopf bicomodule algebra. For the weak Hopf case, we need a similar definition. This seems not to be systematically discussed before. In this part, we outline some results, and a more comprehensive and rigorous discussion of weak Hopf bicomodule algebra will be given elsewhere \cite{Jia2023bicomodule}.

\begin{definition}
    Let $W_1,W_2$ be two weak Hopf algebras. An algebra $\FA$ is called a $W_1|W_2$-bicomodule algebra if $\FA$ is a 
    left $W_1$-comodule algebra with comodule map $\alpha$ and a right $W_2$-comodule algebra with comodule map $\beta$ such that $(\id_{W_1}\otimes \beta)\comp \alpha =(\alpha \otimes \id_{W_2})\comp \beta$. A bicomodule algebra map is an algebra map intertwining both the left and right comodule maps.
\end{definition}

For a given bicomodule algebra $\FA$,  the category ${_{\mathfrak{A}}}\mathsf{Mod}$ of finite dimensional modules is a $\mathsf{Rep}(W_1)|\mathsf{Rep}(W_2)$-bimodule category.
The domain wall excitations are classified by the $\EC_1|\EC_2$-bimodule functors. 

\begin{lemma}
   For a $W_1|W_2$ bicomodule algebra $\FA$, the category $\AMod$ of left $\FA$-modules is a $\Rep(W_1)|\Rep(W_2)$-bimodule category.
\end{lemma}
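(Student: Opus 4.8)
The plan is to build the bimodule structure out of the two one-sided comodule structures, reusing the gapped boundary theorem (\cite[Lemma 10.1.1]{henker2011module}, invoked in Section~\ref{sec:GBoundary}) for each side separately, and then to exploit the bicomodule compatibility axiom to make the two module actions commute coherently. First I would use the right $W_2$-comodule map $\beta:\FA\to\FA\otimes W_2$: this is exactly the datum treated in Section~\ref{sec:GBoundary}, so (after the evident left/right dualization of conventions) $\AMod$ acquires a right $\Rep(W_2)$-module category structure, with the action $M\triangleleft X$ realized as the balanced subspace of $M\otimes X$ cut out by the idempotent $\beta(1_\FA)$. Symmetrically, the left $W_1$-comodule map $\alpha:\FA\to W_1\otimes\FA$ equips $Y\otimes M$ (for $Y\in\Rep(W_1)$) with the left $\FA$-module structure $a\cdot(y\otimes m)=\sum_{(a)}(a^{[-1]}\triangleright y)\otimes(a^{[0]}\cdot m)$, and the same argument produces a left $\Rep(W_1)$-module category structure $Y\triangleright M$. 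The one-sided associativity and unit constraints are inherited from the monoidal associators of $\Rep(W_i)$ together with the coassociativity of $\alpha,\beta$, so I would cite them rather than recompute.

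The heart of the argument is the middle associativity isomorphism $(Y\triangleright M)\triangleleft X\cong Y\triangleright(M\triangleleft X)$ for $Y\in\Rep(W_1)$, $M\in\AMod$, $X\in\Rep(W_2)$. Both sides sit inside $Y\otimes M\otimes X$ as the image of a composite of the two balanced-subspace projectors, one built from $\alpha(1_\FA)$ and one from $\beta(1_\FA)$. I would take the candidate morphism to be the restriction of the canonical reassociation of $Y\otimes M\otimes X$ and verify (i) that it carries one balanced subspace onto the other and (ii) that it intertwines the left $\FA$-actions. This is precisely where the axiom $(\id_{W_1}\otimes\beta)\comp\alpha=(\alpha\otimes\id_{W_2})\comp\beta$ is used: it guarantees that applying the $W_1$-comodule projector before the $W_2$-comodule projector produces the same subspace, and the same residual $\FA$-action, as applying them in the opposite order.

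It then remains to check the bimodule coherence conditions: the two pentagon-type axioms relating the middle constraint to the left and right module associators, and the triangle axioms for the units. These hold formally, since every associator appearing is assembled from the associators of $\Rep(W_1)$, $\Rep(W_2)$ and of the underlying vector spaces, which already satisfy their own coherence identities; the compatibility axiom is exactly what closes the relevant diagrams, so no genuinely new identity arises. I expect the main obstacle to be the well-definedness of the middle isomorphism on the balanced subspaces in the genuinely weak setting: because $\Delta(1_W)\neq 1_W\otimes 1_W$ and the comodule axioms are phrased through $\varepsilon_L$, the idempotents $\alpha(1_\FA)$ and $\beta(1_\FA)$ are nontrivial projectors, and showing that they commute and cut out matching subspaces, rather than reducing to the trivial matching available in the ordinary Hopf case, is the step demanding the most careful bookkeeping. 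A fully rigorous treatment of weak Hopf bicomodule algebras is deferred to \cite{Jia2023bicomodule}.
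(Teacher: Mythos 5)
Your proposal follows essentially the same route as the paper: both build the one-sided $\Rep(W_1)$- and $\Rep(W_2)$-module structures from the respective comodule maps $\alpha$ and $\beta$ (via the construction of Sec.~\ref{sec:GBoundary}), and both obtain the middle associativity by putting an $\FA$-module structure on $X_1\otimes M\otimes X_2$ through the combined coaction, with the bicomodule compatibility axiom $(\id_{W_1}\otimes \beta)\comp \alpha =(\alpha \otimes \id_{W_2})\comp \beta$ guaranteeing that the two orders of truncation agree and the middle associator can be taken to be the identity. Your discussion of the commuting idempotents $\alpha(1_\FA)$ and $\beta(1_\FA)$ is a slightly more explicit account of the step the paper summarizes as ``it is easy to verify,'' but it is the same argument.
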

    
\begin{proof}
From the discussion in Sec.~\ref{sec:GBoundary}, $X_1\otimes M$ is a left $\FA$-module for all $X_1\in \Rep(W_1)$ and $M\in \AMod$, and the tensor product is defined as the submodule $X_1\otimes_{\FA}M$. The associator is defined as the identity, and the unit isomorphism is given by $\sum_{i}x_i\otimes m_i \mapsto \sum_i \varepsilon(x_i)m_i$.
Thus $\AMod$ is a left $\Rep(W_1)$-module.

Similarly, $M\otimes X_2$ is a left $\FA$-module for $M\in \AMod$ and $X_2\in \Rep(W_2)$ (since $\FA$ is a right $W_2$-comodule), with the structure map given by (in diagrammatic representation)
\begin{equation}
    \mu_{M\otimes X_2}= \begin{aligned}
        \begin{tikzpicture}
			\draw[black, line width=1.0pt] (-0.1,-0.8) arc (180:360:0.3);
			\draw[black, line width=1.0pt] (1,0) arc (180:90:0.5);
			\draw[black, line width=1.0pt] (-0.1,0) arc (180:90:0.6);
			\draw[black, line width=1.0pt] (-0.1,0)--(-0.1,-0.8);
			\braid[
				width=0.5cm,
				height=0.3cm,
				line width=1.0pt,
				style strands={1}{black},
				style strands={2}{black}] (Kevin)
				s_1^{-1} ;
	    	\draw[black, line width=1.0pt] (1.0,-0.8)--(1.0,-1.35);
	    	\draw[black, line width=1.0pt] (0.2,-1.1)--(0.2,-1.35);
	    	\draw[black, line width=1.0pt] (1.5,0.9)--(1.5,-1.35);
	    	 \draw[black, line width=1.0pt] (0.5,0)--(0.5,0.9);
	    	\node[ line width=0.2pt, dashed, draw opacity=0.5] (a) at (0.2,-1.6){$\mathfrak{A}$};
	    	\node[ line width=0.2pt, dashed, draw opacity=0.5] (a) at (1,-1.6){$M$};
	    	\node[ line width=0.2pt, dashed, draw opacity=0.5] (a) at (1.5,-1.6){$X_2$};
	    	\node[ line width=0.2pt, dashed, draw opacity=0.5] (a) at (0.55,1.1){$M$};
	    	\node[ line width=0.2pt, dashed, draw opacity=0.5] (a) at (1.5,1.1){$X_2$};
	     	\node[ line width=0.2pt, dashed, draw opacity=0.5] (a) at (0.7,-1.2){$W_2$};
			\node[ line width=0.2pt, dashed, draw opacity=0.5] (a) at (-0.4,-1){$\FA$};
			\node[ line width=0.2pt, dashed, draw opacity=0.5] (a) at (0.2,-1.1){$\bullet$};
			\node[ line width=0.2pt, dashed, draw opacity=0.5] (a) at (0.5,0.6){$\bullet$};
			\node[ line width=0.2pt, dashed, draw opacity=0.5] (a) at (1.5,0.5){$\bullet$};
			\end{tikzpicture}
    \end{aligned}.
\end{equation}
The tensor product is defined as the submodule $M\otimes_{\FA} X_2$, the associator is the identity, and the unit isomorphism is given by $\sum_{i}m_i\otimes y_i \mapsto \sum_i m_i\varepsilon(y_i)$.
In this way, $\AMod$ is a right $\Rep(W_2)$-module.

For $X_i\in \mathsf{Rep}(W_i)$ and $M\in {_{\mathfrak{A}}}\mathsf{Mod}$,  $X_1\otimes M \otimes X_2$ is also an $\mathfrak{A}$-module.
In fact, it is easy to verify that the module structure map is given by (in diagrammatic representation)
\begin{equation}
    \mu_{X_1\otimes M \otimes X_2}= \begin{aligned}
        \begin{tikzpicture}
		\draw[black, line width=1.0pt] (0.5,0.5) arc (90:180:0.5);
		\draw[black, line width=1.0pt] (1,-1.4) arc (0:-180:0.5);
		\draw[black, line width=1.0pt] (1,0) arc (180:90:0.5);
		\draw[black, line width=1.0pt] (1.5,0)--(1.5,.8);
		\draw[black, line width=1.0pt] (2,0) arc (180:90:0.5);
	    \braid[width=0.5cm,
				height=0.3cm,
				line width=1.0pt,number of strands=4] (braid) a_1^{-1} a_3^{-1} a_2^{-1};
	    	\draw[black, line width=1.0pt] (0.5,0)--(.5,.8);
	    	\draw[black, line width=1.0pt] (0,0)--(0,-1.4);
	    	\draw[black, line width=1.0pt] (2.5,-2.2)--(2.5,0.8);
	    	\draw[black, line width=1.0pt] (0.5,-2.2)--(0.5,-1.3);
	    	\draw[black, line width=1.0pt] (2,-1.4)--(2,-2.2);
	    	\draw[black, line width=1.0pt] (1.5,-1.4)--(1.5,-2.2);
	    	\node[ line width=0.2pt, dashed, draw opacity=0.5] (a) at (0.5,-2.5){$\mathfrak{A}$};
	    	\node[ line width=0.2pt, dashed, draw opacity=0.5] (a) at (0,-2.1){$W_1$};
	    	\node[ line width=0.2pt, dashed, draw opacity=0.5] (a) at (1,-2.1){$W_2$};
	    	\node[ line width=0.2pt, dashed, draw opacity=0.5] (a) at (1.5,-2.5){$X_1$};
	    	\node[ line width=0.2pt, dashed, draw opacity=0.5] (a) at (2,-2.49){$M$};
	    	\node[ line width=0.2pt, dashed, draw opacity=0.5] (a) at (2.5,-2.5){$X_2$};
	    	\node[ line width=0.2pt, dashed, draw opacity=0.5] (a) at (0.55,1.1){$X_1$};
	    	\node[ line width=0.2pt, dashed, draw opacity=0.5] (a) at (1.5,1.12){$M$};
	    	\node[ line width=0.2pt, dashed, draw opacity=0.5] (a) at (2.5,1.1){$X_2$};
	     	\node[ line width=0.2pt, dashed, draw opacity=0.5] (a) at (0.7,-1){$\mathfrak{A}$};
	     	\node[ line width=0.2pt, dashed, draw opacity=0.5] (a) at (0.5,-1.9){$\bullet$};
	    	\node[ line width=0.2pt, dashed, draw opacity=0.5] (a) at (0.5,0.5){$\bullet$};
	    	\node[ line width=0.2pt, dashed, draw opacity=0.5] (a) at (-0.2,0.4){$\mu_{X_1}$};
	    	\node[ line width=0.2pt, dashed, draw opacity=0.5] (a) at (1.5,0.5){$\bullet$};
	    	\node[ line width=0.2pt, dashed, draw opacity=0.5] (a) at (1.85,0.4){$\mu_{M}$};
	    	\node[ line width=0.2pt, dashed, draw opacity=0.5] (a) at (2.5,0.5){$\bullet$};
	    	\node[ line width=0.2pt, dashed, draw opacity=0.5] (a) at (2.9,0.4){$\mu_{X_2}$};
	     	\node[ line width=0.2pt, dashed, draw opacity=0.5] (a) at (0.25,-1.5){$\beta_{\mathfrak{A}}$};
			\end{tikzpicture}
    \end{aligned}.
\end{equation}
The tensor product is defined as $X_1\otimes_{\FA}M\otimes_{\FA} X_2$. Thus the middle associator is also defined as the identity.
This proves that $\AMod$ is a $\Rep(W_1)|\Rep(W_2)$-bimodule category.
\end{proof}

\begin{proposition}
   The topological excitations of the gapped domain wall determined by a $W_1|W_2$-bicomodule algebra $\FA$ are equivalently described by the following categories: 
   \begin{enumerate}
    \item The category of $\mathsf{Rep}(W_1)|\mathsf{Rep}(W_2)$-bimodule functors $\mathsf{Fun}_{\mathsf{Rep}(W_1)|\mathsf{Rep}(W_2)}({_{\FA}}\mathsf{Mod},{_{\FA}}\mathsf{Mod})$. 
    \item The category of $\mathsf{Rep}(W_1\otimes W_2^{\rm cop})$-module functors $ \mathsf{Fun}_{\mathsf{Rep}(W_1\otimes W_2^{\rm cop})} ({_{\FA}}\mathsf{Mod},{_{\FA}}\mathsf{Mod})$. 
    \item The category of all $\FA|\FA$-bimodules ${_{\FA}}\mathsf{Mod}_{\FA}$.
   \end{enumerate}
\end{proposition}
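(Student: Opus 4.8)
The plan is to reduce the statement to the one-sided (gapped-boundary) theory already established, by turning the bimodule structure into a single module structure and then invoking the Eilenberg--Watts equivalence. Item (1) is the definition of the domain-wall excitation category, and the preceding Lemma already equips $\AMod$ with its $\Rep(W_1)|\Rep(W_2)$-bimodule category structure, so the real content is the two equivalences $(1)\simeq(2)$ and $(2)\simeq(3)$.

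For $(1)\simeq(2)$, I would use the standard fact that a $\EC_1|\EC_2$-bimodule category is the same datum as a left module category over $\EC_1\boxtimes \EC_2^{\rm rev}$, and that bimodule endofunctors coincide with $\EC_1\boxtimes\EC_2^{\rm rev}$-module endofunctors. It then remains to identify the acting category. Since the tensor product in $\Rep(W_2)$ is built from the coproduct $\Delta$, reversing the order of the tensor product is implemented by $\Delta^{\rm op}$, whose antipode is $S^{-1}$; this gives a monoidal equivalence $\Rep(W_2)^{\rm rev}\simeq \Rep(W_2^{\rm cop})$. Combining with the Deligne-product identification $\Rep(W_1)\boxtimes\Rep(W_2^{\rm cop})\simeq \Rep(W_1\otimes W_2^{\rm cop})$ for (weak) Hopf algebras yields $\EC_1\boxtimes\EC_2^{\rm rev}\simeq \Rep(W_1\otimes W_2^{\rm cop})$. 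Finally I would check that the bicomodule datum $(\alpha,\beta)$ with the compatibility $(\id_{W_1}\otimes\beta)\comp\alpha=(\alpha\otimes\id_{W_2})\comp\beta$ assembles into a single $(W_1\otimes W_2^{\rm cop})$-comodule algebra structure on the same underlying algebra $\FA$, and that the resulting module-category action reproduces the action $X_1\otimes M\otimes X_2$ of the Lemma. This makes $\AMod$ literally the same module category in both descriptions, so the two functor categories agree.

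For $(2)\simeq(3)$, I would apply the Eilenberg--Watts type theorem already used for the gapped boundary (the argument of the gapped-boundary theorem, i.e. \cite[Lemma 10.1.2]{henker2011module}) to the single weak Hopf algebra $W:=W_1\otimes W_2^{\rm cop}$ and the $W$-comodule algebra $\FA$. This gives an equivalence $\Fun_{\Rep(W)}(\AMod,\AMod)\simeq {_{\FA}}\Mod_{\FA}$, sending an $\FA|\FA$-bimodule $B$ to the module endofunctor $B\otimes_{\FA}(-)$ and conversely reconstructing the bimodule from the functor evaluated at $\FA$. Chaining this with the identification of the previous step proves $(2)\simeq(3)$, and hence the equivalence of all three categories.

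The main obstacle is the weak Hopf bookkeeping in the first step. Because $\Delta(1_W)\neq 1_W\otimes 1_W$ and the monoidal unit of $\Rep(W)$ is the separable subalgebra $W_L$ rather than the ground field, both the Deligne product $\boxtimes$ and the notion of comodule algebra must be handled with the correct truncations; in particular one must verify that the combined coaction is a genuine $(W_1\otimes W_2^{\rm cop})$-comodule algebra structure, \emph{viz.} that it is multiplicative and compatible with the counital map $\varepsilon_L$ of $W_1\otimes W_2^{\rm cop}$. Since a systematic account of weak Hopf bicomodule algebras is deferred to \cite{Jia2023bicomodule}, at this point I would content myself with exhibiting the three identifications and citing the one-sided comodule-algebra boundary theory, leaving the detailed compatibility checks to that work.
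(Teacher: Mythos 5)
Your proposal follows essentially the same route as the paper: the equivalence $(1)\simeq(2)$ via the identifications $\Rep(W_2)^{\otimes\rm op}\simeq\Rep(W_2^{\rm cop})$, the bimodule-category/left-$\EC_1\boxtimes\EC_2^{\otimes\rm op}$-module correspondence, and $\Rep(W_1)\boxtimes\Rep(W_2^{\rm cop})\simeq\Rep(W_1\otimes W_2^{\rm cop})$, and then $(2)\simeq(3)$ by the Eilenberg--Watts theorem. The additional care you take in verifying that the bicomodule datum $(\alpha,\beta)$ assembles into a single $(W_1\otimes W_2^{\rm cop})$-comodule algebra (a point the paper glosses over and defers to a companion work) is a welcome refinement rather than a deviation.
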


Instead of proving the above proposition about topological excitations on the domain wall, we will prove a more general result about the point defects on the domain wall. The topological excitations can be regarded as a special case of point defects.

\begin{proposition}
   The point defects between two gapped domain walls determined by $W_1|W_2$-bicomodule algebras $\FA,\FB$ are equivalently described by the following categories: 
   \begin{enumerate}
    \item The category of $\mathsf{Rep}(W_1)|\mathsf{Rep}(W_2)$-bimodule functors $\mathsf{Fun}_{\mathsf{Rep}(W_1)|\mathsf{Rep}(W_2)}({_{\FA}}\mathsf{Mod},{_{\FB}}\mathsf{Mod})$. 
    \item The category of $\mathsf{Rep}(W_1\otimes W_2^{\rm cop})$-module functors $\mathsf{Fun}_{\mathsf{Rep}(W_1\otimes W_2^{\rm cop})} ({_{\FA}}\mathsf{Mod},{_{\FB}}\mathsf{Mod})$. 
    \item The category of all $\FB|\FA$-bimodules ${_{\FB}}\mathsf{Mod}_{\FA}$.
   \end{enumerate}
\end{proposition}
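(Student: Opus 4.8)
The plan is to collapse the two-sided (domain-wall) problem into the one-sided (boundary) problem already settled in Section~\ref{sec:GBoundary}, by merging the pair $(W_1,W_2)$ into the single weak Hopf algebra $\tilde W:=W_1\otimes W_2^{\rm cop}$ (a tensor product of weak Hopf algebras is again one, and $W_2^{\rm cop}$ is a weak Hopf algebra). First I would invoke the general categorical principle that a $\Rep(W_1)\,|\,\Rep(W_2)$-bimodule category is the same datum as a left module category over the Deligne product $\Rep(W_1)\boxtimes\Rep(W_2)^{\rm rev}$, and that bimodule functors are exactly module functors. Combining the monoidal equivalence $\Rep(W_2)^{\rm rev}\simeq\Rep(W_2^{\rm cop})$ (reversing the tensor product is precisely passing to the co-opposite comultiplication, i.e.\ the definition of $W_2^{\rm cop}$) with $\Rep(W_1)\boxtimes\Rep(W_2^{\rm cop})\simeq\Rep(W_1\otimes W_2^{\rm cop})=\Rep(\tilde W)$, one is led to identify $\Fun_{\Rep(W_1)|\Rep(W_2)}({}_{\FA}\Mod,{}_{\FB}\Mod)$ with $\Fun_{\Rep(\tilde W)}({}_{\FA}\Mod,{}_{\FB}\Mod)$, which is item (2). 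The purely formal part here is the bimodule-category/Deligne-product dictionary; the substantive input is that the specific bimodule structure on ${}_{\FA}\Mod$ furnished by the preceding Lemma matches the $\Rep(\tilde W)$-module structure, and this is deferred to the obstacle below.

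The second step is to prove $(2)\Leftrightarrow(3)$. For this I would realize $\FA$ and $\FB$ as (one-sided) $\tilde W$-comodule algebras: the left $W_1$-comodule map $\alpha$ and the right $W_2$-comodule map $\beta$ of a bicomodule algebra combine, after the standard flip that exchanges left $W_1$-comodules with $W_1^{\rm cop}$-comodules, into a single comodule-algebra structure over $\tilde W$. The key simplifying observation is that an $\FB|\FA$-bimodule uses only the underlying algebra structures of $\FA,\FB$, so the category ${}_{\FB}\Mod_{\FA}$ is literally unchanged by this repackaging. Items (2) and (3) are then nothing but the gapped-boundary point-defect classification applied to $\tilde W$ and the comodule algebras $\FA,\FB$: the Eilenberg–Watts type theorem \cite{eilenberg1960abstract,watts1960intrinsic}, in the form used in Section~\ref{sec:GBoundary} following \cite[Lemma~10.1.2]{henker2011module}, identifies $\Fun_{\Rep(\tilde W)}({}_{\FA}\Mod,{}_{\FB}\Mod)$ with ${}_{\FB}\Mod_{\FA}$. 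The topological-excitation statement of the preceding proposition is recovered as the special case $\FA=\FB$.

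The main obstacle, hidden inside the first two steps, is to verify that the repackaging of a $W_1|W_2$-bicomodule algebra as a $\tilde W$-comodule algebra is not merely algebra-level bookkeeping but genuinely intertwines the two module-category structures on ${}_{\FA}\Mod$. Concretely, one must check that, under $\Rep(\tilde W)\simeq\Rep(W_1)\boxtimes\Rep(W_2^{\rm cop})$, the left $\Rep(\tilde W)$-action $(X_1\boxtimes X_2)\otimes_{\FA}M$ coincides with the bimodule action $X_1\otimes_{\FA}M\otimes_{\FA}X_2$ of the preceding Lemma. In the weak Hopf setting this is the delicate point, because all the relevant tensor products are defined as projected subspaces: $X_i\otimes_{\FA}M$ lives in the image of $\beta(1_{\FA})$, and the $\Rep(\tilde W)$-tensor product is cut out by $\Delta_{\tilde W}(1)=(\id\otimes\sigma\otimes\id)(\Delta_{W_1}(1)\otimes\Delta_{W_2^{\rm cop}}(1))$. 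One must show, using the comodule-algebra compatibility $\beta(1_{\FA})(x\otimes 1)=(\id\otimes\varepsilon_L)\circ\beta(x)$ together with the factorization of $\varepsilon_L^{\tilde W}$ through $(W_1)_L$ and $(W_2^{\rm cop})_L$, that these separability idempotents agree so that the two projected subspaces literally coincide. This matching of idempotents is where all the genuinely weak-Hopf (as opposed to Hopf) content resides; once it is established, both equivalences follow formally from the boundary case.
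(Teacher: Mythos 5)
Your proposal follows the same route as the paper's proof: the equivalence $(1)\Leftrightarrow(2)$ via the dictionary identifying a $\Rep(W_1)|\Rep(W_2)$-bimodule category with a left module category over $\Rep(W_1)\boxtimes\Rep(W_2)^{\otimes\rm op}\simeq\Rep(W_1\otimes W_2^{\rm cop})$, and $(2)\Leftrightarrow(3)$ via the Eilenberg--Watts theorem. The only difference is that you make explicit the compatibility check that the $\Rep(W_1\otimes W_2^{\rm cop})$-module structure obtained by repackaging the bicomodule algebra agrees with the bimodule structure of the preceding Lemma (matching of the separability idempotents cutting out the truncated tensor products), a verification the paper leaves implicit; this is a worthwhile addition but does not change the argument.
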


\begin{proof}
    1 $\Leftrightarrow$ 2: This is clear from the facts that, $\Rep(W^{\rm cop})=\Rep(W)^{\otimes \rm op}$, a $\EC_1|\EC_2$-bimodule category is equivalent to a left $\EC_1\boxtimes\EC_2^{\rm \otimes \rm op}$-module category, and $\Rep(W_1\otimes W_2^{\rm cop})=\Rep(W_1)\boxtimes \Rep(W_2^{\rm cop})$.

     2 $\Leftrightarrow$ 3: This is ensured by Eilenberg-Watts theorem \cite{eilenberg1960abstract,watts1960intrinsic}.
\end{proof}

Notice that using the folding trick, a gapped domain wall can be transformed into a gapped boundary. And the gapped boundary can be regarded as a special gapped domain wall that separates the quantum double phase from the trivial phase. Our established theory of boundary and domain wall matches well with this correspondence.


\section{Hamiltonian realization of gapped boundaries and domain walls}
\label{sec:bdHam}

In this section, let us give the explicit lattice Hamiltonian of the gapped boundary and domain wall of the weak Hopf quantum double model.
We want to stress that various existing boundary Hamiltonian models of the quantum double  model are incomplete \cite{Bombin2008family,Beigi2011the,Cong2017,jia2022boundary}, namely, they cannot cover all possible boundaries.
From our discussion in the previous section, we see that to cover all possible gapped boundaries of the quantum double model $D(W)$, we need to build the boundary Hamiltonian for all $W$-comodule algebras $\FA$ such that the boundary excitation realizes the bimodule category ${_{\FA}}\Mod_{\FA}$.
For the domain wall, a similar issue exists.
This part will give a complete solution to these problems in the most general settings.

\subsection{Gapped boundary Hamiltonian}

We have argued that all gapped boundaries of quantum double model $D(W)$ are characterized by $W$-comodule algebras $\FA$.
We now construct the corresponding lattice model for a given $\FA$.
Notice that our construction based on the generalized quantum double in \cite{jia2022boundary} can be straightforwardly generalized to the weak Hopf quantum double model.
Here, we will present a different but more intuitive construction.

Without loss of generality, we only consider the surface $\Sigma$ with one boundary $\partial \Sigma$.
To each bulk edge $e$, we still have $\mathcal{H}_e=W$, while for boundary edge $e_b$, we set $\mathcal{H}_{e_b}=\FA$.
Notice that here the boundary edge is directed: if the boundary face $f$ is on the left-hand side of $e_b$, $\FA$ should be chosen as a left $W$-comodule algebra, otherwise, it should be chosen as a right $W$-comodule algebra. 
This is because we need to introduce a pairing between $x\in \FA$ and $\varphi\in \What$. When $f$ in on the left-hand side of $e_b$, $x\leftharpoonup \hat{S}(\phi)$ is well defined only when $\FA$ is a left $W$-comodule algebra. In this case, $x^{[1]}\in W$, and we have $x\leftharpoonup \hat{S}(\phi)=\sum \varphi(S(x^{[1]}))x^{[0]}$.
Similarly, when $f$ is on the right-hand side of $e_b$, $\FA$ must be a right $W$-comodule algebra.
In the following, we will assume that the boundary direction is chosen so that the bulk is always on the right-hand side of the boundary, thus we assign all boundary edges the right $W$-comodule algebra $\FA$.

To build the boundary model, we need the notion of \emph{symmetric separability idempotent} of an algebra $\FA$ (see  \cite{aguiar2000note,koppen2020defects}).
By definition, it is an element $\lambda=\sum_{<\lambda>}\lambda^{<1>}\otimes \lambda^{<2>}\in \FA\otimes \FA$ such that 
\begin{enumerate}
    \item $\sum_{<\lambda>}x\lambda^{<1>}\otimes \lambda^{<2>} =\sum_{<\lambda>}  \lambda^{<1>}\otimes \lambda^{<2>} x$ for all $x\in \FA$; 
    \item $\sum_{<\lambda>}\lambda^{<1>} \lambda^{<2>} =1$; 
    \item  $\sum_{<\lambda>}\lambda^{<1>}\otimes \lambda^{<2>}=\sum_{<\lambda>}\lambda^{<2>}\otimes \lambda^{<1>}$.
\end{enumerate}
It can be proved that $\lambda$ is an idempotent of the enveloping algebra $\FA\otimes \FA^{\rm op}$.
If $K$ is a weak Hopf algebra with Haar integral $h_K$, it is easy to check that $\lambda=\sum_{(h_K)} h_K^{(1)}\otimes S(h_K^{(2)})$ is a symmetric separability idempotent.
The existence and uniqueness of the symmetric separability idempotent for a finite-dimensional semisimple algebra over an algebraically closed field of characteristic zero are proved in \cite[Corollary 3.1]{aguiar2000note}.
It is also proved that \cite[Proposition 19]{koppen2020defects}
\begin{equation}\label{eq:lambda}
    \sum \lambda^{<1>[0]}\otimes \lambda^{<2>[0]} \otimes \lambda^{<1>[1]}\lambda^{<2>[1]}=\sum \lambda^{<1>}\otimes \lambda^{<2>}\otimes 1, 
\end{equation}
that is, $\sum \beta(\lambda^{<1>})_{13}\beta(\lambda^{<2>})_{23} = \lambda_{12}$ in $\mathfrak{A}\otimes \mathfrak{A}\otimes W$.

For each boundary edge $e_b$, we propose the following edge operator  $E^{z\otimes w}_{e_b}$ for $z\otimes w \in \FA\otimes \FA$:
\begin{equation}
    E^{z\otimes w}_{e_b} |x\rangle =|zxw\rangle.
\end{equation}
For general $\alpha \in \FA\otimes \FA$, using linear expansion and the above expression we can define $E^{\alpha}_{e_b}$.
It is clear that
\begin{equation}\label{eq:WHATN}
    E^{z\otimes w}_{e_b} E^{g\otimes k}_{e_b}=E^{zg\otimes k w}_{e_b}.
\end{equation}
Thus it forms a representation of the enveloping algebra $\FA\otimes \FA^{\rm op}$ of $\mathfrak{A}$.

Let $\lambda$ be a symmetric separability idempotent. Then
\begin{equation}
E^{\lambda}_{e_b}|x\rangle=\sum_{<\lambda>}
|\lambda^{<1>} x \lambda^{<2>} \rangle
\end{equation}
is the boundary local stabilizer that we need to construct the boundary lattice model.
From the definition of $\lambda$, we see that $(E^{\lambda}_{e_b})^{2}=E^{\lambda}_{e_b}$.

Another local operator we need is the boundary vertex operator.
Since we assume that the direction of all boundary edges is chosen so that the bulk is on the right-hand side of the boundary edge, there are only two configurations depending on the bulk edges connecting to the boundary vertex.
More precisely, we define the following operator for $z\otimes w\in \FA \otimes \FA^{\rm op}$:
\begin{align}
     \begin{aligned}
    \begin{tikzpicture}
				\draw[-latex,black,line width = 1.6pt] (0,0) -- (0,1);
    	        \draw[red,line width = 1pt] (0,1) -- (0.5,0.5); 
    			\draw[-latex,black,line width = 1.6pt] (0,1) -- (0,2); 
				\draw[-latex,black] (1,1) -- (0,1); 
				\node[ line width=0.2pt, dashed, draw opacity=0.5] (a) at (-0.4,0.5){$x$};
    		\node[ line width=0.2pt, dashed, draw opacity=0.5] (a) at (-0.4,1.5){$y$};
            \node[ line width=0.2pt, dashed, draw opacity=0.5] (a) at (0.5,1.3){$h$};
            \node[ line width=0.2pt, dashed, draw opacity=0.5] (a) at (0.7,0.5){$s_b$}; 
	\end{tikzpicture}  \end{aligned} & \quad \quad  A^{z\otimes w}(s_b)|x,y,h\rangle=\sum_{[w]} |zx,yw^{[0]},S^{-1}(w^{[1]})h\rangle, \label{eq:bdd-sta-A1}\\
       \begin{aligned}
    \begin{tikzpicture}
				\draw[-latex,black,line width = 1.6pt] (0,0) -- (0,1);
    	        \draw[red,line width = 1pt] (0,1) -- (0.5,1.5); 
    			\draw[-latex,black,line width = 1.6pt] (0,1) -- (0,2); 
				\draw[-latex,black] (1,1) -- (0,1); 
				\node[ line width=0.2pt, dashed, draw opacity=0.5] (a) at (-0.4,0.5){$x$};
    		\node[ line width=0.2pt, dashed, draw opacity=0.5] (a) at (-0.4,1.5){$y$};
            \node[ line width=0.2pt, dashed, draw opacity=0.5] (a) at (0.7,1.4){$s_b$};
            \node[ line width=0.2pt, dashed, draw opacity=0.5] (a) at (0.5,0.75){$h$};
	\end{tikzpicture} \end{aligned} & \quad \quad A^{z\otimes w}(s_b)|x,y,h\rangle=\sum_{[z]} | z^{[0]}x, yw, z^{[1]}h\rangle,  \label{eq:bdd-sta-A2}\\
       \begin{aligned}
    \begin{tikzpicture}
				\draw[-latex,black,line width = 1.6pt] (0,0) -- (0,1); 
        	    \draw[red,line width = 1pt] (0,1) -- (0.5,0.5); 
    			\draw[-latex,black,line width = 1.6pt] (0,1) -- (0,2); 
				\draw[-latex,black] (0,1) -- (1,1); 
				\node[ line width=0.2pt, dashed, draw opacity=0.5] (a) at (-0.4,0.5){$x$};
    		\node[ line width=0.2pt, dashed, draw opacity=0.5] (a) at (-0.4,1.5){$y$};
            \node[ line width=0.2pt, dashed, draw opacity=0.5] (a) at (0.5,1.3){$h$};
            \node[ line width=0.2pt, dashed, draw opacity=0.5] (a) at (0.7,0.5){$s_b$};
	\end{tikzpicture}\end{aligned} & \quad\quad A^{z\otimes w} (s_b)|x,y,h\rangle=\sum_{[w]}|zx, yw^{[0]},hw^{[1]}\rangle, \label{eq:bdd-sta-A3} \\
      \begin{aligned} \begin{tikzpicture}
				\draw[-latex,black,line width = 1.6pt] (0,0) -- (0,1); 
        	    \draw[red,line width = 1pt] (0,1) -- (0.5,1.5); 
    			\draw[-latex,black,line width = 1.6pt] (0,1) -- (0,2); 
				\draw[-latex,black] (0,1) -- (1,1); 
				\node[ line width=0.2pt, dashed, draw opacity=0.5] (a) at (-0.4,0.5){$x$};
    		\node[ line width=0.2pt, dashed, draw opacity=0.5] (a) at (-0.4,1.5){$y$};
            \node[ line width=0.2pt, dashed, draw opacity=0.5] (a) at (0.5,0.75){$h$};
            \node[ line width=0.2pt, dashed, draw opacity=0.5] (a) at (0.7,1.4){$s_b$};
	\end{tikzpicture}  \end{aligned}  &\quad \quad A^{z\otimes w} (s_b)|x,y,h\rangle=\sum_{[z]}|z^{[0]}x, yw,hS^{-1}(z^{[1]})\rangle. \label{eq:bdd-sta-A4} 
\end{align}
The convention here follows from Eqs.~\eqref{eq:L1}-\eqref{eq:L2}.
It is clear that the $A_{v_b}^{z\otimes w}$'s form a representation of $\FA\otimes \FA^{\rm op}$.
The local stabilizers are defined as $A^{\lambda}(s_b)$ with $\lambda$ the symmetric separability idempotent, and $B_{f_b}=B^{\varphi_{\hat{W}}}(s_b)$.
The boundary Hamiltonian is thus of the form
\begin{equation}
  H[C(\partial \Sigma)]=-\sum_{s_b}   A^{\lambda}({s_b})-\sum_{f_b}B_{f_b}.
\end{equation}
Notice that for the gapped boundary determined by a weak Hopf subalgebra $K\subseteq W$, $K$ is naturally a $W$-comodule algebra.
Using the connection between symmetric separability idempotent $\lambda$ and Haar integral $h$, i.e., $\lambda=\sum_{(h)}h^{(1)}\otimes S(h^{(2)})$, we can obtain the counterpart of the model given in our previous work \cite{jia2022boundary} for Hopf algebra quantum double boundary. 

\begin{remark}
    Compare this with the construction given in \cite{Bombin2008family,Cong2017}, where they try to introduce the local boundary edge term and local boundary face and vertex terms to model the gapped boundary.
    Our model here is simpler and more natural in our framework since the $\FA|\FA$-bimodule structure is encoded in a way just as that for the bulk quantum double on a given site.
    The philosophy is also different. In their construction, they are searching for a local algebra $\mathcal{A}$ such that the boundary excitation is just the category of the representation category $\Rep(\mathcal{A})$.
    In the algebraic framework we established in the last section, we see that the boundary excitation is just the $\FA|\FA$-bimodule category ${_\FA}\mathsf{Mod}_\FA$. Our construction fits better into our framework.
\end{remark}

\begin{remark}
    For the group quantum double boundary, the construction given by Beigi et al.~\cite{Beigi2011the} can be recovered in our framework.
    Notice that in that case, the boundary is determined by a subgroup $K\subset G$.
    $\FA=\mathbb{C}[K]$ is a $W=\mathbb{C}[G]$-comodule algebra.
    The local stabilizer $A^K=\frac{1}{|K|} \sum_{k\in K}A^k$.
    Using the fact that symmetric separability idempotent $\lambda=\frac{1}{|K|}\sum_{k\in K} k\otimes S(k)$, it is clear that $A^K=A^{\lambda}_{v_b}$.
\end{remark}

\begin{remark}
    Here we choose the comodule algebra formalism of gapped boundary. For the module algebra description, we have a similar construction. The only thing that needs to be changed is that in the module algebra case, the boundary local stabilizer is built from symmetric separability idempotents.
\end{remark}

\subsection{Gapped domain wall Hamiltonian}

As we have pointed out before, a domain wall can be understood equivalently as a boundary via the folding trick.
Nevertheless, it is worth spelling out the explicit data for the construction of a gapped domain wall.
The domain wall between $W_1$ and $W_2$ is equivalent to a boundary of $W_1\otimes W_2^{\rm cop}$.
Thus the domain wall is given by a $W_1\otimes W_2^{\rm cop}$-comodule algebra $\FA$.
Choose the symmetric separability idempotent $\lambda$, we can construct a corresponding $\FA$-boundary of $W_1\otimes W_2^{\rm cop}$ phase.
See from domain wall settings, we have a left $W_1$- right $W_2$- comodule algebra $\FA$.
We construct the domain wall vertex operator  $A^{z\otimes w}(s_{d,i})$
as follows (for clarity, we have changed the tensor product order when necessary):
\begin{align}
     \begin{aligned}
    \begin{tikzpicture}
				\draw[-latex,black,line width = 1.6pt] (0,0) -- (0,1);
    	        \draw[red,line width = 1pt] (0,1) -- (0.5,0.5); 
             \draw[red,line width = 1pt] (0,1) -- (-0.5,0.5); 
    			\draw[-latex,black,line width = 1.6pt] (0,1) -- (0,2); 
				\draw[-latex,black] (1,1) -- (0,1); 
                    \draw[-latex,black] (0,1) -- (-1,1); 
				\node[ line width=0.2pt, dashed, draw opacity=0.5] (a) at (-0.2,0.5){$x$};
    		\node[ line width=0.2pt, dashed, draw opacity=0.5] (a) at (-0.2,1.5){$y$};
            \node[ line width=0.2pt, dashed, draw opacity=0.5] (a) at (0.5,1.2){$h$};
            \node[ line width=0.2pt, dashed, draw opacity=0.5] (a) at (-0.5,1.2){$k$};
            \node[ line width=0.2pt, dashed, draw opacity=0.5] (a) at (0.85,0.4){$s_{d,2}$}; 
            \node[ line width=0.2pt, dashed, draw opacity=0.5] (a) at (-0.8,0.4){$s_{d,1}$}; 
	\end{tikzpicture}  \end{aligned} & \quad \quad  A^{z\otimes w}(s_{d,i})|x,k,y,h\rangle=\sum_{[w]} |zx,kw^{[-1]},yw^{[0]},S^{-1}(w^{[1]})h\rangle, \label{eq:bdd-sta-D1}\\
       \begin{aligned}
    \begin{tikzpicture}
				\draw[-latex,black,line width = 1.6pt] (0,0) -- (0,1);
    	        \draw[red,line width = 1pt] (0,1) -- (0.5,1.5); 
             \draw[red,line width = 1pt] (0,1) -- (-0.5,1.5); 
    			\draw[-latex,black,line width = 1.6pt] (0,1) -- (0,2); 
				\draw[-latex,black] (1,1) -- (0,1); 
                    \draw[-latex,black] (0,1) -- (-1,1); 
				\node[ line width=0.2pt, dashed, draw opacity=0.5] (a) at (-0.2,0.5){$x$};
    		\node[ line width=0.2pt, dashed, draw opacity=0.5] (a) at (-0.2,1.5){$y$};
            \node[ line width=0.2pt, dashed, draw opacity=0.5] (a) at (0.85,1.4){$s_{d,2}$};
            \node[ line width=0.2pt, dashed, draw opacity=0.5] (a) at (-0.8,1.4){$s_{d,1}$};
            \node[ line width=0.2pt, dashed, draw opacity=0.5] (a) at (0.5,0.8){$h$};
            \node[ line width=0.2pt, dashed, draw opacity=0.5] (a) at (-0.5,0.8){$k$};
	\end{tikzpicture} \end{aligned} & \quad \quad A^{z\otimes w}(s_{d,i})|x,k,y,h\rangle=\sum_{[z]} | z^{[0]}x, kS^{-1}(z^{[-1]}), yw,z^{[1]}h\rangle,  \label{eq:bdd-sta-D2}\\
       \begin{aligned}
    \begin{tikzpicture}
				\draw[-latex,black,line width = 1.6pt] (0,0) -- (0,1); 
        	    \draw[red,line width = 1pt] (0,1) -- (0.5,0.5); 
                \draw[red,line width = 1pt] (0,1) -- (-0.5,0.5); 
    			\draw[-latex,black,line width = 1.6pt] (0,1) -- (0,2); 
				\draw[-latex,black] (0,1) -- (1,1); 
                    \draw[-latex,black] (-1,1) -- (0,1); 
				\node[ line width=0.2pt, dashed, draw opacity=0.5] (a) at (-0.2,0.5){$x$};
    		\node[ line width=0.2pt, dashed, draw opacity=0.5] (a) at (-0.2,1.5){$y$};
            \node[ line width=0.2pt, dashed, draw opacity=0.5] (a) at (0.5,1.2){$h$};
            \node[ line width=0.2pt, dashed, draw opacity=0.5] (a) at (-0.5,1.2){$k$};
            \node[ line width=0.2pt, dashed, draw opacity=0.5] (a) at (0.85,0.4){$s_{d,2}$}; 
            \node[ line width=0.2pt, dashed, draw opacity=0.5] (a) at (-0.8,0.4){$s_{d,1}$};
	\end{tikzpicture}\end{aligned} & \quad\quad A^{z\otimes w} (s_{d,i})|x,k,y,h\rangle=\sum_{[w]}|zx, S^{-1}(w^{[-1]})k, yw^{[0]},hw^{[1]}\rangle, \label{eq:bdd-sta-D3} \\
      \begin{aligned} \begin{tikzpicture}
				\draw[-latex,black,line width = 1.6pt] (0,0) -- (0,1); 
        	    \draw[red,line width = 1pt] (0,1) -- (0.5,1.5); 
                    \draw[red,line width = 1pt] (0,1) -- (-0.5,1.5); 
    			\draw[-latex,black,line width = 1.6pt] (0,1) -- (0,2); 
				\draw[-latex,black] (0,1) -- (1,1); 
                    \draw[-latex,black] (-1,1) -- (0,1);
				\node[ line width=0.2pt, dashed, draw opacity=0.5] (a) at (-0.2,0.5){$x$};
    		\node[ line width=0.2pt, dashed, draw opacity=0.5] (a) at (-0.2,1.5){$y$};
            \node[ line width=0.2pt, dashed, draw opacity=0.5] (a) at (0.5,0.75){$h$};
            \node[ line width=0.2pt, dashed, draw opacity=0.5] (a) at (-0.5,0.8){$k$};
            \node[ line width=0.2pt, dashed, draw opacity=0.5] (a) at (0.85,1.4){$s_{d,2}$};
            \node[ line width=0.2pt, dashed, draw opacity=0.5] (a) at (-0.8,1.4){$s_{d,1}$};
	\end{tikzpicture}  \end{aligned}  &\quad \quad A^{z\otimes w} (s_{d,i})|x,k,y,h\rangle=\sum_{[z]}|z^{[0]}x,z^{[-1]}k, yw,hS^{-1}(z^{[1]})\rangle, \label{eq:bdd-sta-D4} \\ \begin{aligned}
    \begin{tikzpicture}
				\draw[-latex,black,line width = 1.6pt] (0,0) -- (0,1);
    	        \draw[red,line width = 1pt] (0,1) -- (0.5,0.5); 
             \draw[red,line width = 1pt] (0,1) -- (-0.5,0.5); 
    			\draw[-latex,black,line width = 1.6pt] (0,1) -- (0,2); 
				\draw[-latex,black] (0,1) -- (1,1); 
                    \draw[-latex,black] (0,1) -- (-1,1); 
				\node[ line width=0.2pt, dashed, draw opacity=0.5] (a) at (-0.2,0.5){$x$};
    		\node[ line width=0.2pt, dashed, draw opacity=0.5] (a) at (-0.2,1.5){$y$};
            \node[ line width=0.2pt, dashed, draw opacity=0.5] (a) at (0.5,1.2){$h$};
            \node[ line width=0.2pt, dashed, draw opacity=0.5] (a) at (-0.5,1.2){$k$};
            \node[ line width=0.2pt, dashed, draw opacity=0.5] (a) at (0.85,0.4){$s_{d,2}$}; 
            \node[ line width=0.2pt, dashed, draw opacity=0.5] (a) at (-0.8,0.4){$s_{d,1}$}; 
	\end{tikzpicture}  \end{aligned} & \quad \quad  A^{z\otimes w}(s_{d,i})|x,k,y,h\rangle=\sum_{[w]} |zx,kw^{[-1]},yw^{[0]},hw^{[1]}\rangle, \label{eq:bdd-sta-D5}\\
       \begin{aligned}
    \begin{tikzpicture}
				\draw[-latex,black,line width = 1.6pt] (0,0) -- (0,1);
    	        \draw[red,line width = 1pt] (0,1) -- (0.5,1.5); 
             \draw[red,line width = 1pt] (0,1) -- (-0.5,1.5); 
    			\draw[-latex,black,line width = 1.6pt] (0,1) -- (0,2); 
				\draw[-latex,black] (0,1) -- (1,1); 
                    \draw[-latex,black] (0,1) -- (-1,1); 
				\node[ line width=0.2pt, dashed, draw opacity=0.5] (a) at (-0.2,0.5){$x$};
    		\node[ line width=0.2pt, dashed, draw opacity=0.5] (a) at (-0.2,1.5){$y$};
            \node[ line width=0.2pt, dashed, draw opacity=0.5] (a) at (0.85,1.4){$s_{d,2}$};
            \node[ line width=0.2pt, dashed, draw opacity=0.5] (a) at (-0.8,1.4){$s_{d,1}$};
            \node[ line width=0.2pt, dashed, draw opacity=0.5] (a) at (0.5,0.8){$h$};
            \node[ line width=0.2pt, dashed, draw opacity=0.5] (a) at (-0.5,0.8){$k$};
	\end{tikzpicture} \end{aligned} & \quad \quad A^{z\otimes w}(s_{d,i})|x,k,y,h\rangle=\sum_{[z]} |z^{[0]}x, kS^{-1}(z^{[-1]}), yw,hS^{-1}(z^{[1]})\rangle,  \label{eq:bdd-sta-D6}\\
       \begin{aligned}
    \begin{tikzpicture}
				\draw[-latex,black,line width = 1.6pt] (0,0) -- (0,1); 
        	    \draw[red,line width = 1pt] (0,1) -- (0.5,0.5); 
                \draw[red,line width = 1pt] (0,1) -- (-0.5,0.5); 
    			\draw[-latex,black,line width = 1.6pt] (0,1) -- (0,2); 
				\draw[-latex,black] (1,1) -- (0,1); 
                    \draw[-latex,black] (-1,1) -- (0,1); 
				\node[ line width=0.2pt, dashed, draw opacity=0.5] (a) at (-0.2,0.5){$x$};
    		\node[ line width=0.2pt, dashed, draw opacity=0.5] (a) at (-0.2,1.5){$y$};
            \node[ line width=0.2pt, dashed, draw opacity=0.5] (a) at (0.5,1.2){$h$};
            \node[ line width=0.2pt, dashed, draw opacity=0.5] (a) at (-0.5,1.2){$k$};
            \node[ line width=0.2pt, dashed, draw opacity=0.5] (a) at (0.85,0.4){$s_{d,2}$}; 
            \node[ line width=0.2pt, dashed, draw opacity=0.5] (a) at (-0.8,0.4){$s_{d,1}$};
	\end{tikzpicture}\end{aligned} & \quad\quad A^{z\otimes w} (s_{d,i})|x,k,y,h\rangle=\sum_{[w]}|zx, S^{-1}(w^{[-1]})k, yw^{[0]},S^{-1}(w^{[1]})h\rangle, \label{eq:bdd-sta-D7} \\
      \begin{aligned} \begin{tikzpicture}
				\draw[-latex,black,line width = 1.6pt] (0,0) -- (0,1); 
        	    \draw[red,line width = 1pt] (0,1) -- (0.5,1.5); 
                    \draw[red,line width = 1pt] (0,1) -- (-0.5,1.5); 
    			\draw[-latex,black,line width = 1.6pt] (0,1) -- (0,2); 
				\draw[-latex,black] (1,1) -- (0,1); 
                    \draw[-latex,black] (-1,1) -- (0,1);
				\node[ line width=0.2pt, dashed, draw opacity=0.5] (a) at (-0.2,0.5){$x$};
    		\node[ line width=0.2pt, dashed, draw opacity=0.5] (a) at (-0.2,1.5){$y$};
            \node[ line width=0.2pt, dashed, draw opacity=0.5] (a) at (0.5,0.75){$h$};
            \node[ line width=0.2pt, dashed, draw opacity=0.5] (a) at (-0.5,0.8){$k$};
            \node[ line width=0.2pt, dashed, draw opacity=0.5] (a) at (0.85,1.4){$s_{d,2}$};
            \node[ line width=0.2pt, dashed, draw opacity=0.5] (a) at (-0.8,1.4){$s_{d,1}$};
	\end{tikzpicture}  \end{aligned}  &\quad \quad A^{z\otimes w} (s_{d,i})|x,k,y,h\rangle=\sum_{[z]}|z^{[0]}x,z^{[-1]}k, yw,z^{[1]}h\rangle. \label{eq:bdd-sta-D8} 
\end{align}
Notice that in the above expressions, we assume that $k\in W_1$, $h\in W_2$ and $x,y,z,w\in \FA$.
The local stabilizers are defined as  $A^{\lambda}(s_{d,i})$ and $B_{f_{d,i}}$ ($i=1,2$), and the domain wall Hamiltonian is 
\begin{equation}
    H[C(\Sigma_d)]=-\sum_{i=1,2}\sum_{s_{d,i}} A^{\lambda}(s_{d_i})-\sum_{i=1,2}\sum_{f_{d,i}}B_{f_{d,i}}.
\end{equation}
If one of the two bulks is set as a trivial phase, the domain wall reduces to a boundary.


\section{Weak Hopf tensor network states}
\label{sec:HopfTN}
In this section, we shall introduce the weak Hopf tensor network based on pairings and solve the weak Hopf quantum double model with weak Hopf tensor network states.
Our discussion here inherits the philosophy of the construction in \cite{Buerschaper2013a,girelli2021semidual,jia2022boundary}, that is, introducing face and edge labels and pairing the face labels with edge labels. Although the construction is similar, the proof is more complicated in this situation.
There are also alternative approaches to the topological tensor network which is mainly based on the vertex freedom \cite{schuch2010peps,lootens2021matrix,molnar2022matrix}. If we consider the dual lattice model, our tensor network will share some similarities with these vertex-based constructions.

\subsection{Weak Hopf tensor networks}

Let us first introduce the weak Hopf tensor network and then in the sequel subsections, we show how to use this tool to solve the weak Hopf quantum double model. The construction is similar to that of \cite{Buerschaper2013a,jia2022boundary}, however, as we will see, the calculation and proof are more complicated in this case. The results here can also be generalized to the semidual case \cite{girelli2021semidual}.

\vspace{1em}
\emph{Local rank-2 tensor from the pairing of weak Hopf algebras.}\,\,\textemdash\, Our construction of the weak Hopf tensor network is based on pairings between weak Hopf algebras.
Let $J,W$ be two weak Hopf algebras equipped with a pairing $\langle \bullet, \bullet \rangle: J\otimes W\to \mathbb{C}$.
The basic building block of the weak Hopf tensor network is the rank-2  tensor $\Psi(x,\phi)=\langle \phi,x\rangle$ with $x\in W, \phi\in J$, which is represented by  
	\begin{equation}
		\langle \phi, x\rangle=	\begin{aligned}
	    	\begin{tikzpicture}
	    	\fill[red, opacity=0.8] (-0.4,0) rectangle (0.4,0.4); 
			\draw[-latex,black] (-1,0) -- (1,0);
			\draw[-latex,blue] (-0.9,0.7) arc(180:270:0.5);
			\draw[-latex,blue] (0.4,0.2) arc(-90:0:0.5);
			\node[ line width=0.2pt, dashed, draw opacity=0.5] (a) at (0,0.7){$\phi$};
			\node[ line width=0.2pt, dashed, draw opacity=0.5] (a) at (1.3,0){$x$};
					\end{tikzpicture}
				\end{aligned}
				=
				\begin{aligned}
	    	\begin{tikzpicture}
	    	\fill[red, opacity=0.8] (-0.4,0) rectangle (0.4,0.4); 
			\draw[-latex,black] (1,0) -- (-1,0);
			\draw[-latex,blue] (-0.9,0.7) arc(180:270:0.5);
			\draw[-latex,blue] (0.4,0.2) arc(-90:0:0.5);
			\node[ line width=0.2pt, dashed, draw opacity=0.5] (a) at (0,0.7){$\phi$};
			\node[ line width=0.2pt, dashed, draw opacity=0.5] (a) at (1.6,0){$S(x)$};
					\end{tikzpicture}
				\end{aligned},
			\end{equation}
   
		\begin{equation}
		\langle \phi, x\rangle=		\begin{aligned}
	    	\begin{tikzpicture}
	    	\fill[red, opacity=0.8] (-0.4,0) rectangle (0.4,0.4); 
			\draw[-latex,black] (-1,0) -- (1,0);
			\draw[-latex,blue] (-0.9,0.7) arc(180:270:0.5);
			\draw[-latex,blue] (0.4,0.2) arc(-90:0:0.5);
			\node[ line width=0.2pt, dashed, draw opacity=0.5] (a) at (0,0.7){$\phi$};
			\node[ line width=0.2pt, dashed, draw opacity=0.5] (a) at (1.3,0){$x$};
					\end{tikzpicture}
				\end{aligned}
				=
			\begin{aligned}
	    	\begin{tikzpicture}
	    	\fill[red, opacity=0.8] (-0.4,0) rectangle (0.4,0.4); 
			\draw[-latex,black] (-1,0) -- (1,0);
			\draw[-latex,blue] (-0.4,0.2) arc(-90:-180:0.5);
			\draw[-latex,blue] (0.9,0.7) arc(0:-90:0.5);
			\node[ line width=0.2pt, dashed, draw opacity=0.5] (a) at (0,0.7){$S(\phi)$};
			\node[ line width=0.2pt, dashed, draw opacity=0.5] (a) at (1.3,0){$x$};
					\end{tikzpicture}
				\end{aligned}.
			\end{equation}
The black edge is labeled with $x\in W$ and the reversing of the edge is realized by applying the antipode of $W$ on $x$. 
Similarly, the blue circular edge is labeled with $\phi\in J$ and the reversing of the edge is realized by applying the antipode of $J$ on $\phi$.
The Hopf tensor network representation is a diagrammatic representation of the pairing. For arbitrary given rank-2 tensors with their corresponding label, the evaluation is determined by pairing, and this process will be called Hopf trace.

\vspace{1em}
\emph{Gluing Local rank-2 tensors.}\,\,\textemdash\, Using the above local rank-2 tensor, we can construct a topological tensor network by introducing the gluing process of these local tensors.
Consider two weak Hopf algebras $J$ and $I$	which both have their pairings with weak Hopf algebra $W$, we can contract the rank-2 tensors to obtain rank-3 tensors. 
There are two types of basic gluing processes. The first one is parallel gluing, for which the contraction is determined by the coproduct of $W$.
For example, let $\phi \in J$, $\psi \in I$ and $x \in W$, we define 
			\begin{equation}\label{eq:pglue}
			\begin{aligned}
	    	\begin{tikzpicture}
	    	\fill[red, opacity=0.8] (-0.4,-0.4) rectangle (0.4,0.4); 
			\draw[-latex,black] (-1,0) -- (1,0);
			\draw[-latex,blue] (-0.9,0.7) arc(180:270:0.5);
			\draw[-latex,blue] (0.4,0.2) arc(-90:0:0.5);
			\draw[-latex,blue] (-0.4,-0.2) arc(90:180:0.5);
			\draw[-latex,blue] (0.9,-0.7) arc(0:90:0.5);
			\node[ line width=0.2pt, dashed, draw opacity=0.5] (a) at (0,0.7){$\phi$};
			\node[ line width=0.2pt, dashed, draw opacity=0.5] (a) at (0,-0.7){$\psi$};
			\node[ line width=0.2pt, dashed, draw opacity=0.5] (a) at (1.3,0){$x$};
					\end{tikzpicture}
				\end{aligned}
				:=\sum_{(x)}
			\begin{aligned}
	    	\begin{tikzpicture}
	    	\fill[red, opacity=0.8] (-0.4,0) rectangle (0.4,0.4); 
			\draw[-latex,black] (-1,0) -- (1,0);
		\draw[-latex,blue] (-0.9,0.7) arc(180:270:0.5);
			\draw[-latex,blue] (0.4,0.2) arc(-90:0:0.5);
			\node[ line width=0.2pt, dashed, draw opacity=0.5] (a) at (0,0.7){$\phi$};
			\node[ line width=0.2pt, dashed, draw opacity=0.5] (a) at (1.3,0.2){$x^{(2)}$};
			 \fill[red, opacity=0.8] (-0.4,-0.6) rectangle (0.4,-0.2); 
			\draw[-latex,black] (-1,-0.2) -- (1,-0.2);
			\draw[-latex,blue] (-0.4,-0.4) arc(90:180:0.5);
			\draw[-latex,blue] (0.9,-0.9) arc(0:90:0.5);
			\node[ line width=0.2pt, dashed, draw opacity=0.5] (a) at (0,-0.9){$\psi$};
			\node[ line width=0.2pt, dashed, draw opacity=0.5] (a) at (1.3,-0.4){$x^{(1)}$};
					\end{tikzpicture}
				\end{aligned}
				= \sum_{(x)}\langle \psi, S(x^{(1)}) \rangle \langle \phi, x^{(2)}\rangle.
			\end{equation}
The second one is vertical gluing, for which the contraction is determined by the coproduct of $J$. For example, let $\phi \in J$ and $x,y\in W$, we have
		\begin{equation}
		\begin{aligned}
	    	\begin{tikzpicture}
	    	\fill[red, opacity=0.8] (-0.4,0) rectangle (0.4,0.4); 
			\draw[-latex,black] (-1,0) -- (1.1,0);
			\draw[-latex,blue] (-0.9,0.7) arc(180:270:0.5);
			\draw[-latex,blue] (0.4,0.2) arc(-90:0:0.5);
			\node[ line width=0.2pt, dashed, draw opacity=0.5] (a) at (0,1.1){$\phi$};
			\node[ line width=0.2pt, dashed, draw opacity=0.5] (a) at (0.8,-0.4){$x$};
			\fill[red, opacity=0.8] (0.7,0.7) rectangle (1.1,1.5); 
			\draw[-latex,black] (1.1,0) -- (1.1,2);
			\draw[-latex,blue] (0.9,1.5) arc(0:90:0.5);			\node[ line width=0.2pt, dashed, draw opacity=0.5] (a) at (1.4,1.8){$y$};
			\end{tikzpicture}
				\end{aligned}
				:=
				\sum_{(\phi)}
			\begin{aligned}
	    	\begin{tikzpicture}
	    	\fill[red, opacity=0.8] (-0.4,0) rectangle (0.4,0.4); 
	    	\fill[red, opacity=0.8] (1.3,1.0) rectangle (1.7,1.8); 
			\draw[-latex,black] (-1,0) -- (1,0);
			\draw[-latex,black] (1.7,0.4) -- (1.7,2.4);
			\draw[-latex,blue] (-0.9,0.7) arc(180:270:0.5);
			\draw[-latex,blue] (1.5,0.4) -- (1.5,1.0);
			\draw[-latex,blue] (0.4,0.2) -- (1,0.2);
			\draw[-latex,blue] (1.5,1.8) arc(0:90:0.5);
			\node[ line width=0.2pt, dashed, draw opacity=0.5] (a) at (0,0.7){$\phi^{(1)}$};
			\node[ line width=0.2pt, dashed, draw opacity=0.5] (a) at (0.9,1.4){$\phi^{(2)}$};
			\node[ line width=0.2pt, dashed, draw opacity=0.5] (a) at (1.3,0){$x$};
			\node[ line width=0.2pt, dashed, draw opacity=0.5] (a) at (2.0,2.0){$y$};
					\end{tikzpicture}
				\end{aligned}=	\sum_{(\phi)}\langle \phi^{(1)},x\rangle \langle \phi^{(2)} ,y\rangle. \label{eq:vglue}
			\end{equation}

\begin{figure}
    \centering
    \includegraphics[scale=0.6]{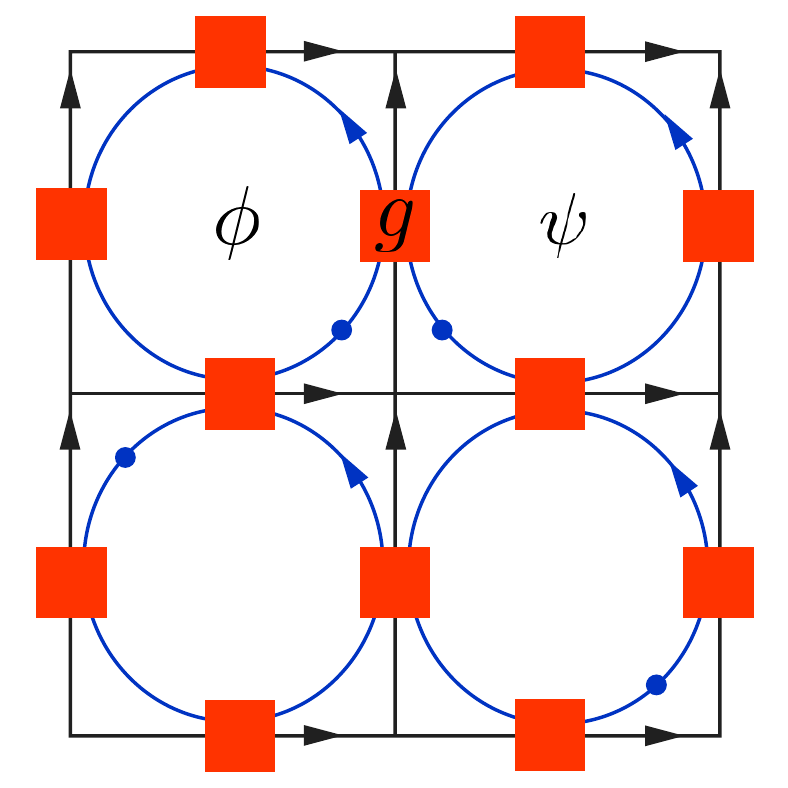}
    \caption{Example of the weak Hopf tensor network on a square lattice. The black lattice represents the physical lattice, the blue circle in the face $f$ is labeled with a face element $\phi_f \in J_f$, and the red box on the edge is labeled with an edge element $g\in W$. There is a pairing between $J$ and $W$.}
    \label{fig:WTN}
\end{figure}
   
\vspace{1em}
\emph{Weak Hopf tensor network on a lattice.}\,\,\textemdash\,\, With the above preparation, we are now ready to introduce the weak Hopf tensor network on a lattice $\Gamma$.
We assign a weak Hopf algebra $W_e$ to each edge $e\in E(\Gamma)$ and a weak Hopf algebra $J_f$ to each face $f\in F(\Gamma)$.
For an edge $e$, two face algebras $J_{f_L}$ and $J_{f_R}$ neighboring $e$ have their corresponding pairings with $W_e$.
There is one more extra data necessary. When dealing with a face, the starting site determines the starting point of the ordered pairing.
See Fig.~\ref{fig:WTN}, where the starting point is drawn as a blue dot. The evaluation rule is given as follows:
\begin{equation}
      		\begin{aligned}
	    	\begin{tikzpicture}
	    	\fill[red, opacity=0.8] (-0.4,0) rectangle (0.4,0.4); 
			\draw[-latex,black] (-1,0) -- (1.1,0);
                \draw[-latex,black] (-1,0) -- (-1,2.2);
			\draw[-latex,blue] (-0.9,0.7) arc(180:270:0.45);
			\draw[-latex,blue] (0.4,0.2) arc(-90:0:0.45);
			\node[ line width=0.2pt, dashed, draw opacity=0.5] (a) at (0,1.1){$\phi$};
			\node[ line width=0.2pt, dashed, draw opacity=0.5] (a) at (0.1,-0.4){$w$};
   	        \node[ line width=0.2pt, dashed, draw opacity=0.5] (a) at (0.1,2.4){$y$};
			\fill[red, opacity=0.8] (0.7,0.7) rectangle (1.1,1.5);
                \fill[red, opacity=0.8] (-1,0.7) rectangle (-0.6,1.5);
                \fill[red, opacity=0.8] (-0.4,1.8) rectangle (0.4,2.2);
			\draw[-latex,black] (1.1,0) -- (1.1,2.2);
                \draw[-latex,black] (-1,2.2) -- (1.1,2.2);
			\draw[-latex,blue] (0.9,1.5) arc(0:90:0.45);
                \draw[-latex,blue] (-0.4,2.0) arc(90:180:0.45);	
                \node[ line width=0.2pt, dashed, draw opacity=0.5] (a) at (1.4,1){$x$};
                 \node[ line width=0.2pt, dashed, draw opacity=0.5] (a) at (-1.2,1){$z$};
                \draw[fill=blue] (0.7,0.3) circle (.5ex);
			\end{tikzpicture}
			\end{aligned}	
   =\sum_{(\phi)}\begin{aligned}
       \langle \phi^{(1)},x\rangle \langle \phi^{(2)},S(y)\rangle\langle \phi^{(3)},S(z)\rangle\langle \phi^{(4)},w\rangle.
   \end{aligned}
\end{equation}
In this way, for a lattice $\Gamma$ with all edges labeled with $g_e$ and faces labeled with $\phi_f$, we can evaluate it and obtain a tensor
	\begin{equation}
	  \Psi_{\Gamma}(\{x_e\}_{e\in E(\Gamma)}, \{ \phi_f\}_{f\in F(\Gamma)})= \mathrm{ttr}_{\Gamma} (\{x_e\}_{e\in E_{\Gamma}}, \{ \phi_f\}_{f\in F_{\Gamma}}),
	\end{equation} 
where $\operatorname{ttr}_{\Gamma}$ is called the weak Hopf trace over the lattice $\Gamma$.

\vspace{1em}
\emph{Hierarchy.} \textemdash\, Similar to the Hopf algebra case, if we choose a pair of weak Hopf algebras $J\subset \What$ and $K\subset W$, we obtain the solution of the Hamiltonian constructed from $J^{\rm cop}\Join K$.

\subsection{Ground state of weak Hopf quantum double model}

Let us now define the quantum states that are determined by a weak Hopf tensor network.
Consider a lattice $C(\Sigma)$ of a $2d$ surface $\Sigma$ with face set $F$ and edge set $E$. We assign to each face $f$ a weak Hopf algebra $J_f$ and each edge $e$ a weak Hopf algebra $W_e$. 
Assume that for each face, $J_f$ has the corresponding pairings with the edge Hopf algebras for edges $e \in \partial f$.
If we set the edge with values $x_e \in W_e$ and face with values $\phi_f\in J_f$, then we obtain a weak Hopf tensor network 
	\begin{equation}
	    (\otimes_{e\in E} x_e) \otimes (\otimes_{f\in F} \phi_f) \mapsto    \Psi_{C(\Sigma)}(\{x_e\},\{\phi_f\})=  \mathrm{ttr}_{C(\Sigma)} (\{x_e\},\{\phi_f\}).
	\end{equation}
	The corresponding weak Hopf tensor network states are defined as 
	\begin{equation}\label{eq:TNS}
	    |\Psi_{C(\Sigma)} (\{x_e\},\{ \phi_f\})\rangle = \sum_{(x_e)} \mathrm{ttr}_{C(\Sigma)} (\{x_e^{(2)}\},\{\phi_f\}) \otimes_{e\in E}|x_e^{(1)}\rangle.
	\end{equation}

\emph{Ground state of weak Hopf quantum double model.} \!\textemdash\,
Let us now give the explicit expression of the ground state of the weak Hopf quantum double model using the weak Hopf tensor network state.
In this case, all edge weak Hopf algebras are set as $W$, and all face weak Hopf algebras are set as $\What$.
We assign Haar integrals to edges and faces, then from Eq.~\eqref{eq:TNS} we obtain a quantum state, which is to be proved the ground state of the weak Hopf quantum double model.
The depiction of local tensor networks for face and vertex are given in Fig.~\ref{fig:FaceVertex}.

\begin{theorem}\label{thm:GS-closed-surface}
    Consider a given lattice $C(\Sigma)$ and the corresponding quantum double model $H[C(\Sigma)]$. If we assign each edge with Haar integral $h_W$ and each face with Haar integral $\varphi_{\What}$, then the corresponding weak Hopf tensor network state 
    \begin{equation}\label{eq:QD-GS}
        |\Psi_{C(\Sigma)}(\{h_e=h_W\}_e,\{\phi_f=\varphi_{\What}\}_f)\rangle
    \end{equation}
    is the ground state of $H[C(\Sigma)]$.
\end{theorem}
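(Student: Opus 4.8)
The plan is to reduce the statement to two local stabilizer identities and then verify each by pushing the relevant operator through the tensor-network contraction onto the virtual indices. By the preceding Proposition the operators $A_v=A_v^{h_W}$ and $B_f=B_f^{\varphi_{\What}}$ are mutually commuting Hermitian projectors, so $H[C(\Sigma)]=\sum_v(I-A_v)+\sum_f(I-B_f)$ is a sum of commuting projectors; hence $H\ge 0$, its minimal eigenvalue $0$ is attained exactly on the common $(+1)$-eigenspace, and it suffices to prove
\begin{equation}
A_v|\Psi\rangle=|\Psi\rangle,\qquad B_f|\Psi\rangle=|\Psi\rangle
\end{equation}
for the state $|\Psi\rangle$ of Eq.~\eqref{eq:QD-GS} and every vertex $v$ and face $f$ (that $|\Psi\rangle\neq 0$ is checked separately by a direct evaluation of $\mathrm{ttr}_{C(\Sigma)}$). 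Because $h_W$ and $\varphi_{\What}$ are cocommutative, $A_v$ and $B_f$ depend only on $v$ and $f$, which lets me freely reorder the Sweedler components that appear below.

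The key structural observation is that the local tensors of Eq.~\eqref{eq:TNS} enjoy a \emph{pulling-through} property: each stabilizer operator, which by Eqs.~\eqref{eq:Av}--\eqref{eq:Bf} acts only on the physical legs $|h_e^{(1)}\rangle$ incident to a single site, can be transported across the tensors of \eqref{eq:pglue} and \eqref{eq:vglue} until it becomes an operator acting purely on the virtual data (the components $h_e^{(2)}$ and the face elements $\varphi_f$). For the face term, the $T_\pm$ actions entering $B_f^{\varphi_{\What}}$ are built from the pairing with $\What$, so transporting $B_f$ through the network amounts to inserting a second factor of $\varphi_{\What}$ into the face datum; using that the product in $\What$ is dual to $\Delta$, namely $\langle\varphi\psi,x\rangle=\sum_{(x)}\langle\varphi,x^{(1)}\rangle\langle\psi,x^{(2)}\rangle$, together with coassociativity of $\Delta(h_e)$, this replaces the face label $\varphi_{\What}$ by $\varphi_{\What}\cdot\varphi_{\What}=\varphi_{\What}$, the idempotency of the Haar measure. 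Hence $B_f|\Psi\rangle=|\Psi\rangle$.

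For the vertex term I would push $A_v^{h_W}$ off the physical legs in the same way. By Eq.~\eqref{eq:Av} it multiplies each edge incident to $v$, on the side fixed by the orientation, by a Sweedler component of $\Delta_{k-1}(h_W)$; since $\Delta$ is an algebra map, $\Delta(h_W\,a)=\Delta(h_W)\Delta(a)$, these gauge multiplications slide onto the edge integrals and are reabsorbed using the two-sided integral properties $a\,h_W=\varepsilon_L(a)h_W$, $h_W\,a=h_W\varepsilon_R(a)$ together with $S(h_W)=h_W$ and $h_W^2=h_W$. The moves required are exactly the sliding relations $\sum_{(x)}x^{(1)}y\otimes x^{(2)}=\sum_{(x)}x^{(1)}\otimes x^{(2)}S(y)$ for $y\in W_R$ and $\sum_{(x)}x^{(1)}\otimes zx^{(2)}=\sum_{(x)}S(z)x^{(1)}\otimes x^{(2)}$ for $z\in W_L$ used in the proof of Proposition~\ref{prop:comm-rib-local}; their net effect is to return both the physical ket and the holonomy amplitude unchanged, giving $A_v|\Psi\rangle=|\Psi\rangle$.

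The main obstacle is the weak structure, not the overall strategy. Unlike the Hopf case, $\Delta(1_W)\neq 1_W\otimes 1_W$, the counital subalgebras $W_L,W_R$ are nontrivial, and $S$ is not involutive, so every pulling-through and re-summation step generates stray factors of $\varepsilon_L$, $\varepsilon_R$ and $S^{-1}$ localized on the edges meeting $v$ (respectively $f$). The real work is to show these boundary terms cancel, which I would carry out with the identities collected in the preliminaries—$\varepsilon_L\circ S=\varepsilon_L\circ\varepsilon_R$, $\Delta(1_W)\in W_R\otimes W_L$, and $\varepsilon\leftharpoonup S(x_L)=\varepsilon\leftharpoonup x_L$ with its $S^{-1}$ and $W_R$ counterparts in Eqs.~\eqref{eq:S-and-original}--\eqref{eq:S-inverse-and-original}. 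Keeping this bookkeeping under control is precisely the same lengthy weak-Hopf accounting that made Proposition~\ref{prop:comm-rib-local} delicate, and it is where essentially all of the effort lies.
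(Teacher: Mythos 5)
Your proposal is correct and follows essentially the same route as the paper: reduce to the local stabilizer conditions $A_v|\Psi\rangle=|\Psi\rangle$ and $B_f|\Psi\rangle=|\Psi\rangle$, absorb the face operator into the face datum via the idempotency $\varphi_{\What}^2=\varphi_{\What}$ of the Haar measure, and push the vertex gauge action onto the virtual indices where it cancels using $S(h_W^{(i)})=S^{-1}(h_W^{(i)})$ and the $W_L$/$W_R$ sliding relations. The deferred ``boundary-term bookkeeping'' you identify is exactly the content of the paper's explicit Sweedler computation, so the outline is sound but would need that computation to be carried out to be complete.
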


\begin{figure}
    \centering
    \includegraphics[scale=0.8]{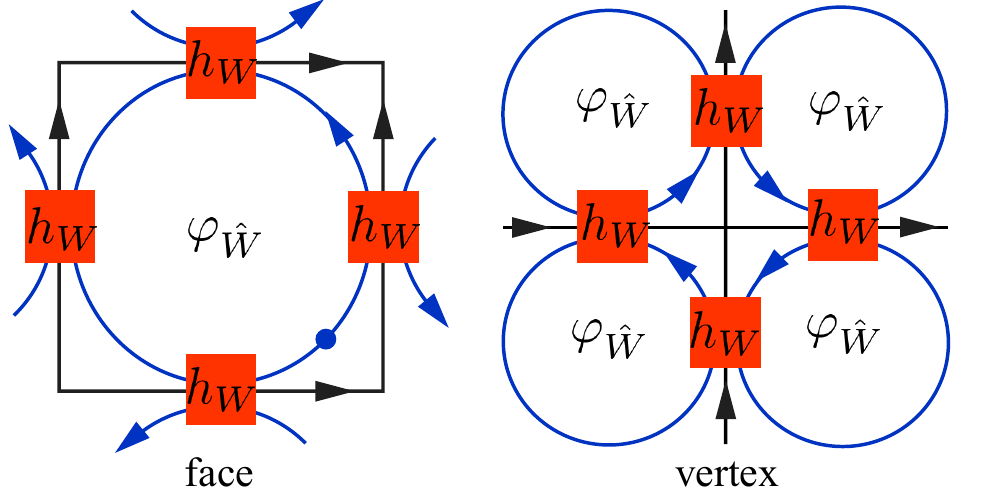}
    \caption{The depiction of face and vertex of the weak Hopf tensor network representation of the ground state of the weak Hopf quantum double model.}
    \label{fig:FaceVertex}
\end{figure}

\begin{proof}
    Without loss of generality, consider the vertex $v$ in Fig.~\ref{fig:FaceVertex}. Since $A_v$ only acts non-trivially on all edges attaching to $v$, it suffices to consider its behavior around $v$. Label the edges $e_1,e_2,e_3,e_4$ counterclockwise starting from the rightmost edge, and denote $h_j=h_W,\, \varphi_j=\varphi_{\hat{W}}$ for $j=1,2,3,4$. Near this vertex, the state $|\Psi_{C(\Sigma)}(\{h_e=h_W\}_e,\{\phi_f=\varphi_{\What}\}_f)\rangle$ is given by 
    \begin{equation}
    \begin{aligned}
        |\Psi_{v}\rangle &= \sum_{(h_j)}\varphi_1(h_1^{(3)}S(h_2^{(2)}))\varphi_2(h_2^{(3)}h_3^{(3)})\varphi_3(S(h_3^{(2)})h_4^{(3)})\\
        &\quad \quad\times \varphi_4(S(h_4^{(2)})S(h_1^{(2)}))|h_1^{(1)},h_2^{(1)},h_3^{(1)},h_4^{(1)}\rangle. 
    \end{aligned}
    \end{equation}
    For $h=h_W$, using $S(h^{(i)}) = S^{-1}(h^{(i)})$, one computes 
    \begin{align*} 
        A_v|\Psi_{v}\rangle
        & = \sum_{(h_j)}\sum_{(h)}\varphi_1(h_1^{(3)}S(h_2^{(2)}))\varphi_2(h_2^{(3)}h_3^{(3)})\varphi_3(S(h_3^{(2)})h_4^{(3)})\varphi_4(S(h_4^{(2)})S(h_1^{(2)})) \\
        &\quad \quad |h_1^{(1)}S^{-1}(h^{(1)}),h_2^{(1)}S^{-1}(h^{(2)}), h^{(3)}h_3^{(1)}, h^{(4)}h_4^{(1)}\rangle \\
        & = \sum_{(h_j)}\sum_{(h)}\varphi_1(h_1^{(3)}h^{(2)}S(h_2^{(2)}))\varphi_2(h_2^{(3)}h_3^{(3)})\varphi_3(S(h_3^{(2)})h_4^{(3)}) \\
        &\quad \quad \times \varphi_4(S(h_4^{(2)})S(h_1^{(2)}h^{(1)}))|h_1^{(1)},h_2^{(1)}S^{-1}(h^{(3)}), h^{(4)}h_3^{(1)}, h^{(5)}h_4^{(1)}\rangle \\
        & = \sum_{(h_j)}\sum_{(h)}\varphi_1(h_1^{(3)}\varepsilon_L(h^{(2)})S(h_2^{(2)}))\varphi_2(h_2^{(3)}h^{(3)}h_3^{(3)})\varphi_3(S(h_3^{(2)})h_4^{(3)}) \\
        &\quad \quad \times \varphi_4(S(h_4^{(2)})S(h_1^{(2)}h^{(1)}))|h_1^{(1)},h_2^{(1)}, h^{(4)}h_3^{(1)}, h^{(5)}h_4^{(1)}\rangle \\
        & = \sum_{(h_j)}\sum_{(h)}\varphi_1(h_1^{(3)}S(h_2^{(2)}))\varphi_2(h_2^{(3)}\varepsilon_L(h^{(2)})h^{(3)}h_3^{(3)})\varphi_3(S(h_3^{(2)})h_4^{(3)}) \\
        &\quad \quad \times \varphi_4(S(h_4^{(2)})S(h_1^{(2)}h^{(1)}))|h_1^{(1)},h_2^{(1)}, h^{(4)}h_3^{(1)}, h^{(5)}h_4^{(1)}\rangle \\
        & = \sum_{(h_j)}\sum_{(h)}\varphi_1(h_1^{(3)}S(h_2^{(2)}))\varphi_2(h_2^{(3)}\varepsilon_L(h^{(2)})h_3^{(3)})\varphi_3(S(S(h^{(3)})h_3^{(2)})h_4^{(3)}) \\
        &\quad \quad \times \varphi_4(S(h_4^{(2)})S(h_1^{(2)}h^{(1)})) |h_1^{(1)},h_2^{(1)}, h_3^{(1)}, h^{(4)}h_4^{(1)}\rangle \\
        & = \sum_{(h_j)}\sum_{(h)}\varphi_1(h_1^{(3)}S(h_2^{(2)}))\varphi_2(h_2^{(3)}h_3^{(3)})\varphi_3(S(S(\varepsilon_L(h^{(2)})h^{(3)})h_3^{(2)})h_4^{(3)}) \\
        &\quad \quad \times \varphi_4(S(h_4^{(2)})S(h_1^{(2)}h^{(1)})) |h_1^{(1)},h_2^{(1)}, h_3^{(1)}, h^{(4)}h_4^{(1)}\rangle \\
        & = \sum_{(h_j)}\sum_{(h)}\varphi_1(h_1^{(3)}S(h_2^{(2)}))\varphi_2(h_2^{(3)}h_3^{(3)})\varphi_3(S(h_3^{(2)})\varepsilon_L(h^{(2)})h_4^{(3)}) \\
        &\quad \quad \times \varphi_4(S(S(h^{(3)})h_4^{(2)})S(h_1^{(2)}h^{(1)})) |h_1^{(1)},h_2^{(1)}, h_3^{(1)}, h_4^{(1)}\rangle \\
        & = \sum_{(h_j)}\sum_{(h)}\varphi_1(h_1^{(3)}S(h_2^{(2)}))\varphi_2(h_2^{(3)}h_3^{(3)})\varphi_3(S(h_3^{(2)})h_4^{(3)}) \\
        &\quad \quad \times \varphi_4(S(S(h^{(3)})S(\varepsilon_L(h^{(2)}))h_4^{(2)})S(h_1^{(2)}h^{(1)})) |h_1^{(1)},h_2^{(1)}, h_3^{(1)}, h_4^{(1)}\rangle \\
        &= \sum_{(h_j)}\varphi_1(h_1^{(3)}S(h_2^{(2)}))\varphi_2(h_2^{(3)}h_3^{(3)})\varphi_3(S(h_3^{(2)})h_4^{(3)})\\
        &\quad \quad\times \varphi_4(S(h_4^{(2)})S(h_1^{(2)}))|h_1^{(1)},h_2^{(1)},h_3^{(1)},h_4^{(1)}\rangle. 
    \end{align*}
    Here, the fourth, sixth and eighth  equalities follow from $\sum_{(y)}xS(y^{(1)})\otimes y^{(2)} = \sum_{(y)} S(y^{(1)})\otimes y^{(2)}x$ and $\sum_{(y)} y^{(1)}\otimes xy^{(2)} = \sum_{(y)}S(x)y^{(1)}\otimes y^{(2)}$ for $x\in W_L$. We show how to deduce the second equality, and the remaining equalities are obtained similarly. Indeed, the second equality is from the following computation: 
    \begin{align*}
        &\quad \sum_{(h_1)}\Delta(h_1^{(2)})\otimes |h_1^{(1)}S^{-1}(h^{(1)})\rangle  \\
        & = \sum_{(h_1)} \sum_{(h^{(1)})} \Delta(h_1^{(2)})\otimes |S^{-1}(h^{(2)}S^{-1}(\varepsilon_R(h^{(1)}))S(h_1^{(1)}))\rangle \\
        & = \sum_{(h_1)}\sum_{(h^{(1)})} \Delta(h_1^{(2)}S^{-1}(\varepsilon_R(h^{(1)}))) \otimes |h_1^{(1)}S^{-1}(h^{(2)})\rangle \\
        & = \sum_{(h_1)}\sum_{(h^{(1)})} \Delta(h_1^{(2)}S^{-1}(h^{(2)})h^{(1)})\otimes  |h_1^{(1)}S^{-1}(h^{(3)})\rangle  \\ 
        & = \sum_{(h_1)}\sum_{(h^{(1)})} \Delta(h_1^{(2)}\varepsilon_R(S^{-1}(h^{(2)}))h^{(1)}) \otimes |h_1^{(1)}\rangle  \\
        & = \sum_{(h_1)} \Delta(h_1^{(2)}h^{(1)}) \otimes |h_1^{(1)}\rangle, 
    \end{align*}
    where the fourth equality comes from $h_1S^{-1}(h^{(2)}) = h_1\varepsilon_R(S^{-1}(h^{(2)}))$. 
    
    Next, consider the face in Fig.~\ref{fig:FaceVertex}. In the same spirit, the state $|\Psi_{C(\Sigma)}(\{h_e=h_W\}_e,\{\phi_f=\varphi_{\What}\}_f)\rangle$ is given, near this face, by
    \begin{equation}
    \begin{aligned}
        |\Psi_f\rangle & = \sum_{(h_j)} \varphi_1(S(h_1^{(2)}))\varphi_2(h_2^{(3)})\varphi_3(h_3^{(3)})\varphi_4(S(h_4^{(2)}))\\
        & \quad \quad \times \varphi(h_1^{(3)}S(h_2^{(2)})S(h_3^{(2)})h_4^{(3)})|h_1^{(1)},h_2^{(1)},h_3^{(1)},h_4^{(1)} \rangle 
    \end{aligned}
    \end{equation}
    where $\varphi=\varphi_j=\varphi_{\hat{W}}$, $j=1,2,3,4$. For $\varphi'=\varphi_{\hat{W}}$, one computes 
    \begin{align*}
        B_f|\Psi_f\rangle & = \sum_{(h_j)} \varphi(h_1^{(4)}S(h_2^{(3)})S(h_3^{(3)})h_4^{(4)})\varphi'(S(h_1^{(1)})h_2^{(2)}h_3^{(2)}S(h_4^{(1)}))\\
        & \quad \quad \times \varphi_1(S(h_1^{(3)}))\varphi_2(h_2^{(4)})\varphi_3(h_3^{(4)})\varphi_4(S(h_4^{(3)}))|h_1^{(2)},h_2^{(1)},h_3^{(1)},h_4^{(2)} \rangle  \\
        & = \sum_{(h_j)} \varphi(h_1^{(1)}S(h_2^{(3)})S(h_3^{(3)})h_4^{(1)})\varphi'(h_1^{(2)}S(h_2^{(2)})S(h_3^{(2)})h_4^{(2)})\\
        & \quad \quad \times \varphi_1(S(h_1^{(4)}))\varphi_2(h_2^{(4)})\varphi_3(h_3^{(4)})\varphi_4(S(h_4^{(4)}))|h_1^{(3)},h_2^{(1)},h_3^{(1)},h_4^{(3)} \rangle  \\
        & = \sum_{(h_j)} \varphi^2(h_1^{(1)}S(h_2^{(2)})S(h_3^{(2)})h_4^{(1)})\\
        & \quad \quad \times \varphi_1(S(h_1^{(3)}))\varphi_2(h_2^{(3)})\varphi_3(h_3^{(3)})\varphi_4(S(h_4^{(3)}))|h_1^{(2)},h_2^{(1)},h_3^{(1)},h_4^{(2)} \rangle  \\
        & = \sum_{(h_j)} \varphi(h_1^{(3)}S(h_2^{(2)})S(h_3^{(2)})h_4^{(3)})\\
        & \quad \quad \times \varphi_1(S(h_1^{(2)}))\varphi_2(h_2^{(3)})\varphi_3(h_3^{(3)})\varphi_4(S(h_4^{(2)}))|h_1^{(1)},h_2^{(1)},h_3^{(1)},h_4^{(1)} \rangle,
    \end{align*}
    by $\varphi^2=\varphi$. This completes the proof. 
\end{proof}

\vspace{1em}
\emph{Ground states of the boundary.} \textemdash\,
We now solve the boundary ground state.
Let $(C(\Sigma\setminus\partial\Sigma),C(\partial\Sigma))$ be a lattice on a $2d$ surface $\Sigma$ with boundary $\partial\Sigma$. Consider the boundary model determined by a right $W$-comodule algebra $\mathfrak{A}$. 
The Hopf tensor network states of the boundary model are defined as 
\begin{equation}
\begin{aligned}
        &\quad |\Psi_{C(\Sigma\setminus\partial\Sigma),C(\partial\Sigma)}(\{g_e\},\{\varphi_f\},\{x_{e_b}\},\{\phi_{f_b}\})\rangle \\
        & = \sum_{(g_e),(x_{e_b})} \operatorname{ttr}_{C(\Sigma\setminus\partial\Sigma),C(\partial\Sigma)}(\{g_e^{(2)}\},\{\varphi_f\},\{x_{e_b}^{[1]}\},\{\phi_{f_b}\})\\
        & \quad \quad \quad \quad \quad \otimes_{e\in E(\Sigma\setminus\partial\Sigma)} g_e^{(1)}\otimes_{e_b\in E(\partial \Sigma)}x_{e_b}^{[0]}, 
\end{aligned}
\end{equation}
where $g_e\in W,\,\varphi_f\in\hat{W},\,x_{e_b}\in\mathfrak{A}$, and $\phi_{f_b}\in\hat{W}$. 

For simplicity, let us consider the case $\mathfrak{A} = K$ where $K$ is a weak Hopf subalgebra of $W$. In this case, the symmetric separability idempotent is $\lambda = \sum_{(h)}h^{(1)}\otimes S(h^{(2)})$ with $h=h_K$ the Haar integral of $K$. If we assign each bulk edge with Haar integral $h_W$, each face with Haar integral $\varphi_{\What}$, and each boundary edge $h_K$, then the corresponding weak Hopf tensor network state 
\begin{equation}\label{eq:gs-bdd}
    |\Psi_{C(\Sigma\setminus\partial\Sigma),C(\partial\Sigma)}(\{g_e=h_W\}_e,\{\varphi_f=\varphi_{\What}\}_f,\{x_{e_b}=h_K\}_{e_b},\{\phi_{f_b} = \varphi_{\What}\}_{f_b})\rangle
\end{equation}
is the ground state of the quantum double model with boundary $H[C(\Sigma\setminus\partial\Sigma),C(\partial\Sigma)]$. To see this, consider the boundary vertex $v_b$ in Eq.~\eqref{eq:bdd-sta-A1}, over which the state \eqref{eq:gs-bdd} is given by 
\begin{equation}
    |\Psi_{v_b}\rangle = \sum \varphi_1(h_W^{(2)}S(h_1^{(2)}))\varphi_2(S(h_W^{(3)})S(h_2^{(2)}))|h_1^{(1)},h_W^{(1)},h_2^{(1)}\rangle, 
\end{equation}
with $h_1=h_2 = h_K$. It suffices to show that $|\Psi_{v_b}\rangle$ is preserved by the boundary stabilizer $A^\lambda_{v_b}$. In the following computation, we used repeatedly the fact that $S(h^{(i)}) = S^{-1}(h^{(i)})$ for Haar integral $h$.  One computes 
\begin{align*}
    A^\lambda_{v_b}|\Psi_{v_b}\rangle & = \sum \varphi_1(h_W^{(2)}S(h_1^{(2)}))\varphi_2(S(h_W^{(3)})S(h_2^{(2)}))|h^{(1)}h_1^{(1)},h^{(2)}h_W^{(1)},h_2^{(1)}S(h^{(3)})\rangle\\
    & = \sum \varphi_1(h_W^{(2)}S(S(h^{(1)})h_1^{(2)}))\varphi_2(S(h_W^{(3)})S(h_2^{(2)}))|h_1^{(1)},h^{(2)}h_W^{(1)},h_2^{(1)}S(h^{(3)})\rangle\\
    & = \sum \varphi_1(S(h^{(3)})h_W^{(2)}S(h_1^{(2)})h^{(1)})\varphi_2(S(S(h^{(2)})h_W^{(3)})S(h_2^{(2)}))|h_1^{(1)},h_W^{(1)},h_2^{(1)}S(h^{(4)})\rangle\\
    & = \sum \varphi_1(h_W^{(2)}S(h_1^{(2)}))\varphi_2(S(S(h^{(1)})h_W^{(3)})S(h_2^{(2)}))|h_1^{(1)},h_W^{(1)},h_2^{(1)}S(h^{(2)})\rangle\\
    & = \sum \varphi_1(h_W^{(2)}S(h_1^{(2)}))\varphi_2(S(h_W^{(3)})h^{(1)}S(h_2^{(2)}h^{(2)})|h_1^{(1)},h_W^{(1)},h_2^{(1)}\rangle\\
    & = \sum \varphi_1(h_W^{(2)}S(h_1^{(2)}))\varphi_2(S(h_W^{(3)})S(h_2^{(2)}))|h_1^{(1)},h_W^{(1)},h_2^{(1)}\rangle. 
\end{align*}
Here, the second equality holds as 
\begin{align*}
    \sum |h_1^{(2)}\rangle\otimes |h^{(1)}h_1^{(1)}\rangle & = \sum |h_1^{(2)}\rangle\otimes |h^{(1)}S(S(\varepsilon_R(h^{(2)})))h_1^{(1)}\rangle \\
    & = \sum |S(\varepsilon_R(h^{(2)}))h_1^{(2)}\rangle \otimes |h^{(1)}h_1^{(1)}\rangle \\ 
    & = \sum |S(h^{(3)})h^{(2)}h_1^{(2)}\rangle \otimes |h^{(1)}h_1^{(1)}\rangle \\ 
    & = \sum |S(h^{(2)})h_1^{(2)}\rangle \otimes |\varepsilon_L(h^{(1)})h_1^{(1)}\rangle \\
    & = \sum |S(h^{(2)})S(h_1^{(2)})\rangle \otimes |\varepsilon_L(h^{(1)})S(h_1^{(1)})\rangle \\ 
    & = \sum |S(h^{(2)})S(h_1^{(2)}\varepsilon_L(h^{(1)}))\rangle \otimes |S(h_1^{(1)})\rangle \\ 
    & = \sum |S(h^{(1)})h_1^{(2)} \rangle \otimes |h_1^{(1)}\rangle. 
\end{align*}
The third and fifth equalities follow from similar computations. The trick is similar to the proof of Theorem \ref{thm:GS-closed-surface}.

\vspace{1em}
\emph{Ground states of the domain wall.} \textemdash\, Similarly, we can solve the domain wall ground state. For a $W_1|W_2$-bicomodule algebra $\mathfrak{A}$, denote by $\alpha(x) = \sum_{(x)}x^{[-1]}\otimes x^{[0]}\in W_1\otimes\mathfrak{A}$ and by $\beta(x)=\sum_{(x)}x^{[0]}\otimes x^{[1]}\in \mathfrak{A}\otimes W_2$, $x\in \mathfrak{A}$, respectively the left $W_1$- and right $W_2$- coactions as usual. So $(\id_{W_1}\otimes \beta)\comp\alpha(x)=\sum_{(x)}x^{[-1]}\otimes x^{[0]}\otimes x^{[1]}$ is well defined. For a domain wall determined by $\mathfrak{A}$, the Hopf tensor network states are defined as 
\begin{equation*}
\begin{aligned}
        &\quad |\Psi_{C(\Sigma_1),C(\Sigma_d),C(\Sigma_2)}(\{g_{e_1}\},\{\varphi_{f_1}\},\{\phi_{f_{d,1}}\},\{x_{e_d}\},\{\phi_{f_{d,2}}\},\{g_{e_2}\},\{\varphi_{f_2}\})\rangle \\
        & = \sum_{(g_{e_i}),(x_{e_b})} \operatorname{ttr}_{C(\Sigma_1),C(\Sigma_d),C(\Sigma_2)}(\{g_{e_1}^{(2)}\},\{\varphi_{f_1}\},\{x_{e_d}^{[-1]}\},\{\phi_{f_{d,1}}\},\{x_{e_d}^{[1]}\},\{\phi_{f_{d,2}}\},\{g_{e_2}^{(2)}\},\{\varphi_{f_2}\})\rangle \\
        & \quad \quad \quad \quad \quad \otimes_{e_1\in E(\Sigma_1)}g_e^{(1)}\otimes_{e_d\in E(\Sigma_d)} x_{e_d}^{[0]} \otimes_{e_2\in E( \Sigma_2)}g_{e_2}^{(1)}, 
\end{aligned}
\end{equation*}
where $g_{e_1}\in W_1,\,\varphi_{f_1},\,\phi_{f_{d,1}}\in\hat{W}_1,\, g_{e_2}\in W_2,\,\varphi_{f_2},\,\phi_{f_{d,2}}\in\hat{W}_2$, and $x_{e_d}\in\mathfrak{A}$. 

Consider the case that $\mathfrak{A}=K$ is a weak Hopf subalgebra of $W_1\otimes W_2^{\rm cop}$. Then one can use a similar computation as in the boundary to show that the ground state of the domain wall is given by the weak Hopf tensor network state 
\begin{equation*}
    |\Psi_{C(\Sigma_1),C(\Sigma_d),C(\Sigma_2)}(\{h_{W_1}\}_{e_1},\{\varphi_{\hat{W}_1}\}_{f_1},\{h_K\}_{e_d},\{\varphi_{\hat{W}_1}\}_{f_{d,1}},\{\varphi_{\hat{W}_2}\}_{f_{d,2}},\{h_{W_2}\}_{e_2},\{\varphi_{\hat{W}_2}\}_{f_2})\rangle. 
\end{equation*}

\subsection{Closed ribbon operator and ground state degeneracy}

We have shown that the ground state of the quantum double model can be described as a weak Hopf tensor network state. However, the ground state space is degenerate for general closed surfaces. We now address the question of generating additional ground states from the expression in Eq.~\eqref{eq:QD-GS}.
This can be achieved via the action of a closed ribbon operator.

Let us take the type-B ribbon as an example. As discussed before, we have the decomposition of ribbon operator algebra $D(W)^{\vee}$ as in Eq.~\eqref{eq:DWdecomp}.
Consider the projection $\operatorname{Proj}_X D(W)^{\vee}\cong X \otimes X^{\vee}$.
For a closed ribbon $\rho$ with $\partial_0 \rho=\partial_1 \rho=s$, the particle $X$ and $X^{\vee}$ fuse to vacuum $\one$ at site $s$.
Consider the projection of $X\otimes X^{\vee} \to \one$, this can also be embedded into $D(W)$ via
\begin{equation}
    \one \hookrightarrow X \otimes X^{\vee} \hookrightarrow D(W)^{\vee}.
\end{equation}
We denote the corresponding image as $\operatorname{Proj}_{\one, X} D(W)^{\vee}$.
For closed ribbon $\rho$ and $g\otimes \psi \in \operatorname{Proj}_{\one, X} D(W)^{\vee}$ we have the ribbon operator $F^{g,\psi}_{\rho}$.
Now for a closed surface $\Sigma$ with the cellulation $C(\Sigma)$ and weak Hopf quantum double model $H[D(W),C(\Sigma)]$, using the expression in Eq.~\eqref{eq:QD-GS}, we obtain a ground state $|\Omega\rangle$.
Then for a closed ribbon $\rho$, $F_{\rho}^{g,\psi} |\Omega\rangle$ is also a ground state.
Depending on the topology of the surface $\Sigma$, there may be non-homotopical closed ribbons, for these ribbons, applying the ribbon operator above, we will obtain  different ground states in the vacuum sector $\Vcal_{\one}$.

\section{Topological duality of weak Hopf quantum double phase}
\label{sec:duality}

One crucial property of the weak Hopf quantum double model is that they have EM duality. Mathematically, this stems from the fact that weak Hopf algebras are self-dual, that is, when $W$ is a weak Hopf algebra, its dual algebra $\What$ is also a weak Hopf algebra. And the representation categories of $W$ and $\What$ are categorical Morita equivalent.
This EM duality is important to understand the matter phases and phase transitions. It is proved that the EM duality is closely related to the well-known Kramers-Wannier duality of Ising model \cite{freed2022topological,aasen2016topological,aasen2020topological}.

We know that the Abelian group quantum double model has EM duality, under which the gauge charges and gauge fluxes are exchanged in the dual theory.
The corresponding dual group is $\hat{G}$ which consists of all irreducible representations of $G$.
Since $G\cong \hat{G}$, the topological phases for the two theories are equivalent.
However, when $G$ is non-Abelian, the EM duality is broken, since the irreducible representations of $G$ do not form a group.
One approach to remedy this problem is to generalize the EM duality as a partial EM duality \cite{hu2020electric}, i.e., consider a normal subgroup $N\subset G$, the $N$-charge and $N$-flux are exchanged under the partial EM duality. 
Another way is to generalize the notion of group symmetry to the quantum group symmetry (like the Hopf algebras \cite{buerschaper2013electric} and the weak Hopf algebras discussed in this work).
In this case, the input data of the dual model is the dual Hopf algebra.
Two corresponding quantum double phases are equivalent.
The EM duality of the group quantum double model in the presence of a gapped boundary and domain wall is also discussed previously \cite{wang2020electric,Jia2022electric,Kitaev2012a}. Under the bulk EM duality, the boundary types are permuted accordingly. 

In this part, we consider the most general case where the EM duality can be realized, \emph{viz.}, the weak Hopf quantum double model without and with the gapped boundary or domain wall.
Besides EM duality, we will also define the general concept of duality which connects two equivalent quantum double phases.

\subsection{Topological duality for the $2d$ bulk}

The duality is a correspondence between two physically equivalent theories. Before we discuss the duality, let us first consider the general map between two anomaly-free topological phases\,\footnote{The phases that have lattice realizations in the same spatial dimension \cite{kong2014braided}.}.
Assume that the topological excitations of two phases are given by UMTCs $\EuScript{P}_i$, $i=1,2$, and their lattice realizations are $H[C_i(\Sigma)]$, $i=1,2$ (the Hamiltonians are built from two different cellulations of the same surface $\Sigma$).
At excitation level, the general duality map $F:\EuScript{P}_1\to \EuScript{P}_2$ should satisfy the following natural conditions:
\begin{enumerate}
    \item The vacuum should be mapped to vacuum $r_F: F(\one)\cong \one$.
    \item It preserves the fusion relation $a_{X,Y,F}: F(X\otimes Y)\cong F(X)\otimes F(Y)$. This means that fusion multiplicity is preserved under the duality map.
    \item It preserves the axioms of fusion relations. This means that the $F$-symbol is preserved under the duality map.
    \item Since the direct sum is indeed equivalent to the quantum superposition in the fusion/splitting channel, the duality map should also preserve the direct sum structure. 
    That is, $F$ should map zero objects to zero objects and $F(X\oplus Y)\cong F(X)\oplus F(Y)$, and the axioms of direct sum structure is also preserved.
    \item When acting on the fusion/splitting space, it is a linear map.
    \item It preserves the braiding structure $F(c_{X,Y})\simeq c_{F(X),F(Y)}$. This means that the duality map preserves the mutual statistics of particles.
\end{enumerate}
At the lattice level, there should be a map $U_F:\mathcal{H}_1\to \mathcal{H}_2$ which maps local terms in the Hamiltonian in a way consistent with the map of topological charge. 
A duality between two theories means that the map is bijective, namely, $F$ is an equivalence between two UMTCs and $U_F$ is a Fourier transform between two lattice models.

Based on the above discussion, for quantum double phases, a general map $\Phi$ between two weak Hopf quantum double models consists of the following data: (i) A braided monoidal functor $F: \Rep(D(W))\to \Rep(D(W'))$; (ii) A Fourier transform $U_F$ between lattice models $H[D(W);C(\Sigma)]$ and $H[D(W');C(\Sigma')]$. We simply denote it as
\begin{equation}
    \Phi:\mathsf{QD}(D(W);\Sigma) \to \mathsf{QD}(D(W');\Sigma').
\end{equation}
The braided monoidal functor means that the physical data, such as fusion and braiding, are preserved by the transformation. 
The map between two lattice models is a lattice realization of the map between two topological phases.
In this framework, we can define the duality between quantum double phases as follows:

\begin{definition}
A duality between two quantum double phases is an equivalence
    \begin{equation}
    \Phi:\mathsf{QD}(D(W);\Sigma) \to \mathsf{QD}(D(W');\Sigma'),
\end{equation}
which consists of a braided monoidal equivalence $F:\Rep(D(W))\to \Rep(D(W'))$ and an invertible map $U_F$ between two lattice realizations.
\end{definition}

The EM duality is a special case that relates the quantum double model constructed from $W$ and that constructed from its dual $\What$:
\begin{equation}
    \Phi_{\rm EM}:\mathsf{QD}(D(W);\Sigma) \to \mathsf{QD}(D(\hat{W});\Sigma').
\end{equation}
This is guaranteed by the fact that there is a braided monoidal equivalence between $\Rep(D(W))$ and $\Rep(D(\hat{W}))$.
And we can also map the lattice model $H(D(W),\Sigma)$ to $H(D(\What),\Sigma')$ by sending the direct lattice $\Sigma$ to its dual lattice $\Sigma'=\tilde{\Sigma}$, changing the edge space via Fourier transform between $W$ and $\What$, and exchanging the vertex and face operators with each other (thus the charge and flux are exchanged).
Since $W^{\vee\vee}\cong W$, we see that this map is invertible.

For the more general duality of the weak Hopf quantum double model, we need a notion of categorical Morita equivalence: two monoidal categories $\EC$ and $\ED$ are called categorical Morita equivalent\,\footnote{Also called weakly Morita equivalent by some authors \cite{muger2003subfactorsI}.} if their Drinfeld centers are braided monoidal equivalent $\mathcal{Z}(\EC)\simeq \mathcal{Z}(\ED)$.
For two weak Hopf gauge symmetries $W,W'$, their representation categories are categorical Morita equivalent if and only if $\mathcal{Z}(\Rep(W))\simeq \mathcal{Z}(\Rep(W'))$. 
From the discussion in Sec.~\ref{sec:pre}, we know that $\mathcal{Z}(\Rep(W))$ is equivalent to the Yetter-Drinfeld module category ${_W}\mathsf{YD}^W$ as a braided monoidal category, and ${_W}\mathsf{YD}^W$ is equivalent to $\Rep(D(W))$ as a braided monoidal category. Therefore, we have:
    two quantum double phases with their respective weak Hopf gauge symmetry $W,W'$ are equivalent (i.e., there is a duality between them)
    if and only if $\Rep(W)$ and $\Rep(W')$ are categorical Morita equivalent, \emph{viz.}, $\mathcal{Z}(\Rep(W))\simeq\mathcal{Z}(\Rep(W'))$.

\begin{remark}
    The above theorem also holds for general $2d$ topological order.
    There is a duality between two topological orders with UFC symmetry $\EC$ and $\ED$ if and only if $\mathcal{Z}(\EC)\simeq \mathcal{Z}(\ED)$. 
    One way to understand this duality is to treat both $\EC$ and $\ED$ as boundary theories of a $3d$ topological order $\EP$.
    Then using the boundary-bulk correspondence, the boundary anyons of one boundary can be dragged into the bulk, then it can be condensed into another boundary, which realizes the duality between two boundary topological orders.
\end{remark}

\begin{remark}[Duality of Levin-Wen string-net model]
The Levin-Wen string-net model is a more general lattice realization of non-chiral anomaly-free topological phase \cite{Levin2005}.
The input data of the model is a UFC $\EC$, and the topological excitation is given by the Drinfeld center $\mathcal{Z}(\EC)$.
The general theory of duality in this case can be understood via the module category over $\EC$.
It is proved that \cite{etingof2010fusion,etingof2016tensor} for any $\EC$-module category $\EM$, the category of $\EC$-module functors $\EC_{\EM}^{\vee}=\Fun(\EM,\EM)^{\otimes \rm op}$ is categorical Morita equivalent to $\EC$.
Thus there is a duality
\begin{equation}
    \Phi: \mathsf{SN}(\EC, H_{\rm SN}[\EC,C(\Sigma)])\to  \mathsf{SN}(\EC_{\EM}^{\vee}, H_{\rm SN}[\EC_{\EM}^{\vee},C(\Sigma)]);
\end{equation}
then for different $\EM,\EM'$, there is also a duality between $\EC_{\EM}^{\vee}$ and $\EC_{\EM'}^{\vee}$ theories. Notice that in the string-net model, we cannot define the EM duality, since the Hilbert space is not large enough. To remedy this, we need to introduce the extended string-net model by extending the Hilbert space to support the dyonic excitations \cite{Hu2018full,buerschaper2013electric}.
\end{remark}

For weak Hopf symmetry, since there are more structures in $W$ than $\Rep(W)$ that we can use (actually, via the Tannaka–Krein duality, by introducing a fiber functor to $\Rep(W)$, we can recover $W$ from the fiber functor). Thus there are more equivalent ways to explicitly express the duality between quantum double models. 
Let us take Hopf algebra and its twist deformation as an example to illustrate this.

\vspace{1em}
\emph{Example: the duality of Hopf quantum double model.} \textemdash\,  Consider the Hopf quantum double model \cite{Buerschaper2013a,jia2022boundary}, the duality is equivalent to the twist deformation of quantum double\,\footnote{Notice that this is different from the notion of twisted quantum double, which is a variation of the quantum double induced by a $3$-cocycle $\alpha:W \otimes W \otimes W\to U(1)$.}. Namely,
for any two Hopf gauge symmetries $W,W'$, their corresponding quantum double phases are equivalent (i.e., there is a duality between them) if and only if there exists a Drinfeld twist $T$ such that $D(W)\cong D(W')^{T}$ as Hopf algebras.
In fact, it is proved \cite[Proposition 5.14.4]{etingof2016tensor} that the representation categories of two Hopf algebras are equivalent if and only two Hopf algebras are twist equivalent. Apply this result to quantum double and using the definition of duality, we arrive at the conclusion.

The theorem tells us that the duality between two Hopf quantum double models is fully characterized by a twist.
By definition a twist of a Hopf algebra $W$ is an invertible element $T \in W\otimes W$ such that 
\begin{equation}
   ( \Delta\otimes \id)(T)(T\otimes 1)=(\id \otimes \Delta)(T)(1\otimes T).
\end{equation}
The twist can be normalized to ensure that $(\varepsilon\otimes \id)(T)=(\id\otimes \varepsilon)(T)=1$. We will adopt shorthand notation $T=T^{<1>}\otimes T^{<2>}$ and $T^{-1}=T^{-<1>}\otimes T^{-<2>}$ (similar to the Einstein summation notation, we omit the summation symbol).
We denote $Q_T=S(T^{<1>})T^{<2>}$ and $Q_T^{-1}=T^{-<1>}S(T^{-<2>})$.
The twisted Hopf algebra $W^T$ is an algebra with the same algebra structure and the same counit as that of $W$, and the comultiplication and antipode are given by
\begin{equation}
    \Delta^{T}(g)=T^{-1}\Delta(g) T, \quad S^{T}(g)=Q^{-1}_T S(g) Q_T,\quad g\in W.
\end{equation}
If $W$ is quasitriangular with $R$-matrix $R$, then $W^T$ is quasitriangular with $R$-matrix $R^T=T_{21}^{-1} R T$.
The Haar integral remains unchanged under the twist deformation \cite{aljadeff2002twisting}.

\vspace{1em}
\emph{Correspondence of boundaries under duality.} \textemdash\,
The boundary theory is determined by a Lagrangian algebra \cite{Kong2014}, and it can be proved that the duality maps Lagrangian algebra to Lagrangian algebra.
To be precise, if $L$ is a Lagrangian algebra of $\Rep(D(W))$, then $F(L)$ is a Lagrangian algebra of $\Rep(D(W'))$, where $F:\Rep(D(W)) \to \Rep(D(W'))$ is a duality, namely, $F$ is a braided monoidal equivalence. 
Since $F(\oplus_i X_i)=\oplus_iF(X_i)$, the quantum dimension of $\dim F(L)=\dim L$. All other structures are preserved by $F$. 
This implies that, for the surface with boundaries, the bulk theory is mapped to the dual theory, and the boundary theory is mapped to the boundary theory of the dual theory.

\subsection{Duality for the boundary and domain wall}

For the $1d$ boundary and domain wall of the quantum double phase, the duality still means that two different lattice realizations give the same (monoidal equivalent) boundary phase. Notice that the bulk is now fixed, thus it is different from the boundary correspondence induced by the duality of the bulk phase.

\begin{definition}
   1. A duality between two gapped boundaries is an equivalence 
    \begin{equation}
        \Phi: \mathsf{QD}_{\rm bd}(\mathfrak{A};\partial \Sigma)\to \mathsf{QD}_{\rm bd}(\mathfrak{A}';\partial \Sigma'),
    \end{equation}
    which consists of (i) a monoidal equivalence between two boundary phases $F:{_{\FA}}\Mod_{\FA} \to {_{\FA'}}\Mod_{\FA'}$; and (ii) a lattice Fourier transform $U_F$.

    2. For domain wall, a similar definition exists. We can also use the folding trick to transform the domain wall into a boundary and invoke the definition of duality for the boundary to give that to the domain wall.
\end{definition}

Therefore, we need firstly to investigate the monoidal equivalent $F:{_{\FA}}\Mod_{\FA} \to {_{\FA'}}\Mod_{\FA'}$.
First notice that ${_{\FA}}\Mod_{\FA} \simeq {_{\FA \otimes \FA^{\rm op}}}\Mod$. Then using the Eilenberg-Watts theorem \cite{eilenberg1960abstract,watts1960intrinsic}, $F: {_{\FA \otimes \FA^{\rm op}}}\Mod \to {_{\FA' \otimes {\FA'}^{\rm op}}}\Mod$ is determined by a ${\FA' \otimes {\FA'}^{\rm op}}|{\FA \otimes \FA^{\rm op}}$-bimodule $M$.
Notice that Morita equivalence of $\FA$ and $\FA'$ is a stronger condition, in this case ${_{\FA}}\Mod \simeq {_{\FA'}}\Mod$.
In a more general TQFT formalism of boundary, this boundary duality between two $\EC$-module categories $\EM$ and $\EN$ is equivalent to $\Fun_{\EC}(\EM,\EM)\simeq \Fun_{\EC}(\EN,\EN)$. But the Morita equivalence means that $\EM\simeq \EN$.
We conjecture that the duality for the boundary can also be characterized by twist deformation of two $W$-comodule algebras $\FA_1$ and $\FA_2$.
Namely, ${_{\FA_1}}\Mod_{\FA_1} \simeq {_{\FA_1 \otimes \FA_1^{\rm op}}}\Mod$ is monoidal equivalent to ${_{\FA_2}}\Mod_{\FA_2} \simeq {_{\FA_2 \otimes \FA_2^{\rm op}}}\Mod$ if and only if $\FA_1$ and $\FA_2$ are twist equivalent.
We will leave this for our future study.

We would like to stress that the concept of duality given above can be relaxed to \emph{weak duality}.
By which we mean, any two $nd$ topological order $\EP_1$ and $\EP_2$ are weakly dual to each other if they can be embedded into gapped boundaries for an $(n+1)d$ topological order.
This more general definition of duality allows us to subsume more existing dualities in one unified framework. For example,
the well-known Kramers-Wannier duality between the low-temperature and high-temperature phases of the Ising model can be understood in this way \cite{freed2022topological,aasen2016topological,aasen2020topological}.

\section{Conclusion and discussion}
\label{sec:conclusion}
In this paper, we investigated the weak Hopf symmetry and weak Hopf quantum double model in detail.
We established the relationship between the weak Hopf quotient algebras and sub-algebras in $W$ and $\What$.
Using this correspondence, we presented a theory of weak Hopf symmetry for vacuum states and its breaking during anyon condensation.
We constructed the weak Hopf quantum double model in $2d$ lattice, which has weak Hopf symmetry.
The local stabilizer, ribbon operators, and topological excitations were discussed.
The algebraic theory of gapped boundary and gapped domain wall was established based on comodule algebras and bicomodule algebras.
The lattice construction based on symmetric separability idempotent was given.
To solve the weak Hopf quantum double model, we introduced the weak Hopf tensor network. Then using this representation, we presented the exact solution of the model.
The theory of duality between different weak Hopf quantum double models was also established.

Our work is necessarily technical due to the complication of dealing with weak Hopf algebras. 
While some progress has been made, there is much left to be done, and we outline some potentially interesting directions below.

\emph{Bulk and boundary twist defects.} \textemdash\,
In our work, we mainly focused on $1d$ defects, such as boundaries and domain walls. 
We briefly discuss the algebraic classification of the boundary defects and domain wall twist defects via the bimodule category.
A lattice construction of these point defects is an interesting and crucial topic.
Typical examples are Abelian toric codes, wherein the domain wall twist defects have the same statistics as Ising anyons \cite{Bombin2010,Jia2022electric}.
And the boundary defects of the Abelian toric code are Majorana fermionic and parafermionic zero modes \cite{cong2017defects}.
This phenomenon of emerging non-Abelian anyon in the Abelian phase is crucial for both theoretical investigation and applications in topological quantum computing.
For general Hopf and weak Hopf quantum double the structures of these twist defects are more complicated and largely remain open.

\emph{Entanglement perspective.} \textemdash \,
We know that the quantum double models have intrinsic topological order, thus their ground states have long-range entanglement.
An entanglement renormalization approach based on our weak Hopf tensor network representations will be helpful for us to understand the weak Hopf quantum double phase.
On the other hand, entanglement entropy is proven to be sensitive to the existence of topological defects, and understanding the entanglement properties in the presence of a gapped boundary, domain wall, and twist defects is a crucial topic.
We left these for our future research.

\emph{Symmetry-enriched quantum double model.} \textemdash\,
The quantum double model has the EM duality (see Sec.~\ref{sec:duality}), and this EM duality could induce a global $\mathbb{Z}_2$ symmetry that exchanges the electric and magnetic charge of the model.
The quantum double can be enriched by this EM duality symmetry.
Systematic construction of the symmetry-enriched quantum double model for arbitrary group $G$ (or more generally, Hopf algebra and weak Hopf algebra) is still an open problem. 

\emph{Higher dimensional extension.} \textemdash\,
While we have a relatively complete understanding of the $2d$ topological phase, the algebraic theory and Hamiltonian theory behind the higher dimensional topological orders are still unclear.
Recently, The $3d$ $\mathbb{C}[\mathbb{Z}_2]$ model has been investigated from aspects \cite{Levin2005,Hamma2005string,kong2020defects,delcamp2021tensornet}. The $3d$ finite  group model and its twisted generalization are discussed in \cite{Moradi2015universal,Wan2015twisted,wang2018gapped}. However, both the higher dimensional case and the more complicated Hopf algebra case are largely unexplored.

\subsection*{Acknowledgments}

We are very grateful to the anonymous referees for careful reading of our paper and for their valuable comments. 
S.T. would like to thank his advisor Uli Walther for his constant encouragement and helpful discussions. Z.J. and D.K. are supported by the National Research Foundation and the Ministry of Education in Singapore through the Tier 3 MOE2012-T3-1-009 Grant: Random numbers from quantum processes. S.T. is partially supported by NSF grant DMS-2100288 and by Simons Foundation Collaboration Grant for Mathematicians \#580839. L.C. is supported by NSFC Grant No.12171249. 


\subsection*{Declarations}
\paragraph{Conflict of interest:} All authors certify that there are no conflicts of interest for this work.

\paragraph{Data Availability Statement:} Data sharing is not applicable to this article as no datasets were generated or analyzed during the current study.

\appendix

\section{Some technical details about quantum double} \label{app:QD}

In this appendix, we will give a detailed proof that the definition of the quantum double $D(W)$ of a weak Hopf algebra $W$ is well defined. This has been shown in Ref. \cite{nikshych2003invariants} for an equivalent form of $D(W)$. Nevertheless, it is still worth proving the assertion for the form we used in this work.

Let $J$ be the linear span in $\hat{W}^{\rm cop}\otimes W$ of the elements 
\begin{align}
	\varphi \otimes xh -  \varphi (x\rightharpoonup \varepsilon)\otimes h, \quad x\in W_L,\\
	\varphi \otimes yh -  \varphi ( \varepsilon \leftharpoonup y) \otimes h, \quad y\in W_R. 
\end{align}
Then $J$ is a two-sided ideal of $\hat{W}^{\rm cop} \otimes W$. To prove this claim, clearly, $J$ is an additive subgroup of $\hat{W}^{\rm cop}\otimes W$. For any $\psi\otimes g\in \hat{W}^{\rm cop}\otimes W$ and $x\in W_L$, one has 
    \begin{align*}
        (\psi\otimes g)(\varphi (x\rightharpoonup \varepsilon)\otimes h) 
        & = \sum_{(g),(\varphi)}\psi \varphi^{(2)}\otimes g^{(2)}h \langle \varphi^{(1)}, S^{-1}(g^{(3)}) \rangle \langle \varphi^{(3)}(x \rightharpoonup \varepsilon),g^{(1)} \rangle \\
        & = \sum_{(g),(\varphi)}\psi \varphi^{(2)}\otimes \varepsilon(g^{(2)}x)g^{(3)}h \langle \varphi^{(1)}, S^{-1}(g^{(4)}) \rangle \langle \varphi^{(3)}, g^{(1)} \rangle  \\
        & = \sum_{(g),(\varphi)}\psi \varphi^{(2)}\otimes g^{(2)}xh \langle \varphi^{(1)}, S^{-1}(g^{(3)}) \rangle \langle \varphi^{(3)}, g^{(1)} \rangle  \\
        & = (\psi\otimes g)(\varphi \otimes xh), \\
        (\varphi\otimes xh)(\psi\otimes g) & = \sum_{(h),(\psi)}\varphi\psi^{(2)}\otimes h^{(2)} g \langle \psi^{(1)}, S^{-1}(h^{(3)}) \rangle \langle \psi^{(3)}, xh^{(1)} \rangle \\
        & = \sum_{(h),(\psi)}\varphi \psi^{(2)} \langle \psi^{(3)}, x \rangle \otimes h^{(2)} g \langle \psi^{(1)}, S^{-1}(h^{(3)}) \rangle  \langle \psi^{(4)},h^{(1)} \rangle \\
        & = \sum_{(h),(\psi)}\varphi (x\rightharpoonup \varepsilon) \psi^{(2)}\otimes h^{(2)} g \langle \psi^{(1)}, S^{-1}(h^{(3)}) \rangle  \langle \psi^{(3)},h^{(1)} \rangle \\
        & = (\varphi(x\rightharpoonup x)\otimes h)(\psi\otimes g).
    \end{align*}
    Similarly, one has 
    \begin{align*}
         (\psi\otimes g)(\varphi\otimes yh) & = (\psi\otimes g)(\varphi(\varepsilon \leftharpoonup y)\otimes h), \\
        (\varphi\otimes yh)(\psi\otimes g) & = (\varphi(\varepsilon \leftharpoonup y)\otimes h)(\psi\otimes g).
    \end{align*}
    The above computation shows that for $x\in J$, we have $(\psi\otimes g)x = x(\psi\otimes g) = 0$, so $J$ is a two-sided ideal. 

The quotient algebra $D(W):=(\hat{W}^{\rm cop} \otimes W)/J$ is the quantum double of $W$. We will verify in the following that $D(W)$ is a weak Hopf algebra, whose weak Hopf algebra structure is given in Definition \ref{def:quantum-double}.

  (i) $D(W)$ is an algebra. Associativity: for $\varphi,\psi,\theta\in \hat{W}$ and $g,h,k\in W$, 
    \begin{align*}
        &\quad ([\varphi\otimes h][\psi\otimes g])[\theta\otimes k] \\
        &= \sum_{(h),(\psi)}   [\varphi \psi^{(2)} \otimes  h^{(2)} g][\theta\otimes k]  \langle \psi^{(1)}, S^{-1}(h^{(3)}) \rangle \langle \psi^{(3)}, h^{(1)}\rangle \\
        & = \sum_{(h),(\psi)} \sum_{(g),(\theta)} [(\varphi \psi^{(2)})\theta^{(2)}\otimes (h^{(3)}g^{(2)})k] \langle \theta^{(1)},S^{-1}(h^{(4)}g^{(3)})\rangle \\
        & \quad \quad \times \langle \theta^{(3)},h^{(2)}g^{(1)} \rangle  \langle \psi^{(1)},S^{-1}(h^{(5)}) \rangle \langle \psi^{(3)},h^{(1)} \rangle  \\
        & = \sum_{(h),(\psi)} \sum_{(g),(\theta)} [\varphi (\psi^{(2)}\theta^{(3)})\otimes h^{(3)}(g^{(2)}k)] \langle \theta^{(1)},S^{-1}(g^{(3)})\rangle \langle\theta^{(2)},S^{-1}(h^{(4)})\rangle \\
        & \quad \quad \times \langle \theta^{(4)},h^{(2)} \rangle \langle \theta^{(5)},g^{(1)} \rangle  \langle \psi^{(1)},S^{-1}(h^{(5)}) \rangle \langle \psi^{(3)},h^{(1)} \rangle \\
        & = \sum_{(g),(\theta)} \sum_{(h),(\psi)} [\varphi (\psi^{(2)}\theta^{(3)})\otimes h^{(2)}(g^{(2)}k)] \langle \psi^{(1)}\theta^{(2)},S^{-1}(h^{(3)})\rangle  \langle \psi^{(3)}\theta^{(4)},h^{(1)} \rangle   \\
        & \quad \quad \times  \langle \theta^{(5)},g^{(1)} \rangle  \langle \theta^{(1)},S^{-1}(g^{(3)})\rangle   \\
        & = \sum_{(g),(\theta)}[\varphi\otimes h][\psi \theta^{(2)}\otimes g^{(2)} k ]\langle \theta^{(1)},S^{-1}(g^{(3)})\rangle  \langle \theta^{(3)},g^{(1)} \rangle \\
        & = [\varphi\otimes h]([\psi\otimes g][\theta\otimes k]).
    \end{align*}
    Unit property: since $\Delta(1_W)\in W_R\otimes W_L$, 
    \begin{align*}
        [\varepsilon\otimes 1_W][\varphi\otimes h] & = \sum_{(1_W)}\sum_{(\varphi)}[\varphi^{(2)}\otimes 1_W^{(2)}h]\langle \varphi^{(1)}, S^{-1}(1_W^{(3)}) \rangle \langle \varphi^{(3)}, 1_W^{(1)} \rangle \\
        & = \sum_{(1_W)}\sum_{(1'_W)}\sum_{(\varphi)}[\varphi^{(2)}\otimes \langle \varphi^{(3)}, 1_W^{(1)} \rangle 1_W^{(2)}1_W'^{(1)} \langle \varphi^{(1)}, S^{-1}(1_W'^{(2)}) \rangle h] \\
        & = \sum_{(1_W)}\sum_{(1'_W)}\sum_{(\varphi)}[\varphi^{(2)}(1_W^{(2)} \rightharpoonup \varepsilon)\langle \varphi^{(3)}, 1_W^{(1)} \rangle (\varepsilon \leftharpoonup 1_W'^{(1)}) \langle \varphi^{(1)}, S^{-1}(1_W'^{(2)}) \rangle \otimes h] \\
        & = \sum_{(\varphi)} [\varphi^{(2)}\varepsilon_R(\varphi^{(3)})S^{-1}(\varepsilon_R(\varphi^{(1)}))\otimes h] = [\varphi\otimes h]. 
    \end{align*} 
    Similarly, $[\varphi\otimes h][\varepsilon\otimes 1_W] = [\varphi\otimes h]$.

 (ii) $\Delta,~\varepsilon$ and $S$ are well defined. First, $\Delta$ is well defined: 
    \begin{align*}
        \Delta([\varphi\otimes xh]) & = \sum_{(h),(\varphi)}[\varphi^{(2)}\otimes xh^{(1)}]\otimes[\varphi^{(1)}\otimes h^{(2)}] \\ 
        & = \sum_{(h),(\varphi)}[\varphi^{(2)}(x\rightharpoonup \varepsilon)\otimes h^{(1)}]\otimes[\varphi^{(1)}\otimes h^{(2)}] \\ 
        & = \Delta([\varphi(x\rightharpoonup \varepsilon)\otimes h]), \\
        \Delta([\varphi\otimes yh]) & = \sum_{(h),(\varphi)} [\varphi^{(2)}\otimes h^{(1)}]\otimes [\varphi^{(1)}\otimes yh^{(2)}] \\
        & = \sum_{(h),(\varphi)} [\varphi^{(2)}\otimes h^{(1)}]\otimes [\varphi^{(1)}(\varepsilon \leftharpoonup y)\otimes h^{(2)}] \\
        & = \Delta([\varphi(\varepsilon \leftharpoonup y)\otimes h]). 
    \end{align*}
    Next, $\varepsilon$ is well defined:
    \begin{align*}
        \varepsilon([\varphi(x\rightharpoonup \varepsilon)\otimes h]) & = \langle \varphi(x\rightharpoonup \varepsilon),\varepsilon_R(S^{-1}(h)) \rangle \\
        & = \sum_{(1_W)}\langle \varphi, 1_W^{(1)} \rangle \langle \varepsilon,\varepsilon_R(S^{-1}(h))1_W^{(2)}x \rangle \\
        & = \sum_{(1_W)}\langle \varphi, 1_W^{(1)} \rangle \langle \varepsilon,1_W^{(2)}S^{-1}(\varepsilon_L(h))S^{-1}(x) \rangle  \\
        & = \sum_{(1_W)}\langle \varphi, 1_W^{(1)} \rangle \langle \varepsilon,1_W^{(2)}S^{-1}(\varepsilon_L(xh)) \rangle \\
        & = \sum_{(1_W)}\langle \varphi, 1_W^{(1)} \rangle \langle \varepsilon,\varepsilon_R(S^{-1}(xh))1_W^{(2)} \rangle \\
        & = \langle \varphi,\varepsilon_R(S^{-1}(xh)) \rangle = \varepsilon([\varphi\otimes xh]),\\
        \varepsilon([\varphi(\varepsilon \leftharpoonup y)\otimes h]) & = \langle \varphi(\varepsilon\leftharpoonup y),\varepsilon_R(S^{-1}(h)) \rangle \\
        & = \sum_{(1_W)}\langle \varphi,1_W^{(1)}\rangle \langle \varepsilon,y\varepsilon_R(S^{-1}(h))1_W^{(2)} \rangle  \\
        & = \langle \varphi, yS^{-1}(\varepsilon_L(h))\rangle = \langle \varphi,S^{-1}(\varepsilon_L(h)S(y))\rangle  \\
        & = \langle \varphi,S^{-1}(\varepsilon_L(yh))\rangle = \varepsilon([\varphi\otimes yh]),
    \end{align*}
    where we used 
    \begin{align*}
        x\varepsilon_L(h) &= \sum_{(h)}xh^{(1)}S(h^{(2)}) = \sum_{(xh)}(xh)^{(1)}S((xh)^{(2)}) = \varepsilon_L(xh), \\
        \varepsilon_L(h)S(&y) = \sum_{(h)}h^{(1)}S(yh^{(2)}) =\sum_{(yh)}(yh)^{(1)}S((yh)^{(2)}) = \varepsilon_L(yh). 
    \end{align*}
    Lastly, $S$ is well defined: 
    \begin{align*}
        S([\varphi\otimes xh]) & = \sum_{(\varphi),(h)} [\hat{S}^{-1}(\varphi^{(2)})\otimes S(h^{(2)})] \langle \varphi^{(1)},h^{(3)}\rangle \langle \varphi^{(3)}, S^{-1}(xh^{(1)}) \rangle  \\ 
        & = \sum_{(\varphi),(h)} [\hat{S}^{-1}(\varphi^{(2)})\otimes S(h^{(2)})] \langle \varphi^{(1)},h^{(3)}\rangle \langle \varphi^{(3)}(x\rightharpoonup \varepsilon), S^{-1}(h^{(1)})  \rangle  \\ 
        & = S([\varphi(x\rightharpoonup \varepsilon)\otimes h]), \\
        S([\varphi\otimes yh]) & = \sum_{(\varphi),(h)} [\hat{S}^{-1}(\varphi^{(2)})\otimes S(h^{(2)})] \langle \varphi^{(1)},yh^{(3)}\rangle \langle \varphi^{(3)}, S^{-1}(h^{(1)}) \rangle  \\ 
        & = \sum_{(\varphi),(h)} [\hat{S}^{-1}(\varphi^{(2)})\otimes S(h^{(2)})] \langle \varphi^{(1)}(\varepsilon\leftharpoonup y),h^{(3)}\rangle \langle \varphi^{(3)}, S^{-1}(h^{(1)}) \rangle  \\ 
        & = S([\varphi(\varepsilon \leftharpoonup y)\otimes h]). 
    \end{align*}
    
    (iii) $\Delta$ is coassociative and multiplicative: 
    \begin{align*}
        (\Delta \otimes \id)\comp \Delta([\varphi\otimes h]) & = \sum_{(\varphi), (h)} (\Delta \otimes \id)([\varphi^{(2)} \otimes h^{(1)}] \otimes [\varphi^{(1)} \otimes h^{(2)}]) \\ 
        & = \sum_{(\varphi), (h)} [\varphi^{(3)} \otimes h^{(1)}] \otimes [\varphi^{(2)} \otimes h^{(2)}]\otimes [\varphi^{(1)} \otimes h^{(3)}] \\ 
        & = \sum_{(\varphi), (h)} (\id \otimes \Delta)([\varphi^{(2)} \otimes h^{(1)}] \otimes [\varphi^{(1)} \otimes h^{(2)}]) \\
        & = (\id\otimes\Delta)\comp\Delta([\varphi\otimes h]), \\
        \Delta([\varphi\otimes h])\Delta([\psi\otimes g]) & = \sum_{(\varphi),(h)}\sum_{(\psi),(g)} [\varphi^{(2)}\otimes h^{(1)}][\psi^{(2)}\otimes g^{(1)}]\otimes [\varphi^{(1)}\otimes h^{(2)}][\psi^{(1)}\otimes g^{(2)}] \\
        & = \sum_{(\varphi),(h)}\sum_{(\psi),(g)}[\varphi^{(2)}\psi^{(5)}\otimes h^{(2)}g^{(1)}]\langle \psi^{(4)},S^{-1}(h^{(3)})\rangle \langle \psi^{(6)},h^{(1)} \rangle \\
        & \quad \quad \quad \otimes [\varphi^{(1)}\psi^{(2)}\otimes h^{(5)}g^{(2)}]\langle \psi^{(1)},S^{-1}(h^{(6)})\rangle \langle\psi^{(3)},h^{(4)} \rangle \\
        & = \sum_{(\varphi),(h)}\sum_{(\psi),(g)} [\varphi^{(2)}\psi^{(4)}\otimes h^{(2)}g^{(1)}]\langle \psi^{(1)},S^{-1}(h^{(5)})\rangle  \langle \psi^{(5)},h^{(1)} \rangle  \\
        & \quad \quad \quad \otimes [\varphi^{(1)}\psi^{(2)}\otimes h^{(4)}g^{(2)}] \langle \psi^{(3)},S^{-1}(\varepsilon_L(h^{(3)}))\rangle \\
        & = \sum_{(\varphi),(h)}\sum_{(\psi),(g)} [\varphi^{(2)}\psi^{(3)}\otimes h^{(2)}g^{(1)}]\langle \psi^{(1)},S^{-1}(h^{(5)})\rangle  \langle \psi^{(4)},h^{(1)} \rangle  \\
        & \quad \quad \quad \otimes [\varphi^{(1)}(S^{-1}(\varepsilon_L(h^{(3)}))\rightharpoonup\psi^{(2)})\otimes h^{(4)}g^{(2)}] \\
        & = \sum_{(\varphi),(h)}\sum_{(\psi),(g)} [\varphi^{(2)}\psi^{(3)}\otimes h^{(2)}g^{(1)}]\langle \psi^{(1)},S^{-1}(h^{(5)})\rangle  \langle \psi^{(4)},h^{(1)} \rangle  \\
        & \quad \quad \quad \otimes [\varphi^{(1)}\psi^{(2)}(\varepsilon_L(h^{(3)})\rightharpoonup \varepsilon)\otimes h^{(4)}g^{(2)}] \\
        & = \sum_{(\varphi),(h)}\sum_{(\psi),(g)} [\varphi^{(2)}\psi^{(3)}\otimes h^{(2)}g^{(1)}]\langle \psi^{(1)},S^{-1}(h^{(5)})\rangle  \langle \psi^{(4)},h^{(1)} \rangle  \\
        & \quad \quad \quad \otimes [\varphi^{(1)}\psi^{(2)}\otimes \varepsilon_L(h^{(3)})h^{(4)}g^{(2)}] \\
        & = \sum_{(\varphi),(h)}\sum_{(\psi),(g)} [\varphi^{(2)}\psi^{(3)}\otimes h^{(2)}g^{(1)}]\langle \psi^{(1)},S^{-1}(h^{(4)})\rangle  \langle \psi^{(4)},h^{(1)} \rangle  \\
        & \quad \quad \quad \otimes [\varphi^{(1)}\psi^{(2)}\otimes h^{(3)}g^{(2)}] \\
        & = \Delta([\varphi\otimes h][\psi\otimes g]). 
    \end{align*} 
    
   (iv) The counit property holds: 
    \begin{align*}
        (\varepsilon\otimes \id)\Delta([\varphi\otimes h]) & =  \sum_{(\varphi), (h)} \langle \varphi^{(2)},\varepsilon_R(S^{-1}(h^{(1)})) \rangle  [\varphi^{(1)} \otimes h^{(2)}]\\
        & = \sum_{(h)}\,[S^{-1}(\varepsilon_L(h^{(1)}))\rightharpoonup \varphi)\otimes h^{(2)}]  \\
        & = \sum_{(h)}\,[\varphi(\varepsilon_L(h^{(1)})\rightharpoonup \varepsilon)\otimes h^{(2)}] \\
        & = \sum_{(h)}\,[\varphi\otimes \varepsilon_L(h^{(1)})h^{(2)}] = [\varphi\otimes h], \\
        (\id\otimes\,\varepsilon)\Delta([\varphi\otimes h]) & = \sum_{(\varphi), (h)} [\varphi^{(2)} \otimes h^{(1)}] \langle\varphi^{(1)},\varepsilon_R(S^{-1}(h^{(2)})) \rangle \\ 
        & = \sum_{(h)} [(\varphi\leftharpoonup \varepsilon_R(S^{-1}(h^{(2)}))) \otimes h^{(1)}] \\
        & = \sum_{(h)} [\varphi(\varepsilon\leftharpoonup \varepsilon_R(S^{-1}(h^{(2)}))) \otimes h^{(1)}] \\
        & = \sum_{(h)} [\varphi \otimes  \varepsilon_R(S^{-1}(h^{(2)}))h^{(1)}]  = [\varphi\otimes
         h]. 
    \end{align*}

   (v) Weak multiplicativity of the counit: 
    \begin{align*}
        &\quad \varepsilon([\varphi\otimes h][\psi\otimes g][\theta\otimes k]) \\
        & = \sum_{(h),(\psi)} \sum_{(g),(\theta)} \langle \hat{S}^{-1}(\varepsilon_R(\varphi \psi^{(2)}\theta^{(2)})),h^{(3)}g^{(2)}k \rangle  \langle \theta^{(1)},S^{-1}(h^{(4)}g^{(3)})\rangle \\
        & \quad \quad \times \langle \theta^{(3)},h^{(2)}g^{(1)} \rangle  \langle \psi^{(1)},S^{-1}(h^{(5)}) \rangle \langle \psi^{(3)},h^{(1)} \rangle  \\
        & = \sum_{(h),(\psi)} \sum_{(g),(\theta)}\sum_{(\varepsilon)} \langle \varepsilon^{(1)} \hat{S}^{-1}(\varepsilon_R(\varphi \psi^{(2)}\theta^{(2)})),h^{(3)}g^{(2)} \rangle \langle \varepsilon^{(2)},k \rangle  \langle S^{-1}(\theta^{(1)}),h^{(4)}g^{(3)}\rangle \\
        & \quad \quad \times \langle \theta^{(3)},h^{(2)}g^{(1)} \rangle  \langle \psi^{(1)},S^{-1}(h^{(5)}) \rangle \langle \psi^{(3)},h^{(1)} \rangle  \\
        & = \sum_{(h),(\psi)} \sum_{(\theta)}\sum_{(\varepsilon)} \langle \theta^{(3)}\varepsilon^{(1)} \hat{S}^{-1}(\varepsilon_R(\varphi \psi^{(2)}\theta^{(2)}))\hat{S}^{-1}(\theta^{(1)}),h^{(2)}g \rangle \langle \varepsilon^{(2)},k \rangle   \\
        & \quad \quad \times   \langle \psi^{(1)},S^{-1}(h^{(3)}) \rangle \langle \psi^{(3)},h^{(1)} \rangle  \\
        & = \sum_{(h),(\psi)} \sum_{(\varepsilon)} \langle \hat{S}^{-1}(\varepsilon_L(\theta\varepsilon^{(1)}))\hat{S}^{-1}(\varepsilon_R(\varphi\psi^{(2)})),h^{(2)}g \rangle \langle \varepsilon^{(2)},k \rangle   \\
        & \quad \quad \times   \langle \psi^{(1)},S^{-1}(h^{(3)}) \rangle \langle \psi^{(3)},h^{(1)} \rangle  \\
        & = \sum_{(h),(\psi)} \sum_{(g)} \sum_{(\varepsilon),(\varepsilon')}\langle \hat{S}^{-1}(\varepsilon_L(\theta\varepsilon^{(1)}))\varepsilon'^{(2)},g^{(1)} \rangle \langle \hat{S}^{-1}(\varepsilon_R(\varphi\psi^{(2)})),h^{(2)}g^{(2)} \rangle    \\
        & \quad \quad \times  \langle \varepsilon^{(2)},k \rangle \langle \psi^{(1)},S^{-1}(h^{(3)}) \rangle \langle \psi^{(3)}\varepsilon'^{(1)},h^{(1)} \rangle  \\
        & = \sum_{(h),(\psi)} \sum_{(\theta),(g)} \sum_{(\varepsilon)}\langle \theta^{(3)}\varepsilon^{(1)}\hat{S}^{-1}(\varepsilon_R(\psi^{(4)}\theta^{(2)}))\hat{S}^{-1}(\theta^{(1)}),g^{(1)} \rangle \langle \hat{S}^{-1}(\varepsilon_R(\varphi\psi^{(2)})),h^{(2)}g^{(2)} \rangle    \\
        & \quad \quad \times  \langle \varepsilon^{(2)},k \rangle \langle \psi^{(1)},S^{-1}(h^{(3)}) \rangle \langle \psi^{(3)},h^{(1)} \rangle  \\
        & = \sum_{(h),(\psi)} \sum_{(\theta),(g)} \langle \hat{S}^{-1}(\varepsilon_R(\varphi\psi^{(2)})),h^{(2)}g^{(4)} \rangle   \langle \psi^{(1)},S^{-1}(h^{(3)}) \rangle \langle \psi^{(3)},h^{(1)} \rangle  \\
        & \quad \quad \times   \langle \hat{S}^{-1}(\varepsilon_R(\psi^{(4)}\theta^{(2)})),g^{(2)}k \rangle  \langle \theta^{(1)},S^{-1}(g^{(3)}) \rangle \langle \theta^{(3)},g^{(1)}\rangle \\
        & = \sum_{(\psi),(g)}\varepsilon([\varphi\otimes h][\psi^{(1)}\otimes g^{(2)}])\varepsilon([\psi^{(2)}\otimes g^{(1)}][\theta\otimes k]). 
    \end{align*}
    Similarly, one can show the other identity for weak multiplicativity of the counit. 

    (vi) Weak comultiplicativity of the unit: 
    \begin{align*}
        &\quad (\Delta([\varepsilon\otimes 1_W])\otimes[\varepsilon\otimes 1_W])([\varepsilon\otimes 1_W]\otimes\Delta([\varepsilon\otimes 1_W])) \\
        & = \sum_{(1_W),(\varepsilon)}\sum_{(1'_W),(\varepsilon')}[\varepsilon^{(2)}\otimes 1_W^{(1)}]\otimes [\varepsilon^{(1)}\otimes 1_W^{(2)}][\varepsilon'^{(2)}\otimes 1_W'^{(1)}]\otimes [\varepsilon'^{(1)}\otimes 1_W'^{(2)}] \\
        & = \sum_{(1_W),(\varepsilon)}\sum_{(1'_W),(\varepsilon')}[\varepsilon^{(2)}\otimes 1_W^{(1)}]\otimes [\varepsilon^{(1)}\varepsilon'^{(2)}\otimes 1_W^{(2)}1_W'^{(1)}]\otimes  [\varepsilon'^{(1)}\otimes 1_W'^{(2)}] \\ 
        & = \sum_{(1_W),(\varepsilon)}[\varepsilon^{(3)}\otimes 1_W^{(1)}]\otimes [\varepsilon^{(2)}\otimes 1_W^{(2)}]\otimes [\varepsilon^{(1)},1_W^{(3)}] \\
        & = ((\Delta\otimes \id)\comp\Delta)([\varepsilon\otimes 1_W]). 
    \end{align*}
    Similarly, one can show the other identity for weak comultiplicativity of the unit. 

    (vii) The above computation shows that $D(W)$ is a weak bialgebra. In order to prove it is a weak Hopf algebra, let us compute the left counit:
    \begin{align}
        &\quad \varepsilon_L([\varphi\otimes h]) \nonumber  \\
        & = \sum_{(1_W),(\varepsilon)}\varepsilon([\varepsilon^{(2)}\otimes 1_W^{(1)}][\varphi\otimes h])[\varepsilon^{(1)}\otimes 1_W^{(2)}] \nonumber  \\
        &= \sum_{(1_W),(\varepsilon)} \sum_{(\varphi)}\langle \varepsilon^{(2)}\varphi^{(2)},\varepsilon_R(S^{-1}(1_W^{(2)}h)) \rangle \langle \varphi^{(1)},S^{-1}(1_W^{(3)}) \rangle \langle \varphi^{(3)},1_W^{(1)} \rangle  [\varepsilon^{(1)}\otimes 1_W^{(4)}]  \nonumber  \\
        &= \sum_{(1_W),(\varepsilon)} \sum_{(\varphi)}\sum_{(1_W')}\langle \varepsilon^{(2)},1_W'^{(1)}\rangle \langle \varphi^{(2)},\varepsilon_R(S^{-1}(1_W^{(2)}h)) 1_W'^{(2)} \rangle  \nonumber  \\
        & \quad \quad \times \langle \varphi^{(1)},S^{-1}(1_W^{(3)}) \rangle \langle \varphi^{(3)},1_W^{(1)} \rangle  [\varepsilon^{(1)}\otimes 1_W^{(4)}] \nonumber  \\
        & = \sum_{(1_W),(\varepsilon)} \sum_{(1_W')}\langle \varepsilon^{(2)},1_W'^{(1)}\rangle \langle \varphi,S^{-1}(1_W^{(3)})\varepsilon_R(S^{-1}(1_W^{(2)}h)) 1_W'^{(2)} 1_W^{(1)}\rangle [\varepsilon^{(1)}\otimes 1_W^{(4)}]  \nonumber  \\
        & = \sum_{(1_W),(\varepsilon)} \sum_{(1_W')}\langle \varepsilon^{(2)},1_W'^{(1)}\rangle \langle \varphi,S^{-1}(\varepsilon_R(1_W^{(3)}))S^{-1}(\varepsilon_L(h))S^{-1}(1_W^{(2)}) 1_W'^{(2)} 1_W^{(1)}\rangle [\varepsilon^{(1)}\otimes 1_W^{(4)}]  \nonumber  \\
        & = \sum_{(1_W)} \sum_{(1_W')} \langle \varphi,S^{-1}(\varepsilon_L(h))S^{-1}(S(1_W'^{(2)})1_W^{(2)})  1_W^{(1)}\rangle [\varepsilon\otimes S(1_W'^{(1)})1_W^{(3)}]  \nonumber  \\
        & = \sum_{(1_W)}  \langle \varphi,S^{-1}(\varepsilon_L(h))S^{-1}(1_W^{(2)})  1_W^{(1)}\rangle [\varepsilon\otimes 1_W^{(3)}] \nonumber   \\
        & =\sum_{(1_W)} \langle \varphi,S^{-1}(\varepsilon_L(h))S^{-1}(1_W^{(1)}) \rangle [\varepsilon\otimes 1_W^{(2)}]. \label{eq:left-counit-double}
    \end{align}
    Therefore, we have 
    \begin{align*}
        &\quad \sum_{(\varphi),(h)}[\varphi^{(2)}\otimes h^{(1)}]S([\varphi^{(1)}\otimes h^{(2)}]) \\
        & = \sum_{(\varphi),(h)}[\varphi^{(6)}\hat{S}^{-1}(\varphi^{(3)})\otimes h^{(2)}S(h^{(5)})] \langle \hat{S}^{-1}(\varphi^{(4)}),S^{-1}(h^{(3)}) \rangle\\
        & \quad \quad \times \langle \hat{S}^{-1}(\varphi^{(2)}),h^{(1)} \rangle  \langle \varphi^{(1)},h^{(6)} \rangle \langle \varphi^{(5)}, S^{-1}(h^{(4)}) \rangle \\
        & = \sum_{(\varphi),(h)}[\varphi^{(4)}\hat{S}^{-1}(\varphi^{(2)})\otimes h^{(2)}S(h^{(4)})] \langle \hat{S}^{-1}(\varphi^{(3)}),S^{-1}(\varepsilon_L(h^{(3)})) \rangle \\
        & \quad \quad \times \langle \varphi^{(1)},h^{(5)}S^{-1}(h^{(1)}) \rangle \\
        & = \sum_{(\varphi),(h)}\sum_{(1_W)}[\varphi^{(4)}\hat{S}^{-1}(\varphi^{(2)})\otimes 1_W^{(1)}\varepsilon_L(h^{(2)})] \langle \hat{S}^{-1}(\varphi^{(3)}),S^{-1}(1_W^{(2)}) \rangle \\
        & \quad \quad \times \langle \varphi^{(1)},S^{-1}(h^{(1)}S(h^{(3)})) \rangle \\
        & = \sum_{(\varphi)}\sum_{(1_W),(1_W')}[\varphi^{(4)}\hat{S}^{-1}(\varphi^{(2)})\otimes 1_W^{(1)}1_W'^{(2)}] \langle \hat{S}^{-1}(\varphi^{(3)}),S^{-1}(1_W^{(2)}) \rangle \\
        & \quad\quad \times \langle \varphi^{(1)},S^{-1}(1_W'^{(1)}\varepsilon_L(h)) \rangle \\
        & = \sum_{(\varphi)}\sum_{(1_W),(1_W')}[\varphi^{(3)}\hat{S}^{-1}(\varphi^{(2)})(\varepsilon\leftharpoonup S^{-1}(1_W^{(2)}))\otimes 1_W^{(1)}1_W'^{(2)}]  \\
        & \quad\quad \times \langle \varphi^{(1)},S^{-1}(1_W'^{(1)}\varepsilon_L(h)) \rangle \\
        & = \sum_{(\varphi)}\sum_{(1_W')}[\varphi^{(3)}\hat{S}^{-1}(\varphi^{(2)})\otimes 1_W'^{(2)}] \langle \varphi^{(1)},S^{-1}(1_W'^{(1)}\varepsilon_L(h)) \rangle \\
        & =\sum_{(1_W')} \langle \varphi,S^{-1}(\varepsilon_L(h))S^{-1}(1_W'^{(1)}) \rangle [\varepsilon\otimes 1_W'^{(2)}] = \varepsilon_L([\varphi\otimes h]).
    \end{align*}
    Similarly, one can show that 
    \begin{align}
        \varepsilon_R([\varphi\otimes h]) &= \sum_{(\varepsilon)}\langle \varepsilon^{(1)}\hat{S}^{-1}(\varepsilon_R(\varphi)),h\rangle [\varepsilon^{(2)}\otimes 1_W], \label{eq:right-counit-double}\\
        \text{and}\quad \varepsilon_R([\varphi\otimes &h])=\sum_{(\varphi),(h)}S([\varphi^{(2)}\otimes h^{(1)}])[\varphi^{(1)}\otimes h^{(2)}].
    \end{align}
    Finally, we compute 
    \begin{align*}
        & \quad \mu\comp (\id\otimes \mu)\comp(S\otimes \id\otimes S)\comp(\id\otimes\Delta)\comp\Delta([\varphi\otimes h]) \\
        & = \mu\comp (S\otimes \varepsilon_L)\comp\Delta([\varphi\otimes h]) \\
        & = \sum_{(\varphi),(h)} S([\varphi^{(2)}\otimes h^{(1)}])\varepsilon_L([\varphi^{(1)}\otimes h^{(2)}]) \\
        & = \sum_{(\varphi),(h)} \sum_{(1_W)} [\hat{S}^{-1}(\varphi^{(3)})\otimes S(h^{(2)})]\langle \varphi^{(2)},h^{(3)} \rangle \langle \varphi^{(4)},S^{-1}(h^{(1)}) \rangle \\
        & \quad \quad \times \langle \varphi^{(1)},S^{-1}(\varepsilon_L(h^{(4)}))S^{-1}(1_W^{(1)}) \rangle [\varepsilon\otimes 1_W^{(2)}] \\
        & = \sum_{(\varphi),(h)} \sum_{(1_W)} [\hat{S}^{-1}(\varphi^{(2)})\otimes S(h^{(2)})1_W^{(2)}]\langle \varphi^{(3)},S^{-1}(h^{(1)}) \rangle \langle \varphi^{(1)},S^{-1}(1_W^{(1)})h^{(3)} \rangle \\
        & = \sum_{(\varphi),(h)} [\hat{S}^{-1}(\varphi^{(2)})\otimes S(h^{(2)})]\langle \varphi^{(3)},S^{-1}(h^{(1)}) \rangle \langle \varphi^{(1)},h^{(3)} \rangle= S([\varphi\otimes h]),
    \end{align*}
    as desired. Hence we finish the proof that $D(W)$ is a weak Hopf algebra.

The following result has been shown in \cite{majid2000foundations} for Hopf algebras and in \cite{nikshych2003invariants} for weak Hopf algebras (note that they used a different (yet equivalent) form of the quantum double). We include proof for the form of the quantum double we used here for completeness. 

\begin{proposition}\label{prop:quasitriangular-QD}
    Let $W$ be a weak Hopf algebra. Then the quantum double $D(W)$ in Definition \ref{def:quantum-double} has a quasitriangular structure given by 
    \begin{equation}
        R = \sum_i [\varepsilon\otimes x_i]\otimes[x^i\otimes 1_W], \quad \tilde{R} = \sum_j[\varepsilon\otimes S(x_j)]\otimes[x^j\otimes 1_W],
    \end{equation}
    where $\{x_i\}$ and $\{x^i\}$ are dual bases of $W$ and $\hat{W}$. 
\end{proposition}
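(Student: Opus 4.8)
The plan is to verify directly that the proposed $R = \sum_i [\varepsilon\otimes x_i]\otimes[x^i\otimes 1_W]$ satisfies the four axioms of a quasitriangular structure from the definition given in Section~\ref{sec:WHAsymmetry}. The central technical device is the ``duality sum'' identity: for dual bases $\{x_i\}$ of $W$ and $\{x^i\}$ of $\hat W$, any $h\in W$ and $\varphi\in\hat W$ decompose as $h = \sum_i \langle x^i,h\rangle x_i$ and $\varphi = \sum_i \langle \varphi,x_i\rangle x^i$, and crucially $\sum_i (a\rightharpoonup x^i \leftharpoonup b)\otimes x_i = \sum_i x^i\otimes (b\, x_i\, a)$ type rearrangements let one move the action of weak Hopf elements between the two legs. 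First I would record these basis-manipulation lemmas explicitly, together with the straightening formula $h\varphi = \sum_{(h)}\varphi(S^{-1}(h^{(3)})\bullet h^{(1)})h^{(2)}$ and the embeddings $i_{\hat W^{\mathrm{cop}}}$, $i_W$, since every computation will repeatedly pass between the factors $[\varepsilon\otimes h]$ and $[\varphi\otimes 1_W]$ using $[\varphi\otimes 1_W][\varepsilon\otimes h]=[\varphi\otimes h]$.

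Next I would establish the two ``hexagon'' axioms $(\id\otimes\Delta)(R)=R_{13}R_{12}$ and $(\Delta\otimes\id)(R)=R_{13}R_{23}$. For the first, I apply the comultiplication of Definition~\ref{def:quantum-double}, namely $\Delta([x^i\otimes 1_W])=\sum_{(x^i)}[x^{i(2)}\otimes 1_W]\otimes[x^{i(1)}\otimes 1_W]$, and compare with the product $R_{13}R_{12}$ computed in $D(W)^{\otimes 3}$; the matching reduces to the statement that $\hat\Delta(x^i)=\sum_j x^j\otimes x^k$ paired appropriately against $\Delta$ on $W$, which is exactly the compatibility of the dual bases with comultiplication. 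The second hexagon is handled symmetrically, using $\Delta([\varepsilon\otimes x_i])=\sum_{(x_i)}[\varepsilon\otimes x_i^{(1)}]\otimes[\varepsilon\otimes x_i^{(2)}]$ and the product $R_{13}R_{23}$. I would carry out one of these in moderate detail and indicate that the other is obtained by the evident left–right/charge–flux symmetry.

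The quasi-cocommutativity axiom $\Delta^{\mathrm{op}}(a)R = R\,\Delta(a)$ I would verify separately on the two generating subalgebras $a=[\varphi\otimes 1_W]$ and $a=[\varepsilon\otimes h]$, since these generate $D(W)$; here the straightening formula does the real work, converting the commutation of $x_i$ past $\varphi$ (and of $x^i$ past $h$) into the intertwining relation, and the dual-basis completeness repackages the result as $R\,\Delta(a)$. Finally, for the invertibility axiom I would check that the candidate $\tilde R=\sum_j[\varepsilon\otimes S(x_j)]\otimes[x^j\otimes 1_W]$ satisfies $\tilde R R=\Delta(1)$ and $R\tilde R=\Delta^{\mathrm{op}}(1)$; the key input is the antipode identity $\mu\comp(S\otimes\id)\comp\Delta=\varepsilon_R$ together with $\Delta(1_W)\in W_R\otimes W_L$, which is precisely what forces the products to collapse to $\Delta(1)$ rather than to $1\otimes 1$ (this is the genuinely weak, as opposed to ordinary Hopf, feature).

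The step I expect to be the main obstacle is this last invertibility computation: in the weak Hopf setting $R$ lives in $\Delta^{\mathrm{op}}(1)(W\otimes W)\Delta(1)$ rather than all of $W\otimes W$, so one must first confirm that the stated $R$ genuinely lies in this subspace (equivalently that $\Delta(1)$ acts as a two-sided identity on it), and then track the counital projections $\varepsilon_L,\varepsilon_R$ through the sums without accidentally replacing them by the full counit. I would isolate the needed identities $\sum_i \varepsilon_R(x_i)\otimes x^i$-type evaluations as a preliminary lemma so that the verification of $\tilde R R=\Delta(1)$ becomes a short application rather than an opaque manipulation.
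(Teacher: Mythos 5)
Your plan follows essentially the same route as the paper's proof in Appendix A: all four axioms are verified directly with dual-basis manipulations, the two hexagon identities reducing to the compatibility of $\{x_i\},\{x^i\}$ with (co)multiplication under the canonical pairing, and the invertibility reducing to the identity $\sum_{i,j}x^ix^j\otimes S(x_i)x_j=\sum S(h^{(1)})h^{(2)}$-type collapses governed by $\varepsilon_R$ and $\Delta(1_W)\in W_R\otimes W_L$. The one tactical difference is that you verify $\Delta^{\rm op}(a)R=R\Delta(a)$ on the generating subalgebras $[\hat W^{\rm cop}\otimes 1_W]$ and $[\varepsilon\otimes W]$ and propagate by multiplicativity of $\Delta$, whereas the paper performs a single direct computation on a general $[\varphi\otimes h]$ via the straightening formula; both are valid, and your explicit check that $R$ lies in $\Delta^{\rm op}(1)(D(W)\otimes D(W))\Delta(1)$ addresses a point the paper's proof leaves implicit.
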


\begin{proof}
    With the identifications $W\simeq (\varepsilon \otimes W)/J$ and $\hat{W}^{\rm cop}\simeq (\hat{W}^{\rm cop}\otimes 1_W)/J$, the identities $(\id \otimes \Delta)(R)=R_{13}R_{12}$ and $(\Delta \otimes \id)(R)=R_{13}R_{23}$ are equivalent to 
    \begin{align}
        &\sum_i\sum_{(x^i)}x_i\otimes x^{i,(2)} \otimes x^{i,(1)} = \sum_{i,j} x_ix_j\otimes x^j\otimes x^i, \label{eq:R1-eva} \\
        &\sum_i\sum_{(x_i)}x_i^{(1)}\otimes x_i^{(2)}\otimes x^i  = \sum_{i,j} x_i\otimes x_j \otimes x^ix^j. 
    \end{align}
    We prove the first one and the second one is similarly proved. To this end, by evaluating both sides on an element $\varphi\in \hat{W}$ in the first factor, we have 
    \begin{equation}
        \begin{aligned}
        \text{LHS} & = \sum_i\sum_{(x^i)}\langle \varphi, x_i \rangle x^{i,(2)}\otimes x^{i,(1)} = \sum_i \Delta^{\rm op}(\langle \varphi,x_i \rangle x^i) = \Delta^{\rm op}(\varphi), \\
        \text{RHS} & = \sum_{i,j} \langle \varphi,x_ix_j\rangle x^j\otimes x^i = \sum_{i,j}\sum_{(\varphi)}\langle \varphi^{(2)},x_j\rangle x^j \otimes \langle \varphi^{(1)},x_i \rangle x^i =\Delta^{\rm op}(\varphi),  
    \end{aligned}
    \end{equation}
    where we have used $\varphi = \sum_i\langle\varphi,x_i\rangle x^i$. By evaluating both sides on general elements $a\otimes b\in W\otimes W$ in the second and third factors also gives the same result, hence Eq.~\eqref{eq:R1-eva} holds. Now we show $\Delta^{\rm op}([\varphi\otimes h])R=R\Delta([\varphi\otimes h])$ for all $[\varphi\otimes h]\in D(W)$. In fact, using $h = \sum_i\langle x^i,h\rangle x_i$ and $\varphi = \sum_i\langle\varphi,x_i\rangle x^i$, we get
    \begin{align}
        &\quad R(\Delta[\varphi\otimes h]) \nonumber \\
        & = \sum_{i,(x_i)}\sum_{(h),(\varphi)} [\varphi^{(3)}\otimes x_i^{(2)}h^{(1)}]\otimes [x^i\varphi^{(1)}\otimes h^{(2)}]\langle\varphi^{(2)},S^{-1}(x_i^{(3)}) \rangle \langle\varphi^{(4)},x_i^{(1)} \rangle \nonumber \\
        & = \sum_{i,(x_i)}\sum_{(h),(\varphi)}\sum_j\, [\varphi^{(3)}\otimes \langle x^j,x_i^{(2)}\rangle x_jh^{(1)}]\otimes [x^i\varphi^{(1)}\otimes h^{(2)}]\langle\varphi^{(2)},S^{-1}(x_i^{(3)}) \rangle \langle\varphi^{(4)},x_i^{(1)} \rangle \nonumber \\
        & = \sum_{(h),(\varphi)} \sum_j\,[\varphi^{(3)}\otimes x_jh^{(1)}]\otimes [\varphi^{(4)}x^jS^{-1}(\varphi^{(2)})\varphi^{(1)}\otimes h^{(2)}]\nonumber \\
        & =  \sum_{(h),(\varphi)}\sum_{(1_W)}\sum_j\,[\varphi^{(2)}\otimes x_jh^{(1)}]\otimes [\varphi^{(3)}x^j(\varepsilon \leftharpoonup 1_W^{(1)})\langle S^{-1}(\varepsilon_R(\varphi^{(1)})),1_W^{(2)} \rangle \otimes h^{(2)}]\nonumber \\
        & =  \sum_{(h),(\varphi)}\sum_{(1_W)}\sum_j\,[\varphi^{(2)}\otimes x_jh^{(1)}]\otimes [\varphi^{(3)}x^j \otimes\langle S^{-1}(\varepsilon_R(\varphi^{(1)})),1_W^{(2)} \rangle 1_W^{(1)}h^{(2)}]\nonumber \\
        & =  \sum_{(h),(\varphi)}\sum_{(1_W),(\varepsilon)}\sum_j\,[\varphi^{(1)}\varepsilon^{(2)}\otimes x_jh^{(1)}]\otimes [\varphi^{(2)}x^j \otimes\langle S^{-1}(\varepsilon^{(1)}),1_W^{(2)} \rangle 1_W^{(1)}h^{(2)}] \\
        & =  \sum_{(h),(\varphi)}\sum_{(\varepsilon)}\sum_j\,[\varphi^{(1)}\varepsilon^{(2)}\otimes x_jh^{(1)}]\otimes [\varphi^{(2)}x^j \otimes\langle S^{-1}(\varepsilon^{(1)}),\varepsilon_L(h^{(3)}) \rangle h^{(2)}]\nonumber \\
        & =  \sum_{(h),(\varphi)}\sum_j\,[\varphi^{(1)}(\varepsilon\leftharpoonup S^{-1}(\varepsilon_L(h^{(3)})))\otimes x_jh^{(1)}]\otimes [\varphi^{(2)}x^j \otimes h^{(2)}]\nonumber \\
        & =  \sum_{(h),(\varphi)}\sum_j\sum_i\,[\varphi^{(1)}\otimes h^{(4)}\langle x^i,S^{-1}(h^{(3)}) x_jh^{(1)}\rangle x_i]\otimes [\varphi^{(2)}x^j \otimes h^{(2)}]\nonumber \\
        & =  \sum_{(h),(\varphi)}\sum_{i,(x^i)}\sum_j\,[\varphi^{(1)}\otimes h^{(4)}x_i]\otimes [\varphi^{(2)}\langle x^{i,(2)},x_j\rangle x^j \otimes h^{(2)}]\langle x^{i,(1)},S^{-1}(h^{(3)})\rangle \langle x^{i,(3)},h^{(1)}\rangle\nonumber \\
        & = \sum_{(h),(\varphi)} \sum_{i,(x^i)}[\varphi^{(1)}\otimes h^{(4)}x_i]\otimes [\varphi^{(2)}x^{i,(2)}\otimes h^{(2)}] \langle x^{i,(1)}, S^{-1}(h^{(3)}) \rangle \langle x^{i,(3)}, h^{(1)}\rangle \nonumber \\
        & = \Delta^{\rm op}([\varphi\otimes h])R, \nonumber
    \end{align}
    where in the sixth equality, we used $\sum_{(\varphi)}\varepsilon_R(\varphi^{(1)})\otimes \varphi^{(2)} = \sum_{(\varepsilon)}\varepsilon^{(1)}\otimes\varphi\varepsilon^{(2)}$; in the seventh equality, we used $\sum_{(h)}h^{(1)}\otimes \varepsilon_L(h^{(2)}) = \sum_{(1_W)}1_W^{(1)}h\otimes 1_W^{(2)}$. It remains to show that the element $\tilde{R}$ satisfies $\tilde{R}R = \Delta(1_{D(W)})$ and $R\tilde{R} = \Delta^{\rm op}(1_{D(W)})$. The first identity is equivalent to 
    \begin{equation}
        \begin{aligned}
            &\quad\sum_{i,j} \,[\varepsilon\otimes S(x_i)x_j]\otimes [x^ix^j\otimes 1_W] \\
            &= \sum_{(\varepsilon),(1_W)}\sum_{(\varepsilon'),(1'_W)} [\varepsilon\otimes \langle \varepsilon^{(2)},1_W'^{(2)}\rangle 1_W'^{(1)}1_W^{(1)} ]
            \otimes  [\langle \varepsilon'^{(2)}, 1_W^{(2)} \rangle \varepsilon^{(1)}\varepsilon'^{(1)}\otimes 1_W],
        \end{aligned}
    \end{equation}
    which is equivalent to 
    \begin{equation}
        \quad\sum_{i,j} \,x^ix^j\otimes S(x_i)x_j= \sum_{(\varepsilon),(1_W)}\sum_{(\varepsilon'),(1'_W)}\langle \varepsilon'^{(2)}, 1_W^{(2)} \rangle \varepsilon^{(1)}\varepsilon'^{(1)} \otimes  \langle \varepsilon^{(2)},1_W'^{(2)}\rangle 1_W'^{(1)}1_W^{(1)}   \label{eq:equi-equa}
    \end{equation}
    when regarded as an equality in $\hat{W}^{\rm cop}\otimes W$. To show this equality, evaluating both sides for a general $h\in W$ in the first factor, one has 
    \begin{align}
        \text{RHS} & = \sum_{(\varepsilon),(1_W)}\sum_{(\varepsilon'),(1'_W)}\langle \varepsilon'^{(2)}, 1_W^{(2)} \rangle \langle \varepsilon^{(1)}\varepsilon'^{(1)}, h \rangle  \langle \varepsilon^{(2)},1_W'^{(2)}\rangle 1_W'^{(1)}1_W^{(1)} \nonumber  \\
        & = \sum_{(1'_W)}    1_W'^{(1)}\langle \varepsilon,h^{(1)}1_W'^{(2)}\rangle \sum_{(1_W)}1_W^{(1)}\langle \varepsilon', h^{(2)}1_W^{(2)} \rangle  \\
        & = \varepsilon_R(h^{(1)})\varepsilon_R(h^{(2)}) = \sum_{(h)} S(h^{(1)})h^{(2)} = \text{LHS}. \nonumber 
    \end{align}
    When evaluating both sides for a general element $\varphi\in\hat{W}$ in the second factor, the results are still the same. Thus Eq.~\eqref{eq:equi-equa} holds. The second identity is proved similarly. 
\end{proof}


\section{Proofs of Lemma \ref{lem:ribbon-local-ope} and Lemma \ref{lem:comm-rel}} \label{app:ribbon}

\subsection*{Proof of Lemma \ref{lem:ribbon-local-ope}.} 

First note that the dual space $D(W)^\vee$ can be identified with a subspace of $(\hat{W}^{\rm cop}\otimes W)^\vee\simeq W^{\rm op}\otimes \hat{W}$ consisting of elements $g\otimes\psi\in W^{\rm op}\otimes \hat{W}$ which vanish on the ideal $J\subset \hat{W}^{\rm cop}\otimes W$, i.e., $g\otimes \psi\in D(W)^\vee$ if and only if 
\begin{equation} \label{eq:aaa} 
    \varphi(g)\psi(xh)  = \varphi(x\rightharpoonup \varepsilon)(g)\psi(h), \quad
    \varphi(g)\psi(yh)  = \varphi(\varepsilon\leftharpoonup y)(g)\psi(h),  
\end{equation}
for any $x\in W_L,\, y\in W_R$ and $h\in W,\, \varphi\in\hat{W}$. 

First consider the case $\rho = \tau_R^+$: 
\begin{equation*}
    \begin{tikzpicture}
        \draw[line width = 0.5pt, red] (0.75,0.75) -- (0,0);
        \draw[line width = 0.5pt, red] (0.75,0.75) -- (1.5,0);
        \draw[-latex,black] (0,0) -- (1.5,0);
        \draw[-latex,black] (0,1.5) -- (0,0);
        \draw[-latex,black] (1.5,0) -- (1.5,1.5);
        \draw[-latex,black] (1.5,1.5) -- (0,1.5);
        \draw[-latex,black] (0,0) -- (-1.5,0);
        \draw[-latex,black] (0,-1.5) -- (0,0);
        \draw[-stealth,gray, line width=2pt] (0.5,0.25) -- (1,0.25); 
        \node[ line width=0.2pt, dashed, draw opacity=0.5] (a) at (-0.3,0.75){$x_4$};
        \node[ line width=0.2pt, dashed, draw opacity=0.5] (a) at (1.8,0.75){$x_2$};
        \node[ line width=0.2pt, dashed, draw opacity=0.5] (a) at (0.8,1.75){$x_3$};
        \node[ line width=0.2pt, dashed, draw opacity=0.5] (a) at (0.8,-0.25){$x_1$};
        \node[ line width=0.2pt, dashed, draw opacity=0.5] (a) at (0.25,0.58){$s$};
        \node[ line width=0.2pt, dashed, draw opacity=0.5] (a) at (-0.75,-0.2){$x_5$};
        \node[ line width=0.2pt, dashed, draw opacity=0.5] (a) at (0.3,-0.85){$x_6$};
        \node[ line width=0.2pt, dashed, draw opacity=0.5] (a) at (5,0){$F^{g,\psi}(\tau_R^+)|x_1\rangle = \varepsilon(g)T_-^\psi|x_1\rangle$};
    \end{tikzpicture}
\end{equation*}
In this case, one has 
\begin{align*}
    &\quad F^{xg,\psi}(\rho)A^h(s)|x_4,x_5,x_6,x_1\rangle \\
    & = \sum_{(h)}\sum_{(x_1)} F^{xg,\psi}(\rho)|h^{(1)}x_4,x_5S^{-1}(h^{(2)}),h^{(3)}x_6,x_1S^{-1}(h^{(4)})\rangle \\
    & = \sum_{(h)} \sum_{(x_1)}\varepsilon(xg)\psi(S(x_1^{(1)}S^{-1}(h^{(5)})))|h^{(1)}x_4,x_5S^{-1}(h^{(2)}),h^{(3)}x_6,x_1^{(2)}S^{-1}(h^{(4)})\rangle \\
    & = \sum_{(h)} \sum_{(x_1)}\varepsilon(S(x)g)\psi(S(x_1^{(1)}S^{-1}(h^{(5)})))|h^{(1)}x_4,x_5S^{-1}(h^{(2)}),h^{(3)}x_6,x_1^{(2)}S^{-1}(h^{(4)})\rangle \\
    & = \sum_{(h)} \sum_{(x_1)}\varepsilon(g)\psi(S(x_1^{(1)}S^{-1}(h^{(5)})x))|h^{(1)}x_4,x_5S^{-1}(h^{(2)}),h^{(3)}x_6,x_1^{(2)}S^{-1}(h^{(4)})\rangle \\
    & = \sum_{(h)} F^{g,\psi}(\rho)|h^{(1)}x_4,x_5S^{-1}(h^{(2)}),h^{(3)}x_6,x_1S^{-1}(S(x)h^{(4)})\rangle \\
    & = F^{g,\psi}(\rho)A^{S(x)h}(s)|x_4,x_5,x_6,x_1\rangle,
\end{align*}
and
\begin{align*}
    &\quad F^{g,\psi}(\rho)B^{\varphi\leftharpoonup x}(s) |x_1,x_2,x_3,x_4\rangle \\
    & = \sum_{(x_i)}F^{g,\psi}(\rho) \varphi(xS(x_1^{(1)})S(x_2^{(1)})S(x_3^{(1)})S(x_4^{(1)}))|x_1^{(2)},x_2^{(2)},x_3^{(2)},x_4^{(2)} \rangle \\
    & = \sum_{(x_i)}F^{g,\psi}(\rho) \varphi(S(x_1^{(1)})S(x_2^{(1)})S(x_3^{(1)})S(x_4^{(1)}))|x_1^{(2)}x,x_2^{(2)},x_3^{(2)},x_4^{(2)} \rangle \\
    & = \sum_{(x_i)}\varepsilon(g)\psi(S(x_1^{(2)}x)) \varphi(S(x_1^{(1)})S(x_2^{(1)})S(x_3^{(1)})S(x_4^{(1)}))|x_1^{(3)},x_2^{(2)},x_3^{(2)},x_4^{(2)} \rangle \\
    & = \sum_{(x_i)}\varepsilon(S(x)g)\psi(S(x_1^{(2)})) \varphi(S(x_1^{(1)})S(x_2^{(1)})S(x_3^{(1)})S(x_4^{(1)}))|x_1^{(3)},x_2^{(2)},x_3^{(2)},x_4^{(2)} \rangle \\
    & = \sum_{(x_i)}\varepsilon(xg)\psi(S(x_1^{(2)})) \varphi(S(x_1^{(1)})S(x_2^{(1)})S(x_3^{(1)})S(x_4^{(1)}))|x_1^{(3)},x_2^{(2)},x_3^{(2)},x_4^{(2)} \rangle \\
    & = \sum_{(x_i)}F^{xg,\psi}(\rho) \varphi(S(x_1^{(1)})S(x_2^{(1)})S(x_3^{(1)})S(x_4^{(1)}))|x_1^{(2)},x_2^{(2)},x_3^{(2)},x_4^{(2)} \rangle \\
    & = F^{xg,\psi}(\rho)B^\varphi(s)|x_1,x_2,x_3,x_4\rangle.
\end{align*}
The general cases follow from the recursive formula of ribbon operators:
\begin{align*}
    F^{xg,\varphi}(\rho)A^h(s) & = \sum F^{xg^{(1)},\hat{k}}(\rho_1)F^{S^{-1}(k^{(3)})g^{(2)}k^{(1)},\psi(k^{(2)}\bullet)}(\rho_2)A^h(s) \\ 
    & = \sum F^{xg^{(1)},\hat{k}}(\rho_1)A^h(s)F^{S^{-1}(k^{(3)})g^{(2)}k^{(1)},\psi(k^{(2)}\bullet)}(\rho_2) \\ 
    & = \sum F^{g^{(1)},\hat{k}}(\rho_1)A^{S(x)h}(s)F^{S^{-1}(k^{(3)})g^{(2)}k^{(1)},\psi(k^{(2)}\bullet)}(\rho_2) \\ 
    & = \sum F^{g^{(1)},\hat{k}}(\rho_1)F^{S^{-1}(k^{(3)})g^{(2)}k^{(1)},\varphi(k^{(2)}\bullet)}(\rho_2) A^{S(x)h}(s)\\ 
    & = F^{g,\psi}(\rho)A^{S(x)h}(s). 
\end{align*}
The other identity is verified similarly. Thus this finishes the proof of Lemma \ref{lem:ribbon-local-ope}.  \qed

\subsection*{Proof of Lemma \ref{lem:comm-rel}.}

Although we are in the context of weak Hopf algebras, the proof of Lemma \ref{lem:comm-rel} is essentially the same as that in Ref.~\cite{chen2021ribbon}. Following the technique there, we will only prove the identities \eqref{eq:comm4} and \eqref{eq:comm5} for an example to show how to deal with the computation in our setting. 

$\bullet$ \eqref{eq:comm4} for short ribbons:
    \begin{equation*}
        \begin{tikzpicture}
            \draw[line width = 0.5pt, red] (1,1) -- (0,0);
            \draw[line width = 0.5pt, red] (1,1) -- (2,0);
            \draw[line width = 0.5pt, red] (2,0) -- (3,1);
            \draw[line width = 0.5pt, dashed, black] (1,1) -- (3,1);
            \draw[-latex,black] (0,0) -- (2,0);
            \draw[-latex,black] (0,0) -- (0,2);
            \draw[-latex,black] (2,0) -- (2,2);
            \draw[-latex,black] (0,2) -- (2,2);
            \draw[-stealth,gray, line width=3pt] (1.2,0.53) -- (1.85,0.53); 
            \node[ line width=0.2pt, dashed, draw opacity=0.5] (a) at (-0.3,1){$x_4$};
            \node[ line width=0.2pt, dashed, draw opacity=0.5] (a) at (2.3,1.3){$x_2$};
            \node[ line width=0.2pt, dashed, draw opacity=0.5] (a) at (1,2.25){$x_3$};
            \node[ line width=0.2pt, dashed, draw opacity=0.5] (a) at (1.2,-0.3){$x_1$};
            \node[ line width=0.2pt, dashed, draw opacity=0.5] (a) at (0.35,0.65){$s_0$};
            \node[ line width=0.2pt, dashed, draw opacity=0.5] (a) at (-2,1){$\rho= \tau_R\cup\tilde{\tau}_L:$};
        \end{tikzpicture}
    \end{equation*}
    \begin{align*}
        &\quad B^\psi(s_0)F^{h,\varphi}(\rho)|x_1,x_2,x_3,x_4\rangle \\
        & = \sum_{k,(k),(h)}B^\psi(s_0)F^{h^{(1)},\hat{k}}(\tau_R)F^{S^{-1}(k^{(3)})h^{(2)}k^{(1)},\varphi(k^{(2)}\bullet)}({\tilde{\tau}_L})|x_1,x_2,x_3,x_4\rangle \\
        & = \sum_{k,(k),(h)}B^\psi(s_0)F^{h^{(1)},\hat{k}}(\tau_R)\varepsilon(\varphi(k^{(2)})\bullet)|x_1,x_2S^{-1}(k^{(3)})h^{(2)}k^{(1)},x_3,x_4\rangle \\
        & = \sum_{k,(k),(h)}\sum_{(x_1)}B^\psi(s_0)\varepsilon(h^{(1)})\varepsilon(\varphi(k^{(2)}\bullet))\hat{k}(S(x_1^{(1)})) |x_1^{(2)},x_2S^{-1}(k^{(3)})h^{(2)}k^{(1)},x_3,x_4\rangle \\
        & = \sum_{k,(k)}\sum_{(x_1)}B^\psi(s_0)\varphi(k^{(2)})\hat{k}(S(x_1^{(1)}))|x_1^{(2)},x_2S^{-1}(k^{(3)})hk^{(1)},x_3,x_4\rangle \\
        & = \sum_{(x_1)}B^\psi(s_0)\varphi(S(x_1^{(2)}))|x_1^{(4)},x_2x_1^{(1)}hS(x_1^{(3)}),x_3,x_4\rangle \\
        & = \sum_{(h),(x_i)}\varphi(S(x_1^{(3)}))\psi(S(x_1^{(6)})S(x_2^{(1)}x_1^{(1)}h^{(1)}S(x_1^{(5)}))x_3^{(2)}x_4^{(2)})\\
        & \quad \quad |x_1^{(7)},x_2^{(2)}x_1^{(2)}h^{(2)}S(x_1^{(4)}),x_3^{(1)},x_4^{(1)}\rangle \\
        & = \sum_{(h),(x_i)}\varphi(S(x_1^{(3)}))\psi(S(\varepsilon_R(x_1^{(5)}))S(h^{(1)})S(x_1^{(1)})S(x_2^{(1)})x_3^{(2)}x_4^{(2)})\\
        & \quad \quad |x_1^{(6)},x_2^{(2)}x_1^{(2)}h^{(2)}S(x_1^{(4)}),x_3^{(1)},x_4^{(1)}\rangle \\
        & = \sum_{(h),(x_i)}\varphi(S(x_1^{(3)}))\psi(S(h^{(1)})S(x_1^{(1)})S(x_2^{(1)})x_3^{(2)}x_4^{(2)})\\
        & \quad \quad |x_1^{(6)},x_2^{(2)}x_1^{(2)}h^{(2)}S(\varepsilon_R(x_1^{(5)}))S(x_1^{(4)}),x_3^{(1)},x_4^{(1)}\rangle \\
        & = \sum_{(h),(x_i)}\varphi(S(x_1^{(3)}))\psi(S(h^{(1)})S(x_1^{(1)})S(x_2^{(1)})x_3^{(2)}x_4^{(2)})\\
        & \quad \quad |x_1^{(5)},x_2^{(2)}x_1^{(2)}h^{(2)}S(x_1^{(4)}),x_3^{(1)},x_4^{(1)}\rangle \\
        & = \sum_{(h),(x_i)}\sum_{k,(k)}\varepsilon(h^{(2)})\varphi(k^{(2)})\hat{k}(S(x_1^{(2)}))\psi(S(h^{(1)})S(x_1^{(1)})S(x_2^{(1)})x_3^{(2)}x_4^{(2)})\\
        & \quad \quad |x_1^{(3)},x_2^{(2)}S^{-1}(k^{(3)})h^{(3)}k^{(1)},x_3^{(1)},x_4^{(1)}\rangle \\
        & = \sum_{(h),(x_i)}\sum_{k,(k)}F^{h^{(2)},\hat{k}}(\tau_R)\varphi(k^{(2)})\psi(S(h^{(1)})S(x_1^{(1)})S(x_2^{(1)})x_3^{(2)}x_4^{(2)})\\
        & \quad \quad |x_1^{(2)},x_2^{(2)}S^{-1}(k^{(3)})h^{(3)}k^{(1)},x_3^{(1)},x_4^{(1)}\rangle \\
        & = \sum_{(h),(x_i)}\sum_{k,(k)}F^{h^{(2)},\hat{k}}(\tau_R)F^{S^{-1}(k^{(3)})h^{(3)}k^{(1)},\varphi(k^{(2)}\bullet)}(\tilde{\tau}_L)\\
        & \quad \quad \psi(S(h^{(1)})S(x_1^{(1)})S(x_2^{(1)})x_3^{(2)}x_4^{(2)})|x_1^{(2)},x_2^{(2)},x_3^{(1)},x_4^{(1)}\rangle \\
        & = \sum_{(h)}F^{h^{(2)},\varphi}(\rho)B^{\psi(S(h^{(1)}\bullet )}(s_0)|x_1,x_2,x_3,x_4\rangle. 
    \end{align*}

    $\bullet$ \eqref{eq:comm4} for long ribbons:
    \begin{equation*}
        \begin{tikzpicture}
            \draw[line width = 0.5pt, red] (1,1) -- (0,0);
            \draw[line width = 0.5pt, red] (1,1) -- (2,0);
            \draw[line width = 0.5pt, red] (2,0) -- (3,1);
            \draw[line width = 0.5pt, dashed, black] (1,1) -- (3,1);
            \draw[line width = 0.5pt, black] (2,0) -- (4,0);
            \draw[line width = 0.5pt, dashed, black] (3,1) -- (4,1);
            \draw[-latex,black] (0,0) -- (2,0);
            \draw[-latex,black] (0,0) -- (0,2);
            \draw[-latex,black] (2,0) -- (2,2);
            \draw[-latex,black] (0,2) -- (2,2);
            \draw[-stealth,gray, line width=3pt] (1.2,0.53) -- (1.85,0.53); 
            \node[ line width=0.2pt, dashed, draw opacity=0.5] (a) at (-0.3,1){$x_4$};
            \node[ line width=0.2pt, dashed, draw opacity=0.5] (a) at (2.3,1.3){$x_2$};
            \node[ line width=0.2pt, dashed, draw opacity=0.5] (a) at (1,2.25){$x_3$};
            \node[ line width=0.2pt, dashed, draw opacity=0.5] (a) at (1.2,-0.3){$x_1$};
            \node[ line width=0.2pt, dashed, draw opacity=0.5] (a) at (0.35,0.65){$s_0$};
            \node[ line width=0.2pt, dashed, draw opacity=0.5] (a) at (3.3,0.5){$\cdots$};
            \node[ line width=0.2pt, dashed, draw opacity=0.5] (a) at (-2,1.3){$\rho= \rho_1\cup\rho_2$};
            \node[ line width=0.2pt, dashed, draw opacity=0.5] (a) at (-2,0.7){$\rho_1= \tau_R$};
        \end{tikzpicture}
    \end{equation*}
    \begin{align*}
        &\quad B^\psi(s_0)F^{h,\varphi}(\rho) \\
        & = \sum_k\sum_{(k),(h)} B^\psi(s_0)F^{h^{(1)},\hat{k}}(\rho_1)F^{S^{-1}(k^{(3)})h^{(2)}k^{(1)},\varphi(k^{(2)}\bullet)}(\rho_2) \\
        & = \sum_k\sum_{(k),(h)} F^{h^{(2)},\hat{k}}(\rho_1)B^{\psi(S(h^{(1)})\bullet )}(s_0)F^{S^{-1}(k^{(3)})h^{(3)}k^{(1)},\varphi(k^{(2)}\bullet)}(\rho_2) \\
        & = \sum_{(h)}\sum_k\sum_{(k),(h^{(2)})} F^{h^{(2)},\hat{k}}(\rho_1)F^{S^{-1}(k^{(3)})h^{(3)}k^{(1)},\varphi(k^{(2)}\bullet)}(\rho_2)B^{\psi(S(h^{(1)})\bullet )}(s_0) \\
        & = \sum_{(h)} F^{h^{(2)},\varphi}(\rho)B^{\psi(S(h^{(1)})\bullet )}(s_0). 
    \end{align*}

$\bullet$ \eqref{eq:comm5} for short ribbons:   
\begin{equation*}
    \begin{tikzpicture}
        \draw[line width = 0.5pt, red] (1,1) -- (0,0);
        \draw[line width = 0.5pt, red] (1,1) -- (2,0);
        \draw[-latex,black] (0,0) -- (2,0);
        \draw[-latex,black] (0,1.5) -- (0,0);
        \draw[-latex,black] (0,0) -- (-2,0);
        \draw[-latex,black] (0,-1.5) -- (0,0);
        \draw[-stealth,gray, line width=3pt] (1.4,0.35) -- (0.6,0.35); 
        \node[ line width=0.2pt, dashed, draw opacity=0.5] (a) at (-0.3,1){$x_1$};
        \node[ line width=0.2pt, dashed, draw opacity=0.5] (a) at (-1.2,-0.3){$x_2$};
        \node[ line width=0.2pt, dashed, draw opacity=0.5] (a) at (0.3,-1){$x_3$};
        \node[ line width=0.2pt, dashed, draw opacity=0.5] (a) at (1.2,-0.3){$x_4$};
        \node[ line width=0.2pt, dashed, draw opacity=0.5] (a) at (0.3,0.6){$s_1$};
        \node[ line width=0.2pt, dashed, draw opacity=0.5] (a) at (1.8,0.58){$s_0$};
        \node[ line width=0.2pt, dashed, draw opacity=0.5] (a) at (-3,0){$\rho= \tau_L:$};
    \end{tikzpicture}
\end{equation*}
\begin{align*}
    &\quad A^g(s_1)F^{h,\varphi}(\tau_L)|x_1,x_2,x_3,x_4\rangle \\
    & = \sum_{(x_4)} A^g(s_1)\varepsilon(h)\varphi(x_4^{(1)})|x_1,x_2,x_3,x_4^{(2)}\rangle \\
    & = \sum_{(x_4)}\sum_{(g)} \varepsilon(h)\varphi(x_4^{(1)}) |g^{(1)}x_1,x_2S^{-1}(g^{(2)}),g^{(3)}x_3,x_4^{(2)}S^{-1}(g^{(4)})\rangle \\ 
    & = \sum_{(x_4)}\sum_{(g)} \varepsilon(h)\varphi(x_4^{(1)}) |g^{(1)}x_1,x_2S^{-1}(g^{(2)}),g^{(3)}x_3,x_4^{(2)}S^{-1}(\varepsilon_R(g^{(5)}))S^{-1}(g^{(4)})\rangle \\ 
    & = \sum_{(x_4)}\sum_{(g)} \varepsilon(h)\varphi(x_4^{(1)}S^{-2}(\varepsilon_R(g^{(5)}))) |g^{(1)}x_1,x_2S^{-1}(g^{(2)}),g^{(3)}x_3,x_4^{(2)}S^{-1}(g^{(4)})\rangle \\
    & = \sum_{(x_4)}\sum_{(g)} \varepsilon(h)\varphi(x_4^{(1)}S^{-1}(g^{(5)})S^{-2}(g^{(6)})) |g^{(1)}x_1,x_2S^{-1}(g^{(2)}),g^{(3)}x_3,x_4^{(2)}S^{-1}(g^{(4)})\rangle \\
    & = \sum_{(g)}F^{h,\varphi(\bullet S^{-2}(g^{(5)}))}(\tau_L)|g^{(1)}x_1,x_2S^{-1}(g^{(2)}),g^{(3)}x_3,x_4S^{-1}(g^{(4)})\rangle \\
    & = \sum_{(g)}F^{h,\varphi(\bullet S^{-2}(g^{(2)}))}(\tau_L)A^{g^{(1)}}(s_1)|x_1,x_2,x_3,x_4\rangle. 
\end{align*}

     $\bullet$ \eqref{eq:comm5} for long ribbons:
    \begin{equation*}
        \begin{tikzpicture}
            \draw[line width = 0.5pt, red] (1,1) -- (0,0);
            \draw[line width = 0.5pt, red] (1,1) -- (2,0);
            \draw[line width = 0.5pt, red] (2,0) -- (3,1);
            \draw[line width = 0.5pt, black] (2,0) -- (4,0);
            \draw[line width = 0.5pt, dashed, black] (3,1) -- (4,1);
            \draw[-latex,black] (0,0) -- (2,0);
            \draw[-latex,black] (0,1.5) -- (0,0);
            \draw[-latex,black] (0,0) -- (-2,0);
            \draw[-latex,black] (0,-1.5) -- (0,0);
            \draw[-stealth,gray, line width=2.5pt] (1.4,0.25) -- (0.6,0.25); 
            \draw[-stealth,gray, line width=2.5pt] (2.4,0.6) -- (1.6,0.6); 
            \node[ line width=0.2pt, dashed, draw opacity=0.5] (a) at (-0.3,1){$x_1$};
            \node[ line width=0.2pt, dashed, draw opacity=0.5] (a) at (-1.2,-0.3){$x_2$};
            \node[ line width=0.2pt, dashed, draw opacity=0.5] (a) at (0.3,-1){$x_3$};
            \node[ line width=0.2pt, dashed, draw opacity=0.5] (a) at (1.2,-0.3){$x_4$};
            \node[ line width=0.2pt, dashed, draw opacity=0.5] (a) at (0.3,0.6){$s_1$};
            \node[ line width=0.2pt, dashed, draw opacity=0.5] (a) at (3.3,0.45){$\cdots$};
            \node[ line width=0.2pt, dashed, draw opacity=0.5] (a) at (1,0.55){$\tau_L$};
            \node[ line width=0.2pt, dashed, draw opacity=0.5] (a) at (-3.5,0.3){$\rho=\rho_1\cup\rho_2$};
            \node[ line width=0.2pt, dashed, draw opacity=0.5] (a) at (-3.5,-0.3){$\rho_2=\tau_L$};
            \draw[-latex, dashed, black] (1,1) -- (3,1);
        \end{tikzpicture}
    \end{equation*} 
    \begin{align*}
        &\quad A^g(s_1)F^{h,\varphi}(\rho) \\
        & = \sum_k\sum_{(k),(h)} A^g(s_1)F^{h^{(1)},\hat{k}}(\rho_1)F^{S^{-1}(k^{(3)})h^{(2)}k^{(1)},\varphi(k^{(2)}\bullet)}(\rho_2) \\
        & = \sum_k\sum_{(k),(h)} F^{h^{(1)},\hat{k}}(\rho_1)A^g(s_1)F^{S^{-1}(k^{(3)})h^{(2)}k^{(1)},\varphi(k^{(2)}\bullet)}(\rho_2)   \\
        & = \sum_{(g)}\sum_k\sum_{(k),(h)} F^{h^{(1)},\hat{k}}(\rho_1)F^{S^{-1}(k^{(3)})h^{(2)}k^{(1)},\varphi(k^{(2)}\bullet S^{-2}(g^{(2)}))}(\rho_2)A^{g^{(1)}}(s_1) \\
        & = \sum_{(g)}F^{h,\varphi(\bullet S^{-2}(g^{(2)}))}(\rho_A)A^{g^{(1)}}(s_1). 
    \end{align*}

The other identities can be proved with the same technique. \qed


\begin{thebibliography}{103}%
	\makeatletter
	\providecommand \@ifxundefined [1]{%
		\@ifx{#1\undefined}
	}%
	\providecommand \@ifnum [1]{%
		\ifnum #1\expandafter \@firstoftwo
		\else \expandafter \@secondoftwo
		\fi
	}%
	\providecommand \@ifx [1]{%
		\ifx #1\expandafter \@firstoftwo
		\else \expandafter \@secondoftwo
		\fi
	}%
	\providecommand \natexlab [1]{#1}%
	\providecommand \enquote  [1]{``#1''}%
	\providecommand \bibnamefont  [1]{#1}%
	\providecommand \bibfnamefont [1]{#1}%
	\providecommand \citenamefont [1]{#1}%
	\providecommand \href@noop [0]{\@secondoftwo}%
	\providecommand \href [0]{\begingroup \@sanitize@url \@href}%
	\providecommand \@href[1]{\@@startlink{#1}\@@href}%
	\providecommand \@@href[1]{\endgroup#1\@@endlink}%
	\providecommand \@sanitize@url [0]{\catcode `\\12\catcode `\$12\catcode
		`\&12\catcode `\#12\catcode `\^12\catcode `\_12\catcode `\%12\relax}%
	\providecommand \@@startlink[1]{}%
	\providecommand \@@endlink[0]{}%
	\providecommand \url  [0]{\begingroup\@sanitize@url \@url }%
	\providecommand \@url [1]{\endgroup\@href {#1}{\urlprefix }}%
	\providecommand \urlprefix  [0]{URL }%
	\providecommand \Eprint [0]{\href }%
	\providecommand \doibase [0]{http://dx.doi.org/}%
	\providecommand \selectlanguage [0]{\@gobble}%
	\providecommand \bibinfo  [0]{\@secondoftwo}%
	\providecommand \bibfield  [0]{\@secondoftwo}%
	\providecommand \translation [1]{[#1]}%
	\providecommand \BibitemOpen [0]{}%
	\providecommand \bibitemStop [0]{}%
	\providecommand \bibitemNoStop [0]{.\EOS\space}%
	\providecommand \EOS [0]{\spacefactor3000\relax}%
	\providecommand \BibitemShut  [1]{\csname bibitem#1\endcsname}%
	\let\auto@bib@innerbib\@empty
	\bibitem [{\citenamefont {Wen}(2004)}]{Wen2004}%
	\BibitemOpen
	\bibfield  {author} {\bibinfo {author} {\bibfnamefont {X.-G.}\ \bibnamefont
			{Wen}},\ }\href
	{http://www.oxfordscholarship.com/view/10.1093/acprof:oso/9780199227259.001.0001/acprof-9780199227259}
	{\emph {\bibinfo {title} {Quantum field theory of many-body systems: from the
				origin of sound to an origin of light and electrons}}}\ (\bibinfo
	{publisher} {Oxford University Press on Demand},\ \bibinfo {year}
	{2004})\BibitemShut {NoStop}%
	\bibitem [{\citenamefont {Levin}\ and\ \citenamefont
		{Wen}(2005{\natexlab{a}})}]{Levin200photons}%
	\BibitemOpen
	\bibfield  {author} {\bibinfo {author} {\bibfnamefont {M.}~\bibnamefont
			{Levin}}\ and\ \bibinfo {author} {\bibfnamefont {X.-G.}\ \bibnamefont
			{Wen}},\ }\bibfield  {title} {\enquote {\bibinfo {title} {Colloquium: Photons
				and electrons as emergent phenomena},}\ }\href {\doibase
		10.1103/RevModPhys.77.871} {\bibfield  {journal} {\bibinfo  {journal} {Rev.
				Mod. Phys.}\ }\textbf {\bibinfo {volume} {77}},\ \bibinfo {pages} {871}
		(\bibinfo {year} {2005}{\natexlab{a}})},\ \Eprint
	{http://arxiv.org/abs/cond-mat/0407140} {arXiv:cond-mat/0407140
		[cond-mat.str-el]} \BibitemShut {NoStop}%
	\bibitem [{\citenamefont {Nayak}\ \emph {et~al.}(2008)\citenamefont {Nayak},
		\citenamefont {Simon}, \citenamefont {Stern}, \citenamefont {Freedman},\ and\
		\citenamefont {Das~Sarma}}]{Nayak2008}%
	\BibitemOpen
	\bibfield  {author} {\bibinfo {author} {\bibfnamefont {C.}~\bibnamefont
			{Nayak}}, \bibinfo {author} {\bibfnamefont {S.~H.}\ \bibnamefont {Simon}},
		\bibinfo {author} {\bibfnamefont {A.}~\bibnamefont {Stern}}, \bibinfo
		{author} {\bibfnamefont {M.}~\bibnamefont {Freedman}}, \ and\ \bibinfo
		{author} {\bibfnamefont {S.}~\bibnamefont {Das~Sarma}},\ }\bibfield  {title}
	{\enquote {\bibinfo {title} {Non-{A}belian anyons and topological quantum
				computation},}\ }\href {\doibase 10.1103/RevModPhys.80.1083} {\bibfield
		{journal} {\bibinfo  {journal} {Rev. Mod. Phys.}\ }\textbf {\bibinfo {volume}
			{80}},\ \bibinfo {pages} {1083} (\bibinfo {year} {2008})},\ \Eprint
	{http://arxiv.org/abs/0707.1889} {arXiv:0707.1889 [cond-mat.str-el]}
	\BibitemShut {NoStop}%
	\bibitem [{\citenamefont {Chiu}\ \emph {et~al.}(2016)\citenamefont {Chiu},
		\citenamefont {Teo}, \citenamefont {Schnyder},\ and\ \citenamefont
		{Ryu}}]{Chiu2016classification}%
	\BibitemOpen
	\bibfield  {author} {\bibinfo {author} {\bibfnamefont {C.-K.}\ \bibnamefont
			{Chiu}}, \bibinfo {author} {\bibfnamefont {J.~C.~Y.}\ \bibnamefont {Teo}},
		\bibinfo {author} {\bibfnamefont {A.~P.}\ \bibnamefont {Schnyder}}, \ and\
		\bibinfo {author} {\bibfnamefont {S.}~\bibnamefont {Ryu}},\ }\bibfield
	{title} {\enquote {\bibinfo {title} {Classification of topological quantum
				matter with symmetries},}\ }\href {\doibase 10.1103/RevModPhys.88.035005}
	{\bibfield  {journal} {\bibinfo  {journal} {Rev. Mod. Phys.}\ }\textbf
		{\bibinfo {volume} {88}},\ \bibinfo {pages} {035005} (\bibinfo {year}
		{2016})},\ \Eprint {http://arxiv.org/abs/1505.03535} {arXiv:1505.03535
		[cond-mat.str-el]} \BibitemShut {NoStop}%
	\bibitem [{\citenamefont {Witten}(2016)}]{Witten2016fermion}%
	\BibitemOpen
	\bibfield  {author} {\bibinfo {author} {\bibfnamefont {E.}~\bibnamefont
			{Witten}},\ }\bibfield  {title} {\enquote {\bibinfo {title} {Fermion path
				integrals and topological phases},}\ }\href {\doibase
		10.1103/RevModPhys.88.035001} {\bibfield  {journal} {\bibinfo  {journal}
			{Rev. Mod. Phys.}\ }\textbf {\bibinfo {volume} {88}},\ \bibinfo {pages}
		{035001} (\bibinfo {year} {2016})},\ \Eprint
	{http://arxiv.org/abs/1508.04715} {arXiv:1508.04715 [cond-mat.str-el]}
	\BibitemShut {NoStop}%
	\bibitem [{\citenamefont {Zhou}\ \emph {et~al.}(2017)\citenamefont {Zhou},
		\citenamefont {Kanoda},\ and\ \citenamefont {Ng}}]{Zhou2017quantum}%
	\BibitemOpen
	\bibfield  {author} {\bibinfo {author} {\bibfnamefont {Y.}~\bibnamefont
			{Zhou}}, \bibinfo {author} {\bibfnamefont {K.}~\bibnamefont {Kanoda}}, \ and\
		\bibinfo {author} {\bibfnamefont {T.-K.}\ \bibnamefont {Ng}},\ }\bibfield
	{title} {\enquote {\bibinfo {title} {Quantum spin liquid states},}\ }\href
	{\doibase 10.1103/RevModPhys.89.025003} {\bibfield  {journal} {\bibinfo
			{journal} {Rev. Mod. Phys.}\ }\textbf {\bibinfo {volume} {89}},\ \bibinfo
		{pages} {025003} (\bibinfo {year} {2017})},\ \Eprint
	{http://arxiv.org/abs/1607.03228} {arXiv:1607.03228 [cond-mat.str-el]}
	\BibitemShut {NoStop}%
	\bibitem [{\citenamefont {Wen}(2017)}]{Wen2017zoo}%
	\BibitemOpen
	\bibfield  {author} {\bibinfo {author} {\bibfnamefont {X.-G.}\ \bibnamefont
			{Wen}},\ }\bibfield  {title} {\enquote {\bibinfo {title} {Colloquium: Zoo of
				quantum-topological phases of matter},}\ }\href {\doibase
		10.1103/RevModPhys.89.041004} {\bibfield  {journal} {\bibinfo  {journal}
			{Rev. Mod. Phys.}\ }\textbf {\bibinfo {volume} {89}},\ \bibinfo {pages}
		{041004} (\bibinfo {year} {2017})},\ \Eprint
	{http://arxiv.org/abs/1610.03911} {arXiv:1610.03911 [cond-mat.str-el]}
	\BibitemShut {NoStop}%
	\bibitem [{\citenamefont {Gaiotto}\ \emph {et~al.}(2015)\citenamefont
		{Gaiotto}, \citenamefont {Kapustin}, \citenamefont {Seiberg},\ and\
		\citenamefont {Willett}}]{gaiotto2015generalized}%
	\BibitemOpen
	\bibfield  {author} {\bibinfo {author} {\bibfnamefont {D.}~\bibnamefont
			{Gaiotto}}, \bibinfo {author} {\bibfnamefont {A.}~\bibnamefont {Kapustin}},
		\bibinfo {author} {\bibfnamefont {N.}~\bibnamefont {Seiberg}}, \ and\
		\bibinfo {author} {\bibfnamefont {B.}~\bibnamefont {Willett}},\ }\bibfield
	{title} {\enquote {\bibinfo {title} {Generalized global symmetries},}\
	}\href@noop {} {\bibfield  {journal} {\bibinfo  {journal} {Journal of High
				Energy Physics}\ }\textbf {\bibinfo {volume} {2015}},\ \bibinfo {pages} {1}
		(\bibinfo {year} {2015})},\ \Eprint {http://arxiv.org/abs/1412.5148}
	{arXiv:1412.5148 [hep-th]} \BibitemShut {NoStop}%
	\bibitem [{\citenamefont {Bhardwaj}\ and\ \citenamefont
		{Tachikawa}(2018)}]{bhardwaj2018finite}%
	\BibitemOpen
	\bibfield  {author} {\bibinfo {author} {\bibfnamefont {L.}~\bibnamefont
			{Bhardwaj}}\ and\ \bibinfo {author} {\bibfnamefont {Y.}~\bibnamefont
			{Tachikawa}},\ }\bibfield  {title} {\enquote {\bibinfo {title} {On finite
				symmetries and their gauging in two dimensions},}\ }\href@noop {} {\bibfield
		{journal} {\bibinfo  {journal} {Journal of High Energy Physics}\ }\textbf
		{\bibinfo {volume} {2018}},\ \bibinfo {pages} {1} (\bibinfo {year} {2018})},\
	\Eprint {http://arxiv.org/abs/1704.02330} {arXiv:1704.02330 [hep-th]}
	\BibitemShut {NoStop}%
	\bibitem [{\citenamefont {Bhardwaj}\ \emph {et~al.}(2022)\citenamefont
		{Bhardwaj}, \citenamefont {Sch{\"a}fer-Nameki},\ and\ \citenamefont
		{Wu}}]{bhardwaj2022universal}%
	\BibitemOpen
	\bibfield  {author} {\bibinfo {author} {\bibfnamefont {L.}~\bibnamefont
			{Bhardwaj}}, \bibinfo {author} {\bibfnamefont {S.}~\bibnamefont
			{Sch{\"a}fer-Nameki}}, \ and\ \bibinfo {author} {\bibfnamefont
			{J.}~\bibnamefont {Wu}},\ }\bibfield  {title} {\enquote {\bibinfo {title}
			{Universal non-invertible symmetries},}\ }\href@noop {} {\bibfield  {journal}
		{\bibinfo  {journal} {Fortschritte der Physik}\ }\textbf {\bibinfo {volume}
			{70}},\ \bibinfo {pages} {2200143} (\bibinfo {year} {2022})},\ \Eprint
	{http://arxiv.org/abs/2208.05973} {arXiv:2208.05973 [hep-th]} \BibitemShut
	{NoStop}%
	\bibitem [{\citenamefont {Bartsch}\ \emph {et~al.}(2022)\citenamefont
		{Bartsch}, \citenamefont {Bullimore}, \citenamefont {Ferrari},\ and\
		\citenamefont {Pearson}}]{bartsch2022non}%
	\BibitemOpen
	\bibfield  {author} {\bibinfo {author} {\bibfnamefont {T.}~\bibnamefont
			{Bartsch}}, \bibinfo {author} {\bibfnamefont {M.}~\bibnamefont {Bullimore}},
		\bibinfo {author} {\bibfnamefont {A.~E.}\ \bibnamefont {Ferrari}}, \ and\
		\bibinfo {author} {\bibfnamefont {J.}~\bibnamefont {Pearson}},\ }\bibfield
	{title} {\enquote {\bibinfo {title} {Non-invertible symmetries and higher
				representation theory {I}},}\ }\href {https://arxiv.org/abs/2208.05993}
	{\bibfield  {journal} {\bibinfo  {journal} {arXiv preprint arXiv:2208.05993}\
		} (\bibinfo {year} {2022})}\BibitemShut {NoStop}%
	\bibitem [{\citenamefont {Levin}\ and\ \citenamefont
		{Wen}(2005{\natexlab{b}})}]{Levin2005}%
	\BibitemOpen
	\bibfield  {author} {\bibinfo {author} {\bibfnamefont {M.~A.}\ \bibnamefont
			{Levin}}\ and\ \bibinfo {author} {\bibfnamefont {X.-G.}\ \bibnamefont
			{Wen}},\ }\bibfield  {title} {\enquote {\bibinfo {title} {String-net
				condensation: A physical mechanism for topological phases},}\ }\href
	{\doibase 10.1103/PhysRevB.71.045110} {\bibfield  {journal} {\bibinfo
			{journal} {Phys. Rev. B}\ }\textbf {\bibinfo {volume} {71}},\ \bibinfo
		{pages} {045110} (\bibinfo {year} {2005}{\natexlab{b}})},\ \Eprint
	{http://arxiv.org/abs/cond-mat/0404617} {arXiv:cond-mat/0404617
		[cond-mat.str-el]} \BibitemShut {NoStop}%
	\bibitem [{\citenamefont {Kong}\ and\ \citenamefont
		{Wen}(2014)}]{kong2014braided}%
	\BibitemOpen
	\bibfield  {author} {\bibinfo {author} {\bibfnamefont {L.}~\bibnamefont
			{Kong}}\ and\ \bibinfo {author} {\bibfnamefont {X.-G.}\ \bibnamefont {Wen}},\
	}\bibfield  {title} {\enquote {\bibinfo {title} {Braided fusion categories,
				gravitational anomalies, and the mathematical framework for topological
				orders in any dimensions},}\ }\href {https://arxiv.org/abs/1405.5858}
	{\bibfield  {journal} {\bibinfo  {journal} {arXiv preprint arXiv:1405.5858}\
		} (\bibinfo {year} {2014})}\BibitemShut {NoStop}%
	\bibitem [{\citenamefont {Johnson-Freyd}(2022)}]{johnson2022classification}%
	\BibitemOpen
	\bibfield  {author} {\bibinfo {author} {\bibfnamefont {T.}~\bibnamefont
			{Johnson-Freyd}},\ }\bibfield  {title} {\enquote {\bibinfo {title} {On the
				classification of topological orders},}\ }\href
	{https://link.springer.com/article/10.1007/s00220-022-04380-3} {\bibfield
		{journal} {\bibinfo  {journal} {Communications in Mathematical Physics}\ ,\
			\bibinfo {pages} {1}} (\bibinfo {year} {2022})},\ \Eprint
	{http://arxiv.org/abs/2003.06663} {arXiv:2003.06663 [math.CT]} \BibitemShut
	{NoStop}%
	\bibitem [{\citenamefont {Tsui}\ \emph {et~al.}(1982)\citenamefont {Tsui},
		\citenamefont {Stormer},\ and\ \citenamefont {Gossard}}]{Tsui1982}%
	\BibitemOpen
	\bibfield  {author} {\bibinfo {author} {\bibfnamefont {D.~C.}\ \bibnamefont
			{Tsui}}, \bibinfo {author} {\bibfnamefont {H.~L.}\ \bibnamefont {Stormer}}, \
		and\ \bibinfo {author} {\bibfnamefont {A.~C.}\ \bibnamefont {Gossard}},\
	}\bibfield  {title} {\enquote {\bibinfo {title} {Two-dimensional
				magnetotransport in the extreme quantum limit},}\ }\href {\doibase
		10.1103/PhysRevLett.48.1559} {\bibfield  {journal} {\bibinfo  {journal}
			{Phys. Rev. Lett.}\ }\textbf {\bibinfo {volume} {48}},\ \bibinfo {pages}
		{1559} (\bibinfo {year} {1982})}\BibitemShut {NoStop}%
	\bibitem [{\citenamefont {Fr\"{o}hlich}\ \emph {et~al.}(2007)\citenamefont
		{Fr\"{o}hlich}, \citenamefont {Fuchs}, \citenamefont {Runkel},\ and\
		\citenamefont {Schweigert}}]{FROHLICH2007duality}%
	\BibitemOpen
	\bibfield  {author} {\bibinfo {author} {\bibfnamefont {J.}~\bibnamefont
			{Fr\"{o}hlich}}, \bibinfo {author} {\bibfnamefont {J.}~\bibnamefont {Fuchs}},
		\bibinfo {author} {\bibfnamefont {I.}~\bibnamefont {Runkel}}, \ and\ \bibinfo
		{author} {\bibfnamefont {C.}~\bibnamefont {Schweigert}},\ }\bibfield  {title}
	{\enquote {\bibinfo {title} {Duality and defects in rational conformal field
				theory},}\ }\href {\doibase https://doi.org/10.1016/j.nuclphysb.2006.11.017}
	{\bibfield  {journal} {\bibinfo  {journal} {Nuclear Physics B}\ }\textbf
		{\bibinfo {volume} {763}},\ \bibinfo {pages} {354} (\bibinfo {year}
		{2007})},\ \Eprint {http://arxiv.org/abs/hep-th/0607247}
	{arXiv:hep-th/0607247 [hep-th]} \BibitemShut {NoStop}%
	\bibitem [{\citenamefont {Thorngren}\ and\ \citenamefont
		{Wang}(2019)}]{thorngren2019fusion}%
	\BibitemOpen
	\bibfield  {author} {\bibinfo {author} {\bibfnamefont {R.}~\bibnamefont
			{Thorngren}}\ and\ \bibinfo {author} {\bibfnamefont {Y.}~\bibnamefont
			{Wang}},\ }\bibfield  {title} {\enquote {\bibinfo {title} {Fusion category
				symmetry {I}: anomaly in-flow and gapped phases},}\ }\href@noop {} {\
		(\bibinfo {year} {2019})},\ \Eprint {http://arxiv.org/abs/1912.02817}
	{arXiv:1912.02817 [hep-th]} \BibitemShut {NoStop}%
	\bibitem [{\citenamefont {Thorngren}\ and\ \citenamefont
		{Wang}(2021)}]{thorngren2021fusion}%
	\BibitemOpen
	\bibfield  {author} {\bibinfo {author} {\bibfnamefont {R.}~\bibnamefont
			{Thorngren}}\ and\ \bibinfo {author} {\bibfnamefont {Y.}~\bibnamefont
			{Wang}},\ }\bibfield  {title} {\enquote {\bibinfo {title} {Fusion category
				symmetry {II}: categoriosities at $c= 1$ and beyond},}\ }\href@noop {} {\
		(\bibinfo {year} {2021})},\ \Eprint {http://arxiv.org/abs/2106.12577}
	{arXiv:2106.12577 [hep-th]} \BibitemShut {NoStop}%
	\bibitem [{\citenamefont {Inamura}(2021)}]{inamura2021topological}%
	\BibitemOpen
	\bibfield  {author} {\bibinfo {author} {\bibfnamefont {K.}~\bibnamefont
			{Inamura}},\ }\bibfield  {title} {\enquote {\bibinfo {title} {Topological
				field theories and symmetry protected topological phases with fusion category
				symmetries},}\ }\href
	{https://link.springer.com/article/10.1007/JHEP05(2021)204} {\bibfield
		{journal} {\bibinfo  {journal} {Journal of High Energy Physics}\ }\textbf
		{\bibinfo {volume} {2021}},\ \bibinfo {pages} {1} (\bibinfo {year} {2021})},\
	\Eprint {http://arxiv.org/abs/2103.15588} {arXiv:2103.15588
		[cond-mat.str-el]} \BibitemShut {NoStop}%
	\bibitem [{\citenamefont {Inamura}(2022)}]{inamura2022lattice}%
	\BibitemOpen
	\bibfield  {author} {\bibinfo {author} {\bibfnamefont {K.}~\bibnamefont
			{Inamura}},\ }\bibfield  {title} {\enquote {\bibinfo {title} {On lattice
				models of gapped phases with fusion category symmetries},}\ }\href
	{https://link.springer.com/article/10.1007/JHEP03(2022)036} {\bibfield
		{journal} {\bibinfo  {journal} {Journal of High Energy Physics}\ }\textbf
		{\bibinfo {volume} {2022}},\ \bibinfo {pages} {1} (\bibinfo {year} {2022})},\
	\Eprint {http://arxiv.org/abs/2110.12882} {arXiv:2110.12882
		[cond-mat.str-el]} \BibitemShut {NoStop}%
	\bibitem [{\citenamefont {Feiguin}\ \emph {et~al.}(2007)\citenamefont
		{Feiguin}, \citenamefont {Trebst}, \citenamefont {Ludwig}, \citenamefont
		{Troyer}, \citenamefont {Kitaev}, \citenamefont {Wang},\ and\ \citenamefont
		{Freedman}}]{Feiguin2007interacting}%
	\BibitemOpen
	\bibfield  {author} {\bibinfo {author} {\bibfnamefont {A.}~\bibnamefont
			{Feiguin}}, \bibinfo {author} {\bibfnamefont {S.}~\bibnamefont {Trebst}},
		\bibinfo {author} {\bibfnamefont {A.~W.~W.}\ \bibnamefont {Ludwig}}, \bibinfo
		{author} {\bibfnamefont {M.}~\bibnamefont {Troyer}}, \bibinfo {author}
		{\bibfnamefont {A.}~\bibnamefont {Kitaev}}, \bibinfo {author} {\bibfnamefont
			{Z.}~\bibnamefont {Wang}}, \ and\ \bibinfo {author} {\bibfnamefont {M.~H.}\
			\bibnamefont {Freedman}},\ }\bibfield  {title} {\enquote {\bibinfo {title}
			{Interacting anyons in topological quantum liquids: The golden chain},}\
	}\href {\doibase 10.1103/PhysRevLett.98.160409} {\bibfield  {journal}
		{\bibinfo  {journal} {Phys. Rev. Lett.}\ }\textbf {\bibinfo {volume} {98}},\
		\bibinfo {pages} {160409} (\bibinfo {year} {2007})},\ \Eprint
	{http://arxiv.org/abs/cond-mat/0612341} {arXiv:cond-mat/0612341
		[cond-mat.str-el]} \BibitemShut {NoStop}%
	\bibitem [{\citenamefont {Gils}\ \emph {et~al.}(2013)\citenamefont {Gils},
		\citenamefont {Ardonne}, \citenamefont {Trebst}, \citenamefont {Huse},
		\citenamefont {Ludwig}, \citenamefont {Troyer},\ and\ \citenamefont
		{Wang}}]{Gils2013anyonic}%
	\BibitemOpen
	\bibfield  {author} {\bibinfo {author} {\bibfnamefont {C.}~\bibnamefont
			{Gils}}, \bibinfo {author} {\bibfnamefont {E.}~\bibnamefont {Ardonne}},
		\bibinfo {author} {\bibfnamefont {S.}~\bibnamefont {Trebst}}, \bibinfo
		{author} {\bibfnamefont {D.~A.}\ \bibnamefont {Huse}}, \bibinfo {author}
		{\bibfnamefont {A.~W.~W.}\ \bibnamefont {Ludwig}}, \bibinfo {author}
		{\bibfnamefont {M.}~\bibnamefont {Troyer}}, \ and\ \bibinfo {author}
		{\bibfnamefont {Z.}~\bibnamefont {Wang}},\ }\bibfield  {title} {\enquote
		{\bibinfo {title} {Anyonic quantum spin chains: Spin-1 generalizations and
				topological stability},}\ }\href {\doibase 10.1103/PhysRevB.87.235120}
	{\bibfield  {journal} {\bibinfo  {journal} {Phys. Rev. B}\ }\textbf {\bibinfo
			{volume} {87}},\ \bibinfo {pages} {235120} (\bibinfo {year} {2013})},\
	\Eprint {http://arxiv.org/abs/1303.4290} {arXiv:1303.4290 [cond-mat.str-el]}
	\BibitemShut {NoStop}%
	\bibitem [{\citenamefont {Buican}\ and\ \citenamefont
		{Gromov}(2017)}]{buican2017anyonic}%
	\BibitemOpen
	\bibfield  {author} {\bibinfo {author} {\bibfnamefont {M.}~\bibnamefont
			{Buican}}\ and\ \bibinfo {author} {\bibfnamefont {A.}~\bibnamefont
			{Gromov}},\ }\bibfield  {title} {\enquote {\bibinfo {title} {Anyonic chains,
				topological defects, and conformal field theory},}\ }\href
	{https://link.springer.com/article/10.1007/s00220-017-2995-6} {\bibfield
		{journal} {\bibinfo  {journal} {Communications in Mathematical Physics}\
		}\textbf {\bibinfo {volume} {356}},\ \bibinfo {pages} {1017} (\bibinfo {year}
		{2017})},\ \Eprint {http://arxiv.org/abs/1701.02800} {arXiv:1701.02800
		[hep-th]} \BibitemShut {NoStop}%
	\bibitem [{\citenamefont {Kitaev}(2003)}]{Kitaev2003}%
	\BibitemOpen
	\bibfield  {author} {\bibinfo {author} {\bibfnamefont {A.}~\bibnamefont
			{Kitaev}},\ }\bibfield  {title} {\enquote {\bibinfo {title} {Fault-tolerant
				quantum computation by anyons},}\ }\href {\doibase
		https://doi.org/10.1016/S0003-4916(02)00018-0} {\bibfield  {journal}
		{\bibinfo  {journal} {Annals of Physics}\ }\textbf {\bibinfo {volume}
			{303}},\ \bibinfo {pages} {2 } (\bibinfo {year} {2003})},\ \Eprint
	{http://arxiv.org/abs/quant-ph/9707021} {arXiv:quant-ph/9707021 [quant-ph]}
	\BibitemShut {NoStop}%
	\bibitem [{\citenamefont {Majid}(2000)}]{majid2000foundations}%
	\BibitemOpen
	\bibfield  {author} {\bibinfo {author} {\bibfnamefont {S.}~\bibnamefont
			{Majid}},\ }\href
	{https://www.cambridge.org/core/books/foundations-of-quantum-group-theory/BDBBAB645399E72AA1A01BDECAFC7E8C#}
	{\emph {\bibinfo {title} {Foundations of quantum group theory}}}\ (\bibinfo
	{publisher} {Cambridge university press, Cambridge},\ \bibinfo {year}
	{2000})\ pp.\ \bibinfo {pages} {x+607}\BibitemShut {NoStop}%
	\bibitem [{\citenamefont {Propitius}\ and\ \citenamefont
		{Bais}(1995)}]{Propitius1995}%
	\BibitemOpen
	\bibfield  {author} {\bibinfo {author} {\bibfnamefont {M.~d.~W.}\
			\bibnamefont {Propitius}}\ and\ \bibinfo {author} {\bibfnamefont {F.~A.}\
			\bibnamefont {Bais}},\ }\bibfield  {title} {\enquote {\bibinfo {title}
			{Discrete gauge theories},}\ }\href@noop {} {\  (\bibinfo {year} {1995})},\
	\Eprint {http://arxiv.org/abs/hep-th/9511201} {arXiv:hep-th/9511201 [hep-th]}
	\BibitemShut {NoStop}%
	\bibitem [{\citenamefont {{Alexander Bais}}\ \emph {et~al.}(1992)\citenamefont
		{{Alexander Bais}}, \citenamefont {{van Driel}},\ and\ \citenamefont {{de
				Wild Propitius}}}]{bais1992quantum}%
	\BibitemOpen
	\bibfield  {author} {\bibinfo {author} {\bibfnamefont {F.}~\bibnamefont
			{{Alexander Bais}}}, \bibinfo {author} {\bibfnamefont {P.}~\bibnamefont {{van
					Driel}}}, \ and\ \bibinfo {author} {\bibfnamefont {M.}~\bibnamefont {{de Wild
					Propitius}}},\ }\bibfield  {title} {\enquote {\bibinfo {title} {Quantum
				symmetries in discrete gauge theories},}\ }\href {\doibase
		https://doi.org/10.1016/0370-2693(92)90773-W} {\bibfield  {journal} {\bibinfo
			{journal} {Physics Letters B}\ }\textbf {\bibinfo {volume} {280}},\ \bibinfo
		{pages} {63} (\bibinfo {year} {1992})},\ \Eprint
	{http://arxiv.org/abs/hep-th/9203046} {arXiv:hep-th/9203046 [hep-th]}
	\BibitemShut {NoStop}%
	\bibitem [{\citenamefont {Bais}\ \emph {et~al.}(2003)\citenamefont {Bais},
		\citenamefont {Schroers},\ and\ \citenamefont {Slingerland}}]{bais2003hopf}%
	\BibitemOpen
	\bibfield  {author} {\bibinfo {author} {\bibfnamefont {A.~F.}\ \bibnamefont
			{Bais}}, \bibinfo {author} {\bibfnamefont {B.~J.}\ \bibnamefont {Schroers}},
		\ and\ \bibinfo {author} {\bibfnamefont {J.~K.}\ \bibnamefont
			{Slingerland}},\ }\bibfield  {title} {\enquote {\bibinfo {title} {Hopf
				symmetry breaking and confinement in (2+1)-dimensional gauge theory},}\
	}\href {https://doi.org/10.1088/1126-6708/2003/05/068} {\bibfield  {journal}
		{\bibinfo  {journal} {Journal of High Energy Physics}\ }\textbf {\bibinfo
			{volume} {2003}},\ \bibinfo {pages} {068} (\bibinfo {year} {2003})},\ \Eprint
	{http://arxiv.org/abs/hep-th/0205114} {arXiv:hep-th/0205114 [hep-th]}
	\BibitemShut {NoStop}%
	\bibitem [{\citenamefont {Bais}\ \emph {et~al.}(2002)\citenamefont {Bais},
		\citenamefont {Muller},\ and\ \citenamefont {Schroers}}]{Bais2002quantum}%
	\BibitemOpen
	\bibfield  {author} {\bibinfo {author} {\bibfnamefont {F.}~\bibnamefont
			{Bais}}, \bibinfo {author} {\bibfnamefont {N.}~\bibnamefont {Muller}}, \ and\
		\bibinfo {author} {\bibfnamefont {B.}~\bibnamefont {Schroers}},\ }\bibfield
	{title} {\enquote {\bibinfo {title} {Quantum group symmetry and particle
				scattering in (2+1)-dimensional quantum gravity},}\ }\href {\doibase
		https://doi.org/10.1016/S0550-3213(02)00572-2} {\bibfield  {journal}
		{\bibinfo  {journal} {Nuclear Physics B}\ }\textbf {\bibinfo {volume}
			{640}},\ \bibinfo {pages} {3} (\bibinfo {year} {2002})},\ \Eprint
	{http://arxiv.org/abs/hep-th/0205021} {arXiv:hep-th/0205021 [hep-th]}
	\BibitemShut {NoStop}%
	\bibitem [{\citenamefont {Delcamp}\ \emph {et~al.}(2017)\citenamefont
		{Delcamp}, \citenamefont {Dittrich},\ and\ \citenamefont
		{Riello}}]{delcamp2017fusion}%
	\BibitemOpen
	\bibfield  {author} {\bibinfo {author} {\bibfnamefont {C.}~\bibnamefont
			{Delcamp}}, \bibinfo {author} {\bibfnamefont {B.}~\bibnamefont {Dittrich}}, \
		and\ \bibinfo {author} {\bibfnamefont {A.}~\bibnamefont {Riello}},\
	}\bibfield  {title} {\enquote {\bibinfo {title} {Fusion basis for lattice
				gauge theory and loop quantum gravity},}\ }\href
	{https://link.springer.com/article/10.1007/JHEP02(2017)061} {\bibfield
		{journal} {\bibinfo  {journal} {Journal of High Energy Physics}\ }\textbf
		{\bibinfo {volume} {2017}},\ \bibinfo {pages} {1} (\bibinfo {year} {2017})},\
	\Eprint {http://arxiv.org/abs/1607.08881} {arXiv:1607.08881 [hep-th]}
	\BibitemShut {NoStop}%
	\bibitem [{\citenamefont {Fuchs}(1995)}]{fuchs1995affine}%
	\BibitemOpen
	\bibfield  {author} {\bibinfo {author} {\bibfnamefont {J.}~\bibnamefont
			{Fuchs}},\ }\href@noop {} {\emph {\bibinfo {title} {Affine Lie algebras and
				quantum groups: An Introduction, with applications in conformal field
				theory}}}\ (\bibinfo  {publisher} {Cambridge university press},\ \bibinfo
	{year} {1995})\BibitemShut {NoStop}%
	\bibitem [{\citenamefont {Meusburger}\ and\ \citenamefont
		{Wise}(2021)}]{meusburger2021hopf}%
	\BibitemOpen
	\bibfield  {author} {\bibinfo {author} {\bibfnamefont {C.}~\bibnamefont
			{Meusburger}}\ and\ \bibinfo {author} {\bibfnamefont {D.~K.}\ \bibnamefont
			{Wise}},\ }\bibfield  {title} {\enquote {\bibinfo {title} {Hopf algebra gauge
				theory on a ribbon graph},}\ }\href
	{https://www.worldscientific.com/doi/abs/10.1142/S0129055X21500161}
	{\bibfield  {journal} {\bibinfo  {journal} {Reviews in Mathematical Physics}\
			,\ \bibinfo {pages} {2150016}} (\bibinfo {year} {2021})},\ \Eprint
	{http://arxiv.org/abs/1512.03966} {arXiv:1512.03966 [math.QA]} \BibitemShut
	{NoStop}%
	\bibitem [{\citenamefont {Meusburger}(2017)}]{meusburger2017kitaev}%
	\BibitemOpen
	\bibfield  {author} {\bibinfo {author} {\bibfnamefont {C.}~\bibnamefont
			{Meusburger}},\ }\bibfield  {title} {\enquote {\bibinfo {title} {Kitaev
				lattice models as a {H}opf algebra gauge theory},}\ }\href
	{https://link.springer.com/article/10.1007%2Fs00220-017-2860-7} {\bibfield
		{journal} {\bibinfo  {journal} {Communications in Mathematical Physics}\
		}\textbf {\bibinfo {volume} {353}},\ \bibinfo {pages} {413} (\bibinfo {year}
		{2017})},\ \Eprint {http://arxiv.org/abs/1607.01144} {arXiv:1607.01144
		[math.QA]} \BibitemShut {NoStop}%
	\bibitem [{\citenamefont {Slingerland}\ and\ \citenamefont
		{Bais}(2001)}]{Slingerland2001quantum}%
	\BibitemOpen
	\bibfield  {author} {\bibinfo {author} {\bibfnamefont {J.}~\bibnamefont
			{Slingerland}}\ and\ \bibinfo {author} {\bibfnamefont {F.}~\bibnamefont
			{Bais}},\ }\bibfield  {title} {\enquote {\bibinfo {title} {Quantum groups and
				non-{A}belian braiding in quantum {H}all systems},}\ }\href {\doibase
		https://doi.org/10.1016/S0550-3213(01)00308-X} {\bibfield  {journal}
		{\bibinfo  {journal} {Nuclear Physics B}\ }\textbf {\bibinfo {volume}
			{612}},\ \bibinfo {pages} {229} (\bibinfo {year} {2001})},\ \Eprint
	{http://arxiv.org/abs/cond-mat/0104035} {arXiv:cond-mat/0104035
		[cond-mat.mes-hall]} \BibitemShut {NoStop}%
	\bibitem [{\citenamefont {Buerschaper}\ \emph
		{et~al.}(2013{\natexlab{a}})\citenamefont {Buerschaper}, \citenamefont
		{Mombelli}, \citenamefont {Christandl},\ and\ \citenamefont
		{Aguado}}]{Buerschaper2013a}%
	\BibitemOpen
	\bibfield  {author} {\bibinfo {author} {\bibfnamefont {O.}~\bibnamefont
			{Buerschaper}}, \bibinfo {author} {\bibfnamefont {J.~M.}\ \bibnamefont
			{Mombelli}}, \bibinfo {author} {\bibfnamefont {M.}~\bibnamefont
			{Christandl}}, \ and\ \bibinfo {author} {\bibfnamefont {M.}~\bibnamefont
			{Aguado}},\ }\bibfield  {title} {\enquote {\bibinfo {title} {A hierarchy of
				topological tensor network states},}\ }\href {\doibase 10.1063/1.4773316}
	{\bibfield  {journal} {\bibinfo  {journal} {Journal of Mathematical Physics}\
		}\textbf {\bibinfo {volume} {54}},\ \bibinfo {pages} {012201} (\bibinfo
		{year} {2013}{\natexlab{a}})},\ \Eprint {http://arxiv.org/abs/1007.5283}
	{arXiv:1007.5283 [cond-mat.str-el]} \BibitemShut {NoStop}%
	\bibitem [{\citenamefont {Buerschaper}\ \emph
		{et~al.}(2013{\natexlab{b}})\citenamefont {Buerschaper}, \citenamefont
		{Christandl}, \citenamefont {Kong},\ and\ \citenamefont
		{Aguado}}]{buerschaper2013electric}%
	\BibitemOpen
	\bibfield  {author} {\bibinfo {author} {\bibfnamefont {O.}~\bibnamefont
			{Buerschaper}}, \bibinfo {author} {\bibfnamefont {M.}~\bibnamefont
			{Christandl}}, \bibinfo {author} {\bibfnamefont {L.}~\bibnamefont {Kong}}, \
		and\ \bibinfo {author} {\bibfnamefont {M.}~\bibnamefont {Aguado}},\
	}\bibfield  {title} {\enquote {\bibinfo {title} {Electric--magnetic duality
				of lattice systems with topological order},}\ }\href
	{https://www.sciencedirect.com/science/article/abs/pii/S0550321313004367?via%3Dihub}
	{\bibfield  {journal} {\bibinfo  {journal} {Nuclear Physics B}\ }\textbf
		{\bibinfo {volume} {876}},\ \bibinfo {pages} {619} (\bibinfo {year}
		{2013}{\natexlab{b}})},\ \Eprint {http://arxiv.org/abs/1006.5823}
	{arXiv:1006.5823 [cond-mat.str-el]} \BibitemShut {NoStop}%
	\bibitem [{\citenamefont {Koppen}(2020)}]{koppen2020defects}%
	\BibitemOpen
	\bibfield  {author} {\bibinfo {author} {\bibfnamefont {V.}~\bibnamefont
			{Koppen}},\ }\bibfield  {title} {\enquote {\bibinfo {title} {Defects in
				{K}itaev models and bicomodule algebras},}\ }\href
	{https://arxiv.org/abs/2001.10578} {\bibfield  {journal} {\bibinfo  {journal}
			{arXiv preprint arXiv:2001.10578}\ } (\bibinfo {year} {2020})}\BibitemShut
	{NoStop}%
	\bibitem [{\citenamefont {Girelli}\ \emph {et~al.}(2021)\citenamefont
		{Girelli}, \citenamefont {Osei},\ and\ \citenamefont
		{Osumanu}}]{girelli2021semidual}%
	\BibitemOpen
	\bibfield  {author} {\bibinfo {author} {\bibfnamefont {F.}~\bibnamefont
			{Girelli}}, \bibinfo {author} {\bibfnamefont {P.~K.}\ \bibnamefont {Osei}}, \
		and\ \bibinfo {author} {\bibfnamefont {A.}~\bibnamefont {Osumanu}},\
	}\bibfield  {title} {\enquote {\bibinfo {title} {Semidual {K}itaev lattice
				model and tensor network representation},}\ }\href
	{https://link.springer.com/article/10.1007/JHEP09(2021)210} {\bibfield
		{journal} {\bibinfo  {journal} {Journal of High Energy Physics}\ }\textbf
		{\bibinfo {volume} {2021}},\ \bibinfo {pages} {1} (\bibinfo {year} {2021})},\
	\Eprint {http://arxiv.org/abs/1709.00522} {arXiv:1709.00522 [math.QA]}
	\BibitemShut {NoStop}%
	\bibitem [{\citenamefont {Vo{\ss}}(2021)}]{voss2021defects}%
	\BibitemOpen
	\bibfield  {author} {\bibinfo {author} {\bibfnamefont {T.}~\bibnamefont
			{Vo{\ss}}},\ }\emph {\bibinfo {title} {Defects and symmetries in {H}opf
			algebra lattice models}},\ \href@noop {} {Ph.D. thesis},\ \bibinfo  {school}
	{Friedrich-Alexander-Universit{\"a}t Erlangen-N{\"u}rnberg (FAU)} (\bibinfo
	{year} {2021})\BibitemShut {NoStop}%
	\bibitem [{\citenamefont {Yan}\ \emph {et~al.}(2022)\citenamefont {Yan},
		\citenamefont {Chen},\ and\ \citenamefont {Cui}}]{chen2021ribbon}%
	\BibitemOpen
	\bibfield  {author} {\bibinfo {author} {\bibfnamefont {B.}~\bibnamefont
			{Yan}}, \bibinfo {author} {\bibfnamefont {P.}~\bibnamefont {Chen}}, \ and\
		\bibinfo {author} {\bibfnamefont {S.}~\bibnamefont {Cui}},\ }\bibfield
	{title} {\enquote {\bibinfo {title} {Ribbon operators in the generalized
				{K}itaev quantum double model based on {H}opf algebras},}\ }\href
	{https://iopscience.iop.org/article/10.1088/1751-8121/ac552c/meta} {\bibfield
		{journal} {\bibinfo  {journal} {Journal of Physics A: Mathematical and
				Theoretical}\ } (\bibinfo {year} {2022})},\ \Eprint
	{http://arxiv.org/abs/2105.08202} {arXiv:2105.08202 [cond-mat.str-el]}
	\BibitemShut {NoStop}%
	\bibitem [{\citenamefont {Jia}\ \emph {et~al.}(2022)\citenamefont {Jia},
		\citenamefont {Kaszlikowski},\ and\ \citenamefont {Tan}}]{jia2022boundary}%
	\BibitemOpen
	\bibfield  {author} {\bibinfo {author} {\bibfnamefont {Z.}~\bibnamefont
			{Jia}}, \bibinfo {author} {\bibfnamefont {D.}~\bibnamefont {Kaszlikowski}}, \
		and\ \bibinfo {author} {\bibfnamefont {S.}~\bibnamefont {Tan}},\ }\bibfield
	{title} {\enquote {\bibinfo {title} {Boundary and domain wall theories of
				$2d$ generalized quantum double model},}\ }\href
	{https://arxiv.org/abs/2207.03970} {\bibfield  {journal} {\bibinfo  {journal}
			{arXiv preprint arXiv:2207.03970}\ } (\bibinfo {year} {2022})}\BibitemShut
	{NoStop}%
	\bibitem [{\citenamefont {Terhal}(2015)}]{Terhal2015quantum}%
	\BibitemOpen
	\bibfield  {author} {\bibinfo {author} {\bibfnamefont {B.~M.}\ \bibnamefont
			{Terhal}},\ }\bibfield  {title} {\enquote {\bibinfo {title} {Quantum error
				correction for quantum memories},}\ }\href {\doibase
		10.1103/RevModPhys.87.307} {\bibfield  {journal} {\bibinfo  {journal} {Rev.
				Mod. Phys.}\ }\textbf {\bibinfo {volume} {87}},\ \bibinfo {pages} {307}
		(\bibinfo {year} {2015})},\ \Eprint {http://arxiv.org/abs/1302.3428}
	{arXiv:1302.3428 [quant-ph]} \BibitemShut {NoStop}%
	\bibitem [{\citenamefont {Kitaev}\ and\ \citenamefont
		{Preskill}(2006)}]{Kitaev2005topological}%
	\BibitemOpen
	\bibfield  {author} {\bibinfo {author} {\bibfnamefont {A.}~\bibnamefont
			{Kitaev}}\ and\ \bibinfo {author} {\bibfnamefont {J.}~\bibnamefont
			{Preskill}},\ }\bibfield  {title} {\enquote {\bibinfo {title} {Topological
				entanglement entropy},}\ }\href {\doibase 10.1103/PhysRevLett.96.110404}
	{\bibfield  {journal} {\bibinfo  {journal} {Phys. Rev. Lett.}\ }\textbf
		{\bibinfo {volume} {96}},\ \bibinfo {pages} {110404} (\bibinfo {year}
		{2006})},\ \Eprint {http://arxiv.org/abs/hep-th/0510092}
	{arXiv:hep-th/0510092 [hep-th]} \BibitemShut {NoStop}%
	\bibitem [{\citenamefont {Bravyi}\ and\ \citenamefont
		{Kitaev}(1998)}]{bravyi1998quantum}%
	\BibitemOpen
	\bibfield  {author} {\bibinfo {author} {\bibfnamefont {S.~B.}\ \bibnamefont
			{Bravyi}}\ and\ \bibinfo {author} {\bibfnamefont {A.~Y.}\ \bibnamefont
			{Kitaev}},\ }\bibfield  {title} {\enquote {\bibinfo {title} {Quantum codes on
				a lattice with boundary},}\ }\href@noop {} {\  (\bibinfo {year} {1998})},\
	\Eprint {http://arxiv.org/abs/quant-ph/9811052} {arXiv:quant-ph/9811052
		[quant-ph]} \BibitemShut {NoStop}%
	\bibitem [{\citenamefont {Bombin}\ and\ \citenamefont
		{Martin-Delgado}(2008)}]{Bombin2008family}%
	\BibitemOpen
	\bibfield  {author} {\bibinfo {author} {\bibfnamefont {H.}~\bibnamefont
			{Bombin}}\ and\ \bibinfo {author} {\bibfnamefont {M.~A.}\ \bibnamefont
			{Martin-Delgado}},\ }\bibfield  {title} {\enquote {\bibinfo {title} {Family
				of non-{A}belian {K}itaev models on a lattice: Topological condensation and
				confinement},}\ }\href {\doibase 10.1103/PhysRevB.78.115421} {\bibfield
		{journal} {\bibinfo  {journal} {Phys. Rev. B}\ }\textbf {\bibinfo {volume}
			{78}},\ \bibinfo {pages} {115421} (\bibinfo {year} {2008})},\ \Eprint
	{http://arxiv.org/abs/0712.0190} {arXiv:0712.0190 [cond-mat.str-el]}
	\BibitemShut {NoStop}%
	\bibitem [{\citenamefont {Beigi}\ \emph {et~al.}(2011)\citenamefont {Beigi},
		\citenamefont {Shor},\ and\ \citenamefont {Whalen}}]{Beigi2011the}%
	\BibitemOpen
	\bibfield  {author} {\bibinfo {author} {\bibfnamefont {S.}~\bibnamefont
			{Beigi}}, \bibinfo {author} {\bibfnamefont {P.~W.}\ \bibnamefont {Shor}}, \
		and\ \bibinfo {author} {\bibfnamefont {D.}~\bibnamefont {Whalen}},\
	}\bibfield  {title} {\enquote {\bibinfo {title} {The quantum double model
				with boundary: Condensations and symmetries},}\ }\href {\doibase
		10.1007/s00220-011-1294-x} {\bibfield  {journal} {\bibinfo  {journal}
			{Communications in Mathematical Physics}\ }\textbf {\bibinfo {volume}
			{306}},\ \bibinfo {pages} {663} (\bibinfo {year} {2011})},\ \Eprint
	{http://arxiv.org/abs/1006.5479} {arXiv:1006.5479 [quant-ph]} \BibitemShut
	{NoStop}%
	\bibitem [{\citenamefont {Kitaev}\ and\ \citenamefont
		{Kong}(2012)}]{Kitaev2012a}%
	\BibitemOpen
	\bibfield  {author} {\bibinfo {author} {\bibfnamefont {A.}~\bibnamefont
			{Kitaev}}\ and\ \bibinfo {author} {\bibfnamefont {L.}~\bibnamefont {Kong}},\
	}\bibfield  {title} {\enquote {\bibinfo {title} {Models for gapped boundaries
				and domain walls},}\ }\href {\doibase 10.1007/s00220-012-1500-5} {\bibfield
		{journal} {\bibinfo  {journal} {Communications in Mathematical Physics}\
		}\textbf {\bibinfo {volume} {313}},\ \bibinfo {pages} {351} (\bibinfo {year}
		{2012})},\ \Eprint {http://arxiv.org/abs/1104.5047} {arXiv:1104.5047
		[cond-mat.str-el]} \BibitemShut {NoStop}%
	\bibitem [{\citenamefont {Cong}\ \emph
		{et~al.}(2017{\natexlab{a}})\citenamefont {Cong}, \citenamefont {Cheng},\
		and\ \citenamefont {Wang}}]{Cong2017}%
	\BibitemOpen
	\bibfield  {author} {\bibinfo {author} {\bibfnamefont {I.}~\bibnamefont
			{Cong}}, \bibinfo {author} {\bibfnamefont {M.}~\bibnamefont {Cheng}}, \ and\
		\bibinfo {author} {\bibfnamefont {Z.}~\bibnamefont {Wang}},\ }\bibfield
	{title} {\enquote {\bibinfo {title} {Hamiltonian and algebraic theories of
				gapped boundaries in topological phases of matter},}\ }\href {\doibase
		10.1007/s00220-017-2960-4} {\bibfield  {journal} {\bibinfo  {journal}
			{Communications in Mathematical Physics}\ }\textbf {\bibinfo {volume}
			{355}},\ \bibinfo {pages} {645} (\bibinfo {year} {2017}{\natexlab{a}})},\
	\Eprint {http://arxiv.org/abs/1707.04564} {arXiv:1707.04564
		[cond-mat.str-el]} \BibitemShut {NoStop}%
	\bibitem [{\citenamefont {Kong}(2014)}]{Kong2014}%
	\BibitemOpen
	\bibfield  {author} {\bibinfo {author} {\bibfnamefont {L.}~\bibnamefont
			{Kong}},\ }\bibfield  {title} {\enquote {\bibinfo {title} {Anyon condensation
				and tensor categories},}\ }\href {\doibase
		https://doi.org/10.1016/j.nuclphysb.2014.07.003} {\bibfield  {journal}
		{\bibinfo  {journal} {Nuclear Physics B}\ }\textbf {\bibinfo {volume}
			{886}},\ \bibinfo {pages} {436 } (\bibinfo {year} {2014})},\ \Eprint
	{http://arxiv.org/abs/1307.8244} {arXiv:1307.8244 [cond-mat.str-el]}
	\BibitemShut {NoStop}%
	\bibitem [{\citenamefont {Buerschaper}\ and\ \citenamefont
		{Aguado}(2009)}]{Buerschaper2009mapping}%
	\BibitemOpen
	\bibfield  {author} {\bibinfo {author} {\bibfnamefont {O.}~\bibnamefont
			{Buerschaper}}\ and\ \bibinfo {author} {\bibfnamefont {M.}~\bibnamefont
			{Aguado}},\ }\bibfield  {title} {\enquote {\bibinfo {title} {Mapping
				{K}itaev's quantum double lattice models to {L}evin and {W}en's string-net
				models},}\ }\href {\doibase 10.1103/PhysRevB.80.155136} {\bibfield  {journal}
		{\bibinfo  {journal} {Phys. Rev. B}\ }\textbf {\bibinfo {volume} {80}},\
		\bibinfo {pages} {155136} (\bibinfo {year} {2009})},\ \Eprint
	{http://arxiv.org/abs/0907.2670} {arXiv:0907.2670 [cond-mat.str-el]}
	\BibitemShut {NoStop}%
	\bibitem [{\citenamefont {Wang}\ \emph {et~al.}(2020)\citenamefont {Wang},
		\citenamefont {Li}, \citenamefont {Hu},\ and\ \citenamefont
		{Wan}}]{wang2020electric}%
	\BibitemOpen
	\bibfield  {author} {\bibinfo {author} {\bibfnamefont {H.}~\bibnamefont
			{Wang}}, \bibinfo {author} {\bibfnamefont {Y.}~\bibnamefont {Li}}, \bibinfo
		{author} {\bibfnamefont {Y.}~\bibnamefont {Hu}}, \ and\ \bibinfo {author}
		{\bibfnamefont {Y.}~\bibnamefont {Wan}},\ }\bibfield  {title} {\enquote
		{\bibinfo {title} {Electric-magnetic duality in the quantum double models of
				topological orders with gapped boundaries},}\ }\href
	{https://link.springer.com/article/10.1007%2FJHEP02%282020%29030} {\bibfield
		{journal} {\bibinfo  {journal} {Journal of High Energy Physics}\ }\textbf
		{\bibinfo {volume} {2020}},\ \bibinfo {pages} {1} (\bibinfo {year} {2020})},\
	\Eprint {http://arxiv.org/abs/1910.13441} {arXiv:1910.13441
		[cond-mat.str-el]} \BibitemShut {NoStop}%
	\bibitem [{\citenamefont {Hu}\ and\ \citenamefont
		{Wan}(2020)}]{hu2020electric}%
	\BibitemOpen
	\bibfield  {author} {\bibinfo {author} {\bibfnamefont {Y.}~\bibnamefont
			{Hu}}\ and\ \bibinfo {author} {\bibfnamefont {Y.}~\bibnamefont {Wan}},\
	}\bibfield  {title} {\enquote {\bibinfo {title} {Electric-magnetic duality in
				twisted quantum double model of topological orders},}\ }\href
	{https://link.springer.com/article/10.1007/JHEP11(2020)170} {\bibfield
		{journal} {\bibinfo  {journal} {Journal of High Energy Physics}\ }\textbf
		{\bibinfo {volume} {2020}},\ \bibinfo {pages} {1} (\bibinfo {year} {2020})},\
	\Eprint {http://arxiv.org/abs/2007.15636} {arXiv:2007.15636
		[cond-mat.str-el]} \BibitemShut {NoStop}%
	\bibitem [{\citenamefont {Jia}\ and\ \citenamefont
		{Kaszlikowski}(2022)}]{Jia2022electric}%
	\BibitemOpen
	\bibfield  {author} {\bibinfo {author} {\bibfnamefont {Z.}~\bibnamefont
			{Jia}}\ and\ \bibinfo {author} {\bibfnamefont {D.}~\bibnamefont
			{Kaszlikowski}},\ }\href {\doibase 10.48550/ARXIV.2201.12361} {\enquote
		{\bibinfo {title} {Electric-magnetic duality of $\mathbb{Z}_2$ symmetry
				enriched cyclic {A}belian lattice gauge theory},}\ } (\bibinfo {year}
	{2022}),\ \Eprint {http://arxiv.org/abs/2201.12361} {arXiv:2201.12361
		[quant-ph]} \BibitemShut {NoStop}%
	\bibitem [{\citenamefont {Bullivant}\ \emph {et~al.}(2017)\citenamefont
		{Bullivant}, \citenamefont {Hu},\ and\ \citenamefont
		{Wan}}]{Bullivant2017twisted}%
	\BibitemOpen
	\bibfield  {author} {\bibinfo {author} {\bibfnamefont {A.}~\bibnamefont
			{Bullivant}}, \bibinfo {author} {\bibfnamefont {Y.}~\bibnamefont {Hu}}, \
		and\ \bibinfo {author} {\bibfnamefont {Y.}~\bibnamefont {Wan}},\ }\bibfield
	{title} {\enquote {\bibinfo {title} {Twisted quantum double model of
				topological order with boundaries},}\ }\href {\doibase
		10.1103/PhysRevB.96.165138} {\bibfield  {journal} {\bibinfo  {journal} {Phys.
				Rev. B}\ }\textbf {\bibinfo {volume} {96}},\ \bibinfo {pages} {165138}
		(\bibinfo {year} {2017})},\ \Eprint {http://arxiv.org/abs/1706.03611}
	{arXiv:1706.03611 [cond-mat.str-el]} \BibitemShut {NoStop}%
	\bibitem [{\citenamefont {Moradi}\ and\ \citenamefont
		{Wen}(2015)}]{Moradi2015universal}%
	\BibitemOpen
	\bibfield  {author} {\bibinfo {author} {\bibfnamefont {H.}~\bibnamefont
			{Moradi}}\ and\ \bibinfo {author} {\bibfnamefont {X.-G.}\ \bibnamefont
			{Wen}},\ }\bibfield  {title} {\enquote {\bibinfo {title} {Universal
				topological data for gapped quantum liquids in three dimensions and fusion
				algebra for non-{A}belian string excitations},}\ }\href {\doibase
		10.1103/PhysRevB.91.075114} {\bibfield  {journal} {\bibinfo  {journal} {Phys.
				Rev. B}\ }\textbf {\bibinfo {volume} {91}},\ \bibinfo {pages} {075114}
		(\bibinfo {year} {2015})},\ \Eprint {http://arxiv.org/abs/1404.4618}
	{arXiv:1404.4618 [cond-mat.str-el]} \BibitemShut {NoStop}%
	\bibitem [{\citenamefont {Wan}\ \emph {et~al.}(2015)\citenamefont {Wan},
		\citenamefont {Wang},\ and\ \citenamefont {He}}]{Wan2015twisted}%
	\BibitemOpen
	\bibfield  {author} {\bibinfo {author} {\bibfnamefont {Y.}~\bibnamefont
			{Wan}}, \bibinfo {author} {\bibfnamefont {J.~C.}\ \bibnamefont {Wang}}, \
		and\ \bibinfo {author} {\bibfnamefont {H.}~\bibnamefont {He}},\ }\bibfield
	{title} {\enquote {\bibinfo {title} {Twisted gauge theory model of
				topological phases in three dimensions},}\ }\href {\doibase
		10.1103/PhysRevB.92.045101} {\bibfield  {journal} {\bibinfo  {journal} {Phys.
				Rev. B}\ }\textbf {\bibinfo {volume} {92}},\ \bibinfo {pages} {045101}
		(\bibinfo {year} {2015})},\ \Eprint {http://arxiv.org/abs/1409.3216}
	{arXiv:1409.3216 [cond-mat.str-el]} \BibitemShut {NoStop}%
	\bibitem [{\citenamefont {Wang}\ \emph {et~al.}(2018)\citenamefont {Wang},
		\citenamefont {Li}, \citenamefont {Hu},\ and\ \citenamefont
		{Wan}}]{wang2018gapped}%
	\BibitemOpen
	\bibfield  {author} {\bibinfo {author} {\bibfnamefont {H.}~\bibnamefont
			{Wang}}, \bibinfo {author} {\bibfnamefont {Y.}~\bibnamefont {Li}}, \bibinfo
		{author} {\bibfnamefont {Y.}~\bibnamefont {Hu}}, \ and\ \bibinfo {author}
		{\bibfnamefont {Y.}~\bibnamefont {Wan}},\ }\bibfield  {title} {\enquote
		{\bibinfo {title} {Gapped boundary theory of the twisted gauge theory model
				of three-dimensional topological orders},}\ }\href
	{https://link.springer.com/article/10.1007/JHEP10(2018)114} {\bibfield
		{journal} {\bibinfo  {journal} {Journal of High Energy Physics}\ }\textbf
		{\bibinfo {volume} {2018}},\ \bibinfo {pages} {1} (\bibinfo {year} {2018})},\
	\Eprint {http://arxiv.org/abs/1807.11083} {arXiv:1807.11083
		[cond-mat.str-el]} \BibitemShut {NoStop}%
	\bibitem [{\citenamefont {Hamma}\ \emph {et~al.}(2005)\citenamefont {Hamma},
		\citenamefont {Zanardi},\ and\ \citenamefont {Wen}}]{Hamma2005string}%
	\BibitemOpen
	\bibfield  {author} {\bibinfo {author} {\bibfnamefont {A.}~\bibnamefont
			{Hamma}}, \bibinfo {author} {\bibfnamefont {P.}~\bibnamefont {Zanardi}}, \
		and\ \bibinfo {author} {\bibfnamefont {X.-G.}\ \bibnamefont {Wen}},\
	}\bibfield  {title} {\enquote {\bibinfo {title} {String and membrane
				condensation on three-dimensional lattices},}\ }\href {\doibase
		10.1103/PhysRevB.72.035307} {\bibfield  {journal} {\bibinfo  {journal} {Phys.
				Rev. B}\ }\textbf {\bibinfo {volume} {72}},\ \bibinfo {pages} {035307}
		(\bibinfo {year} {2005})},\ \Eprint {http://arxiv.org/abs/cond-mat/0411752}
	{arXiv:cond-mat/0411752 [cond-mat.str-el]} \BibitemShut {NoStop}%
	\bibitem [{\citenamefont {Kong}\ \emph {et~al.}(2020)\citenamefont {Kong},
		\citenamefont {Tian},\ and\ \citenamefont {Zhang}}]{kong2020defects}%
	\BibitemOpen
	\bibfield  {author} {\bibinfo {author} {\bibfnamefont {L.}~\bibnamefont
			{Kong}}, \bibinfo {author} {\bibfnamefont {Y.}~\bibnamefont {Tian}}, \ and\
		\bibinfo {author} {\bibfnamefont {Z.-H.}\ \bibnamefont {Zhang}},\ }\bibfield
	{title} {\enquote {\bibinfo {title} {Defects in the 3-dimensional toric code
				model form a braided fusion 2-category},}\ }\href
	{https://link.springer.com/article/10.1007/JHEP12(2020)078} {\bibfield
		{journal} {\bibinfo  {journal} {Journal of High Energy Physics}\ }\textbf
		{\bibinfo {volume} {2020}},\ \bibinfo {pages} {1} (\bibinfo {year} {2020})},\
	\Eprint {http://arxiv.org/abs/2009.06564} {arXiv:2009.06564
		[cond-mat.str-el]} \BibitemShut {NoStop}%
	\bibitem [{\citenamefont {Delcamp}\ and\ \citenamefont
		{Schuch}(2021)}]{delcamp2021tensornet}%
	\BibitemOpen
	\bibfield  {author} {\bibinfo {author} {\bibfnamefont {C.}~\bibnamefont
			{Delcamp}}\ and\ \bibinfo {author} {\bibfnamefont {N.}~\bibnamefont
			{Schuch}},\ }\bibfield  {title} {\enquote {\bibinfo {title} {On tensor
				network representations of the $(3+1)d$ toric code},}\ }\href
	{https://quantum-journal.org/papers/q-2021-12-16-604/} {\bibfield  {journal}
		{\bibinfo  {journal} {Quantum}\ }\textbf {\bibinfo {volume} {5}},\ \bibinfo
		{pages} {604} (\bibinfo {year} {2021})},\ \Eprint
	{http://arxiv.org/abs/2012.15631} {arXiv:2012.15631 [cond-mat.str-el]}
	\BibitemShut {NoStop}%
	\bibitem [{\citenamefont {Delcamp}(2022)}]{delcamp2022tensor}%
	\BibitemOpen
	\bibfield  {author} {\bibinfo {author} {\bibfnamefont {C.}~\bibnamefont
			{Delcamp}},\ }\bibfield  {title} {\enquote {\bibinfo {title} {Tensor network
				approach to electromagnetic duality in $(3+1)d$ topological gauge models},}\
	}\href {https://link.springer.com/article/10.1007/JHEP08(2022)149} {\bibfield
		{journal} {\bibinfo  {journal} {Journal of High Energy Physics}\ }\textbf
		{\bibinfo {volume} {2022}},\ \bibinfo {pages} {1} (\bibinfo {year} {2022})},\
	\Eprint {http://arxiv.org/abs/2112.08324} {arXiv:2112.08324
		[cond-mat.str-el]} \BibitemShut {NoStop}%
	\bibitem [{\citenamefont {Chang}(2014)}]{chang2014kitaev}%
	\BibitemOpen
	\bibfield  {author} {\bibinfo {author} {\bibfnamefont {L.}~\bibnamefont
			{Chang}},\ }\bibfield  {title} {\enquote {\bibinfo {title} {Kitaev models
				based on unitary quantum groupoids},}\ }\href
	{https://aip.scitation.org/doi/abs/10.1063/1.4869326} {\bibfield  {journal}
		{\bibinfo  {journal} {Journal of Mathematical Physics}\ }\textbf {\bibinfo
			{volume} {55}},\ \bibinfo {pages} {041703} (\bibinfo {year} {2014})},\
	\Eprint {http://arxiv.org/abs/1309.4181} {arXiv:1309.4181 [math.QA]}
	\BibitemShut {NoStop}%
	\bibitem [{\citenamefont {B\"{o}hm}\ \emph {et~al.}(1999)\citenamefont
		{B\"{o}hm}, \citenamefont {Nill},\ and\ \citenamefont
		{Szlach\'{a}nyi}}]{BOHM1998weak}%
	\BibitemOpen
	\bibfield  {author} {\bibinfo {author} {\bibfnamefont {G.}~\bibnamefont
			{B\"{o}hm}}, \bibinfo {author} {\bibfnamefont {F.}~\bibnamefont {Nill}}, \
		and\ \bibinfo {author} {\bibfnamefont {K.}~\bibnamefont {Szlach\'{a}nyi}},\
	}\bibfield  {title} {\enquote {\bibinfo {title} {Weak {H}opf algebras: I.
				{I}ntegral theory and ${C}^*$-structure},}\ }\href {\doibase
		https://doi.org/10.1006/jabr.1999.7984} {\bibfield  {journal} {\bibinfo
			{journal} {Journal of Algebra}\ }\textbf {\bibinfo {volume} {221}},\ \bibinfo
		{pages} {385 } (\bibinfo {year} {1999})},\ \Eprint
	{http://arxiv.org/abs/math/9805116} {arXiv:math/9805116 [math.QA]}
	\BibitemShut {NoStop}%
	\bibitem [{\citenamefont {Ostrik}(2003)}]{ostrik2003module}%
	\BibitemOpen
	\bibfield  {author} {\bibinfo {author} {\bibfnamefont {V.}~\bibnamefont
			{Ostrik}},\ }\bibfield  {title} {\enquote {\bibinfo {title} {Module
				categories, weak {H}opf algebras and modular invariants},}\ }\href
	{https://link.springer.com/content/pdf/10.1007%2Fs00031-003-0515-6.pdf}
	{\bibfield  {journal} {\bibinfo  {journal} {Transformation Groups}\ }\textbf
		{\bibinfo {volume} {8}},\ \bibinfo {pages} {177} (\bibinfo {year} {2003})},\
	\Eprint {http://arxiv.org/abs/math/0111139} {arXiv:math/0111139 [math.QA]}
	\BibitemShut {NoStop}%
	\bibitem [{\citenamefont {B{\"o}hm}\ and\ \citenamefont
		{Szlach{\'o}nyi}(1996)}]{bohm1996coassociative}%
	\BibitemOpen
	\bibfield  {author} {\bibinfo {author} {\bibfnamefont {G.}~\bibnamefont
			{B{\"o}hm}}\ and\ \bibinfo {author} {\bibfnamefont {K.}~\bibnamefont
			{Szlach{\'o}nyi}},\ }\bibfield  {title} {\enquote {\bibinfo {title} {A
				coassociative ${C}^*$-quantum group with nonintegral dimensions},}\ }\href
	{https://link.springer.com/article/10.1007/BF01815526} {\bibfield  {journal}
		{\bibinfo  {journal} {Letters in Mathematical Physics}\ }\textbf {\bibinfo
			{volume} {38}},\ \bibinfo {pages} {437} (\bibinfo {year} {1996})},\ \Eprint
	{http://arxiv.org/abs/q-alg/9509008} {arXiv:q-alg/9509008 [math.QA]}
	\BibitemShut {NoStop}%
	\bibitem [{\citenamefont {Nill}(1998)}]{nill1998axioms}%
	\BibitemOpen
	\bibfield  {author} {\bibinfo {author} {\bibfnamefont {F.}~\bibnamefont
			{Nill}},\ }\bibfield  {title} {\enquote {\bibinfo {title} {Axioms for weak
				bialgebras},}\ }\href@noop {} {\bibfield  {journal} {\bibinfo  {journal}
			{arXiv preprint math/9805104}\ } (\bibinfo {year} {1998})}\BibitemShut
	{NoStop}%
	\bibitem [{\citenamefont {Nikshych}\ \emph {et~al.}(2003)\citenamefont
		{Nikshych}, \citenamefont {Turaev},\ and\ \citenamefont
		{Vainerman}}]{nikshych2003invariants}%
	\BibitemOpen
	\bibfield  {author} {\bibinfo {author} {\bibfnamefont {D.}~\bibnamefont
			{Nikshych}}, \bibinfo {author} {\bibfnamefont {V.}~\bibnamefont {Turaev}}, \
		and\ \bibinfo {author} {\bibfnamefont {L.}~\bibnamefont {Vainerman}},\
	}\bibfield  {title} {\enquote {\bibinfo {title} {Invariants of knots and
				3-manifolds from quantum groupoids},}\ }\href
	{https://www.sciencedirect.com/science/article/pii/S016686410200055X}
	{\bibfield  {journal} {\bibinfo  {journal} {Topology and its Applications}\
		}\textbf {\bibinfo {volume} {127}},\ \bibinfo {pages} {91} (\bibinfo {year}
		{2003})},\ \Eprint {http://arxiv.org/abs/math/0006078} {arXiv:math/0006078
		[math.QA]} \BibitemShut {NoStop}%
	\bibitem [{\citenamefont {Drinfel'd}(1988)}]{drinfel1988quantum}%
	\BibitemOpen
	\bibfield  {author} {\bibinfo {author} {\bibfnamefont {V.~G.}\ \bibnamefont
			{Drinfel'd}},\ }\bibfield  {title} {\enquote {\bibinfo {title} {Quantum
				groups},}\ }\href {https://link.springer.com/article/10.1007/BF01247086}
	{\bibfield  {journal} {\bibinfo  {journal} {Journal of Soviet mathematics}\
		}\textbf {\bibinfo {volume} {41}},\ \bibinfo {pages} {898} (\bibinfo {year}
		{1988})}\BibitemShut {NoStop}%
	\bibitem [{\citenamefont {Majid}(1990)}]{majid1990physics}%
	\BibitemOpen
	\bibfield  {author} {\bibinfo {author} {\bibfnamefont {S.}~\bibnamefont
			{Majid}},\ }\bibfield  {title} {\enquote {\bibinfo {title} {Physics for
				algebraists: Non-commutative and non-cocommutative {H}opf algebras by a
				bicrossproduct construction},}\ }\href
	{https://www.sciencedirect.com/science/article/pii/002186939090099A}
	{\bibfield  {journal} {\bibinfo  {journal} {Journal of Algebra}\ }\textbf
		{\bibinfo {volume} {130}},\ \bibinfo {pages} {17} (\bibinfo {year}
		{1990})}\BibitemShut {NoStop}%
	\bibitem [{\citenamefont {Majid}(1994)}]{majid1994some}%
	\BibitemOpen
	\bibfield  {author} {\bibinfo {author} {\bibfnamefont {S.}~\bibnamefont
			{Majid}},\ }\bibfield  {title} {\enquote {\bibinfo {title} {Some remarks on
				the quantum double},}\ }\href
	{https://link.springer.com/article/10.1007/BF01690458} {\bibfield  {journal}
		{\bibinfo  {journal} {Czechoslovak Journal of Physics}\ }\textbf {\bibinfo
			{volume} {44}},\ \bibinfo {pages} {1059} (\bibinfo {year} {1994})},\ \Eprint
	{http://arxiv.org/abs/hep-th/9409056} {arXiv:hep-th/9409056 [hep-th]}
	\BibitemShut {NoStop}%
	\bibitem [{\citenamefont {Nenciu}(2002)}]{nenciu2002center}%
	\BibitemOpen
	\bibfield  {author} {\bibinfo {author} {\bibfnamefont {A.}~\bibnamefont
			{Nenciu}},\ }\bibfield  {title} {\enquote {\bibinfo {title} {The center
				construction for weak {H}opf algebras},}\ }\href@noop {} {\bibfield
		{journal} {\bibinfo  {journal} {Tsukuba Journal of Mathematics}\ }\textbf
		{\bibinfo {volume} {26}},\ \bibinfo {pages} {189} (\bibinfo {year}
		{2002})}\BibitemShut {NoStop}%
	\bibitem [{\citenamefont {Wei}\ \emph {et~al.}(2022)\citenamefont {Wei},
		\citenamefont {Jia}, \citenamefont {Kaszlikowski},\ and\ \citenamefont
		{Tan}}]{wei2022antilinear}%
	\BibitemOpen
	\bibfield  {author} {\bibinfo {author} {\bibfnamefont {L.}~\bibnamefont
			{Wei}}, \bibinfo {author} {\bibfnamefont {Z.}~\bibnamefont {Jia}}, \bibinfo
		{author} {\bibfnamefont {D.}~\bibnamefont {Kaszlikowski}}, \ and\ \bibinfo
		{author} {\bibfnamefont {S.}~\bibnamefont {Tan}},\ }\bibfield  {title}
	{\enquote {\bibinfo {title} {Antilinear superoperator and quantum geometric
				invariance for higher-dimensional quantum systems},}\ }\href
	{https://arxiv.org/abs/2202.10989} {\bibfield  {journal} {\bibinfo  {journal}
			{arXiv preprint arXiv:2202.10989}\ } (\bibinfo {year} {2022})}\BibitemShut
	{NoStop}%
	\bibitem [{\citenamefont {B{\"o}hm}\ \emph {et~al.}(2011)\citenamefont
		{B{\"o}hm}, \citenamefont {Caenepeel},\ and\ \citenamefont
		{Janssen}}]{bohm2011weak}%
	\BibitemOpen
	\bibfield  {author} {\bibinfo {author} {\bibfnamefont {G.}~\bibnamefont
			{B{\"o}hm}}, \bibinfo {author} {\bibfnamefont {S.}~\bibnamefont {Caenepeel}},
		\ and\ \bibinfo {author} {\bibfnamefont {K.}~\bibnamefont {Janssen}},\
	}\bibfield  {title} {\enquote {\bibinfo {title} {Weak bialgebras and monoidal
				categories},}\ }\href
	{https://www.tandfonline.com/doi/abs/10.1080/00927872.2011.616438?journalCode=lagb20}
	{\bibfield  {journal} {\bibinfo  {journal} {Communications in Algebra}\
		}\textbf {\bibinfo {volume} {39}},\ \bibinfo {pages} {4584} (\bibinfo {year}
		{2011})},\ \Eprint {http://arxiv.org/abs/1103.2261} {arXiv:1103.2261
		[math.QA]} \BibitemShut {NoStop}%
	\bibitem [{\citenamefont {Jia}\ and\ \citenamefont
		{Tan}(tion{\natexlab{a}})}]{Jia2023classify}%
	\BibitemOpen
	\bibfield  {author} {\bibinfo {author} {\bibfnamefont {Z.}~\bibnamefont
			{Jia}}\ and\ \bibinfo {author} {\bibfnamefont {S.}~\bibnamefont {Tan}},\
	}\href@noop {} {\enquote {\bibinfo {title} {Classifying the topological
				excitations of {H}opf and weak {H}opf lattice gauge theory},}\ } (\bibinfo
	{year} {in preparation}{\natexlab{a}})\BibitemShut {NoStop}%
	\bibitem [{\citenamefont {Lusztig}(1987)}]{lusztig1987leading}%
	\BibitemOpen
	\bibfield  {author} {\bibinfo {author} {\bibfnamefont {G.}~\bibnamefont
			{Lusztig}},\ }\bibfield  {title} {\enquote {\bibinfo {title} {Leading
				coefficients of character values of hecke algebras},}\ }in\ \href@noop {}
	{\emph {\bibinfo {booktitle} {Proc. Symp. Pure Math}}},\ Vol.~\bibinfo
	{volume} {47}\ (\bibinfo {year} {1987})\ pp.\ \bibinfo {pages}
	{235--262}\BibitemShut {NoStop}%
	\bibitem [{\citenamefont {Dijkgraaf}\ \emph {et~al.}(1991)\citenamefont
		{Dijkgraaf}, \citenamefont {Pasquier},\ and\ \citenamefont
		{Roche}}]{dijkgraaf1991quasi}%
	\BibitemOpen
	\bibfield  {author} {\bibinfo {author} {\bibfnamefont {R.}~\bibnamefont
			{Dijkgraaf}}, \bibinfo {author} {\bibfnamefont {V.}~\bibnamefont {Pasquier}},
		\ and\ \bibinfo {author} {\bibfnamefont {P.}~\bibnamefont {Roche}},\
	}\bibfield  {title} {\enquote {\bibinfo {title} {Quasi {H}opf algebras, group
				cohomology and orbifold models},}\ }\href
	{https://www.sciencedirect.com/science/article/abs/pii/092056329190123V}
	{\bibfield  {journal} {\bibinfo  {journal} {Nuclear Physics B-Proceedings
				Supplements}\ }\textbf {\bibinfo {volume} {18}},\ \bibinfo {pages} {60}
		(\bibinfo {year} {1991})}\BibitemShut {NoStop}%
	\bibitem [{\citenamefont {Gould}(1993)}]{gould1993quantum}%
	\BibitemOpen
	\bibfield  {author} {\bibinfo {author} {\bibfnamefont {M.}~\bibnamefont
			{Gould}},\ }\bibfield  {title} {\enquote {\bibinfo {title} {Quantum double
				finite group algebras and their representations},}\ }\href
	{https://www.cambridge.org/core/journals/bulletin-of-the-australian-mathematical-society/article/quantum-double-finite-group-algebras-and-their-representations/D98C7E96811FB1C6E0CE05D250D7F527}
	{\bibfield  {journal} {\bibinfo  {journal} {Bulletin of the Australian
				Mathematical Society}\ }\textbf {\bibinfo {volume} {48}},\ \bibinfo {pages}
		{275} (\bibinfo {year} {1993})}\BibitemShut {NoStop}%
	\bibitem [{\citenamefont {Witherspoon}(1996)}]{Witherspoon1996the}%
	\BibitemOpen
	\bibfield  {author} {\bibinfo {author} {\bibfnamefont {S.}~\bibnamefont
			{Witherspoon}},\ }\bibfield  {title} {\enquote {\bibinfo {title} {The
				representation ring of the quantum double of a finite group},}\ }\href
	{\doibase https://doi.org/10.1006/jabr.1996.0014} {\bibfield  {journal}
		{\bibinfo  {journal} {Journal of Algebra}\ }\textbf {\bibinfo {volume}
			{179}},\ \bibinfo {pages} {305} (\bibinfo {year} {1996})}\BibitemShut
	{NoStop}%
	\bibitem [{\citenamefont {Gelaki}\ and\ \citenamefont
		{Nikshych}(2008)}]{gelaki2008nilpotent}%
	\BibitemOpen
	\bibfield  {author} {\bibinfo {author} {\bibfnamefont {S.}~\bibnamefont
			{Gelaki}}\ and\ \bibinfo {author} {\bibfnamefont {D.}~\bibnamefont
			{Nikshych}},\ }\bibfield  {title} {\enquote {\bibinfo {title} {Nilpotent
				fusion categories},}\ }\href
	{https://www.sciencedirect.com/science/article/pii/S0001870807002290}
	{\bibfield  {journal} {\bibinfo  {journal} {Advances in Mathematics}\
		}\textbf {\bibinfo {volume} {217}},\ \bibinfo {pages} {1053} (\bibinfo {year}
		{2008})},\ \Eprint {http://arxiv.org/abs/math/0610726} {arXiv:math/0610726
		[math.QA]} \BibitemShut {NoStop}%
	\bibitem [{\citenamefont {Burciu}(2012)}]{burciu2012irreducible}%
	\BibitemOpen
	\bibfield  {author} {\bibinfo {author} {\bibfnamefont {S.}~\bibnamefont
			{Burciu}},\ }\bibfield  {title} {\enquote {\bibinfo {title} {On the
				irreducible representations of generalized quantum doubles},}\ }\href@noop {}
	{\  (\bibinfo {year} {2012})},\ \Eprint {http://arxiv.org/abs/1202.4315}
	{arXiv:1202.4315 [math.QA]} \BibitemShut {NoStop}%
	\bibitem [{\citenamefont {Kong}\ \emph {et~al.}(2017)\citenamefont {Kong},
		\citenamefont {Wen},\ and\ \citenamefont {Zheng}}]{kong2017boundary}%
	\BibitemOpen
	\bibfield  {author} {\bibinfo {author} {\bibfnamefont {L.}~\bibnamefont
			{Kong}}, \bibinfo {author} {\bibfnamefont {X.-G.}\ \bibnamefont {Wen}}, \
		and\ \bibinfo {author} {\bibfnamefont {H.}~\bibnamefont {Zheng}},\ }\bibfield
	{title} {\enquote {\bibinfo {title} {Boundary-bulk relation in topological
				orders},}\ }\href
	{https://www.sciencedirect.com/science/article/pii/S0550321317302183}
	{\bibfield  {journal} {\bibinfo  {journal} {Nuclear Physics B}\ }\textbf
		{\bibinfo {volume} {922}},\ \bibinfo {pages} {62} (\bibinfo {year} {2017})},\
	\Eprint {http://arxiv.org/abs/1702.00673} {arXiv:1702.00673
		[cond-mat.str-el]} \BibitemShut {NoStop}%
	\bibitem [{\citenamefont {Fuchs}\ \emph {et~al.}(2013)\citenamefont {Fuchs},
		\citenamefont {Schweigert},\ and\ \citenamefont
		{Valentino}}]{fuchs2013bicategories}%
	\BibitemOpen
	\bibfield  {author} {\bibinfo {author} {\bibfnamefont {J.}~\bibnamefont
			{Fuchs}}, \bibinfo {author} {\bibfnamefont {C.}~\bibnamefont {Schweigert}}, \
		and\ \bibinfo {author} {\bibfnamefont {A.}~\bibnamefont {Valentino}},\
	}\bibfield  {title} {\enquote {\bibinfo {title} {Bicategories for boundary
				conditions and for surface defects in 3-d {TFT}},}\ }\href
	{https://link.springer.com/article/10.1007/s00220-013-1723-0} {\bibfield
		{journal} {\bibinfo  {journal} {Communications in Mathematical Physics}\
		}\textbf {\bibinfo {volume} {321}},\ \bibinfo {pages} {543} (\bibinfo {year}
		{2013})},\ \Eprint {http://arxiv.org/abs/1203.4568} {arXiv:1203.4568
		[hep-th]} \BibitemShut {NoStop}%
	\bibitem [{\citenamefont {Nill}\ \emph {et~al.}(1998)\citenamefont {Nill},
		\citenamefont {Szlachanyi},\ and\ \citenamefont {Wiesbrock}}]{nill1998weak}%
	\BibitemOpen
	\bibfield  {author} {\bibinfo {author} {\bibfnamefont {F.}~\bibnamefont
			{Nill}}, \bibinfo {author} {\bibfnamefont {K.}~\bibnamefont {Szlachanyi}}, \
		and\ \bibinfo {author} {\bibfnamefont {H.-W.}\ \bibnamefont {Wiesbrock}},\
	}\bibfield  {title} {\enquote {\bibinfo {title} {Weak {H}opf algebras and
				reducible {J}ones inclusions of depth 2. {I}: From crossed products to
				{J}ones towers},}\ }\href {https://arxiv.org/abs/math/9806130} {\bibfield
		{journal} {\bibinfo  {journal} {arXiv preprint math/9806130}\ } (\bibinfo
		{year} {1998})}\BibitemShut {NoStop}%
	\bibitem [{\citenamefont {B{\"o}hm}(2000)}]{bohm2000doi}%
	\BibitemOpen
	\bibfield  {author} {\bibinfo {author} {\bibfnamefont {G.}~\bibnamefont
			{B{\"o}hm}},\ }\bibfield  {title} {\enquote {\bibinfo {title} {Doi-{H}opf
				modules over weak {H}opf algebras},}\ }\href
	{https://www.tandfonline.com/doi/abs/10.1080/00927870008827113} {\bibfield
		{journal} {\bibinfo  {journal} {Communications in Algebra}\ }\textbf
		{\bibinfo {volume} {28}},\ \bibinfo {pages} {4687} (\bibinfo {year}
		{2000})},\ \Eprint {http://arxiv.org/abs/math/9905027} {arXiv:math/9905027
		[math.QA]} \BibitemShut {NoStop}%
	\bibitem [{\citenamefont {Nikshych}(2000)}]{nikshych2000duality}%
	\BibitemOpen
	\bibfield  {author} {\bibinfo {author} {\bibfnamefont {D.}~\bibnamefont
			{Nikshych}},\ }\bibfield  {title} {\enquote {\bibinfo {title} {A duality
				theorem for quantum groupoids},}\ }\href@noop {} {\bibfield  {journal}
		{\bibinfo  {journal} {Contemporary Mathematics}\ }\textbf {\bibinfo {volume}
			{267}},\ \bibinfo {pages} {237} (\bibinfo {year} {2000})},\ \Eprint
	{http://arxiv.org/abs/math/9912226} {arXiv:math/9912226 [math.QA]}
	\BibitemShut {NoStop}%
	\bibitem [{\citenamefont {Henker}(2011)}]{henker2011module}%
	\BibitemOpen
	\bibfield  {author} {\bibinfo {author} {\bibfnamefont {H.}~\bibnamefont
			{Henker}},\ }\emph {\bibinfo {title} {Module categories over quasi-Hopf
			algebras and weak Hopf algebras and the projectivity of Hopf modules}},\
	\href@noop {} {Ph.D. thesis},\ \bibinfo  {school} {LMU} (\bibinfo {year}
	{2011})\BibitemShut {NoStop}%
	\bibitem [{\citenamefont {Eilenberg}(1960)}]{eilenberg1960abstract}%
	\BibitemOpen
	\bibfield  {author} {\bibinfo {author} {\bibfnamefont {S.}~\bibnamefont
			{Eilenberg}},\ }\bibfield  {title} {\enquote {\bibinfo {title} {Abstract
				description of some basic functors},}\ }\href@noop {} {\bibfield  {journal}
		{\bibinfo  {journal} {J. Indian Math. Soc}\ }\textbf {\bibinfo {volume}
			{24}},\ \bibinfo {pages} {231} (\bibinfo {year} {1960})}\BibitemShut
	{NoStop}%
	\bibitem [{\citenamefont {Watts}(1960)}]{watts1960intrinsic}%
	\BibitemOpen
	\bibfield  {author} {\bibinfo {author} {\bibfnamefont {C.~E.}\ \bibnamefont
			{Watts}},\ }\bibfield  {title} {\enquote {\bibinfo {title} {Intrinsic
				characterizations of some additive functors},}\ }\href@noop {} {\bibfield
		{journal} {\bibinfo  {journal} {Proceedings of the American Mathematical
				Society}\ }\textbf {\bibinfo {volume} {11}},\ \bibinfo {pages} {5} (\bibinfo
		{year} {1960})}\BibitemShut {NoStop}%
	\bibitem [{\citenamefont {Jia}\ and\ \citenamefont
		{Tan}(tion{\natexlab{b}})}]{Jia2023bicomodule}%
	\BibitemOpen
	\bibfield  {author} {\bibinfo {author} {\bibfnamefont {Z.}~\bibnamefont
			{Jia}}\ and\ \bibinfo {author} {\bibfnamefont {S.}~\bibnamefont {Tan}},\
	}\href@noop {} {\enquote {\bibinfo {title} {On bimodule category over
				representation category of weak {H}opf algebra},}\ } (\bibinfo {year} {in
		preparation}{\natexlab{b}})\BibitemShut {NoStop}%
	\bibitem [{\citenamefont {Aguiar}(2000)}]{aguiar2000note}%
	\BibitemOpen
	\bibfield  {author} {\bibinfo {author} {\bibfnamefont {M.}~\bibnamefont
			{Aguiar}},\ }\bibfield  {title} {\enquote {\bibinfo {title} {A note on
				strongly separable algebras},}\ }\href@noop {} {\bibfield  {journal}
		{\bibinfo  {journal} {Bol. Acad. Nac. Cienc.(C{\'o}rdoba)}\ }\textbf
		{\bibinfo {volume} {65}},\ \bibinfo {pages} {51} (\bibinfo {year}
		{2000})}\BibitemShut {NoStop}%
	\bibitem [{\citenamefont {Schuch}\ \emph {et~al.}(2010)\citenamefont {Schuch},
		\citenamefont {Cirac},\ and\ \citenamefont
		{P{\'e}rez-Garc{\'\i}a}}]{schuch2010peps}%
	\BibitemOpen
	\bibfield  {author} {\bibinfo {author} {\bibfnamefont {N.}~\bibnamefont
			{Schuch}}, \bibinfo {author} {\bibfnamefont {I.}~\bibnamefont {Cirac}}, \
		and\ \bibinfo {author} {\bibfnamefont {D.}~\bibnamefont
			{P{\'e}rez-Garc{\'\i}a}},\ }\bibfield  {title} {\enquote {\bibinfo {title}
			{Peps as ground states: Degeneracy and topology},}\ }\href
	{https://www.sciencedirect.com/science/article/abs/pii/S0003491610000990}
	{\bibfield  {journal} {\bibinfo  {journal} {Annals of Physics}\ }\textbf
		{\bibinfo {volume} {325}},\ \bibinfo {pages} {2153} (\bibinfo {year}
		{2010})},\ \Eprint {http://arxiv.org/abs/1001.3807} {arXiv:1001.3807
		[quant-ph]} \BibitemShut {NoStop}%
	\bibitem [{\citenamefont {Lootens}\ \emph {et~al.}(2021)\citenamefont
		{Lootens}, \citenamefont {Fuchs}, \citenamefont {Haegeman}, \citenamefont
		{Schweigert},\ and\ \citenamefont {Verstraete}}]{lootens2021matrix}%
	\BibitemOpen
	\bibfield  {author} {\bibinfo {author} {\bibfnamefont {L.}~\bibnamefont
			{Lootens}}, \bibinfo {author} {\bibfnamefont {J.}~\bibnamefont {Fuchs}},
		\bibinfo {author} {\bibfnamefont {J.}~\bibnamefont {Haegeman}}, \bibinfo
		{author} {\bibfnamefont {C.}~\bibnamefont {Schweigert}}, \ and\ \bibinfo
		{author} {\bibfnamefont {F.}~\bibnamefont {Verstraete}},\ }\bibfield  {title}
	{\enquote {\bibinfo {title} {Matrix product operator symmetries and
				intertwiners in string-nets with domain walls},}\ }\href
	{https://scipost.org/10.21468/SciPostPhys.10.3.053} {\bibfield  {journal}
		{\bibinfo  {journal} {SciPost Physics}\ }\textbf {\bibinfo {volume} {10}},\
		\bibinfo {pages} {053} (\bibinfo {year} {2021})},\ \Eprint
	{http://arxiv.org/abs/2008.11187} {arXiv:2008.11187 [quant-ph]} \BibitemShut
	{NoStop}%
	\bibitem [{\citenamefont {Molnar}\ \emph {et~al.}(2022)\citenamefont {Molnar},
		\citenamefont {de~Alarc{\'o}n}, \citenamefont {Garre-Rubio}, \citenamefont
		{Schuch}, \citenamefont {Cirac},\ and\ \citenamefont
		{P{\'e}rez-Garc{\'\i}a}}]{molnar2022matrix}%
	\BibitemOpen
	\bibfield  {author} {\bibinfo {author} {\bibfnamefont {A.}~\bibnamefont
			{Molnar}}, \bibinfo {author} {\bibfnamefont {A.~R.}\ \bibnamefont
			{de~Alarc{\'o}n}}, \bibinfo {author} {\bibfnamefont {J.}~\bibnamefont
			{Garre-Rubio}}, \bibinfo {author} {\bibfnamefont {N.}~\bibnamefont {Schuch}},
		\bibinfo {author} {\bibfnamefont {J.~I.}\ \bibnamefont {Cirac}}, \ and\
		\bibinfo {author} {\bibfnamefont {D.}~\bibnamefont {P{\'e}rez-Garc{\'\i}a}},\
	}\bibfield  {title} {\enquote {\bibinfo {title} {Matrix product operator
				algebras {I}: representations of weak {H}opf algebras and projected entangled
				pair states},}\ }\href {https://arxiv.org/abs/2204.05940} {\bibfield
		{journal} {\bibinfo  {journal} {arXiv preprint arXiv:2204.05940}\ } (\bibinfo
		{year} {2022})}\BibitemShut {NoStop}%
	\bibitem [{\citenamefont {Freed}\ and\ \citenamefont
		{Teleman}(2022)}]{freed2022topological}%
	\BibitemOpen
	\bibfield  {author} {\bibinfo {author} {\bibfnamefont {D.~S.}\ \bibnamefont
			{Freed}}\ and\ \bibinfo {author} {\bibfnamefont {C.}~\bibnamefont
			{Teleman}},\ }\bibfield  {title} {\enquote {\bibinfo {title} {Topological
				dualities in the {I}sing model},}\ }\href
	{https://msp.org/gt/2022/26-5/gt-v26-n5-p01-s.pdf} {\bibfield  {journal}
		{\bibinfo  {journal} {Geometry \& Topology}\ }\textbf {\bibinfo {volume}
			{26}},\ \bibinfo {pages} {1907} (\bibinfo {year} {2022})},\ \Eprint
	{http://arxiv.org/abs/1806.00008} {arXiv:1806.00008 [math.AT]} \BibitemShut
	{NoStop}%
	\bibitem [{\citenamefont {Aasen}\ \emph {et~al.}(2016)\citenamefont {Aasen},
		\citenamefont {Mong},\ and\ \citenamefont {Fendley}}]{aasen2016topological}%
	\BibitemOpen
	\bibfield  {author} {\bibinfo {author} {\bibfnamefont {D.}~\bibnamefont
			{Aasen}}, \bibinfo {author} {\bibfnamefont {R.~S.}\ \bibnamefont {Mong}}, \
		and\ \bibinfo {author} {\bibfnamefont {P.}~\bibnamefont {Fendley}},\
	}\bibfield  {title} {\enquote {\bibinfo {title} {Topological defects on the
				lattice: I. {T}he {I}sing model},}\ }\href
	{https://iopscience.iop.org/article/10.1088/1751-8113/49/35/354001}
	{\bibfield  {journal} {\bibinfo  {journal} {Journal of Physics A:
				Mathematical and Theoretical}\ }\textbf {\bibinfo {volume} {49}},\ \bibinfo
		{pages} {354001} (\bibinfo {year} {2016})},\ \Eprint
	{http://arxiv.org/abs/1601.07185} {arXiv:1601.07185 [cond-mat.stat-mech]}
	\BibitemShut {NoStop}%
	\bibitem [{\citenamefont {Aasen}\ \emph {et~al.}(2020)\citenamefont {Aasen},
		\citenamefont {Fendley},\ and\ \citenamefont {Mong}}]{aasen2020topological}%
	\BibitemOpen
	\bibfield  {author} {\bibinfo {author} {\bibfnamefont {D.}~\bibnamefont
			{Aasen}}, \bibinfo {author} {\bibfnamefont {P.}~\bibnamefont {Fendley}}, \
		and\ \bibinfo {author} {\bibfnamefont {R.~S.}\ \bibnamefont {Mong}},\
	}\bibfield  {title} {\enquote {\bibinfo {title} {Topological defects on the
				lattice: dualities and degeneracies},}\ }\href
	{https://arxiv.org/abs/2008.08598} {\bibfield  {journal} {\bibinfo  {journal}
			{arXiv preprint arXiv:2008.08598}\ } (\bibinfo {year} {2020})}\BibitemShut
	{NoStop}%
	\bibitem [{\citenamefont {M{\"u}ger}(2003)}]{muger2003subfactorsI}%
	\BibitemOpen
	\bibfield  {author} {\bibinfo {author} {\bibfnamefont {M.}~\bibnamefont
			{M{\"u}ger}},\ }\bibfield  {title} {\enquote {\bibinfo {title} {From
				subfactors to categories and topology {I}: {F}robenius algebras in and
				{M}orita equivalence of tensor categories},}\ }\href@noop {} {\bibfield
		{journal} {\bibinfo  {journal} {Journal of Pure and Applied Algebra}\
		}\textbf {\bibinfo {volume} {180}},\ \bibinfo {pages} {81} (\bibinfo {year}
		{2003})}\BibitemShut {NoStop}%
	\bibitem [{\citenamefont {Etingof}\ \emph {et~al.}(2010)\citenamefont
		{Etingof}, \citenamefont {Nikshych},\ and\ \citenamefont
		{Ostrik}}]{etingof2010fusion}%
	\BibitemOpen
	\bibfield  {author} {\bibinfo {author} {\bibfnamefont {P.}~\bibnamefont
			{Etingof}}, \bibinfo {author} {\bibfnamefont {D.}~\bibnamefont {Nikshych}}, \
		and\ \bibinfo {author} {\bibfnamefont {V.}~\bibnamefont {Ostrik}},\
	}\bibfield  {title} {\enquote {\bibinfo {title} {Fusion categories and
				homotopy theory},}\ }\href
	{https://www.ems-ph.org/journals/show_abstract.php?issn=1663-487X&vol=1&iss=3&rank=1}
	{\bibfield  {journal} {\bibinfo  {journal} {Quantum topology}\ }\textbf
		{\bibinfo {volume} {1}},\ \bibinfo {pages} {209} (\bibinfo {year} {2010})},\
	\Eprint {http://arxiv.org/abs/0909.3140} {arXiv:0909.3140 [math.QA]}
	\BibitemShut {NoStop}%
	\bibitem [{\citenamefont {Etingof}\ \emph {et~al.}(2016)\citenamefont
		{Etingof}, \citenamefont {Gelaki}, \citenamefont {Nikshych},\ and\
		\citenamefont {Ostrik}}]{etingof2016tensor}%
	\BibitemOpen
	\bibfield  {author} {\bibinfo {author} {\bibfnamefont {P.}~\bibnamefont
			{Etingof}}, \bibinfo {author} {\bibfnamefont {S.}~\bibnamefont {Gelaki}},
		\bibinfo {author} {\bibfnamefont {D.}~\bibnamefont {Nikshych}}, \ and\
		\bibinfo {author} {\bibfnamefont {V.}~\bibnamefont {Ostrik}},\ }\href
	{https://bookstore.ams.org/surv-205} {\emph {\bibinfo {title} {Tensor
				categories}}},\ Vol.\ \bibinfo {volume} {205}\ (\bibinfo  {publisher}
	{American Mathematical Soc.},\ \bibinfo {year} {2016})\BibitemShut {NoStop}%
	\bibitem [{\citenamefont {Hu}\ \emph {et~al.}(2018)\citenamefont {Hu},
		\citenamefont {Geer},\ and\ \citenamefont {Wu}}]{Hu2018full}%
	\BibitemOpen
	\bibfield  {author} {\bibinfo {author} {\bibfnamefont {Y.}~\bibnamefont
			{Hu}}, \bibinfo {author} {\bibfnamefont {N.}~\bibnamefont {Geer}}, \ and\
		\bibinfo {author} {\bibfnamefont {Y.-S.}\ \bibnamefont {Wu}},\ }\bibfield
	{title} {\enquote {\bibinfo {title} {Full dyon excitation spectrum in
				extended {L}evin-{W}en models},}\ }\href {\doibase
		10.1103/PhysRevB.97.195154} {\bibfield  {journal} {\bibinfo  {journal} {Phys.
				Rev. B}\ }\textbf {\bibinfo {volume} {97}},\ \bibinfo {pages} {195154}
		(\bibinfo {year} {2018})},\ \Eprint {http://arxiv.org/abs/1502.03433}
	{arXiv:1502.03433 [cond-mat.str-el]} \BibitemShut {NoStop}%
	\bibitem [{\citenamefont {Aljadeff}\ \emph {et~al.}(2002)\citenamefont
		{Aljadeff}, \citenamefont {Etingof}, \citenamefont {Gelaki},\ and\
		\citenamefont {Nikshych}}]{aljadeff2002twisting}%
	\BibitemOpen
	\bibfield  {author} {\bibinfo {author} {\bibfnamefont {E.}~\bibnamefont
			{Aljadeff}}, \bibinfo {author} {\bibfnamefont {P.}~\bibnamefont {Etingof}},
		\bibinfo {author} {\bibfnamefont {S.}~\bibnamefont {Gelaki}}, \ and\ \bibinfo
		{author} {\bibfnamefont {D.}~\bibnamefont {Nikshych}},\ }\bibfield  {title}
	{\enquote {\bibinfo {title} {On twisting of finite-dimensional {H}opf
				algebras},}\ }\href
	{https://www.sciencedirect.com/science/article/pii/S0021869302000923}
	{\bibfield  {journal} {\bibinfo  {journal} {Journal of Algebra}\ }\textbf
		{\bibinfo {volume} {256}},\ \bibinfo {pages} {484} (\bibinfo {year}
		{2002})},\ \Eprint {http://arxiv.org/abs/math/0107167} {arXiv:math/0107167
		[math.QA]} \BibitemShut {NoStop}%
	\bibitem [{\citenamefont {Bombin}(2010)}]{Bombin2010}%
	\BibitemOpen
	\bibfield  {author} {\bibinfo {author} {\bibfnamefont {H.}~\bibnamefont
			{Bombin}},\ }\bibfield  {title} {\enquote {\bibinfo {title} {Topological
				order with a twist: Ising anyons from an {A}belian model},}\ }\href {\doibase
		10.1103/PhysRevLett.105.030403} {\bibfield  {journal} {\bibinfo  {journal}
			{Phys. Rev. Lett.}\ }\textbf {\bibinfo {volume} {105}},\ \bibinfo {pages}
		{030403} (\bibinfo {year} {2010})},\ \Eprint {http://arxiv.org/abs/1004.1838}
	{arXiv:1004.1838 [cond-mat.str-el]} \BibitemShut {NoStop}%
	\bibitem [{\citenamefont {Cong}\ \emph
		{et~al.}(2017{\natexlab{b}})\citenamefont {Cong}, \citenamefont {Cheng},\
		and\ \citenamefont {Wang}}]{cong2017defects}%
	\BibitemOpen
	\bibfield  {author} {\bibinfo {author} {\bibfnamefont {I.}~\bibnamefont
			{Cong}}, \bibinfo {author} {\bibfnamefont {M.}~\bibnamefont {Cheng}}, \ and\
		\bibinfo {author} {\bibfnamefont {Z.}~\bibnamefont {Wang}},\ }\bibfield
	{title} {\enquote {\bibinfo {title} {Defects between gapped boundaries in
				two-dimensional topological phases of matter},}\ }\href {\doibase
		10.1103/PhysRevB.96.195129} {\bibfield  {journal} {\bibinfo  {journal} {Phys.
				Rev. B}\ }\textbf {\bibinfo {volume} {96}},\ \bibinfo {pages} {195129}
		(\bibinfo {year} {2017}{\natexlab{b}})},\ \Eprint
	{http://arxiv.org/abs/1703.03564} {arXiv:1703.03564 [cond-mat.str-el]}
	\BibitemShut {NoStop}%
\end{thebibliography}

%

\end{document}